\documentclass[envcountsame]{llncs}

\usepackage{amsmath,amssymb,stmaryrd,xspace,pdfsync,mathrsfs,graphicx,refcount,verbatim}
\usepackage{microtype}
\usepackage{lmodern}
\usepackage{tikz}
\usepackage{enumerate}
\usepackage[mathscr]{euscript}
\usetikzlibrary{decorations.text,arrows,snakes,shapes,fit,shadows,calc,patterns,intersections}

\usepackage[textwidth=4cm]{todonotes}
\setlength{\marginparwidth}{4cm}

\newcounter{sauvegarde}
\setcounter{sauvegarde}{42}

\newcommand\adjustc[1]{%
  \protect{\setcounter{sauvegarde}{\thetheorem}
  \setcounterref{theorem}{#1}
  \addtocounter{theorem}{-1}
}}

\newcommand\restorec{
\setcounter{theorem}{\thesauvegarde}
}

\newcommand{\mynote}[3][]{\todo[caption={\sf #3}, color={%
    \ifnum#2=0 green!20
    \else\ifnum#2=1 orange!30
    \else\ifnum#2=2 yellow!20
    \else\ifnum#2=3 cyan!20
    \else magenta!20\fi\fi\fi\fi}, size=\tiny, #1]{\renewcommand{\baselinestretch}{1}\selectfont\sf#3}\xspace}

\definecolor{my1}{cmyk}{0,.6,0,0}
\definecolor{my2}{cmyk}{.3,.0,.0,.0}
\newcommand*{\swap}[2]{#2#1}
\newcommand{\efgame}{Ehrenfeucht-Fra\"iss\'e\xspace}
\newcommand\nat{\ensuremath{\mathbb{N}}\xspace}

\newcommand\Cs{\ensuremath{\mathcal{C}}\xspace}
\newcommand\Csgen[3]{\ensuremath{\Cs_{#1,#3}^{#2}}\xspace}
\newcommand\Cslev[1]{\ensuremath{\Cs_{#1}}\xspace}
\newcommand\Cslevk[2]{\ensuremath{\Cs_{#1}^{#2}}\xspace}
\newcommand\Csik{\ensuremath{\Cs_i^k}\xspace}
\newcommand\Cstwo{\ensuremath{\Cs_2}\xspace}
\newcommand\Cstwotwo{\ensuremath{\Cstwolen2}\xspace}
\newcommand\Csi{\ensuremath{\Cs_i}\xspace}
\newcommand\Csikn{\ensuremath{\Cs_{i,n}^k}\xspace}
\newcommand\Csitwo{\ensuremath{\Cs_{i,2}}\xspace}
\newcommand\Cstwolen[1]{\ensuremath{\Cs_{2,#1}}\xspace}
\newcommand\Cstwon{\ensuremath{\Cstwolen{n}}\xspace}
\newcommand\Csin{\ensuremath{\Cs_{i,n}}\xspace}
\newcommand\fCgen[3]{\ensuremath{\fC_{#1,#3}^{#2}}\xspace}
\newcommand\fCik{\ensuremath{\fC_i^k}\xspace}
\newcommand\fCi{\ensuremath{\fC_i}\xspace}
\newcommand\fCtwo{\ensuremath{\fC_2}\xspace}
\newcommand\fCtwotwo{\ensuremath{\fCtwolen2}\xspace}
\newcommand\fCikn{\ensuremath{\fC_{i,n}^k}\xspace}
\newcommand\fCin{\ensuremath{\fC_{i,n}}\xspace}
\newcommand\fCtwolen[1]{\ensuremath{\fC_{2,#1}}\xspace}
\newcommand\fCtwon{\ensuremath{\fC_{2,n}}\xspace}

\newcommand\Gs{\ensuremath{\mathcal{G}}\xspace}

\newcommand\Ss{\ensuremath{\mathcal{S}}\xspace}
\newcommand\Ts{\ensuremath{\mathcal{T}}\xspace}
\newcommand\Rs{\ensuremath{\mathcal{R}}\xspace}

\newcommand\ct{\ensuremath{\mathbb{T}}\xspace}
\newcommand\cs{\ensuremath{\mathbb{S}}\xspace}
\newcommand\crr{\ensuremath{\mathbb{U}}\xspace}

\newcommand\mat{\ensuremath{\mathscr{M}}\xspace}
\newcommand\mnat{\ensuremath{\mathscr{N}}\xspace}

\newcommand\pat{\ensuremath{\mathscr{P}}\xspace}

\newcommand{\dec}[1]{\ensuremath{\Delta_{#1}}\xspace}
\newcommand{\dew}[1]{\ensuremath{\Delta_{#1}(<)}\xspace}

\newcommand{\sic}[1]{\ensuremath{\Sigma_{#1}}\xspace}
\newcommand{\siw}[1]{\ensuremath{\Sigma_{#1}(<)}\xspace}
\newcommand{\pic}[1]{\ensuremath{\Pi_{#1}}\xspace}
\newcommand{\piw}[1]{\ensuremath{\Pi_{#1}(<)}\xspace}
\newcommand{\bsc}[1]{\ensuremath{\mathcal{B}\Sigma_{#1}}\xspace}
\newcommand{\bsw}[1]{\ensuremath{\mathcal{B}\Sigma_{#1}(<)}\xspace}

\newcommand{\decu}{\ensuremath{\Delta_{1}}\xspace}
\newcommand{\dewu}{\ensuremath{\Delta_{1}(<)}\xspace}

\newcommand{\sicu}{\ensuremath{\Sigma_{1}}\xspace}
\newcommand{\siwu}{\ensuremath{\Sigma_{1}(<)}\xspace}

\newcommand{\picu}{\ensuremath{\Pi_{1}}\xspace}
\newcommand{\piwu}{\ensuremath{\Pi_{1}(<)}\xspace}

\newcommand{\bscu}{\ensuremath{\mathcal{B}\Sigma_{1}}\xspace}
\newcommand{\bswu}{\ensuremath{\mathcal{B}\Sigma_{1}(<)}\xspace}

\newcommand{\decd}{\ensuremath{\Delta_{2}}\xspace}
\newcommand{\dewd}{\ensuremath{\Delta_{2}(<)}\xspace}

\newcommand{\sicd}{\ensuremath{\Sigma_{2}}\xspace}
\newcommand{\siwd}{\ensuremath{\Sigma_{2}(<)}\xspace}

\newcommand{\picd}{\ensuremath{\Pi_{2}}\xspace}
\newcommand{\piwd}{\ensuremath{\Pi_{2}(<)}\xspace}

\newcommand{\bscd}{\ensuremath{\mathcal{B}\Sigma_{2}}\xspace}
\newcommand{\bswd}{\ensuremath{\mathcal{B}\Sigma_{2}(<)}\xspace}

\newcommand{\dect}{\ensuremath{\Delta_{3}}\xspace}
\newcommand{\dewt}{\ensuremath{\Delta_{3}(<)}\xspace}

\newcommand{\sict}{\ensuremath{\Sigma_{3}}\xspace}
\newcommand{\siwt}{\ensuremath{\Sigma_{3}(<)}\xspace}

\newcommand{\pict}{\ensuremath{\Pi_{3}}\xspace}
\newcommand{\piwt}{\ensuremath{\Pi_{3}(<)}\xspace}

\newcommand{\bsct}{\ensuremath{\mathcal{B}\Sigma_{3}}\xspace}
\newcommand{\bswt}{\ensuremath{\mathcal{B}\Sigma_{3}(<)}\xspace}

\newcommand{\bspd}{\ensuremath{\mathcal{B}\Sigma_{2}(<,+1)}\xspace}

\newcommand{\sipt}{\ensuremath{\Sigma_{3}(<,+1)}\xspace}

\newcommand{\deci}{\ensuremath{\Delta_{i}}\xspace}
\newcommand{\dewi}{\ensuremath{\Delta_{i}(<)}\xspace}

\newcommand{\sici}{\ensuremath{\Sigma_{i}}\xspace}
\newcommand{\siwi}{\ensuremath{\Sigma_{i}(<)}\xspace}

\newcommand{\pici}{\ensuremath{\Pi_{i}}\xspace}
\newcommand{\piwi}{\ensuremath{\Pi_{i}(<)}\xspace}

\newcommand{\bsci}{\ensuremath{\mathcal{B}\Sigma_{i}}\xspace}
\newcommand{\bswi}{\ensuremath{\mathcal{B}\Sigma_{i}(<)}\xspace}



\newcommand{\mso}{\ensuremath{\textup{MSO}}\xspace}
\newcommand{\fo}{\ensuremath{\textup{FO}}\xspace}
\newcommand{\fow}{\ensuremath{\textup{FO}(<)}\xspace}

\newcommand{\savenotation}{
\let\dews\dew
\let\siws\siw
\let\piws\piw
\let\bsws\bsw
\let\dewus\dewu
\let\siwus\siwu
\let\piwus\piwu
\let\bswus\bswu
\let\dewds\dewd
\let\siwds\siwd
\let\piwds\piwd
\let\bswds\bswd
\let\dewts\dewt
\let\siwts\siwt
\let\piwts\piwt
\let\bswts\bswt
\let\dewis\dewi
\let\siwis\siwi
\let\piwis\piwi
\let\bswis\bswi
\let\fows\fow
}
\savenotation

\newcommand{\lightennotation}{
\let\dew\dec
\let\siw\sic
\let\piw\pic
\let\bsw\bsc
\let\dewu\decu
\let\siwu\sicu
\let\piwu\picu
\let\bswu\bscu
\let\dewd\decd
\let\siwd\sicd
\let\piwd\picd
\let\bswd\bscd
\let\dewt\dect
\let\siwt\sict
\let\piwt\pict
\let\bswt\bsct
\let\dewi\deci
\let\siwi\sici
\let\piwi\pici
\let\bswi\bsci
\let\fow\fo  
}
\lightennotation

\newcommand{\restorenotation}{
\let\dew\dews
\let\siw\siws
\let\piw\piws
\let\bsw\bsws
\let\dewu\dewus
\let\siwu\siwus
\let\piwu\piwus
\let\bswu\bswus
\let\dewd\dewds
\let\siwd\siwds
\let\piwd\piwds
\let\bswd\bswds
\let\dewt\dewts
\let\siwt\siwts
\let\piwt\piwts
\let\bswt\bswts
\let\dewi\dewis
\let\siwi\siwis
\let\piwi\piwis
\let\bswi\bswis
\let\fow\fows
}

\newcommand\sieq[2]{\ensuremath{\lesssim^{#1}_{#2}}\xspace}
\newcommand\ksieq[1]{\sieq{k}{#1}}

\newcommand\bceq[2]{\ensuremath{\cong^{#1}_{#2}}\xspace}
\newcommand\kbceq[1]{\bceq{k}{#1}}

\newcommand\gmo{\ensuremath{\geqslant}\xspace}
\newcommand\lmo{\ensuremath{\leqslant}\xspace}

\let\leq\leqslant

\let\geq\geqslant

\newcommand\Sep{\ensuremath{\mathsf{Sep}}\xspace}

\newcommand\content[1]{\ensuremath{\contentmorphism(#1)}}

\newcommand\contentmorphism{\ensuremath{\textsf{alph}}}

\newcommand\val[1]{\ensuremath{\textsf{val}(#1)\xspace}}

\newcommand\cval[1]{\ensuremath{\textsf{cval}(#1)\xspace}}

\newcommand\chain{chain\xspace}
\newcommand\qchain[1]{\ensuremath{\sic{#1}}-chain\xspace}
\newcommand\chains{chains\xspace}
\newcommand\qchains[1]{\ensuremath{\sic{#1}}-chains\xspace}
\newcommand\Chain{Chain\xspace}

\newcommand\Chains{Chains\xspace}
\newcommand\qChains[1]{\ensuremath{\sic{#1}}-Chains\xspace}

\newcommand\qpchains[2]{\ensuremath{\sic{#1}[#2]}-chains\xspace}

\newcommand\ichain{\qchain{i}}

\newcommand\dchain{\qchain{2}}
\newcommand\ichains{\qchains{i}}

\newcommand\dchains{\qchains{2}}

\newcommand\iChains{\qChains{i}}

\newcommand\dChains{\qChains{2}}

\newcommand\ikchains{\qpchains{i}{k}}

\newcommand\fI{\ensuremath{\mathfrak I}\xspace}
\newcommand\fM{\ensuremath{\mathfrak M}\xspace}

\newcommand\fC{\ensuremath{\mathfrak C}\xspace}
\newcommand\fS{\ensuremath{\mathfrak S}\xspace}

\newcommand\fO{\ensuremath{\mathfrak O}\xspace}
\newcommand\fT{\ensuremath{\mathfrak T}\xspace}

\DeclareMathOperator{\downclos}{\downarrow}

\tikzstyle{nor}=[minimum size=0.35cm,draw,rectangle,inner sep=2pt]
\tikzstyle{nod}=[minimum size=0.35cm,draw,circle,inner sep=2pt]
\tikzstyle{nof}=[minimum size=0.35cm,draw,circle,double,double
distance=1pt]

\tikzstyle{nol}=[minimum size=0.35cm,draw,rectangle,inner sep=1pt,rotate=90]

\tikzstyle{ar}=[line width=0.5pt,->,double]
\tikzstyle{siar}=[line width=1.5pt,->]

\newtheorem{fact}[theorem]{Fact}

\usepackage{enumitem}
\usepackage{hyperref}

\title{Going higher in the First-order\\ Quantifier Alternation Hierarchy on Words\thanks{Supported by ANR 2010 BLAN 0202 01 FREC}}
\author{Thomas~Place and Marc~Zeitoun}
\institute{LaBRI, Universit\'e de Bordeaux, France}

\begin{document}
\maketitle

\begin{abstract}
  We investigate the quantifier alternation hierarchy in first-order
  logic on finite words. Levels in this hierarchy are defined by counting the
  number of quantifier alternations in formulas. We prove that one can
  decide membership of a regular language to the levels \bscd (boolean
  combination of formulas having only $1$ alternation) and \sict
  (formulas having only $2$ alternations beginning with an existential
  block). Our proof works by considering a deeper problem, called
  separation, which, once solved for lower levels, allows us to solve
  membership for higher~levels.
\end{abstract}

\label{sec:intro}
The connection between logic and automata theory is well known and has a
fruitful history in computer science. It was first observed when B\"uchi,
Elgot and Trakhtenbrot proved independently that the regular languages are
exactly those that can be defined using a monadic second-order logic (\mso)
formula. 
Since then, many efforts have been made to investigate and understand
the expressive power of relevant fragments of \mso. In this field, the
yardstick result is often to prove \emph{decidable
  characterizations}, \emph{i.e.}, to design an algorithm which, given as input a
regular language, decides whether it can be defined in the fragment under
investigation. More than the algorithm itself, the main motivation is
the insight given by its proof. Indeed, in order to prove a decidable
characterization, one has to consider and understand \emph{all}
properties that can be expressed in the~fragment.

The most prominent fragment of \mso is first-order logic (\fo)
equipped with a predicate "$<$" for the linear-order. The expressive
power of \fo is now well-understood over words and a decidable
characterization has been obtained. The result, Sch\"utzenberger's
Theorem~\cite{sfo,mnpfo}, states that a regular language is definable
in \fo if and only if its syntactic monoid is aperiodic. The syntactic
monoid is a finite algebraic structure that can effectively be
computed from any representation of the language. Moreover,
aperiodicity can be rephrased as an equation that needs to be
satisfied by all elements of the monoid. Therefore, Sch\"utzenberger's
Theorem can indeed be used to decide definability in \fo.

In this paper, we investigate an important hierarchy inside \fo, obtained by
classifying formulas according to the number of quantifier alternations in
their prenex normal form. More precisely, an \fo formula is \sici if its
prenex normal form has at most $(i-1)$ quantifier alternations and starts with
a block of existential quantifiers. The hierarchy also involves the classes
\bsci of boolean combinations of \sici formulas, and the classes \deci of
languages that can be defined by both a \sici and the negation of a \sici
formula. The quantifier alternation hierarchy was proved to be
strict~\cite{BroKnaStrict,ThomStrict}: $\deci \subsetneq \sici \subsetneq
{\bsci} \subsetneq \dec{i+1}$. In the
literature, 
many efforts have been made to find decidable characterizations of levels of
this well-known~hierarchy.

Despite these efforts, only the lower levels are known to be
decidable. The~class \bscu consists exactly of all piecewise testable
languages, \emph{i.e.}, such that membership of a word only depends on its
subwords up to a fixed~size. These languages were characterized by
Simon~\cite{simon75} as those whose syntactic monoid is
$\mathcal{J}$-trivial. A decidable characterization of \sicd (and hence of
\dewd as well) was proven in~\cite{arfi87}.  For~\dewd, the literature is very
rich~\cite{Tesson02diamondsare}. For example, these are exactly the languages
definable by the two variable restriction of \fo~\cite{twfodeux}. These are also those whose syntactic monoid
is in the class~$\textup{\sf DA}$~\cite{pwdelta}. For higher levels in the hierarchy, getting
decidable characterizations remained an important open problem. In particular,
the case of \bscd has a very rich history and a series of combinatorial,
logical, and algebraic conjectures have been proposed over the years.  We refer 
to~\cite{Pin-ThemeVar2011,AK2010,pinbridges,pin-straubing:upper} for an
exhaustive bibliography. So far, the only known effective result was partial,
working only when the alphabet is of size $2$~\cite{StrauDD2}. One of the main
motivations for investigating this class in formal language theory is its ties
with two other famous hierarchies defined in terms of regular expressions. In
the first one, the \emph{Straubing-Th\'erien
  hierarchy}~\cite{StrauConcat,TheConcat}, level $i$ corresponds exactly to
the class \bsci~\cite{Thom82}. In the second one, the \emph{dot-depth
  hierarchy}~\cite{BrzoDot}, level $i$ corresponds to adding a predicate for
the successor relation in \bsci~\cite{Thom82}. Proving decidability for \bscd
immediately proves decidability of level $2$ in the Straubing-Th\'erien
hierarchy, but also in the dot-depth hierarchy using a reduction by
Straubing~\cite{StrauVD}.

In this paper, we prove decidability for \bscd, \dect and \sict. These new
results are based on a deeper decision problem than decidable
characterizations: the separation problem. Fix a class \Sep of languages. The
\Sep-separation problem amounts to decide whether, given two input regular
languages, there exists a third language in \Sep containing the first language
while being disjoint from the second one. This problem generalizes
decidable characterizations. Indeed, since regular languages are closed under
complement, testing membership in \Sep can be achieved by testing whether the
input is \Sep-separable from its complement. Historically, the separation problem
was first investigated as a special case of a deep problem in semigroup
theory, see~\cite{MR1709911}. This line of research gave solutions to the
problem for several classes. 
However, the motivations are disconnected from our own, and the proofs rely on
deep, purely algebraic arguments. Recently, a research effort has been made to
investigate this problem from a different perspective, with the aim of finding
new and self-contained proofs relying on elementary ideas and notions from
language theory only~\cite{martens,pvzmfcs13,pzfo,pvzltt}. This paper is a
continuation of this effort: we solve the separation problem for \sicd, and
use our solution as a basis to obtain decidable characterizations for \bscd,
\dect and~\sict.

Our solution works as follows: given two regular languages, one can
easily construct a monoid morphism $\alpha: A^* \rightarrow M$ that
recognizes both of them. We then design an algorithm that computes, inside the
monoid $M$,
enough \sicd-related information to answer the \sicd-separation
question for \emph{any} pair of languages that are recognized by
$\alpha$. It turns out that it is also possible (though much more
difficult) to use this information to obtain decidability of \bscd,
\dect and \sict. This information amounts to the notion of \dchain,
our main tool in the paper. A \dchain is an \emph{ordered
  sequence} $s_1,\dots,s_n \in M$ that witnesses a property of
$\alpha$ wrt.\ \sicd. 
Let us give some intuition in the case $n = 2$ -- which is enough to make the
link with \sicd-separation. A sequence $s_1,s_2$ is a \dchain if
any \sicd language containing all words in $\alpha^{-1}(s_1)$ 
intersects $\alpha^{-1}(s_2)$. In terms of separation, this means that
$\alpha^{-1}(s_1)$ is \emph{not} separable from $\alpha^{-1}(s_2)$ by a
$\sicd$ definable language.

This paper contains three main separate and difficult new results:
(1) an algorithm to compute \dchains\ -- hence \sicd-separability is decidable (2)
decidability of \sict (decidability of \dect is an immediate
consequence), and (3) decidability of~\bscd. Computing \dchains is
achieved using a fixpoint algorithm that starts with trivial \dchains
such as $s,s,\dots,s$, and iteratively computes more \dchains until a
fixpoint is reached. Note that its completeness proof relies on
the Factorization Forest Theorem of Simon~\cite{simonfacto}. This is
not surprising, as the link between this theorem and the quantifier
alternation hierarchy was already observed in~\cite{pwdelta,bfacto}.


For \sict, we prove a decidable characterization via an
equation on the syntactic monoid of the language. This equation is  
parametrized by the set of \dchains of length $2$. In other words,
we use \dchains to abstract an infinite set of equations into a
single one. The proof relies again on the Factorization Forest Theorem
of Simon~\cite{simonfacto} and is actually generic to all levels in
the hierarchy. This means that for any $i$, we define a notion of
\ichain and characterize \sic{i+1} using an equation parametrized by
\ichains of length $2$. However, decidability of \sic{i+1}
depends on our ability to compute the \ichains of length $2$, which we
can only do for $i =2$.

Our decidable characterization of \bscd is the most difficult result of the
paper. As for \sict, it is presented by two equations parametrized by \dchains
(of length $2$ and $3$). However, the characterization is this time specific
to the case $i = 2$. This is because most of our proof relies on a deep
analysis of our algorithm that computes \dchains, which only works for $i
=2$. The equations share surprising similarities with the ones used
in~\cite{bpopen} to characterize a totally different formalism: boolean
combination of open sets of infinite trees. In~\cite{bpopen} also, the authors
present their characterization as a set of equations parametrized by a notion
of ``\chain'' for open sets of infinite trees (although their ``\chains'' are
not explicitly identified as a separation relation). Since the formalisms are
of different nature, the way these \chains and our \dchains are constructed
are completely independent, which means that the proofs are also mostly
independent. However, once the construction analysis of \chains has been done,
several combinatorial arguments used to make the link with equations are
analogous. In particular, we reuse and adapt definitions from~\cite{bpopen} to
present these combinatorial arguments in our proof. One could say that the
proofs are both (very different) setups to apply similar combinatorial
arguments in the end.

\noindent
{\bf Organization.} We present definitions on languages
and logic in Sections~\ref{sec:words} and~\ref{sec:logic}
respectively. Section~\ref{sec:chains} is devoted to the presentation
of our main tool: \ichains. In Section~\ref{sec:comput}, we give our
algorithm computing \dchains. The two remaining sections present
our decidable characterizations, for \sict and \dect in
Section~\ref{sec:caracsi} and for \bscd in
Section~\ref{sec:caracbc}. Due to lack of space, proofs
can be found in~\cite{pz:qalt:2014}.


\section{Words and Algebra}
\label{sec:words}
\newcommand\one{\textup{1}}

\medskip
\noindent
{\textbf{Words and Languages.}} We fix a finite alphabet $A$ and we
denote by $A^{*}$ the set of all words over $A$. If $u,v$ are words, we 
denote by $u \cdot v$ or $uv$ the word obtained by concatenation of $u$ and
$v$. If $u \in A^*$ we denote by \content{u} its alphabet, \emph{i.e.}, the
smallest subset $B$ of $A$ such that $u\in B^*$. A \emph{language} is a subset of $A^*$. In
this paper we consider regular languages: these are languages
definable by \emph{nondeterministic finite automata}, or
equivalently by \emph{finite monoids}. In the paper, we only work with
the monoid representation of regular languages.

\medskip
\noindent
{\textbf{Monoids.}} A \emph{semigroup} is a set $S$ equipped with an
associative multiplication denoted by '$\cdot$'. A \emph{monoid} $M$
is a semigroup in which there exists a neutral element denoted
$1_M$. In the paper, we investigate classes of languages, such as
\siwi, that are not closed under complement. For such classes, it is
known that one needs to use \emph{ordered monoids}.  An ordered
monoid is a monoid endowed with a partial order '$\lmo$' which is
compatible with multiplication: $s\lmo t$ and $s'\lmo t'$
imply~$ss'\lmo tt'$.
Given any finite semigroup $S$, it is well known
that there is a number $\omega(S)$ (denoted by $\omega$ when $S$ is
understood from the context) such that for each element $s$ of $S$,
$s^\omega$ is an idempotent: $s^\omega = s^\omega \cdot s^\omega$. 

Let $L$ be a language and $M$ be a monoid. We say that \emph{$L$ is
  recognized by $M$} if there exists a monoid morphism $\alpha : A^*
\rightarrow M$ and an \emph{accepting set} $F \subseteq M$ such that
$L=\alpha^{-1}(F)$. It is well known that a language is regular if and only if it
can be recognized by a \emph{finite monoid}.


\medskip
\noindent {\textbf{Syntactic Ordered Monoid of a Language.}} 
The \emph{syntactic preorder} $\lmo_L$ of a language $L$ is defined as follows
on pairs of words in $A^*$: $w \lmo_L w'$ if for all $u,v \in A^*$, $uwv \in L
\Rightarrow uw'v \in L$. Similarly, we define $\equiv_L$, the \emph{syntactic
  equivalence} of $L$ as follows: $w \equiv_L w'$ if $w \lmo_L w'$ and $w'
\lmo_L w$. One can verify that $\lmo_L$ and $\equiv_L$ are compatible with
multiplication. Therefore, the quotient $M_L$ of $A^*$ by $\equiv_L$ is an
ordered monoid for the partial order induced by the preorder~$\lmo_L$.  It is
well known that $M_L$ can be effectively computed from $L$. Moreover, $M_L$
recognizes $L$. We call $M_L$ the \emph{syntactic ordered monoid of $L$} and
the associated morphism the \emph{syntactic morphism}.

\medskip\noindent
{\bf Separation.} Given three languages $L,L_0,L_1$, we say that $L$
\emph{separates} $L_0$ from $L_1$ if $L_0 \subseteq L \text{ and } L_1
\cap L = \emptyset$. Set $X$ as a class of languages, we say that
$L_0$ is \emph{$X$-separable} from $L_1$ if some language in $X$
separates $L_0$ from $L_1$. Observe that when $X$ is not closed under
complement, the definition is not symmetrical: $L_0$ could be
$X$-separable from $L_1$ while $L_1$ is not $X$-separable from $L_0$.

When working on separation, we consider as input two regular languages
$L_0,L_1$. It will be convenient to have a \emph{single} monoid recognizing
both of them, rather than having to deal with two objects. Let $M_0,
M_1$ be monoids recognizing $L_0,L_1$ together with the morphisms
$\alpha_0,\alpha_1$, respectively. Then, $M_0 \times M_1$ equipped
with the componentwise multiplication $(s_0,s_1) \cdot (t_0,t_1)=(s_0
t_0,s_1 t_1)$ is a monoid that recognizes both $L_0$ and $L_1$ with
the morphism $\alpha : w \mapsto (\alpha_0(w),\alpha_1(w))$. From now
on, we work with such a single monoid recognizing both languages.


\medskip
\noindent
{\bf \Chains and Sets of \Chains.} Set $M$ as a finite monoid. A
\emph{\chain } for $M$ is a word over the alphabet $M$, \emph{i.e.}, an
element of $M^*$. A remark about notation is in order here. A word is
usually denoted as the concatenation of its letters. Since $M$ is a
monoid, this would be ambiguous here since $st$ could either mean a
word with 2 letters $s$ and $t$, or the product of $s$ and $t$ in
$M$. To avoid confusion, we will write $(s_1,\dots,s_n)$ a \chain of
length $n$ on the alphabet $M$.

In the paper, we will consider both sets of \chains (denoted by
$\Ts,\Ss,\dots$) and sets of sets of \chains (denoted by $\fT,\fS,
\dots$). In particular, if $\fT$ is a set of sets of \chains, we define
$\downclos \fT$, the \emph{downset} of $\fT$, as the set:
\[
\downclos \fT=\{ \Ts \mid\exists \Ss \in \fT,\ \Ts \subseteq \Ss\}.
\]
We will often restrict ourselves to considering only \chains of a
given fixed length. For $n \in \nat$, observe that $M^n$, the set of
\chains of length $n$, is a monoid when equipped with the
componentwise multiplication. Similarly the set $2^{M^n}$ of sets of \chains of
length $n$ is a monoid for the operation: $\Ss \cdot \Ts =
\{\bar{s}\bar{t} \in M^n \mid \bar{s} \in \Ss \quad \bar{t} \in \Ts\}$.

\section{First-Order Logic and Quantifier Alternation Hierarchy}
\label{sec:logic}
We view words as logical structures made of a sequence of positions labeled
over~$A$. We denote by $<$ the linear order over the positions. We work with
first-order logic \fow using unary predicates $P_a$ for all $a \in A$ that
select positions labeled with an $a$, as well as a binary predicate for the
linear order $<$. The \emph{quantifier rank} of an \fow formula is the length
of its longest sequence of nested quantifiers.

One can classify first-order formulas by counting the number of
alternations between $\exists$ and $\forall$ quantifiers in the prenex
normal form of the formula. Set $i \in \nat$, a formula is said to be
\siw{i} (resp.\ \piw{i}) if its prenex normal form has $i -1$
quantifier alternations (\emph{i.e.}, $i$ blocks of quantifiers) and starts
with an $\exists$  (resp.\ $\forall$) quantification. For example, a
formula whose prenex normal form is
\[
\forall x_1 \forall x_2 \exists x_3 \forall x_4
\ \varphi(x_1,x_2,x_3,x_4) \quad \text{(with $\varphi$ quantifier-free)}
\]
\noindent
is \piwt. Observe that a \piw{i} formula is by definition the negation of a
\siw{i} formula. Finally, a \bsw{i} formula is a boolean combination of
\siw{i} formulas. For $X = \fow,\siw{i},\piw{i}$ or $\bsw{i}$, we say that a
language $L$ is $X$-definable if it can be defined by an $X$-formula. Finally,
we say that a language is \dew{i}-definable if it can be defined by
\emph{both} a \siw{i} and a \piw{i} formula. It is known that this gives a
strict infinite hierarchy of classes of languages as represented in
Figure~\ref{fig:hiera}.

\tikzstyle{non}=[inner sep=1pt]
\tikzstyle{tag}=[draw,fill=white,sloped,circle,inner sep=1pt]
\begin{figure}[h]
  \begin{center}
    \begin{tikzpicture}

      \node[non] (d1) at (0.0,0.0) {\decu};
      \node[non] (s1) at ($(d1)+(1.0,-0.8)$) {\sicu};
      \node[non] (p1) at ($(d1)+(1.0,0.8)$) {\picu};
      \node[non] (b1) at ($(d1)+(2.0,0.0)$) {\bscu};

      \node[non] (d2) at (3.5,0.0) {\decd};
      \node[non] (s2) at ($(d2)+(1.0,-0.8)$) {\sicd};
      \node[non] (p2) at ($(d2)+(1.0,0.8)$) {\picd};
      \node[non] (b2) at ($(d2)+(2.0,0.0)$) {\bscd};

      \node[non] (d3) at (7.0,0.0) {\dect};
      \node[non] (s3) at ($(d3)+(1.0,-0.8)$) {\sict};
      \node[non] (p3) at ($(d3)+(1.0,0.8)$) {\pict};
      \node[non] (b3) at ($(d3)+(2.0,0.0)$) {\bsct};

      \node[non] (d4) at (10.5,0.0) {\dec{4}};

      \draw[thick] (d1.south) to [out=-90,in=180] node[tag] {\scriptsize
        $\subsetneq$} (s1.west);
      \draw[thick] (d1.north) to [out=90,in=-180] node[tag] {\scriptsize
        $\subsetneq$} (p1.west);
      \draw[thick] (s1.east) to [out=0,in=-90] node[tag] {\scriptsize
        $\subsetneq$} (b1.south);
      \draw[thick] (p1.east) to [out=0,in=90] node[tag] {\scriptsize
        $\subsetneq$} (b1.north);
      \draw[thick] (b1.east) to [out=0,in=180] node[tag] {\scriptsize
        $\subsetneq$} (d2.west);

      \draw[thick] (d2.south) to [out=-90,in=180] node[tag] {\scriptsize
        $\subsetneq$} (s2.west);
      \draw[thick] (d2.north) to [out=90,in=-180] node[tag] {\scriptsize
        $\subsetneq$} (p2.west);
      \draw[thick] (s2.east) to [out=0,in=-90] node[tag] {\scriptsize
        $\subsetneq$} (b2.south);
      \draw[thick] (p2.east) to [out=0,in=90] node[tag] {\scriptsize
        $\subsetneq$} (b2.north);
      \draw[thick] (b2.east) to [out=0,in=180] node[tag] {\scriptsize
        $\subsetneq$} (d3.west);

      \draw[thick] (d3.south) to [out=-90,in=180] node[tag] {\scriptsize
        $\subsetneq$} (s3.west);
      \draw[thick] (d3.north) to [out=90,in=-180] node[tag] {\scriptsize
        $\subsetneq$} (p3.west);
      \draw[thick] (s3.east) to [out=0,in=-90] node[tag] {\scriptsize
        $\subsetneq$} (b3.south);
      \draw[thick] (p3.east) to [out=0,in=90] node[tag] {\scriptsize
        $\subsetneq$} (b3.north);
      \draw[thick] (b3.east) to [out=0,in=180] node[tag] {\scriptsize
        $\subsetneq$} (d4.west);

      \draw[thick,dotted] ($(d4.east)+(0.1,0.0)$) to
      ($(d4.east)+(0.7,0.0)$);

    \end{tikzpicture}
  \end{center}
  \caption{Quantifier Alternation Hierarchy}
  \label{fig:hiera}
\end{figure}
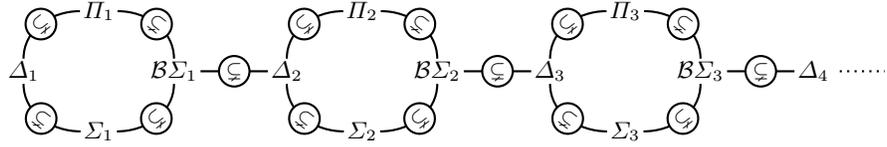


\medskip
\noindent {\bf Preorder for \siw{i}.} Let $w,w' \in A^*$ and $k,i \in
\nat$. We write $w \ksieq{i} w'$ if any \siw{i} formula of quantifier rank $k$
satisfied by $w$ is also satisfied by~$w'$. Observe that since a \piw{i}
formula is the negation of a \siw{i} formula, we have $w
\ksieq{i} w'$ iff any \piw{i} formula of quantifier rank $k$ satisfied by $w'$
is also satisfied by~$w$. One can verify that $\ksieq{i}$ is a preorder for
all $k,i$. Moreover, by definition, a language $L$ can be defined by a \siw{i}
formula of rank $k$ iff $L$ is saturated by $\ksieq{i}$, \emph{i.e.}, for all $w \in
L$ and all $w'$ such that $w \ksieq{i} w'$, we have $w' \in L$.

\section{\iChains}
\label{sec:chains}
We now introduce the main tool of this paper:
\emph{\ichains.} Fix a level~$i$ in the quantifier alternation
hierarchy and $\alpha : A^* \rightarrow M$ a monoid morphism. A
\emph{\ichain } for $\alpha$ is a \chain $(s_1,\dots,s_n) \in M^*$ such
that for arbitrarily large $k \in \nat$, there exist words $w_1
\ksieq{i} \cdots \ksieq{i} w_n$ mapped respectively to $s_1, \dots,
s_n$ by~$\alpha$. Intuitively, this contains information about the
limits of the expressive power of the logic $\siw{i}$ with respect to
$\alpha$. For example, if $(s_1,s_2)$ is a \ichain, then any $\siw{i}$
language that contains all words of image $s_1$ must also contain at
least one word of image~$s_2$.

\smallskip
In this section, we first give all definitions related to \ichains. We then
present an immediate application of this notion: solving the separation problem
for \siw{i} can be reduced to computing the \ichains of length $2$.

\subsection{Definitions}

\noindent
{\bf \iChains.} Fix $i$ a level in the hierarchy, $k \in \nat$ and $B
\subseteq A$. We define 
$\Csik[\alpha]$ (resp.\ $\Csik[\alpha,B]$) as
the \emph{set of \ikchains for $\alpha$} (resp.\ for $(\alpha,B)$) and
$\Csi[\alpha]$ (resp.\ $\Csi[\alpha,B]$) as
the \emph{set of \ichains for $\alpha$} (resp.\ for $(\alpha,B)$). For $i = 0$, we set 
$\Csi[\alpha] = \Csik[\alpha] = M^*$. Otherwise,
let $\bar{s} = (s_1,\dots,s_n) \in M^*$. We let

\begin{itemize}
\item $\bar{s} \in \Csik[\alpha]$ if there exist $w_1,\dots,w_n \in A^*$
  verifying $w_1 \ksieq{i} w_2 \ksieq{i} \cdots \ksieq{i} w_n$ and for all
  $j$, we have $\alpha(w_j)=s_j$. Moreover, $\bar{s} \in \Csik[\alpha,B]$ if
  the words $w_j$ can be chosen so that they satisfy additionally
  $\content{w_j}=B$ for all $j$.
\item $\bar{s} \in \Csi[\alpha]$ if for all $k$, we have $\bar{s} \in
  \Csik[\alpha]$. That is, $\Csi[\alpha]=\bigcap_k\Csik[\alpha]$. In the same
  way, $\Csi[\alpha,B]=\bigcap_k\Csik[\alpha,B]$.
\end{itemize}

One can check that if $i\geq2$, then
$\Csik[\alpha] =
\bigcup_{B\subseteq A} \Csik[\alpha,B]$, since  the fragment $\siwi$ can
detect the alphabet (\emph{i.e.}, for $i\geq2$, $w\ksieq{i}w'$
implies $\content{w}=\content{w'}$). Similarly for $i\geq2$, the set
of \ichains for $\alpha$ is $\Csi[\alpha]=\bigcup_{B\subseteq A}
\Csi[\alpha,B]$. Observe that all these sets 
are closed under subwords. Therefore, by Higman's lemma, we get the
following fact.

\begin{fact} \label{fct:high}
  For all $i,k \in \nat$ and $B \subseteq A$, $\Csi[\alpha,B]$ and
  $\Csik[\alpha,B]$ are regular languages.
\end{fact}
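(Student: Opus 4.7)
My plan is to derive both statements from the classical consequence of Higman's lemma that every subword-closed language over a finite alphabet is regular. The argument splits naturally into two short steps: verifying subword-closure of the two sets, and then applying Higman.

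For the first step, I would check that $\Csik[\alpha,B]$ is closed under taking subwords of the chain. Suppose $\bar{s}=(s_1,\dots,s_n)\in\Csik[\alpha,B]$ is witnessed by words $w_1\ksieq{i}\cdots\ksieq{i}w_n$ with $\alpha(w_j)=s_j$ and $\content{w_j}=B$ for all $j$. If $\bar{t}$ is obtained from $\bar{s}$ by restricting to indices $j_1<\cdots<j_m$, then the subsequence $w_{j_1},\dots,w_{j_m}$ is still a $\ksieq{i}$-chain because $\ksieq{i}$ is a preorder, and the constraints on images and alphabets are inherited pointwise. Hence $\bar{t}\in\Csik[\alpha,B]$. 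Closure of $\Csi[\alpha,B]=\bigcap_k\Csik[\alpha,B]$ under subwords then follows immediately, since an intersection of downward-closed sets is downward-closed.

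For the second step, I would invoke Higman's lemma on the finite alphabet $M$: the subword relation on $M^*$ is a well-quasi-order, so the complement of any subword-closed language has only finitely many minimal elements $m_1,\dots,m_r\in M^*$. Writing $m_l=a_{l,1}\cdots a_{l,n_l}$, the complement is then exactly the finite union $\bigcup_{l=1}^{r} M^*a_{l,1}M^*\cdots M^*a_{l,n_l}M^*$, which is regular; taking complements again, the original language is regular. Applying this to the two subword-closed sets identified above yields the fact.

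There is no real obstacle here; the only subtlety worth noting is that $\Csi[\alpha,B]$ is defined as an \emph{infinite} intersection of the regular languages $\Csik[\alpha,B]$, so its regularity cannot be concluded from closure of regular languages under finite intersection. The Higman argument is what bypasses this issue, since it applies uniformly to any subword-closed subset of $M^*$ without any extra finiteness hypothesis on how that set is described.
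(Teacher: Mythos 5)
Your proof is correct and takes essentially the same route as the paper: observe that $\Csik[\alpha,B]$ and $\Csi[\alpha,B]$ are closed under subwords, then invoke Higman's lemma to conclude that any subword-closed subset of $M^*$ is regular. The paper states this in one line without spelling out the details you provide; your elaboration (subword-closure via transitivity of $\ksieq{i}$, regularity of upward-closed sets over a wqo, and the remark that infinite intersections of regular languages need not be regular) is accurate and fills in exactly what the paper leaves implicit.
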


Fact~\ref{fct:high} is interesting but essentially useless in our
argument, as Higman's lemma provides no way for actually computing a
recognizing device for $\Csi[\alpha,B]$.

For any fixed $n \in \nat$, we let  $\Csikn[\alpha,B]$ be the
set of \ikchains of length~$n$ for $\alpha,B$,
\emph{i.e.}, $\Csikn[\alpha,B]=\Csik[\alpha,B] \cap M^n$. We define
$\Csin[\alpha,B],\Csikn[\alpha]$ and $\Csin[\alpha]$ similarly. The
following fact is immediate.

\begin{fact} \label{fct:chaincomp} If $B,C\subseteq A$, then
  $\Csikn[\alpha,B]\cdot\Csikn[\alpha,C] \subseteq\Csikn[\alpha,B\cup C]$.  In
  particular, $\Csikn[\alpha]$ and $\Csin[\alpha]$ (resp. $\Csikn[\alpha,B]$
  and $\Csin[\alpha,B]$) are submonoids (resp. subsemigroups) of~$M^n$.
\end{fact}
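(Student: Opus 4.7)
The plan is to prove the product inclusion first and then read off the (sub)monoid/(sub)semigroup statements as consequences. The key ingredient is a concatenation (composition) lemma for the preorder $\ksieq{i}$: if $w \ksieq{i} w'$ and $v \ksieq{i} v'$, then $wv \ksieq{i} w'v'$. This is the standard \efgame-style argument for $\sic{i}$-games, where Duplicator combines her two winning strategies on the factors, depending on which half of the concatenation Spoiler picks a position in. I would state this lemma explicitly (even as a preliminary fact, since it is used repeatedly) and sketch its proof in one sentence.

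Given this, the product inclusion is a direct witness construction. Let $\bar{s} = (s_1,\dots,s_n) \in \Csikn[\alpha,B]$ and $\bar{t} = (t_1,\dots,t_n) \in \Csikn[\alpha,C]$, witnessed respectively by word sequences $w_1 \ksieq{i} \cdots \ksieq{i} w_n$ with $\alpha(w_j)=s_j$ and $\content{w_j}=B$, and $v_1 \ksieq{i} \cdots \ksieq{i} v_n$ with $\alpha(v_j)=t_j$ and $\content{v_j}=C$. I set $u_j := w_j v_j$. Then $\alpha(u_j) = s_j t_j$, $\content{u_j} = B \cup C$, and by applying the concatenation lemma to each consecutive pair I obtain $u_1 \ksieq{i} u_2 \ksieq{i} \cdots \ksieq{i} u_n$. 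Hence $(s_1 t_1,\dots,s_n t_n) \in \Csikn[\alpha,B \cup C]$.

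For the ``in particular'' part, taking $B = C$ arbitrary (resp. working in the version without alphabet condition) shows that $\Csikn[\alpha,B]$ and $\Csikn[\alpha]$ are closed under the componentwise product of $M^n$. For the identity element $(1_M,\dots,1_M)$, I use the empty word: the constant sequence $(\varepsilon,\dots,\varepsilon)$ trivially satisfies $\varepsilon \ksieq{i} \varepsilon$ (reflexivity) and maps to $(1_M,\dots,1_M)$, so $\Csikn[\alpha]$ is a submonoid of $M^n$. The set $\Csikn[\alpha,B]$ does not contain this identity when $B \neq \emptyset$ (since $\content{\varepsilon} = \emptyset$), which is why it is only a subsemigroup. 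Finally, $\Csin[\alpha] = \bigcap_k \Csikn[\alpha]$ and $\Csin[\alpha,B] = \bigcap_k \Csikn[\alpha,B]$, and arbitrary intersections of submonoids (resp. subsemigroups) of $M^n$ are again submonoids (resp. subsemigroups), which gives the remaining cases.

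The main obstacle — really the only non-routine ingredient — is the concatenation lemma for $\ksieq{i}$. Everything else is bookkeeping: tracking alphabets, invoking reflexivity for the identity, and passing to intersections over $k$. I would therefore isolate the concatenation lemma, either by citing it or by giving a one-line \efgame argument, and the rest of the proof is essentially a diagram chase.
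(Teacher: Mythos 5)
Your proposal is correct and is exactly the argument the paper treats as ``immediate'': the paper gives no proof for this fact, but the concatenation compatibility of $\ksieq{i}$ that you isolate is stated and proved as Lemma~\ref{lem:efconcat} in the appendix, and the rest (witness concatenation, $\content{w_jv_j}=B\cup C$, the empty word for the identity of $\Csikn[\alpha]$, intersection over $k$ for the $\Csin$ cases) is the routine bookkeeping you describe. No gaps.
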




This ends the definition of \ichains. However, in order to define our
algorithm for computing \dchains and state our decidable
characterization of \bswd, we will need a slightly refined notion:
\emph{compatible sets of \chains}.

\medskip
\noindent
{\bf Compatible Sets of \iChains.} In some cases, it will be useful to
know that several \ichains with the same first element can be
`synchronized'. For example take two \ichains $(s,t_1)$ and $(s,t_2)$
of length $2$. By definition, for all $k$ there exist words
$w_1,w'_1,w_2,w'_2 $ whose images under $\alpha$ are $s,t_1,s,t_2$ respectively, and such that $w_1
\ksieq{i} w'_1$ and $w_2 \ksieq{i} w'_2$. In some cases (but not all),
it will be possible to choose $w_1 = w_2$ for all $k$. The goal of the
notion of compatible sets of \chains is to record the cases in which
this is true.

Fix $i$ a level in the hierarchy, $k \in \nat$ and $B \subseteq A$.
We define two sets of sets of~\chains:
$\fCik[\alpha,B]$, the \emph{set of compatible sets of \ikchains for}
($\alpha,B$), and
$\fCi[\alpha,B]$, the \emph{set of compatible sets of
  \ichains for} ($\alpha,B$).
Let \Ts be a set of \chains, all having the same
length 
$n$ and the same
first element~$s_1$.
\begin{itemize}
  \itemsep1ex
\item $\Ts \in \fCik[\alpha,B]$ if there exists $w \in A^*$ such
  that $\content{w} = B$, $\alpha(w) = s_1$, and for all \chains
  $(s_1,\dots,s_n) \in \Ts$, there exist $w_2,\dots,w_n \in A^*$
  verifying $w \ksieq{i} w_2 \ksieq{i} \cdots \ksieq{i} w_n$, and for all $j=2,\dots,n$,
  $\alpha(w_j)=s_j$, and $\content{w_j}=B$.
\item $\Ts \in \fCi[\alpha,B]$ if $\Ts \in
  \fCik[\alpha,B]$  for all $k$. 
\end{itemize}

As before we set $\fCik[\alpha]$ and $\fCi[\alpha]$ as the union
of these sets for all $B \subseteq A$. Moreover, we denote by
$\fCikn[\alpha,B],\fCin[\alpha,B],\fCikn[\alpha]$ and
$\fCin[\alpha]$ the restriction of these sets to sets of \chains of
length $n$ (\emph{i.e.}, subsets of $2^{M^n}$).

\begin{fact} \label{fct:setcomp} If $B,C\subseteq A$, then
  $\fCikn[\alpha,B]\cdot\fCikn[\alpha,C] \subseteq\fCikn[\alpha,B\cup C]$.  In
  particular, $\fCikn[\alpha]$ and $\fCin[\alpha]$ (resp. $\fCikn[\alpha,B]$
  and $\fCin[\alpha,B]$) are submonoids (resp. subsemigroups) of $2^{M^n}$.
\end{fact}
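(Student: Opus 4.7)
The plan is to lift Fact~\ref{fct:chaincomp} from single \chains to sets of \chains by the same concatenation argument, with one extra ingredient: the preorder $\ksieq{i}$ is a congruence for concatenation, meaning that if $u_1 \ksieq{i} u_2$ and $v_1 \ksieq{i} v_2$, then $u_1 v_1 \ksieq{i} u_2 v_2$. This is a standard \efgame property for the $\siw{i}$ fragment (obtained by splitting any Duplicator strategy on the concatenated words into strategies on the two halves), which I would invoke directly.

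For the main inclusion, take $\Ts \in \fCikn[\alpha, B]$ and $\Us \in \fCikn[\alpha, C]$, with shared first elements $s_1$ and $t_1$ respectively. Let $w, w' \in A^*$ be the witness words provided by the definition, satisfying $\content{w} = B$, $\alpha(w) = s_1$, $\content{w'} = C$ and $\alpha(w') = t_1$. I claim that $ww'$ witnesses $\Ts \cdot \Us \in \fCikn[\alpha, B \cup C]$. First, $\content{ww'} = B \cup C$ and $\alpha(ww') = s_1 t_1$, which is the common first element of every \chain in $\Ts \cdot \Us$. For any \chain $\bar{s}\bar{t} = (s_1 t_1, s_2 t_2, \dots, s_n t_n) \in \Ts \cdot \Us$ with $\bar{s} = (s_1, \dots, s_n) \in \Ts$ and $\bar{t} = (t_1, \dots, t_n) \in \Us$, the definitions of $\fCikn[\alpha, B]$ and $\fCikn[\alpha, C]$ yield sequences $w \ksieq{i} w_2 \ksieq{i} \cdots \ksieq{i} w_n$ and $w' \ksieq{i} w'_2 \ksieq{i} \cdots \ksieq{i} w'_n$ with $\alpha(w_j) = s_j$, $\alpha(w'_j) = t_j$ and contents $B$, $C$ respectively. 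Concatenating pairwise and applying the congruence property gives $ww' \ksieq{i} w_2 w'_2 \ksieq{i} \cdots \ksieq{i} w_n w'_n$, with $\alpha(w_j w'_j) = s_j t_j$ and $\content{w_j w'_j} = B \cup C$, which is exactly the required witness for $\bar{s}\bar{t}$.

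The ``in particular'' statements then follow with no further work. Setting $C = B$ in the inclusion shows that $\fCikn[\alpha, B]$ is closed under product, hence a subsemigroup of $2^{M^n}$. Since $\fCikn[\alpha] = \bigcup_{B \subseteq A} \fCikn[\alpha, B]$, the inclusion applied to arbitrary $B, C$ gives closure of $\fCikn[\alpha]$ under product; moreover, the identity $\{(1_M, \dots, 1_M)\}$ of $2^{M^n}$ lies in $\fCikn[\alpha, \emptyset]$ via the empty witness $\varepsilon$, so $\fCikn[\alpha]$ is a submonoid. Intersecting over all $k$ transfers both statements to $\fCin[\alpha, B]$ and $\fCin[\alpha]$.

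The only nontrivial step is the \efgame congruence for $\ksieq{i}$; once that is granted, the rest is a mechanical transcription of the proof of Fact~\ref{fct:chaincomp}. Conceptually, the key point is that a \emph{single} witness word $w$ works simultaneously for every \chain in a compatible set $\Ts$, and concatenating two such ``parallel'' witness configurations preserves this simultaneity --- which is exactly the property that the notion of compatible set of \chains was designed to package.
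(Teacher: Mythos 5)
Your proof is correct, and it is the natural argument that the paper leaves implicit when stating Fact~\ref{fct:setcomp} without proof: you concatenate the two witness words, concatenate the realizing sequences chain-by-chain, and invoke the precongruence property of $\ksieq{i}$, which is exactly Lemma~\ref{lem:efconcat} in the appendix. The treatment of the ``in particular'' claims (taking $C=B$ for the subsemigroup, the identity $\{(1_M,\dots,1_M)\}\in\fCikn[\alpha,\emptyset]$ via the empty word, and passing to $\fCin$ by intersecting over $k$) is also right.
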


\subsection{\iChains and Separation}
We now state a reduction from the separation problem by $\siw{i}$ and by
$\piw{i}$-definable languages to the computation of \ichains of length 2.

\begin{theorem} \label{thm:sep}
  Let $L_1,L_2$ be regular languages and $\alpha: A^* \rightarrow M$ be
  a morphism into a finite monoid recognizing both languages with
  accepting sets $F_1,F_2 \subseteq M$. Set $i \in \nat$. Then the
  following properties hold:
  \begin{enumerate}
  \item $L_1$ is $\siwi$-separable from $L_2$ iff for all $s_1,s_2
    \in F_1,F_2$, $(s_1,s_2) \not\in \Csi[\alpha]$.
  \item $L_1$ is $\piwi$-separable from $L_2$ iff for all $s_1,s_2
    \in F_1,F_2$, $(s_2,s_1) \not\in \Csi[\alpha]$.
  \end{enumerate}
\end{theorem}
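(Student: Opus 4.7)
The plan is to prove part (1) directly and derive part (2) from it by complementation.

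For the \textbf{``if'' direction of (1)}, assume some \siwi-definable separator $K$ of quantifier rank $k$ exists. For contradiction, suppose some $(s_1,s_2)\in F_1\times F_2$ lies in $\Csi[\alpha]$: then in particular $(s_1,s_2)\in\Csik[\alpha]$ for this specific $k$, producing words $w_1\ksieq{i} w_2$ with $\alpha(w_1)=s_1$ and $\alpha(w_2)=s_2$. Then $w_1\in L_1\subseteq K$, and by the saturation characterization of \siwi-definable languages of rank $k$ recalled just above the statement, $w_2\in K$. But $\alpha(w_2)\in F_2$ gives $w_2\in L_2$, contradicting $K\cap L_2=\emptyset$.

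For the \textbf{``only if'' direction of (1)}, assume $(s_1,s_2)\notin\Csi[\alpha]$ for every $(s_1,s_2)\in F_1\times F_2$. Because $(\Csik[\alpha])_k$ is decreasing in $k$ and $F_1\times F_2$ is finite, a single rank $k$ witnesses simultaneously that none of these pairs lies in $\Csik[\alpha]$; securing this uniform bound is the delicate point of the argument. For each word $w\in A^*$, set $K_w = \{w'\in A^* \mid w\ksieq{i} w'\}$. Up to logical equivalence there are only finitely many \siwi-formulas of quantifier rank at most $k$ over $A$, so $K_w$ is definable as the finite conjunction of those formulas satisfied by $w$, hence is \siwi-definable; moreover, only finitely many $K_w$ are distinct as $w$ varies over $A^*$. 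Consequently
\[
K \;=\; \bigcup_{w_1\in L_1} K_{w_1}
\]
is a \emph{finite} union of \siwi-definable languages, hence itself \siwi-definable. Clearly $L_1\subseteq K$; and any $w'\in K\cap L_2$ would produce $w_1\in L_1$ with $w_1\ksieq{i} w'$, making $(\alpha(w_1),\alpha(w'))\in F_1\times F_2$ an element of $\Csik[\alpha]$, a contradiction. Hence $K$ is the desired \siwi-separator.

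For \textbf{Part (2)}, observe that a language is \piwi-definable iff its complement is \siwi-definable, and a language $K$ separates $L_1$ from $L_2$ iff its complement $A^*\setminus K$ separates $L_2$ from $L_1$. Hence $L_1$ is \piwi-separable from $L_2$ iff $L_2$ is \siwi-separable from $L_1$; invoking part (1) with the two languages swapped yields precisely the stated condition $(s_2,s_1)\notin\Csi[\alpha]$ for every $(s_1,s_2)\in F_1\times F_2$.
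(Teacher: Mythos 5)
Your proof is correct, and it follows the standard argument that the paper itself only cites (it defers to~\cite{pzfo} without reproducing the details): one direction is saturation by $\ksieq{i}$, the other constructs the separator as a finite union of $\ksieq{i}$-upward-closures of words in $L_1$, after first fixing a uniform rank $k$ via finiteness of $F_1\times F_2$ and monotonicity of $k\mapsto\Csik[\alpha]$; the $\piwi$ case is then obtained by complementation. All the small points you rely on are sound --- in particular $\siwi$ is indeed closed under finite unions and intersections, so each $K_w$ and their finite union are genuinely $\siwi$-definable even if the quantifier rank exceeds $k$, which is irrelevant to the conclusion.
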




The proof of Theorem~\ref{thm:sep}, which is parametrized by \ichains, is
standard and identical to the corresponding theorems in previous separation
papers, see \emph{e.g.,}~\cite{pzfo}. In Section~\ref{sec:comput}, we present
an algorithm computing \ichains of length~2 at level $i = 2$ of the
alternation hierarchy (in fact, our algorithm needs to compute the more
general notion of sets of compatible \dchains). This makes
Theorem~\ref{thm:sep} effective for \siwd and \piwd.


\section{Computing \dChains}
\label{sec:comput}
In this section, we give an algorithm for computing all \dchains and
sets of compatible \dchains of a given fixed length. We already know
by Theorem~\ref{thm:sep} that achieving this for length $2$ suffices
to solve the separation problem for \siwd and \piwd. Moreover, we will
see in Sections~\ref{sec:caracsi} and~\ref{sec:caracbc} that this
algorithm can be used to obtain decidable characterizations for 
\siwt, \piwt, \dewt and~\bswd. Note that in this section, we only provide the
algorithm and intuition on its correctness.

For the remainder of this section, we fix a morphism $\alpha: A^*
\rightarrow M$ into a finite monoid $M$. For any fixed $n \in \nat$
and $B \subseteq A$, we need to compute the following:

\begin{enumerate}
  \itemsep1ex
\item the sets $\Cstwon[\alpha,B]$ of \dchains of length $n$ for
  $\alpha$. 
\item the sets $\fCtwon[\alpha,B]$ of compatible subsets of
  $\Cstwon[\alpha,B]$.
\end{enumerate}

Our algorithm directly computes the second item, \emph{i.e.},
$\fCtwon[\alpha,B]$. More precisely, we compute the map $B \mapsto
\fCtwon[\alpha,B]$. Observe that this is enough to obtain
the first item since by definition, $\bar{s} \in \Cstwon[\alpha,B]$
iff $\{\bar{s}\} \in \fCtwon[\alpha,B]$. Note that going through
compatible subsets is necessary for the technique to work, even if we
are only interested in computing the map $B \mapsto \Cstwon[\alpha,B]$.

\medskip
\noindent {\bf Outline.} We begin by explaining what our algorithm does. For
this outline, assume $n = 2$. Observe that for all $w \in A^*$ such that
$\content{w} = B$, we have $\bigl\{(\alpha(w),\alpha(w))\bigr\} \in
\fCtwon[\alpha,B]$. The algorithm starts from these trivially compatible sets,
and then saturates them with two operations that preserve membership
in $\fCtwon[\alpha,B]$. Let us describe these two operations. The first one is
multiplication: if $\Ss \in \fCtwon[\alpha,B]$ and $\Ts \in \fCtwon[\alpha,C]$
then $\Ss \cdot \Ts \in \fCtwon[\alpha,B\cup C]$ by
Fact~\ref{fct:setcomp}. The main idea behind the second operation is to
exploit the following property of~\siwd:
\[
\forall k\ \exists \ell \text{~~~} w \ksieq{2} u, w \ksieq{2} u' \text{ and } \content{w'} =
\content{w} ~~~\Longrightarrow~~~ w^{2\ell} \ksieq{2} u^\ell w' u'^{\ell}.
\]
This is why compatible sets are needed: in order to use this property,
we need to have a single word $w$ such that $w \ksieq{2} u$ and $w
\ksieq{2} u'$, which is information that is not provided by \dchains. This
yields an operation that states that whenever $\Ss$ belongs to $\fCtwon[\alpha,B]$,
then so does $\Ss^\omega \cdot \Ts \cdot \Ss^\omega$, where $\Ts$ is the set of
\chains $(1_M,\alpha(w'))$ with $\content{w'} = B$. Let us now formalize
this procedure and generalize it to arbitrary length.

\medskip
\noindent
{\bf Algorithm.} As we explained, our algorithm works by fixpoint,
starting from trivial compatible sets. For all $n \in \nat$ and $B
\subseteq A$, we let $\fI_n[B]$ be the set $\fI_n[B] =
\bigl\{\{(\alpha(w),\dots,\alpha(w))\} \mid \content{w} = B\bigr\} \subseteq
2^{M_n}$. Our algorithm will start from the function $f_0:2^A\to2^{2^{M^n}}$
that maps any $C\subseteq A$ to $\fI_n[C]$.

Our algorithm is defined for any fixed length $n \gmo 1$.  We use a procedure
$Sat_n$ taking as input a mapping $f:2^A\to2^{2^{M^n}}$ and producing another such
mapping.  The algorithm starts from $f_0$ and iterates  $Sat_n$
until a fixpoint is reached.

When $n \gmo 2$, the procedure $Sat_n$ is parametrized by
$\Cstwolen{n-1}[\alpha,B]$, the sets of \dchains of length $n-1$, for $B
\subseteq A$. This means that in order to use $Sat_n$, one needs to have
previously computed the \dchains of length $n-1$ with~$Sat_{n-1}$.

\medskip
We now define the procedure $Sat_n$. If \Ss is a set of \chains of
length $n-1$ and $s \in M$, we write $(s,\Ss)$ for the set
$\{(s,s_1,\dots,s_{n-1}) \mid (s_1,\dots,s_{n-1}) \in \Ss\}$, which consists of
\chains of length $n$. Let $f:2^A\to2^{2^{M^n}}$ be a mapping, written $f=(C\mapsto\fT_C)$. For all $B \subseteq A$, we define a set
$Sat_n[B](f)$ in $2^{M^n}$. That is, $B \mapsto Sat_n[B](f)$ is again a mapping from
$2^A$ to $2^{2^{M^n}}$.
Observe that when $n = 1$,
there is no computation to do since for all $B$, $\fCtwolen1[\alpha,B] =
\fI_1[B]$ by definition. Therefore, we simply set $Sat_1[B](C \mapsto
\fT_C) = \fT_B$. When $n \gmo 2$, we define $Sat_n[B](C \mapsto
\fT_C)$ as the set $\fT_B \cup \fM_B \cup \fO_B$ with
\begin{eqnarray} 
  \fM_B & = & \bigcup_{C \cup D = B} (\fT_C \cdot \fT_D)\label{eq:mul} \\
  \fO_B & = & \big\{\Ts^\omega \cdot (1_M,\Cstwolen{n-1}[\alpha,B]) \cdot \Ts^{\omega} \mid \Ts \in
  \fT_B\big\}\label{eq:oper}
\end{eqnarray}
This ends the description of the procedure $Sat_n$.
We now formalize how to iterate~it. For any mapping $f: 2^A \rightarrow 2^{M^n}$ and any $B \subseteq A$ ,
we set $Sat^0_n[B](f) = f(B)$. For all $j \geq 1$, we set
$Sat^{j}_{n}[B](f) = Sat_n[B](C \mapsto Sat^{j-1}_n[C](f))$. By
definition of $Sat_n$, for all $j \gmo 0$ and $B \subseteq A$, we have
$Sat^j_n(f)[B] \subseteq Sat^{j+1}_n(f)[B] \subseteq
2^{M^n}$. Therefore, there exists $j$ such that $Sat^{j}_n[B](f) =
Sat^{j+1}_n[B](f)$. We denote by $Sat^{*}_n[B](f)$ this set. This
finishes the definition of the algorithm. Its correctness 
and completeness are stated in the following proposition. 

\begin{proposition} \label{prop:compu}
  Let $n \gmo 1$, $B \subseteq A$ and $\ell \gmo 3|M| \cdot
  2^{|A|}\cdot n \cdot 2^{2^{2|M|^n}}$. Then $$\fCtwon[\alpha,B] = 
  \fCgen{2}{\ell}{n}[\alpha,B] = \downclos Sat^{*}_n[B](C \mapsto \fI_n[C]).$$
\end{proposition}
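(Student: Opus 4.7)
The plan is to prove the three-way equality by a cycle of inclusions. The inclusion $\fCtwon[\alpha,B] \subseteq \fCgen{2}{\ell}{n}[\alpha,B]$ is immediate from the definition $\fCtwon[\alpha,B] = \bigcap_k \fCgen{2}{k}{n}[\alpha,B]$. The two remaining inclusions — soundness of the saturation procedure, and completeness for a suitable finite index $k = \ell$ — require separate arguments of very different natures.

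For soundness, $\downclos Sat^{*}_n[B](C \mapsto \fI_n[C]) \subseteq \fCtwon[\alpha,B]$, I would proceed by induction on the number of applications of $Sat_n$, maintaining the invariant that every set produced at every iteration lies in $\fCtwon[\alpha,B']$ for the corresponding alphabet $B'$. The base case $\fI_n[B]$ is trivial: for any word $w$ with $\content{w}=B$, the word $w$ itself is a synchronizing witness for $\{(\alpha(w),\dots,\alpha(w))\}$. In the inductive step the multiplicative closure $\fM_B$ is handled directly by Fact~\ref{fct:setcomp}. The omega closure $\fO_B$ uses the displayed $\siwd$ identity from the outline: from a synchronizing word $w$ for $\Ts \in \fCtwon[\alpha,B]$, the word $w^{2\ell}$ serves as a synchronizing word for $\Ts^\omega \cdot (1_M, \Cstwolen{n-1}[\alpha,B]) \cdot \Ts^{\omega}$, with $w'$ (any word of content $B$ whose image extends to an $(n-1)$-\dchain) providing the middle $(1_M,\cdot)$ factor; this step relies on the fact that $\Cstwolen{n-1}[\alpha,B]$ has already been computed by a previous call to $Sat_{n-1}$.

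For completeness, $\fCgen{2}{\ell}{n}[\alpha,B] \subseteq \downclos Sat^{*}_n[B](C \mapsto \fI_n[C])$, the heart of the argument relies on the Factorization Forest Theorem of Simon. Given $\Ts \in \fCgen{2}{\ell}{n}[\alpha,B]$ with synchronizing word $w$ (and the associated comparison words $w_2,\dots,w_n$ for each chain in $\Ts$), I would apply Simon's theorem to $w$ under $\alpha$ to obtain a factorization tree of depth at most $3|M|$. One then shows by induction on subtrees that the compatible set contributed by each subtree lies in the downset of the saturation map at the corresponding alphabet: leaves (single letters) are captured by $\fI_n$; binary concatenation nodes correspond to the multiplicative operation $\fM$; and idempotent nodes — where many successive factors all have the same image $e = e^2$ — are simulated by the omega operation $\fO$, the long repetition providing both the outer $\Ts^\omega$ brackets and the freedom to insert the middle $(1_M,\Cstwolen{n-1}[\alpha,B])$ factor. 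The bound $\ell \geq 3|M| \cdot 2^{|A|} \cdot n \cdot 2^{2^{2|M|^n}}$ is then obtained by combining Simon's depth bound $3|M|$ with the alphabet factor $2^{|A|}$, the chain-length factor $n$, and the convergence bound for the fixpoint, which is controlled by the size of the state space $2^{2^{M^n}}$ of the mappings produced by $Sat_n$.

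The main obstacle is the simulation of idempotent nodes by the omega operation in the completeness proof. An idempotent node partitions a large factor into many subfactors with the same image $e$, and one has to redistribute them into three blocks matching the shape $\Ts^\omega \cdot (1_M, \Cstwolen{n-1}[\alpha,B]) \cdot \Ts^\omega$ while ensuring that the middle block really carries a valid $(n-1)$-chain and that the outer blocks have sufficient iteration to absorb the $\omega$. Doing this while preserving compatibility — that is, retaining a \emph{single} synchronizing word for all chains of the constructed set — is precisely the reason the algorithm works with sets of chains rather than individual chains, and the induction must carry both the set-theoretic and the word-level witnesses in parallel. Everything else in the proof is either bookkeeping, standard algebra on finite ordered monoids, or a straightforward application of the $\siwd$ Ehrenfeucht-Fra\"iss\'e-style identity stated in the outline.
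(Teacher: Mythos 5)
Your overall architecture matches the paper's proof closely: prove the cycle of inclusions, establish soundness by induction on the number of iterations of $Sat_n$ (base case $\fI_n$, multiplicative case by Fact~\ref{fct:setcomp}, omega case by the $\siwd$ pumping lemma), and establish completeness by structural induction on a factorization forest, using $\fM$ for binary nodes and $\fO$ for idempotent nodes; you also correctly identify both the role of $\Cstwolen{n-1}$ (hence the hidden induction on $n$) and the reason the algorithm must carry compatible sets with a single synchronizing word rather than individual chains.

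There is, however, one concrete point where your completeness argument would not go through as written. You apply Simon's Factorization Forest Theorem to $w$ \emph{under $\alpha$}, obtaining a forest of height at most $3|M|$, and then try to account for the factor $2^{|A|}$ in the quantifier-rank bound as a separate ``alphabet factor.'' This does not work: if Simon is applied to $\alpha$ alone, an idempotent node groups subfactors that share the same $\alpha$-image $e$ but that can have arbitrary, mutually distinct alphabets. The operation $\fO_B$ requires a \emph{single} sub-alphabet $B$ shared by all the factors involved (the set $\Ts$ lives in $\fT_B$ and the middle factor must come from $\Cstwolen{n-1}[\alpha,B]$), so an idempotent node with heterogeneous alphabets cannot be simulated by a single application of $\fO$. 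The paper resolves this by applying Simon's theorem not to $\alpha$ but to the enriched morphism $\beta : A^* \to M \times 2^A$, $\beta(w) = (\alpha(w), \content{w})$. Idempotent nodes of a $\beta$-factorization forest have children with \emph{both} constant $\alpha$-image and constant alphabet, which is exactly the hypothesis of the $(e,B)$-decomposition step, and the $2^{|A|}$ factor in the bound $3|M|\cdot 2^{|A|}$ is precisely the contribution of the codomain of $\beta$ to Simon's height bound, not an independent correction term. Once $\alpha$ is replaced by $\beta$ in the application of Simon's theorem, the remaining issue you flag — redistributing the idempotent children into a $\Ts^\omega \cdot (1_M, \Cstwolen{n-1}[\alpha,B]) \cdot \Ts^\omega$ shape — is genuinely the crux; the paper handles it via an auxiliary induction on the ``index'' of the decomposition (repeated $k_n$-sequences, with $k_n = 2^{|M|^n}$), which is where the double-exponential factor $2^{2^{2|M|^n}}$ actually arises, rather than from the fixpoint convergence rate as you suggest.
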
 
\noindent
Proposition~\ref{prop:compu} states correctness of the algorithm (the set
$\downclos Sat^{*}_n[B](C \mapsto \fI_n[C])$ \emph{only} consists of
compatible sets of \dchains) and completeness (this set contains \emph{all}
such sets). It also establishes a bound $\ell$. This bound is a byproduct of
the proof of the algorithm. It is of particular interest for separation and
Theorem~\ref{thm:sep}. Indeed, one can prove that for any two languages that are
\siwd-separable and recognized by $\alpha$, the separator can be chosen with
quantifier rank $\ell$ (for $n = 2$).

We will see in Sections~\ref{sec:caracsi} and~\ref{sec:caracbc} how to use Proposition~\ref{prop:compu}
to get decidable characterizations of \siwt, \piwt, \dewt and  \bswd. We
already state the following corollary as a consequence of
Theorem~\ref{thm:sep}.

\begin{corollary} \label{cor:decidsep}
  Given as input two regular languages $L_1,L_2$ it is decidable to test
  whether $L_1$ can be $\siw{2}$-separated
  (resp. $\piw{2}$-separated) from $L_2$.
\end{corollary}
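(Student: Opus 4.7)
The plan is to combine Theorem~\ref{thm:sep} with Proposition~\ref{prop:compu} in a direct way. First, given regular languages $L_1$ and $L_2$ as input, I would use the standard product construction recalled earlier in the paper to build a single morphism $\alpha : A^* \to M$ into a finite monoid, together with two accepting sets $F_1, F_2 \subseteq M$ such that $L_j = \alpha^{-1}(F_j)$ for $j = 1,2$. All of this is clearly effective from any finite-state description of $L_1, L_2$.

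Next, by Theorem~\ref{thm:sep} (applied to $i = 2$), deciding $\siwd$-separability of $L_1$ from $L_2$ reduces to deciding, for each of the finitely many pairs $(s_1,s_2) \in F_1 \times F_2$, whether $(s_1,s_2) \in \Cstwo[\alpha]$ (and symmetrically $(s_2,s_1)\in\Cstwo[\alpha]$ for the $\piwd$ case). Since $i = 2$, we have $\Cstwolen{2}[\alpha] = \bigcup_{B\subseteq A} \Cstwolen{2}[\alpha,B]$, and by definition $\bar s \in \Cstwolen{2}[\alpha,B]$ iff $\{\bar s\} \in \fCtwolen{2}[\alpha,B]$. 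Therefore the whole question boils down to computing the mapping $B \mapsto \fCtwolen{2}[\alpha,B]$.

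This is exactly what Proposition~\ref{prop:compu} provides (with $n = 2$): starting from the explicit initial mapping $f_0 : C \mapsto \fI_2[C]$, one iterates the procedure $Sat_2$ defined in equations~(\ref{eq:mul}) and~(\ref{eq:oper}). For $n=2$ the input $\Cstwolen{1}[\alpha,B]$ required by $Sat_2$ is itself computed by the trivial base case $Sat_1$. Since at every step the value $Sat^j_2[B](f_0)$ lives in the finite lattice $2^{2^{M^2}}$ and the sequence is monotone increasing, the fixpoint $Sat^*_2[B](f_0)$ is reached in a bounded number of steps and can be computed explicitly; taking its downset then yields $\fCtwolen{2}[\alpha,B]$ by Proposition~\ref{prop:compu}.

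Finally, reading off $\Cstwolen{2}[\alpha,B]$ from $\fCtwolen{2}[\alpha,B]$ as the singletons it contains, one tests mechanically whether any pair $(s_1,s_2)\in F_1\times F_2$ (resp.\ $(s_2,s_1)\in F_2\times F_1$) lies in some $\Cstwolen{2}[\alpha,B]$; the answer to the $\siwd$-separation problem (resp.\ $\piwd$-separation) is \emph{yes} exactly when no such pair is a \dchain. There is no genuine obstacle here: all the technical difficulty has been absorbed into Theorem~\ref{thm:sep} and Proposition~\ref{prop:compu}, and the remaining argument is just a bookkeeping of finite computations.
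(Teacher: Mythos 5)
Your proof is correct and follows exactly the route the paper intends: it is simply the observation that Theorem~\ref{thm:sep} reduces $\siwd$- and $\piwd$-separability to membership of finitely many pairs in $\Cstwo[\alpha]$, and Proposition~\ref{prop:compu} (applied with $n=2$, recovering the \dchains as the singletons in the computed compatible sets) makes that membership test effective. The paper states the corollary without spelling out these steps, but your bookkeeping matches its argument precisely.
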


\section{Decidable Characterizations of \siwt, \piwt, \dewt}
\label{sec:caracsi}
In this section we present our decidable characterizations for \dewt,
\siwt and \piwt. We actually give characterizations for all classes
\dew{i}, \siw{i} and \piw{i} in the quantifier alternation hierarchy.
The characterizations are all stated in terms of equations on
the syntactic monoid of the language. However, these equations are
parametrized by the \qchains{i-1} of length $2$. Therefore, getting
\emph{decidable} characterizations depends on our ability to compute the set
of \qchains{i-1} of length $2$, which we are only able to do for
$i \lmo 3$.  We begin by stating
our characterization for \siw{i}, and the characterizations for \piw{i}
and \dew{i} will then be simple corollaries.

\begin{theorem} \label{thm:caracsig}
  Let $L$ be a regular language and $\alpha: A^* \rightarrow M$ be its
  syntactic morphism. For all $i \gmo 1$, $L$ is definable in \siw{i}
  iff $M$ satisfies the following property: 
  \begin{equation}
    s^{\omega} \lmo s^{\omega}ts^{\omega} \quad \text{for all $(t,s) \in \Cslev{i-1}[\alpha]$}. \label{eq:sig}
  \end{equation}
\end{theorem}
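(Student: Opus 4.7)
The theorem splits into necessity ($L \in \siw{i} \Rightarrow$ equation on $M$) and sufficiency (equation $\Rightarrow L \in \siw{i}$). My overall approach is to connect the syntactic preorder $\lmo_L$ on the syntactic monoid with the logical preorder $\ksieq{i}$, exploiting the fact that $L$ is $\siw{i}$-definable iff $L$ is saturated by $\ksieq{i}$ for some quantifier rank $k$, i.e., iff $\ksieq{i}$ is contained in $\lmo_L$. The equation should capture exactly the obstructions to this containment, parametrized by the $(i-1)$-chains of length 2.

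For the necessity direction, I would assume $L$ is defined by a \siw{i}-formula of rank $k$ and fix an $(i-1)$-chain $(t,s) \in \Cslev{i-1}[\alpha]$. By definition of $(i-1)$-chain, for arbitrarily large $k'$ there exist words $w_t \sieq{k'}{i-1} w_s$ with $\alpha(w_t) = t$ and $\alpha(w_s) = s$. The central technical ingredient would be the following classical transfer lemma: for $k'$ large enough with respect to $k$ and $K$ a sufficiently large multiple of $\omega(M)$,
\begin{equation*}
  w_t \sieq{k'}{i-1} w_s
  \quad\Longrightarrow\quad
  w_s^{K} \ksieq{i} w_s^{K}\, w_t\, w_s^{K}.
\end{equation*}
This is proved by an \efgame game argument: duplicator answers spoiler's $\exists$-block by routing positions through the many $w_s$-copies on the right, and answers subsequent moves falling inside the $w_t$-region using the $\siw{i-1}$-winning strategy between $w_t$ and any $w_s$-copy. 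Combining this with left/right congruence of $\ksieq{i}$ under concatenation gives $x w_s^K y \ksieq{i} x w_s^K w_t w_s^K y$ for all contexts $x,y$; applying $\ksieq{i}$-saturation of $L$ and reading off the resulting syntactic preorder yields $s^\omega \lmo_L s^\omega t s^\omega$.

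For the sufficiency direction, I would show that under the equation, the preorder $\ksieq{i}$ is contained in $\lmo_L$ for some $k$ depending on $|M|$; this directly gives $\siw{i}$-definability of $L$. The core argument would use Simon's factorization forest theorem to decompose every word into a tree of depth bounded by a function of $|M|$, and then proceed by induction on the depth. At a concatenation node, the claim follows from the compatibility of $\lmo_L$ with the product together with the inductive hypothesis. At an idempotent node labeled by $s = s^\omega$, any ``extra'' factor $t$ introduced when comparing forests of $\ksieq{i}$-equivalent words $w$ and $w'$ would be shown to come with a pair $(t,s)$ belonging to $\Cslev{i-1}[\alpha]$ --- extracted from the duplicator strategy witnessing $w \ksieq{i} w'$ restricted to that subtree --- and the hypothesis $s^\omega \lmo s^\omega t s^\omega$ would be precisely what is needed to absorb $t$ into the block $s^\omega$.

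The main obstacle is the sufficiency direction and specifically the calibration step: one must pick the rank $k$ of $\ksieq{i}$ large enough that, at every level of the factorization forest, the residual comparison between matched subwords is witnessed by a $\siw{i-1}$-equivalence of rank $k'$ sufficient to place the resulting pair in $\Cslev{i-1}[\alpha]$. Getting this combinatorial bookkeeping right --- essentially a nested induction balancing forest depth, quantifier rank, and the alphabet information carried by $\content{\cdot}$ --- is the technical heart of the proof. The argument is parametric in $i$, which explains why a single scheme covers all levels, while effectiveness of the resulting characterization for $i \lmo 3$ depends on the independent algorithmic development of Section~\ref{sec:comput} applied to $\Cslev{2}[\alpha]$.
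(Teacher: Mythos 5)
Your proposal follows the paper's own proof essentially step for step: the transfer lemma you invoke for necessity is the paper's Lemma~\ref{lem:siprop} (applied together with the congruence Lemma~\ref{lem:efconcat}), and your sufficiency scheme --- Simon's Factorization Forest Theorem, induction on height, product-compatibility at binary nodes, and invoking Equation~\eqref{eq:sig} at idempotent nodes once an $(i-1)$-chain is extracted --- is exactly the paper's Lemma~\ref{lem:factosig}. A minor imprecision worth flagging: the paper builds a factorization forest only for $u$ and uses \efgame splitting to decompose $v$ accordingly; the $(i-1)$-chain $(\alpha(v'),e)$ then arises because $u'\ksieq{i}v'$ implies $v'\ksieq{i-1}u'$ (together with the stabilization Fact~\ref{fct:boundk} to pass from rank-bounded to unbounded chains), rather than from comparing two forests of $\ksieq{i}$-equivalent words.
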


It follows from Theorem~\ref{thm:caracsig} that it suffices to compute
the \qchains{i-1} of length $2$ in order to decide whether a
language is definable in \siw{i}. Also observe that when $i=1$,
by definition we have $(t,1_M) \in \Cslev0[\alpha]$ for all $t \in
M$. Therefore, \eqref{eq:sig} can be rephrased as $1_M \lmo t$ for all
$t \in M$, which is the already known equation for \siwu, see~\cite{pwdelta}. Similarly, when $i=2$, \eqref{eq:sig} can be
rephrased as $s^{\omega} \lmo s^{\omega}ts^{\omega}$ whenever $t$ is a
`subword' of $s$, which is the previously known equation for \siwd
(see~\cite{pwdelta,bfacto}).

The proof of Theorem~\ref{thm:caracsig} is done using Simon's
Factorization Forest Theorem and is actually a generalization of a
proof of~\cite{bfacto} for the special case of \siwd. 
Here, we state characterizations of \piw{i}
and \dew{i} as immediate corollaries. Recall that a language is
\piw{i}-definable if its complement is \siw{i}-definable, and that
it is \dew{i}-definable if it is both \siw{i}-definable and
\piw{i}-definable. 

\begin{corollary} \label{cor:caracpi}
  Let $L$ be a regular language and let $\alpha: A^* \rightarrow M$ be its
  syntactic morphism. For all $i \gmo 1$, the following properties hold:
  \begin{itemize}
  \item $L$ is definable in \piw{i} iff $M$ satisfies $s^{\omega}
    \gmo s^{\omega}ts^{\omega}$ for all $(t,s) \in \Cslev{i-1}[\alpha]$.
  \item $L$ is definable in \dew{i} iff $M$ satisfies $s^{\omega}
    = s^{\omega}ts^{\omega}$ for all $(t,s) \in \Cslev{i-1}[\alpha]$.
  \end{itemize}
\end{corollary}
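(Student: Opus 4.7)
The plan is to derive both bullets directly from Theorem~\ref{thm:caracsig} by applying it to the complement $\overline{L}$ of $L$. Recall that $L$ is $\piw{i}$-definable iff $\overline{L}$ is $\siw{i}$-definable, and that $L$ is $\dew{i}$-definable iff it is both $\siw{i}$- and $\piw{i}$-definable. So it suffices to relate the syntactic invariants of $\overline{L}$ to those of $L$ and then invoke the already established result.

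The key observation is that since $uwv \in L$ iff $uwv \notin \overline{L}$ for all $u,v \in A^*$, the syntactic equivalences $\equiv_L$ and $\equiv_{\overline{L}}$ coincide. Hence $L$ and $\overline{L}$ share the same syntactic monoid $M$ and the same syntactic morphism $\alpha : A^* \to M$, but their syntactic preorders are reversed: $w \lmo_L w'$ iff $w' \lmo_{\overline{L}} w$. Consequently, the partial order on $M$ induced by $\overline{L}$ is the opposite of the one induced by $L$. Moreover, the set $\Cslev{i-1}[\alpha]$ depends only on $\alpha$ and on the equivalences $\ksieq{i-1}$ on words, not on the accepting set nor on the order on $M$, so the very same set of \qchains{i-1} is attached to $L$ and to $\overline{L}$.

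Applying Theorem~\ref{thm:caracsig} to $\overline{L}$, I get that $\overline{L}$ is $\siw{i}$-definable iff $s^{\omega} \lmo_{\overline{L}} s^{\omega} t s^{\omega}$ for all $(t,s) \in \Cslev{i-1}[\alpha]$; rewritten in the order of $L$, this is exactly $s^{\omega} \gmo s^{\omega} t s^{\omega}$, which gives the first bullet. For the second bullet, combining Theorem~\ref{thm:caracsig} with the \piw{i} characterization just obtained, $L$ is \dew{i}-definable iff both inequalities $s^{\omega} \lmo s^{\omega} t s^{\omega}$ and $s^{\omega} \gmo s^{\omega} t s^{\omega}$ hold for all $(t,s) \in \Cslev{i-1}[\alpha]$, which amounts to the equality $s^{\omega} = s^{\omega} t s^{\omega}$. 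There is no real combinatorial obstacle here; the only point requiring attention is checking that the chain set is genuinely invariant under passing to the complement, which follows at once from the fact that \ichains are defined purely in terms of $\alpha$ and the preorders $\ksieq{i}$ on words.
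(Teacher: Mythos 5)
Your proof is correct and takes essentially the same approach that the paper intends: the paper explicitly presents the corollary as an immediate consequence of Theorem~\ref{thm:caracsig} via the observations that $L$ is $\piw{i}$-definable iff $\overline{L}$ is $\siw{i}$-definable and that $L$ is $\dew{i}$-definable iff it is both $\siw{i}$- and $\piw{i}$-definable. Your added care in verifying that complementation preserves the syntactic monoid, morphism, and the set $\Cslev{i-1}[\alpha]$ while reversing the order is exactly the right way to make the step rigorous.
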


We finish the section by stating decidability for the case $i=3$.
Indeed by Proposition~\ref{prop:compu}, one can compute the \dchains
of length $2$ for any morphism. Therefore, we get the following
corollary.

\begin{corollary} \label{cor:decid}  
  Definability of a regular language in \dewt, \siwt or \piwt is decidable.
\end{corollary}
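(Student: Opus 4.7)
The proof will be a direct combination of the equational characterizations (Theorem~\ref{thm:caracsig} for \siwt, and Corollary~\ref{cor:caracpi} for \piwt and \dewt) with the computability of \dchains of length $2$ provided by Proposition~\ref{prop:compu}. Concretely, the plan is to observe that all three characterizations at level $i=3$ are universally quantified equations over the pairs $(t,s)\in\Cslev{2}[\alpha]=\Cstwo[\alpha]$ in the syntactic monoid $M$ of $L$, and then argue that the full hypothesis set can be generated effectively and only finitely many equational tests are needed.

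First I would, given $L$ as input, compute its syntactic morphism $\alpha:A^*\to M$; this is classically effective from any reasonable representation of $L$ (automaton, monoid, etc.). Second I would invoke the algorithm of Section~\ref{sec:comput} at length $n=2$: starting from the mapping $f_0:B\mapsto\fI_2[B]$ and iterating the operator $Sat_2$ until a fixpoint is reached. By Proposition~\ref{prop:compu}, this fixpoint, taken downward-closed, equals $\fCtwolen2[\alpha,B]$ for every $B\subseteq A$, and in particular its singletons yield $\Cstwolen2[\alpha,B]$; taking the union over $B$ gives $\Cstwolen2[\alpha]=\Cstwo[\alpha]\cap M^2$, which is a finite subset of $M^2$. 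Termination is guaranteed by the fact that the sequence $Sat^j_2$ is monotone inside the finite lattice $2^{2^{M^2}}$, so a bound on $j$ can be read off (and Proposition~\ref{prop:compu} even gives an explicit one).

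Third, with $\Cstwo[\alpha]$ in hand, I would check the relevant equation. For \siwt, test whether $s^{\omega}\lmo s^{\omega}ts^{\omega}$ for every $(t,s)\in\Cstwo[\alpha]$; for \piwt, test the reverse inequality $s^{\omega}\gmo s^{\omega}ts^{\omega}$; for \dewt, test equality. Since $M$ is finite, the exponent $\omega=\omega(M)$ is computable, each product and each comparison under the syntactic order $\lmo$ is effective, and there are only finitely many pairs $(t,s)$ to check. Theorem~\ref{thm:caracsig} and Corollary~\ref{cor:caracpi} then equate this test with \siwt/\piwt/\dewt-definability of $L$.

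There is essentially no new obstacle in this corollary: everything has been reduced to prior statements. The only subtlety to flag is that the equations are parametrized by \qchains{i-1}, so for general $i$ one has no decision procedure; what makes the case $i=3$ go through is precisely that Proposition~\ref{prop:compu} supplies the missing ingredient at level $i-1=2$. All the hard work — the correctness and completeness of the fixpoint computing \dchains, and the equational characterization itself — lies in the statements cited above.
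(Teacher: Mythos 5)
Your proposal is correct and takes essentially the same route as the paper: the paper's justification is exactly that Proposition~\ref{prop:compu} makes $\Cstwo[\alpha]$ restricted to length~$2$ computable, which in turn makes the equational tests of Theorem~\ref{thm:caracsig} and Corollary~\ref{cor:caracpi} effective; you have simply spelled out this one-line argument in full detail.
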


\section{Decidable Characterization of \bswd}
\label{sec:caracbc}
In this section we present our decidable characterization for
\bswd. In this case, unlike Theorem~\ref{thm:caracsig}, the
characterization is specific to the case $i=2$ and does not generalize
as a non-effective characterization for all levels. The main reason is
that both the intuition and the proof of the characterization rests on
a deep analysis of our algorithm for computing \dchains, which is
specific to level $i = 2$. The characterization is stated as two
equations that must be satisfied by the syntactic morphism of the
language. The first one is parametrized by \dchains of length $3$, and
the second one by sets of compatible \dchains of length $2$ through a
more involved relation that we define below.

\medskip
\noindent
{\bf Alternation Schema.} Let $\alpha: A^* \rightarrow M$ be a monoid
morphism and let $B \subseteq A$. A \emph{$B$-schema} for $\alpha$ is
a triple $(s_1,s_2,s'_2) \in M^3$ such that there exist $\Ts \in
\fCtwo[\alpha,B]$ and $r_1,r'_1 \in M$ verifying $s_1 = r_1r'_1$,
$(r_1,s_2) \in \Cstwo[\alpha,B] \cdot \Ts^\omega$ and $(r'_1,s'_2) \in
\Ts^\omega \cdot \Cstwo[\alpha,B]$. Intuitively, the purpose of
$B$-schemas is to abstract a well-known property of \siwd on elements of
$M$: one can prove that if $(s_1,s_2,s'_2)$ is a $B$-schema, then for
all $k \in \nat$, there exist $w_1,w_2,w'_2 \in A^*$, mapped respectively to $s_1,
s_2,s'_2$ under $\alpha$, and such that for all $u \in B^*$, $w_1 \ksieq{2}
w_2uw'_2$. 

\begin{theorem} \label{thm:caracbc}
  Let $L$ be a regular language and $\alpha: A^* \rightarrow M$ be its
  syntactic morphism. Then $L$ is definable in \bswd iff $M$ satisfies the
  following properties:
  \begin{equation}
    \begin{array}{rcl}
      s_1^{\omega}s_3^{\omega} & = & s_1^{\omega}s_2s_3^{\omega} \\
      s_3^{\omega}s_1^{\omega} & = & s_3^{\omega}s_2s_1^{\omega}
    \end{array} \quad \text{for $(s_1,s_2,s_3) \in \Cstwo[\alpha]$} \label{eq:bcs1}
  \end{equation}

  \begin{equation}
    \begin{array}{c}
      (s_2t_2)^{\omega}s_1(t'_2s'_2)^{\omega} = (s_2t_2)^{\omega}s_2t_1s'_2(t'_2s'_2)^{\omega} \\
      \text{for $(s_1,s_2,s'_2)$ and $(t_1,t_2,t'_2)$ $B$-schemas for some
        $B \subseteq A$}
    \end{array}\label{eq:bcs2}
  \end{equation}
\end{theorem}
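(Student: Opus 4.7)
The proof splits into the two directions: necessity (the equations must hold if $L$ is $\bswd$-definable) and sufficiency. For necessity, the workhorse is a pumping statement: if $w_1 \ksieq{2} w_2 \ksieq{2} w_3$ at sufficiently large quantifier rank $k$, then for any target rank $K$ one can choose $N$ such that $w_1^N w_3^N \kbceq{2} w_1^N w_2 w_3^N$ at rank $K$. This is a standard Ehrenfeucht--Fra\"iss\'e game argument for \sicd, in which positions chosen by spoiler inside the inserted $w_2$ block are matched to positions inside a $w_1$ or $w_3$ block by using $w_1 \ksieq{2} w_2$ or $w_2 \ksieq{2} w_3$. Equation~\eqref{eq:bcs1} then follows: given a \dchain $(s_1,s_2,s_3)$ and assuming $L$ is $\bswd$-definable at rank $K$, pick $k$ large enough to obtain witnessing words $w_1,w_2,w_3$ with $\alpha(w_j)=s_j$, apply the pumping lemma with $N$ divisible by $\omega(M)$, and conclude $s_1^\omega s_3^\omega = s_1^\omega s_2 s_3^\omega$ by syntactic saturation of $L$; the second line of~\eqref{eq:bcs1} is symmetric.

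For equation~\eqref{eq:bcs2}, I would first establish the auxiliary claim signalled in the text: if $(s_1,s_2,s'_2)$ is a $B$-schema, then for every $k$ there exist $w_1,w_2,w'_2$ mapped to $s_1,s_2,s'_2$ by $\alpha$ such that $w_1 \ksieq{2} w_2 u w'_2$ for every $u \in B^*$. The proof unfolds the definition: the compatible set $\Ts$ provides a single word $w$ with $\content{w}=B$ whose $\omega$-power, together with witnesses of $(r_1,s_2) \in \Cstwo[\alpha,B]\cdot\Ts^\omega$ and $(r'_1,s'_2) \in \Ts^\omega\cdot\Cstwo[\alpha,B]$, absorbs an arbitrary $u \in B^*$ in the middle, in the same spirit as the $\omega$-operation~\eqref{eq:oper} of the algorithm. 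Applying this claim to both $B$-schemas appearing in~\eqref{eq:bcs2} and repeating the EF-pumping argument as for~\eqref{eq:bcs1} then yields the equation.

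The sufficiency direction is the hard part and is specific to $i=2$. Assuming both equations hold, the goal is to exhibit a single rank $k$ for which $L$ is a union of $\kbceq{2}$-classes, which is equivalent to $\bswd$-definability. I would follow the template of~\cite{bpopen}: introduce an analogue of their ``nested chain'' combinatorics adapted to compatible sets of \dchains, and prove by induction on the fixpoint iterations $Sat_n^j$ from Section~\ref{sec:comput} that every identity produced by a multiplicative step~\eqref{eq:mul} or an $\omega$-step~\eqref{eq:oper} is derivable either from~\eqref{eq:bcs1} applied to \dchains of length $3$ read off intermediate stages, or from~\eqref{eq:bcs2} applied to $B$-schemas extracted from compatible sets of length $2$ appearing along the way. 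The shape of~\eqref{eq:bcs2} has been chosen precisely to mirror the pattern $\Ts^\omega\cdot(1_M,\Cstwolen{n-1}[\alpha,B])\cdot\Ts^\omega$ of~\eqref{eq:oper}, so each saturation step can be ``neutralized'' by one application of the equation. The main obstacle is exactly this inductive matching between algorithmic and equational closure: it relies on the delicate combinatorial setup imported and adapted from~\cite{bpopen}, and it is the reason the characterisation does not lift to higher levels of the hierarchy.
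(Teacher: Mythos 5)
Your treatment of the necessity direction is essentially the paper's: for Equation~\eqref{eq:bcs1} one pumps the $w_1$ and $w_3$ blocks using $w_1 \ksieq{2} w_2 \ksieq{2} w_3$ together with the aperiodicity lemma, and for~\eqref{eq:bcs2} one first establishes the auxiliary property of $B$-schemas (Lemma~\ref{lem:schemprop}) and then pumps in the same way. That part is correct.

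The sufficiency direction is where the gap lies. You aim to argue ``directly'' that the equations force $L$ to be a union of $\kbceq{2}$-classes, by induction over the stages $Sat_n^j$ of the fixpoint algorithm, showing that ``every identity produced by the algorithm is derivable from~\eqref{eq:bcs1} or~\eqref{eq:bcs2}.'' This is not a workable formulation: the algorithm computes \dchains, not identities, and showing that certain monoid identities are consequences of the two equations does not by itself produce a bound $k$ such that $\kbceq{2}$ refines $\ker\alpha$. The step that makes the direction tractable is a \emph{reduction} the paper inserts as an intermediate condition in Theorem~\ref{thm:caracbc2}: ``$\Cstwo[\alpha]$ has bounded alternation'' is equivalent to \bswd-definability (the implication $2 \Rightarrow 1$ is Lemma~\ref{lem:bcinter}, an easy pigeonhole argument), and one then proves the \emph{contrapositive} of $3 \Rightarrow 2$: if $\Cstwo[\alpha]$ has unbounded alternation, then one of the two equations fails. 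Your sketch has no analogue of this reduction.

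Concretely, the contrapositive of $3 \Rightarrow 2$ is handled via \emph{\chain trees}: witnesses for how a given \dchain can be produced by the algorithm, equipped with the auxiliary notion of \emph{local optimality} (Proposition~\ref{prop:optimal}) and a parameter called \emph{recursive alternation}. The heart of the argument is a dichotomy that does not appear in your proposal: either there exist locally optimal \chain trees of unbounded alternation but \emph{bounded} recursive alternation, in which case one extracts a \emph{contradiction matrix} and refutes Equation~\eqref{eq:bcs1} (Proposition~\ref{prop:width}, via \chain matrices and two applications of Ramsey's theorem); or every such family has \emph{unbounded} recursive alternation, in which case one builds a cycle in the \emph{\chain graph} and refutes Equation~\eqref{eq:bcs2} (Proposition~\ref{prop:depth}). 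Your observation that the shape of~\eqref{eq:bcs2} mirrors the $\omega$-step~\eqref{eq:oper} is a genuine and correct intuition, but without the bounded-alternation reduction and without the recursive-alternation dichotomy, there is no concrete induction invariant and the argument does not close. This is a genuine gap, not a presentational difference.
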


The proof of Theorem~\ref{thm:caracbc} is far more involved than that
of Theorem~\ref{thm:caracsig}. 
However, a simple consequence is decidability of
definability in \bswd. Indeed, it suffices to compute \dchains of
length $3$ and the $B$-schemas for all $B \subseteq A$ to check validity of both
equations. Computing this information is possible by
Proposition~\ref{prop:compu}, and therefore, we get the following
corollary.

\begin{corollary} \label{cor:decid2} Definability of a regular language in
  \bswd is decidable.
\end{corollary}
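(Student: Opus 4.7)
The plan is to invoke Theorem~\ref{thm:caracbc} and reduce the question of \bswd-definability to the effective verification of the two equations \eqref{eq:bcs1} and \eqref{eq:bcs2} on the syntactic monoid. Given $L$ as input (by some finite representation), one first computes the syntactic ordered monoid $M$ and the syntactic morphism $\alpha : A^* \to M$; this is a standard effective construction. Since by Theorem~\ref{thm:caracbc} $L$ is \bswd-definable iff $M$ together with $\alpha$ satisfies \eqref{eq:bcs1} and \eqref{eq:bcs2}, the task reduces to deciding whether these two equations hold, which in turn reduces to finitely enumerating the parameters over which they are quantified, namely the \dchains of length $3$ appearing in \eqref{eq:bcs1}, and the $B$-schemas (for all $B \subseteq A$) appearing in \eqref{eq:bcs2}.

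The first ingredient is to compute $\Cstwolen{3}[\alpha]$. By Proposition~\ref{prop:compu} applied with $n=3$, this set equals the downset of $Sat^{*}_3[A](C \mapsto \fI_3[C])$, which is produced by iterating $Sat_3$ from the trivial initial map until a fixpoint is reached. Since the codomain $2^{2^{M^3}}$ is finite, the iteration terminates, and each step is computable provided that $\Cstwolen{2}[\alpha,B]$ is already known (this is the parameter appearing in the operation $\fO_B$ of \eqref{eq:oper}). Thus I would first run the algorithm at length $n=2$ to obtain the map $B \mapsto \fCtwolen{2}[\alpha,B]$ (whence also $\Cstwolen{2}[\alpha,B] = \{\bar s \mid \{\bar s\} \in \fCtwolen{2}[\alpha,B]\}$), and then use these as parameters to run the algorithm at $n=3$. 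Both runs terminate and are constructive by Proposition~\ref{prop:compu}. Equation~\eqref{eq:bcs1} can then be checked by iterating over all triples in $\Cstwolen{3}[\alpha]$ and evaluating both sides in the finite monoid~$M$, using the standard effective computation of the $\omega$-power.

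For \eqref{eq:bcs2}, I would enumerate $B$-schemas directly from their definition: a triple $(s_1,s_2,s'_2)$ is a $B$-schema iff there exist $\Ts \in \fCtwo[\alpha,B]$ and factorizations $s_1 = r_1 r'_1$ in $M$ such that $(r_1,s_2) \in \Cstwo[\alpha,B] \cdot \Ts^\omega$ and $(r'_1,s'_2) \in \Ts^\omega \cdot \Cstwo[\alpha,B]$. All ingredients here are finite and already computed: $\fCtwolen{2}[\alpha,B]$ comes from the length-$2$ run above, and the products $\Ts^\omega$ and $\Cstwolen{2}[\alpha,B] \cdot \Ts^\omega$ are computed in the finite monoid $2^{M^2}$ using Fact~\ref{fct:setcomp} and the fact that $\omega$-powers exist in any finite semigroup. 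So for each $B$, one can enumerate all $B$-schemas in finite time, and then check \eqref{eq:bcs2} by iterating over all pairs of $B$-schemas (across all $B \subseteq A$) and evaluating both sides in $M$.

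The only conceptually nontrivial ingredient is Proposition~\ref{prop:compu} itself, which is what guarantees that the fixpoint computation both terminates (with an explicit bound $\ell$) and produces the exact sets $\fCtwon[\alpha,B]$; everything else is a straightforward finite search. Hence the "hard part" is already done in Proposition~\ref{prop:compu}, and the corollary follows by assembling the two effective computations (of \dchains of length $3$ and of $B$-schemas) and checking the two finite systems of equations provided by Theorem~\ref{thm:caracbc}.
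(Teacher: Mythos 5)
Your proposal is correct and follows essentially the same route as the paper: the paper's (terse) justification of Corollary~\ref{cor:decid2} is precisely that one reduces to Theorem~\ref{thm:caracbc} and computes the \dchains of length~$3$ and the $B$-schemas via Proposition~\ref{prop:compu}. You spell out the details the paper leaves implicit (running the fixpoint algorithm at $n=2$ then $n=3$, reading off $B$-schemas from the length-$2$ data in the finite monoid $2^{M^2}$, and evaluating the equations over the resulting finite parameter sets), but there is no difference in strategy.
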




\section{Conclusion}
\label{sec:conc}
We solved the separation problem for \siwd using the new notion of \dchains,
and we used our solution to prove decidable characterizations for \bswd,
\dewt, \siwt and \piwt. The main open problem in this field remains to lift up
these results to higher levels in the hierarchy. In particular, we proved that
for any natural~$i$, generalizing our separation solution to \siwi (\emph{i.e.}, being able
to compute the \ichains of length~$2$) would yield a decidable
characterization for \siw{i+1}, \piw{i+1} and \dew{i+1}.

Our algorithm for computing \dchains cannot be directly generalized
for higher levels. An obvious reason for this is the fact that it
considers \dchains parametrized by sub-alphabets. This parameter is
designed to take care of the alternation between levels $1$ and $2$,
but is not adequate for higher levels. However, this is unlikely
to be the only problem. In particular, we do have an algorithm that
avoids using the alphabet, but it remains difficult to generalize. We 
leave the presentation of this alternate algorithm for further work.

\restorenotation
Another open question is to generalize our results to logical
formulas that can use a binary predicate $+1$ for the successor
relation. In formal languages, this corresponds to the well-known
\emph{dot-depth hierarchy}~\cite{BrzoDot}. It was proved
in~\cite{StrauVD} and~\cite{pinweilVD} that decidability of \bspd and
\sipt is a consequence of our results for \bswd and \siwt.
However, while the reduction itself is simple, its proof rely on deep
algebraic arguments. We believe that our techniques can be generalized
to obtain direct proofs of the decidability of \bspd and~\sipt.

\bibliographystyle{abbrv}

\appendix
\newpage

\section*{Appendix}
We divide this appendix into several sections. In Appendix~\ref{app:facto}, we
define the main tools we will use for our proofs: \efgame games and
factorization forests.  In Appendix~\ref{app:algo}, we complete
Section~\ref{sec:comput} by proving the correctness and completeness of our
algorithm for computing \dchains. In Appendix~\ref{app:sig}, we prove
Theorem~\ref{thm:caracsig}, i.e. our characterization of \siw{i} (which is
decidable for $i \lmo 3$). The remaining appendices are then devoted to the
proof of Theorem~\ref{thm:caracbc}, i.e. our decidable characterization of
\bswd. In Appendix~\ref{app:ctrees} we define \emph{\Chains Trees} which are
our main tool for proving the difficult direction of the characterization. In
Appendix~\ref{app:bc} we give an outline of the proof. Finally,
Appendix~\ref{app:depth} and Appendix~\ref{app:width} are devoted to proving
the two most difficult propositions in the proof.

\section{Tools}
\label{app:facto}
In this appendix we define \efgame games and factorization
forests. Both notions are well-known and we will use them
several times in our proofs.

\subsection{\efgame Games}

It is well known that the expressive power of logics can be
expressed in terms of games. These games are called \efgame games. We
define here the game tailored to the quantifier alternation hierarchy.

Before we give the definition, a remark is in order. There are actually two
ways to define the class of \siw{i}-definable languages.  First, one can
consider all first-order formulas and say that a formula is \siw{i} if it has
at most $i$ blocks of quantifiers once rewritten in prenex normal form. This
is what we do. However, one can also restrict the set of allowed formulas to
those that are already in prenex form and have at most $i$ blocks of
quantifiers. While this does not change the class of \siw{i}-definable
languages as a whole, this changes the set of formulas of quantifier rank $k$
for a fixed $k$. Therefore, this changes the preorder $\ksieq{i}$. This means
that there is a version of the \efgame game for each definition. In this
paper, we use the version that corresponds to the definition given in the
main part of the paper (\emph{i.e.}, the one considering all first-order formulas).

\medskip
\noindent
{\bf \efgame games.} Set $i$ a level in the quantifier alternation
hierarchy. We define the game for \siw{i}. The board of the game
consists of two words $w,w'\in A^*$ and there are two players called
\emph{Spoiler and Duplicator}. Moreover, there exists a
distinguished word among $w,w'$ that we call the \emph{active
  word}. The game is set to last a predefined number $k$ of
rounds. When the game starts, both players have $k$ pebbles. Moreover,
there are two parameters that get updated during the game, the active
word and a counter $c$ called the \emph{alternation counter}. Initially, $c$ is
set to $0$.

At the start of each round $j$, Spoiler chooses a word, either $w$ or
$w'$. Spoiler can always choose the active word, in which case both $c$
and the active word remain unchanged. However, Spoiler can only choose the
word that is not active when $c < i - 1$, in which case the active
word is switched and $c$ is incremented by $1$ (in particular this
means that the active word can be switched at most $i-1$ times). If
Spoiler chooses $w$ (resp. $w'$), he puts a pebble on a position $x_j$ in
$w$ (resp. $x'_j$ in $w'$).

Duplicator must answer by putting a pebble at a position $x'_j$ in
$w'$ (resp. $x_j$ in $w$). Moreover, Duplicator must ensure that all
pebbles that have been placed up to this point verify the following
condition: for all  $\ell_1,\ell_2 \lmo j$, the labels at positions
$x^{}_{\ell_1},x'_{\ell_1}$ are the same, and $x_{\ell_1} < x_{\ell_2}$ iff $x'_{\ell_1} <
x'_{\ell_2}$.  

Duplicator wins if she manages to play for all $k$ rounds, and Spoiler
wins as soon as Duplicator is unable to play.

\begin{lemma}[Folklore] \label{lem:efgame}
  For all $k,i \in \nat$ and $w,w' \in A^{*}$, $w \ksieq{i} w'$ iff
  Duplicator has a winning strategy for playing $k$ rounds in the
  \siw{i} game played on $w,w'$ with $w$ as the initial active word.
\end{lemma}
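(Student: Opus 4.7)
The plan is to prove the lemma by induction on $k$, using the standard Ehrenfeucht-Fra\"\i ss\'e correspondence, generalized to mid-play configurations. Concretely, for every pair $(w, w')$, every common-length pebble sequence $(\bar a, \bar a')$ forming a partial isomorphism (matching labels and order), every active word $W \in \{w, w'\}$, and every alternation counter $c \in \{0,\dots,i-1\}$, I prove that Duplicator has a winning strategy for the remaining $k$ rounds iff every $\siw{i-c}$-formula $\varphi(\bar x)$ of quantifier rank at most $k$ satisfied by the active side is also satisfied by the other side. The statement of the lemma is recovered at $\bar a = \bar a' = \varepsilon$, $c = 0$, $W = w$.

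The base case $k = 0$ is immediate: no move remains and rank-$0$ formulas are quantifier-free, so both conditions reduce to $(\bar a, \bar a')$ being a partial isomorphism. For the inductive step, the forward direction (game implies logic) reduces, via Boolean combinations, to distinguishing single-quantifier formulas $\exists x\, \psi$ or $\forall x\, \psi$; the existential case has Spoiler play a witness on the active word (no switch) and Duplicator respond, while the universal case corresponds to Spoiler switching the active word to challenge the formula at a candidate counter-example, consuming one alternation $c \mapsto c+1$. In each case, applying the induction hypothesis to the resulting rank-$(k-1)$ game transfers satisfaction of $\psi$ between the two sides.

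For the reverse direction (logic implies game), I would build Duplicator's strategy round by round by contradiction. Assume that after Spoiler's move on some position $a$, no legal Duplicator reply $a'$ preserves the rank-$(k-1)$ equivalence in the appropriate post-move configuration. Then for each such bad $a'$ the induction hypothesis supplies a rank-$(k-1)$ formula $\psi_{a'}$ distinguishing the extended configurations. Since the word is finite, finitely many $\psi_{a'}$ suffice; their conjunction, prefixed with the quantifier corresponding to Spoiler's choice, yields a $\siw{i-c}$-formula of rank $k$ distinguishing $(w, \bar a)$ from $(w', \bar a')$, contradicting the assumption. Care must be taken so that the resulting formula lies in the right fragment: after $c$ alternations, the next quantifier block opens on the side opposite to the starting one and of the opposite kind, matching the prenex shape of a $\siw{i-c}$-formula.

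The main obstacle is the bookkeeping that aligns the game's active-word and counter mechanics with the prenex block structure of $\siw{i}$-formulas. A subtlety flagged by the excerpt's own remark is that the paper considers arbitrary first-order formulas rather than only prenex ones, so during the forward direction one may encounter outermost Boolean combinations whose components live in smaller fragments; this is handled by prenexifying before the round-by-round analysis, which preserves the $\siw{i}$ class. The reverse direction uses the dual fact that the fragments $\siw{j}$ are closed under conjunction (quantifiers of matching type merge into a single block), so that the reassembled formula stays within the required class.
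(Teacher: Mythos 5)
Your mid-game invariant — pebbles, active word, and alternation counter $c$, with Duplicator winning the remaining $k$ rounds iff every $\siw{i-c}$-formula of rank at most $k$ transfers from the active side to the other — is the right generalization, and the reverse direction via a finite conjunction of distinguishing formulas is standard and sound (given the usual convention that a $\piw{j}$-formula also counts as a $\siw{j+1}$-formula via an empty leading existential block, which you need when Spoiler switches words and the resulting distinguishing formula is $\forall$-headed).

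There is, however, a genuine gap in the forward direction. You propose to ``prenexify before the round-by-round analysis.'' Prenexification preserves the $\siw{i}$ class but does not preserve quantifier rank: $\exists x\,P(x)\wedge\exists y\,Q(y)$ has rank $1$, while its prenex form $\exists x\,\exists y\,(P(x)\wedge Q(y))$ has rank $2$. The paper's $\ksieq{i}$ is explicitly defined over all $\siw{i}$-formulas of rank at most $k$, not only prenex ones, and the appendix stresses that these two choices yield different preorders at each fixed $k$. After prenexifying, a rank-$k$ formula may have rank greater than $k$, so the $k$-round game gives no information about it, and soundness of the game with respect to the paper's preorder is not established by your argument as written. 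The repair is to run the structural induction directly on arbitrary (non-prenex) formulas, as your own earlier ``via Boolean combinations'' phrasing already suggests: conjunction and disjunction pass to their subformulas at the same rank without spending any round; negation is handled by flipping to the dual transfer (of $\piw{i-c}$-formulas from the other side to the active side); only a quantifier consumes a round. Equivalently, convert to negation normal form rather than prenex normal form: that transformation preserves both rank and block count, unlike prenexification.
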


Note that we will often use Lemma~\ref{lem:efgame} implicitly and
alternate between the original and the game definition of
$\ksieq{i}$. We now give a few classical lemmas on \efgame games that
we reuse several times in our proofs. We begin with a 
lemma stating that $\ksieq{i}$ is a pre-congruence, \emph{i.e.} that it is
compatible with the concatenation product.

\begin{lemma} \label{lem:efconcat}
  Let $i \in \nat$ and let $w^{}_1,w^{}_2,w'_1,w'_2 \in A^*$ such that $w^{}_1
  \ksieq{i} w^{}_2$ and $w'_1 \ksieq{i} w'_2$. Then $w^{}_1w'_1 \ksieq{i}
  w^{}_2w'_2$.
\end{lemma}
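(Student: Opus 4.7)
The plan is to invoke the game characterization of Lemma~\ref{lem:efgame} and combine two winning strategies for Duplicator into one. Fix $k$ such that $w_1 \ksieq{i} w_2$ and $w'_1 \ksieq{i} w'_2$ hold at rank $k$. By Lemma~\ref{lem:efgame}, Duplicator has a winning strategy $\sigma$ for $k$ rounds of the \siw{i}-game on $(w_1,w_2)$ with initial active word $w_1$, and a winning strategy $\sigma'$ on $(w'_1,w'_2)$ with initial active word $w'_1$. I would build from these a winning strategy $\tau$ for $k$ rounds on $(w_1w'_1,w_2w'_2)$ with initial active word $w_1w'_1$; Lemma~\ref{lem:efgame} then gives the conclusion.

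The construction is the natural one. Identify positions of $w_1w'_1$ (resp.\ $w_2w'_2$) with their positions inside $w_1$ and $w'_1$ (resp.\ $w_2$ and $w'_2$), and identify the combined active word with the pair made of the active words of the two sub-games at any given moment. At each round, Spoiler's pebble lies either in the left half or in the right half of the combined active word; Duplicator replies by simulating the move in the corresponding sub-game, using $\sigma$ in the first case and $\sigma'$ in the second. The label condition and the relative order between two pebbles in the same half are preserved by the corresponding sub-strategy, and the order between a left-half pebble and a right-half pebble is preserved for free, since every position of a left half precedes every position of the corresponding right half.

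The delicate point, and the main obstacle, is the alternation counter. A switch of the combined active word is \emph{not} registered by a sub-game until the next move is actually played in it; and, conversely, a sub-game only sees a switch when its own active side changes between two consecutive moves played in it. Concretely, let $n_1<n_2<\cdots$ enumerate the combined rounds in which sub-game~$1$ is simulated; then the counter of sub-game~$1$ just after round $n_j$ equals the number of indices $\ell \lmo j$ such that the sub-game~$1$ active word at round $n_\ell$ differs from that at round $n_{\ell-1}$. Each such difference is witnessed by at least one actual switch of the combined active word strictly between rounds $n_{\ell-1}$ and $n_\ell$, so the sub-game~$1$ counter is bounded by the combined counter, which never exceeds $i-1$; the same bound holds for sub-game~$2$. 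Hence both $\sigma$ and $\sigma'$ remain playable throughout the simulation, Duplicator survives all $k$ rounds of the combined game, and Lemma~\ref{lem:efgame} yields $w_1 w'_1 \ksieq{i} w_2 w'_2$.
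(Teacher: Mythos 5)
Your proof is correct and follows exactly the approach the paper sketches: invoke Lemma~\ref{lem:efgame}, combine the two Duplicator winning strategies by dispatching each Spoiler move to whichever sub-game contains the chosen position, and observe that labels and relative order are preserved (the cross-half order being automatic). The paper dismisses the combination as ``easy''; you supply the one nontrivial detail, namely that each sub-game's alternation counter never exceeds the combined game's counter and so stays below $i-1$, which holds because every change of a sub-game's active side is caused by at least one switch of the combined active side in the intervening rounds, and these intervening windows are disjoint across successive sub-game moves.
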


\begin{proof}
  By Lemma~\ref{lem:efgame}, Duplicator has winning strategies in the level
  $i$ games between $w^{}_1,w^{}_2$ and $w'_1,w'_2$, with $w^{}_1,w'_1$ as initial
  active words respectively. These strategies can be easily combined into a
  strategy for the level $i$ game between $w^{}_1w'_1$ and $w^{}_2w'_2$ with
  $w^{}_1w'_1$ as intial active word. We conclude that $w^{}_1w'_1 \ksieq{i}
  w^{}_2w'_2$. \qed
\end{proof}

The second property concerns full first-order logic.

\begin{lemma} \label{lem:aperiodic} Let $k,k_1,k_2 \in \nat$ be such that
  $k_1,k_2 \gmo 2^{k}-1$. Let $v \in A^*$. Then
  $$\forall i\in\nat,\quad v^{k_1} \ksieq{i} v^{k_2}.$$
\end{lemma}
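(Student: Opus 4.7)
The plan is to establish the stronger statement that $v^{k_1}$ and $v^{k_2}$ agree on every first-order formula of quantifier rank $k$, i.e., $v^{k_1}\kfoeq v^{k_2}$; since every $\siw{i}$-formula is in particular a first-order formula, this immediately yields $v^{k_1}\ksieq{i}v^{k_2}$ for all $i\in\nat$. By the classical Ehrenfeucht-Fraïssé theorem for full first-order logic (the unrestricted analogue of Lemma~\ref{lem:efgame}), it suffices to exhibit a winning strategy for Duplicator in the $k$-round full first-order \efgame game on $v^{k_1}$ and $v^{k_2}$.

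Let $n=|v|$ and view a position in $v^m$ as a pair $(b,p)$, where $b\in\{1,\dots,m\}$ is the block index and $p\in\{1,\dots,n\}$ the intra-block position; note that $p$ alone determines the label. Duplicator's strategy is to answer every Spoiler move $(b,p)$ by a move $(b',p)$ in the opposite word, thereby matching labels automatically. The non-trivial choice is that of $b'$, for which we use the textbook exponential-gap strategy lifted to blocks: adding the virtual endpoints $0$ and $k_i+1$, maintain the invariant that after $j$ rounds, each pair of consecutive pebbled block indices has either equal gaps in the two words, or gaps both of size at least $2^{k-j}$. The invariant holds initially since $k_1+1,k_2+1\ge 2^k$.

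When Spoiler plays inside an already pebbled block, Duplicator replays in the corresponding pebbled block at the same intra-block position $p$, which preserves the block-level invariant together with all intra-block orderings. When Spoiler opens a fresh block strictly inside a gap, either that gap is exactly matched on the other side and Duplicator picks the same offset, or both gaps are at least $2^{k-j}$ and Duplicator uses the classical half-split so that both resulting gaps on its side are still at least $2^{k-j-1}$, which is possible because $2\cdot 2^{k-j-1}\le 2^{k-j}$. After $k$ rounds the invariant forces all gaps to be equal, so the induced map on block indices is order-preserving; combined with the identical intra-block coordinates, the total map on pebbled positions is order- and label-preserving, and Duplicator wins.

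The only technical point is the preservation step, which is the classical FO EF argument for $a^{k_1}\kfoeq a^{k_2}$; the one novelty in adapting it to $v$ in place of $a$ is the intra-block coordinate, which can always be matched exactly because every block is an occurrence of the same word $v$. No delicate interaction arises between the block-level invariant and the intra-block positions, so the main effort reduces to a careful bookkeeping of the gap invariant.
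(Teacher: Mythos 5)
Your proof is correct and simply spells out the classical full-first-order \efgame argument that the paper invokes by citation (its proof is the one-liner ``This is well known for full first-order logic, see \cite{bookstraub}''); passing from $v^{k_1}\kfoeq v^{k_2}$ to $v^{k_1}\ksieq{i}v^{k_2}$ for all $i$ is indeed immediate since every \siw{i} formula is a first-order formula of the same rank. One small slip: after $k$ rounds the invariant does \emph{not} force all gaps to be equal --- at $j=k$ the gap clause is vacuous --- but order-preservation of the induced block map follows anyway because Duplicator always answers inside the gap corresponding to the one Spoiler played in, so the conclusion (and the proof) stands.
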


\begin{proof}
  This is well known for full first-order logic
  (see~\cite{bookstraub} for details).\qed
\end{proof}

We finish with another classical property, which is this time specific
to \siw{i}.

\begin{lemma} \label{lem:siprop}
  Let $i \in \nat$, let $k,\ell,r,\ell',r' \in \nat$ be such
  that $\ell,r,\ell',r' \gmo 2^k$ and let $u,v \in
  A^*$ such that $u \ksieq{i} v$. Then we have:
  \[
  v^{\ell}v^{r} \ksieq{i+1} v^{\ell'}uv^{r'}.
  \]
\end{lemma}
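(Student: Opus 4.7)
\medskip
\noindent
\textbf{Proof plan.} The plan is to invoke Lemma~\ref{lem:efgame} and exhibit a winning strategy for Duplicator in the level-$(i{+}1)$ EF game played for $k$ rounds on $v^\ell v^r$ (initial active) versus $v^{\ell'} u v^{r'}$. As preprocessing, I would normalize the outer blocks via Lemma~\ref{lem:aperiodic}: since $\ell,r,\ell',r' \gmo 2^k$, all four exponents satisfy the hypothesis of that lemma, yielding $v^\ell \ksieq{j} v^{\ell'}$ (and conversely) at rank $k$ for every level $j$, and likewise for $v^r,v^{r'}$. Combining with the congruence Lemma~\ref{lem:efconcat} gives $v^{\ell'}uv^{r'} \ksieq{i+1} v^\ell u v^r$ (and the reverse direction). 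A further application of Lemma~\ref{lem:aperiodic} produces $v^{\ell+r} \ksieq{i+1} v^\ell v v^r$. By transitivity of $\ksieq{i+1}$, the problem reduces to proving
\[
  v^\ell v v^r \ksieq{i+1} v^\ell u v^r,
\]
where the two words share identical outer $v^\ell$ and $v^r$ blocks and differ only in a single central block ($v$ versus $u$).

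For this reduced statement, I would play the level-$(i{+}1)$ EF game on $W=v^\ell v v^r$ versus $W'=v^\ell u v^r$ for $k$ rounds with $W$ as initial active word. Decompose $W = L\cdot M\cdot R$ and $W' = L\cdot M'\cdot R$, where $L=v^\ell$, $R=v^r$, $M=v$ and $M'=u$. Duplicator's strategy is piecewise: plays in the outer $L$ or $R$ blocks are answered by the identity on positions (these blocks literally coincide in both words); plays in the middle blocks $M$ or $M'$ are delegated to the level-$i$ winning strategy $\sigma_u$ supplied by the hypothesis $u \ksieq{i} v$ (via Lemma~\ref{lem:efgame}, with $u$ as the sub-game's initial active word). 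Pebble correspondences are tracked block by block, so that the relative order across all pebbles is preserved.

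The hard part will be the alternation-count bookkeeping. The main game tolerates $i$ switches of the active word, whereas $\sigma_u$ admits only $i-1$ sub-switches; moreover, the sub-game's initial active word ($u$, inside $W'$) differs from the main game's ($W$), threatening an off-by-one overshoot when Spoiler's first middle-play falls on the $W$-side. I plan to handle this by a case split on Spoiler's first middle-play. If it occurs on $W'$, the first main-switch and the sub-strategy's opening on $u$ align, and $\sigma_u$ runs cleanly. If it occurs on $W$, the extra alternation available at level $i+1$ (one more than at level $i$) is precisely what absorbs the discrepancy, with the aperiodicity of the outer $v^\ell$ and $v^r$ blocks providing slack to re-interpret such a play as occurring in an adjacent outer $v$-copy when needed. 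Making this accounting fully rigorous across every Spoiler interleaving of outer and middle plays, while keeping sub-switches bounded by $i-1$, is the main technical point.
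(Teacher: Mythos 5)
Your preprocessing step is sound: by Lemma~\ref{lem:aperiodic} (which holds at every level) together with Lemma~\ref{lem:efconcat} and transitivity, reducing to $v^{\ell}vv^{r} \ksieq{i+1} v^{\ell}uv^{r}$ is legitimate. But the core of the argument is exactly the step you flag as ``the main technical point'' and then leave unresolved, and it is a genuine gap, not a routine bookkeeping detail. A static block decomposition with a fixed delegation to the sub-strategy $\sigma_u$ really does fail. Trace the alternation counters: the sub-game for $u\ksieq{i}v$ has $u$ as its initial active word, but in the main game $u$ sits inside the \emph{non-active} word $W'$. If Spoiler's first move lands in the middle $v$-factor of $W$, the main counter stays at $0$ (no switch) while the sub-counter jumps to $1$ (Spoiler is on the sub-game's non-active side). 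From then on, every toggle between the two middle factors increments both counters in lockstep, so the sub-counter stays permanently one ahead of the main counter; after the $i$ switches the level-$(i{+}1)$ game allows, the sub-game would need $i$ switches, one more than level $i$ permits. Your proposed remedy---re-interpret that first middle-play on $W$ as a play in an adjacent outer copy of $v$---is the right instinct, but it cannot be grafted onto a fixed block decomposition: once you shift the pebbled copy of $v$, the boundary between ``outer'' and ``middle'' has moved, and every later pebble must be re-aligned consistently across both words. You would need to abandon the static partition and re-derive the correspondence round by round.

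That re-derivation is in effect what the paper's proof does, and it organizes the argument differently: an induction on the number of rounds $k$, with a case split on the value of the alternation counter after Spoiler's \emph{first} move. If Spoiler switches immediately to $w'=v^{\ell'}uv^{r'}$, the remaining game is a level-$i$ game with $w'$ active, and one shows directly $w'\ksieq{i}w$ using $u\ksieq{i}v$ and aperiodicity of the $v$-blocks; the extra alternation of level $i{+}1$ is spent precisely here, once, on the switch. If Spoiler plays in $w=v^{\ell}v^{r}$, no budget is consumed; Duplicator matches the pebble to a copy of $v$ in $w'$ chosen so that the prefix is either identical or a long enough $v$-power, and the induction recurses on the left and right remnants with exponent bounds $2^{k-1}$. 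No sub-strategy is invoked as a black box; the copy alignment is rebuilt at each round, which is what makes the counter arithmetic come out right. To repair your argument you would essentially have to reproduce this inductive, copy-shifting structure, at which point the ``delegate to $\sigma_u$'' framing no longer buys anything.
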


\begin{proof}
  Set $w = v^\ell v^{r}$ and $w' = v^{\ell'}uv^{r'}$. We prove
  that $w \ksieq{i+1} w'$ using an \efgame argument: we prove that Duplicator
  has a winning strategy for the game in $k$ rounds for \siw{i+1} played on $w,w'$ with $w$
  as initial active~word. The proof goes by induction on $k$. We
  distinguish two cases depending on the value, 0 or 1, of the alternation counter $c$
  after Spoiler has played the first round.

  \medskip
  \noindent
  {\bf Case 1: $c=1$.} In this case, by definition of the game, it
  suffices to prove that $w' \ksieq{i} w$. From our hypothesis we
  already know that $u \ksieq{i} v$. Moreover, it follows from
  Lemma~\ref{lem:aperiodic} that $v^{\ell'} \ksieq{i} v^\ell$
  and $v^{r'} \ksieq{i} v^{r - 1}$. It then follows from
  Lemma~\ref{lem:efconcat} that $w' \ksieq{i} w$.

  \medskip
  \noindent
  {\bf Case 2: $c=0$.} By definition, this means that Spoiler played on some
  position $x$ in $w$. Therefore $x$ is inside a copy of the word $v$. Since
  $w$ contains more than $2^{k+1}$ copies of $v$, by symmetry we can assume
  that there are at least $2^{k}$ copies of $v$ to the right of $x$. We now
  define a position $x'$ inside $w'$ that will serve as Duplicator's
  answer. We choose $x'$ so that it belongs to a copy of $v$ inside $w'$ and
  is at the same relative position inside this copy as $x$ is in its own copy
  of~$v$. Therefore, to fully define $x'$, it only remains to define the copy
  of $v$ in which we choose~$x'$. Let $n$ be the number of copies of $v$ to
  the \emph{left} of $x$ in $w$, that is, $x$ belongs to the
  $(n+1)$-th copy of $v$   starting from the left of $w$. If $n <
  2^{k-1}-1$, then $x'$ is chosen inside the $(n+1)$-th copy of $v$
  starting from the left of $w'$. Otherwise, $x'$ is chosen inside the
  $2^{k-1}$-th copy of $v$ starting from the left of $w'$. Observe
  that these copies always exist, since $\ell'\gmo 2^k$.

  Set $w=w_pvw_q$ and $w'=w'_pvw'_q$, with the two distinguished $v$ factors 
  being the copies containing the positions $x,x'$. By definition of
  the game, it suffices to prove that $w_p \sieq{k-1}{i+1}
  w'_p$ and $w_q \sieq{k-1}{i+1} w'_q$ to conclude that Duplicator
  can play for the remaining $k-1$ rounds. If $n < 2^{k-1}-1$, then by 
  definition, $w_p=w'_p$, therefore it is immediate that
  $w_p \sieq{k-1}{i+1} w'_p$. Otherwise, both $w_p$ and
  $w_p'$ are concatenations of at least $2^{k-1}-1$ copies of
  $v$. Therefore $w_p \sieq{k-1}{i+1} w'_p$ follows
  Lemma~\ref{lem:aperiodic}. Finally observe that by definition $w_q =
  v^{\ell_1}v^{r}$ and $w'_q=v^{\ell'_1}uv^{r'}$ with $\ell_1 + r \gmo
  2^k$ and $\ell'_1,r' \gmo 2^{k-1}$. Therefore, it is immediate
  by induction on $k$ that $w_q \sieq{k-1}{i+1} w'_q$. \qed
\end{proof}

\subsection{Simon's Facorization Forests Theorem}

In this appendix, we briefly recall the definition of factorization
forests and state the associated theorem. Proofs and 
more detailed presentations can be found in~\cite{kfacto,bfacto}

Let $M$ be a finite monoid and $\alpha: A^* \rightarrow M$ a
morphism. An \emph{$\alpha$-factorization forest} is an ordered
unranked tree with nodes labeled by words in $A^*$ and such that for
any inner node $x$ with label $w$, if $x_1,\dots,x_n$ are its children
listed from left to right with labels $w_1,\dots,w_n$, then
$w=w_1\cdots w_n$. Moreover, all nodes $x$ in the forest must be of
the three following kinds:

\begin{itemize}
\item \emph{leaf nodes} which are labeled by either a single letter or
  the empty word.
\item \emph{binary nodes} which have exactly two children.
\item \emph{idempotent nodes} which have an arbitrary number of
  children whose labels $w_1,\dots,w_n$ verify $\alpha(w_1) = \cdots =
  \alpha(w_n) = e$ for some idempotent $e \in M$.
\end{itemize}

If $w \in A^*$, an \emph{$\alpha$-factorization forest for $w$} is an
$\alpha$-factorization forest whose root is labeled by $w$.

\begin{theorem}[Factorization Forest Theorem of
  Simon~\cite{simonfacto,kfacto}] \label{thm:facto}
  For all $w \in A^*$, there exists an $\alpha$-factorization forest for
  $w$ of height smaller than $3|M|-1$.
\end{theorem}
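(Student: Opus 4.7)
My plan is to proceed by induction using Green's $\mathcal{J}$-relation on $M$. For $s \in M$, write $d(s)$ for the length of a longest strictly ascending $\mathcal{J}$-chain starting at $s$; note $d(s) \le |M|$. I will establish the stronger claim that every $w \in A^*$ with $\alpha(w) = s$ admits an $\alpha$-factorization forest whose height is bounded by a linear function of $d(s)$, from which the global bound of $3|M|-1$ then follows after the tighter counting of Kufleitner~\cite{kfacto}.

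The base case $|w| \le 1$ uses a leaf node of height $0$. For the inductive step on $w = a_1 \cdots a_n$ with $n \ge 2$, I would examine how the sequence of prefix images $\alpha(a_1 \cdots a_i)$ moves through the $\mathcal{J}$-classes of $M$, and distinguish two cases. \emph{Binary case}: if there is a cut $w = w_1 w_2$ with $\alpha(w_1) >_{\mathcal{J}} s$ or $\alpha(w_2) >_{\mathcal{J}} s$, then on that side $d$ is strictly smaller, and on the other side $d$ is not larger than $d(s)$, so the induction hypothesis applies to both pieces and they are joined under a binary node. \emph{Idempotent case}: otherwise every nontrivial cut keeps both halves $\mathcal{J}$-equivalent to $s$; a Ramsey/pigeonhole argument applied to the prefix images inside the finite $\mathcal{J}$-class $\mathcal{J}_s$ yields a factorization $w = u_1 \cdots u_m$ with every $\alpha(u_j)$ equal to a common idempotent $e \in \mathcal{J}_s$. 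Each $u_j$ is strictly shorter than $w$, so recursing on the $u_j$ and grouping them under an idempotent node produces the required forest.

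The main obstacle will be tightening the constant. The naive argument above spends one level per step of $\mathcal{J}$-descent, plus an extra level for each idempotent repackaging, which only yields a bound of the form $c \cdot |M|$ for some $c$ strictly larger than $3$. To obtain the optimal $3|M|-1$, I would plug in Kufleitner's refinement, which amortizes the cost inside each $\mathcal{J}$-class by further distinguishing the $\mathcal{R}$-, $\mathcal{L}$-, and $\mathcal{H}$-transitions before committing to an idempotent node, thereby producing at most three forest levels per $\mathcal{J}$-class crossed. The combinatorial structure required by this bookkeeping is exactly what the inductive decomposition above provides, so the refined counting plugs in directly to convert the qualitative linear bound into the stated $3|M|-1$.
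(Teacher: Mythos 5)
The paper does not prove this theorem; it states it as an external result, citing Simon~\cite{simonfacto} and Kufleitner~\cite{kfacto}, and invokes it as a black box (in the proofs of Proposition~\ref{prop:compu} and Lemma~\ref{lem:factosig}). There is therefore no internal proof to measure your sketch against, and deferring the tight constant to Kufleitner is consistent with the paper's own usage.

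Taken on its own, your outline follows the standard Green's-relations strategy, but two steps are not just ``constants to be tightened'' --- they are genuinely missing. First, the induction measure in the binary case does not close: if $\alpha(w_1)$ is strictly $\mathcal{J}$-above $s$ while $\alpha(w_2)$ is $\mathcal{J}$-equivalent to $s$, then $d(\alpha(w_2)) = d(s)$ and the stated hypothesis cannot be applied to $w_2$; you need a secondary measure such as $|w|$, which you invoke only in the idempotent case. Second, the idempotent case does not yield what you assert. Assuming every proper prefix image stays in the $\mathcal{J}$-class of $s$, stability puts them all in a single $\mathcal{R}$-class, and pigeonhole produces a repeated prefix image $p_i = p_j$. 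But the corresponding factor image $\alpha(a_{i+1}\cdots a_j)$ is then only a \emph{right stabilizer} of $p_i$, not an idempotent, and distinct loop factors between repeated prefixes need not be equal to one another. Turning repeated prefix images into a factorization whose pieces all have the \emph{same idempotent} image is precisely the crux of Simon's theorem: the naive Ramsey-on-infix argument achieves it with astronomical height bounds, and the linear bound is obtained only through the careful $\mathcal{R}$/$\mathcal{L}$/$\mathcal{H}$ bookkeeping inside each $\mathcal{J}$-class that you attribute to Kufleitner. So your final paragraph is not merely plugging in a refined count; it is supplying the central idempotent-extraction argument that the rest of the sketch does not establish.
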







\section{Appendix to Section~\ref{sec:comput}: Proving the Algorithm}
\label{app:algo}
In this appendix, we prove Proposition~\ref{prop:compu}, that it is
the correctness and completeness of our algorithm which computes sets
of compatible \dchains. Recall that our algorithm works by
fixpoint. Given as input a morphism $\alpha: A^* \rightarrow M$ into a
finite monoid $M$ and a natural $n \in \nat$, it applies iteratively
the procedure $Sat_n$, starting from the application $C \mapsto
\fI_n[C]$, where $\fI_n[C]$ is the set of trivial sets of compatible
\dchains of length $n$ for $\alpha,C$. The fixpoint is a collection of
sets indexed by subalphabets $B$, denoted by $Sat^{*}_n[B](C \mapsto
\fI_n[C])$. 

We have to show that when the algorithm reaches its fixpoint, the
computed set $\downclos Sat^{*}_n[B](C \mapsto \fI_n[C])$ consists
exactly of all compatible sets of \dchains of length~$n$. This is
formulated in Proposition~\ref{prop:compu}, which we restate. In
addition, it states that for every length $n$, one can compute a rank
$\ell(n)$ that suffices to capture all sets of compatible sets of
\dchains of length $n$. In the following, we~let
\[\ell(n)= 3|M| \cdot 2^{|A|}\cdot n\cdot 2^{2^{2|M|^n}}.\]

\adjustc{prop:compu}
\begin{proposition}
  Let $n \gmo 1$, $B \subseteq A$ and $\ell\geq\ell(n)$. Then $$\fCtwon[\alpha,B] = 
  \fCgen{2}{\ell}{n}[\alpha,B] = \downclos Sat^{*}_n[B](C \mapsto \fI_n[C]).$$
\end{proposition}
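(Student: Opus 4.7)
Since $\fCtwon[\alpha,B] = \bigcap_k \fCgen{2}{k}{n}[\alpha,B] \subseteq \fCgen{2}{\ell}{n}[\alpha,B]$ trivially holds for every $\ell$, the work reduces to two nontrivial inclusions: (i) \emph{soundness} $\downclos Sat^{*}_n[B](C \mapsto \fI_n[C]) \subseteq \fCtwon[\alpha,B]$, and (ii) \emph{completeness with the rank bound} $\fCgen{2}{\ell}{n}[\alpha,B] \subseteq \downclos Sat^{*}_n[B](C \mapsto \fI_n[C])$ whenever $\ell \gmo \ell(n)$. The overall argument is an induction on $n$: the procedure $Sat_n$ consumes $\Cstwolen{n-1}[\alpha,B]$, which the induction hypothesis guarantees has already been correctly computed at the previous length. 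The case $n=1$ is immediate from $\fCtwolen{1}[\alpha,B] = \fI_1[B]$.

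\textbf{Soundness.} I would prove by induction on the iteration count $j$ that $Sat^{j}_n[B](C \mapsto \fI_n[C]) \subseteq \fCtwon[\alpha,B]$; since $\fCtwon[\alpha,B]$ is downward closed under set inclusion of \chains, passing to $\downclos$ preserves the statement. The base case $j = 0$ is clear: each singleton $\{(\alpha(w),\dots,\alpha(w))\} \in \fI_n[B]$ is compatible at every rank $k$ with the witness word $w$ itself. For the inductive step, one verifies that both operations preserve membership in $\fCtwon[\alpha,B]$. The multiplication operation $\fM_B$ is handled by Fact~\ref{fct:setcomp}. The operation $\fO_B$ is the substantial point: given $\Ts \in \fCtwon[\alpha,B]$ with common witness $w$ of content $B$, and given a length-$(n-1)$ \dchain $(t_1,\dots,t_{n-1}) \in \Cstwolen{n-1}[\alpha,B]$ with its own witness word $w'$ of content $B$, one invokes the \siwd power property stated in the outline (a direct \efgame argument, in the spirit of Lemma~\ref{lem:siprop}) to show that a single word of shape $w^{L}w'w^{L}$ simultaneously witnesses every \chain in $\Ts^\omega \cdot (1_M, \Cstwolen{n-1}[\alpha,B]) \cdot \Ts^\omega$.

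\textbf{Completeness.} For the reverse inclusion, suppose $\Ts \in \fCgen{2}{\ell}{n}[\alpha,B]$ with $\ell \gmo \ell(n)$ and let $w$ be a common witness (of content $B$ and image $s_1$) obtained from the definition of compatible sets at rank $\ell$. By Theorem~\ref{thm:facto}, $w$ admits an $\alpha$-factorization forest of height at most $3|M|-1$. I would then induct on the forest structure to produce, level by level, a sequence of operations of the algorithm generating $\Ts$ (up to downward closure). Leaves fall into $\fI_n[B]$; binary nodes correspond to applications of $\fM_B$; idempotent nodes correspond to applications of $\fO_B$, where the $\omega$-power absorbs the arbitrary arity of the idempotent node and the inner factor $(1_M, \Cstwolen{n-1}[\alpha,B])$ is populated using the inductive hypothesis at length $n-1$. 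Each of the four factors in $\ell(n) = 3|M|\cdot 2^{|A|}\cdot n\cdot 2^{2^{2|M|^n}}$ plays a role in threading the rank through this recursion: the forest height contributes $3|M|$, the branching over sub-alphabets contributes $2^{|A|}$, the induction on length contributes $n$, and $2^{2^{2|M|^n}}$ bounds the number of $Sat_n$ iterations needed to reach the fixpoint (since each iterate lives in $2^{2^{M^n}}$ for each of the $2^{|A|}$ subalphabets).

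\textbf{Main obstacle.} The genuine difficulty lies in the completeness direction, specifically in the bookkeeping of the rank budget and the interplay between the length-$n$ saturation and the length-$(n-1)$ completeness hypothesis. The problem is not just to exhibit some compatible set of \dchains at each forest node, but to ensure that the set produced is already present in the specific iterate $Sat^{j}_n$ reachable within the bound $\ell(n)$; in particular, at idempotent nodes the \dchains of length $n-1$ fed into $\fO_B$ must themselves be available at an intermediate rank that still leaves enough budget for the outer saturation to complete. Managing this simultaneous decrease of rank across the forest recursion and the length induction is the subtle point of the proof.
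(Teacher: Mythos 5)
Your overall decomposition (trivial inclusion, soundness by induction on the iteration count, completeness via factorization forests, all inside an outer induction on $n$) matches the paper's proof, and your soundness argument is essentially the paper's: the $\fO_B$ case is indeed handled by an \efgame argument in the spirit of Lemma~\ref{lem:siprop} applied to a single witness word $w$ and a witness word for the middle $\Cstwolen{n-1}$-chain. Two details in the completeness direction, however, are not right as stated.

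First, the forest must be a $\beta$-factorization forest for $\beta: w \mapsto (\alpha(w),\content{w})$, not an $\alpha$-factorization forest: $Sat_n$ is indexed by sub-alphabets, and idempotent nodes need to be constant in both the monoid value and the alphabet. This is where the factor $2^{|A|}$ actually comes from (it is the height bound $3|M\times 2^A|$), not from ``branching over sub-alphabets'' as a separate phenomenon.

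Second, and more importantly, the claim that ``the $\omega$-power absorbs the arbitrary arity of the idempotent node'' is not an argument; it is the gap. At an idempotent node $w = w_1\cdots w_m$ with $m$ unbounded, Lemma~\ref{lem:gendecomp} gives $\Gs_n^k(w) \subseteq \Gs_n^{k-(m-1)}(w_1)\cdots\Gs_n^{k-(m-1)}(w_m)$, but for $m$ large the rank $k-(m-1)$ drops below the threshold needed to invoke the height-$(h-1)$ induction hypothesis. One cannot simply raise a compatible set to the $\omega$-power: Operation~\eqref{eq:oper} requires a single $\Ts$ sandwiching a middle factor $(1_M,\Cstwolen{n-1}[\alpha,B])$, and the middle of the idempotent node is not in that shape a priori. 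The paper's solution is a \emph{secondary} induction on the ``index'' of an $(e,B)$-decomposition: one pigeonholes on blocks of $k_n = 2^{|M|^n}$ consecutive generated sets (the ``$k_n$-sequences'') to find two occurrences of the same sequence; the segment between them is controlled by Fact~\ref{fct:correc} as living in $(e,\Cstwolen{n-1}[B])$; and a further pigeonhole inside the repeated $k_n$-block extracts a genuine idempotent $(\Ss_2)^\omega$ of the monoid $2^{M^n}$, putting things precisely in the form $\Ss_1\cdot(\Ss_2)^\omega\cdot(1_M,\Cstwolen{n-1}[B])\cdot(\Ss_2)^\omega\cdot\Ss_3$ required by Operation~\eqref{eq:oper}. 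The factor $2^{2^{2|M|^n}}$ is the rank budget consumed per forest level by this secondary induction (whose depth is bounded by the number of possible $k_n$-sequences, roughly $(k_n)^{k_n+1}$), not a bound on the number of $Sat_n$ iterations to reach the fixpoint. You correctly flag the rank bookkeeping as the hard part, but the mechanism that makes it go through is this index induction, and it is missing from your sketch.
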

\restorec

We proceed by induction on $n$. Observe that when $n = 1$ all three
sets are by definition equal to $\fI_n[B]$, therefore the result is
immediate. Assume now that $n \geq 2$. Using our induction hypothesis
we have the following fact.

\begin{fact} \label{fct:inducomp}
  Let $B \subseteq A$, then $\fCtwolen{n-1}[\alpha,B] =
  \fCgen{2}{\ell(n-1)}{n-1}[\alpha,B]$. Moreover, it follows that $\Cstwolen{n-1}[\alpha,B] =
  \Csgen{2}{\ell(n-1)}{n-1}[\alpha,B]$.
\end{fact}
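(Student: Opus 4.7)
The plan is to proceed by induction on $n$. The base case $n=1$ is immediate since all three sets reduce to $\fI_1[B]$ by definition, so assume $n \geq 2$ and invoke Fact~\ref{fct:inducomp} as the induction hypothesis. I would split the desired equality into three inclusions. The inclusion $\fCtwon[\alpha,B] \subseteq \fCgen{2}{\ell}{n}[\alpha,B]$ is immediate from the definition $\fCtwon[\alpha,B]=\bigcap_k\fCgen{2}{k}{n}[\alpha,B]$. It then remains to prove soundness $\downclos Sat^{*}_n[B](C \mapsto \fI_n[C]) \subseteq \fCtwon[\alpha,B]$ and completeness $\fCgen{2}{\ell}{n}[\alpha,B] \subseteq \downclos Sat^{*}_n[B](C \mapsto \fI_n[C])$ whenever $\ell \geq \ell(n)$.

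For soundness, I would verify that each building block of $Sat_n$ preserves membership in $\fCtwon[\alpha,B]$. The initial seeds in $\fI_n[C]$ are trivially compatible since a single word $w$ with $\content{w}=C$ directly witnesses $\{(\alpha(w),\dots,\alpha(w))\} \in \fCtwon[\alpha,C]$. The multiplication in $\fM_B$ preserves membership by Fact~\ref{fct:setcomp}. For the operation $\fO_B$, I would rely on the level-$2$ amalgamation property: if $w \ksieq{2} u$, $w \ksieq{2} u'$ and $\content{w'}=\content{w}$, then for sufficiently large $h$ one has $w^{2h} \ksieq{2} u^h w' u'^h$ (a level-$2$ instance of Lemma~\ref{lem:siprop}). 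Crucially, compatibility produces a \emph{single} witness word $w$ shared by all chains of $\Ts$, which allows the amalgamation to be applied simultaneously to every chain, with the middle piece $(1_M,\Cstwolen{n-1}[\alpha,B])$ provided by the induction hypothesis through Fact~\ref{fct:inducomp}.

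For completeness, which is the main obstacle, I would rely on Simon's Factorization Forest Theorem (Theorem~\ref{thm:facto}). Given $\Ts \in \fCgen{2}{\ell}{n}[\alpha,B]$ with shared witness word $w$ at rank $\ell$, take an $\alpha$-factorization forest for $w$ of height at most $3|M|-1$. Then argue by induction on the height that the set of chains realized inside each subtree lies in $\downclos Sat^{*}_n[B]$: leaves give elements of $\fI_n[B]$, binary nodes correspond to applying $\fM$, and idempotent nodes correspond to applying $\fO$. At an idempotent node the outer children collectively contribute the wings $\Ts^\omega$ on either side, while the middle factor is replaced using any \dchain of length $n-1$ over $B$, available through Fact~\ref{fct:inducomp}. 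The quantitative bound $\ell(n) = 3|M|\cdot 2^{|A|}\cdot n\cdot 2^{2^{2|M|^n}}$ is obtained by tracking the loss of rank when one moves up the forest: the factor $3|M|$ matches the forest height, $2^{|A|}$ enumerates subalphabets, $n$ accounts for the length of the chains, and the double exponential $2^{2^{2|M|^n}}$ comes from the iterated powerset on $2^{M^n}$ that underlies $Sat_n$.

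The most delicate step will be the idempotent case. I will need to extract, simultaneously for \emph{all} chains in $\Ts$, an outer/middle/outer decomposition aligned with the factorization forest of the common witness $w$, while ensuring that the $\ksieq{2}$ relations survive the idempotent absorption. This is where the technical strength of the level-$2$ game arguments is used: passing to sufficiently high powers allows us to insert an arbitrary $B$-alphabeted word in the middle without breaking equivalence, and compatibility of the input set ensures that the same power of $w$ serves as a witness for every chain at once. Once this idempotent case is handled cleanly, the remaining verification of the rank bound is a mechanical accounting over the depth of the forest.
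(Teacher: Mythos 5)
You have misidentified what needs to be proved. In the paper, Fact~\ref{fct:inducomp} requires essentially no proof of its own: it is stated inside the proof of Proposition~\ref{prop:compu}, which runs by induction on $n$, and the fact is nothing more than the induction hypothesis instantiated at $n-1$ with $\ell=\ell(n-1)$. The paper introduces it with the single sentence ``Using our induction hypothesis we have the following fact,'' and that \emph{is} the proof of the first claim. The second claim, $\Cstwolen{n-1}[\alpha,B] = \Csgen{2}{\ell(n-1)}{n-1}[\alpha,B]$, then follows from the singleton correspondence already recorded in Section~\ref{sec:comput}: $\bar{s}\in\Csik[\alpha,B]$ if and only if $\{\bar{s}\}\in\fCik[\alpha,B]$, applied both to the ``ideal'' sets and to the rank-$\ell(n-1)$ sets. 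Your proposal never addresses this second claim at all, which is a genuine omission.

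What your proposal actually sketches is the full proof of Proposition~\ref{prop:compu} for length $n$: soundness via preservation of the three building blocks of $Sat_n$, completeness via Simon's Factorization Forest Theorem with induction on forest height, and an accounting of the bound $\ell(n)$. That sketch is broadly consistent with the paper's proof of the proposition (correctness via Fact~\ref{fct:setcomp} and the level-$2$ amalgamation of Lemma~\ref{lem:siprop}, completeness via factorization forests and the $(e,B)$-decomposition argument), so the mathematics is not wrong. But it answers the wrong question, and moreover as written it is circular: you say you will ``invoke Fact~\ref{fct:inducomp} as the induction hypothesis'' while the task is precisely to justify Fact~\ref{fct:inducomp}. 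The correct move is simply to observe that the statement is the outer inductive hypothesis plus the singleton equivalence, and stop there.
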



For all $B \subseteq A$, we prove the following inclusions:
$\fCtwon[\alpha,B] \subseteq \fCgen2{\ell}{n}[\alpha,B] \subseteq
\downclos Sat^{*}_n[B](C \mapsto \fI_n[C]) \subseteq
\fCtwon[\alpha,B]$. Observe that $\fCtwon[\alpha,B] \subseteq
\fCgen2{\ell}{n}[\alpha,B]$ is immediate by definition. Therefore, we
have two inclusions to prove:
\begin{itemize}
\item $\downclos Sat^{*}_n[B](C \mapsto \fI_n[C]) \subseteq
  \fCtwon[\alpha,B]$, this corresponds to correctness of the
  algorithm: all computed sets are indeed sets of compatible
  \dchains. 
\item $\fCgen2{\ell}{n}[\alpha,B] \subseteq \downclos Sat^{*}_n[B](C
  \mapsto \fI_n[C])$, this corresponds to completeness of the
  algorithm: all sets of compatible \dchains are computed.
\end{itemize}
\noindent
We give each proof its own subsection. Note that
Fact~\ref{fct:inducomp} (i.e., induction on $n$) is
only used in the completeness proof.

\subsection{Correctness of the Algorithm}
In this subsection, we prove that for all $B \subseteq A$, $\downclos
Sat^{*}_n[B](C \mapsto \fI_n[C]) \subseteq \fCtwon[\alpha,B]$. This is
a consequence of the following proposition.

\begin{proposition} \label{prop:correc}
  Set $B \subseteq A$, for all $k \in \nat$, $Sat^{*}_n[B](C \mapsto
  \fI_n[C]) \subseteq \fCgen2{k}{n}[\alpha,B]$.
\end{proposition}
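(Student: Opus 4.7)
The plan is to prove by induction on the iteration count $j$ the stronger statement that for every $k \in \nat$, $Sat^j_n[B](f_0) \subseteq \fCgen2{k}{n}[\alpha,B]$, where $f_0 = (C \mapsto \fI_n[C])$. Since the fixpoint is reached in finitely many iterations, this implies the proposition. The base case $j = 0$ is immediate: each singleton $\{(\alpha(w),\dots,\alpha(w))\} \in \fI_n[B]$ is a compatible set of \dchains at every rank, with $w$ itself as witness and $w_j = w$ for every~$j$.

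For the inductive step, fix $k$ and write $\fT_C = Sat^j_n[C](f_0)$, so that $Sat^{j+1}_n[B](f_0) = \fT_B \cup \fM_B \cup \fO_B$. The inclusion $\fT_B \subseteq \fCgen2{k}{n}[\alpha,B]$ is the induction hypothesis. For $\fM_B$, any element factors as $\Ss \cdot \Ts$ with $\Ss \in \fT_C$, $\Ts \in \fT_D$ and $C \cup D = B$; the induction hypothesis places $\Ss \in \fCgen2{k}{n}[\alpha,C]$ and $\Ts \in \fCgen2{k}{n}[\alpha,D]$, and Fact~\ref{fct:setcomp} concludes.

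The main obstacle is $\fO_B$. Given $\Ts \in \fT_B$, we must exhibit a single witness word for the set $\Ts^\omega \cdot (1_M, \Cstwolen{n-1}[\alpha,B]) \cdot \Ts^\omega$ that simultaneously handles every chain of this set at rank $k$. The key tool is the property of \siwd recalled in the outline of the algorithm: for every $k$ there is $\ell = \ell(k)$ such that whenever $w \ksieq{2} u$, $w \ksieq{2} u'$ and $\content{w'} = \content{w}$, one has $w^{2\ell} \ksieq{2} u^\ell w' u'^\ell$. Since the induction hypothesis places $\Ts$ in $\fCgen2{k'}{n}[\alpha,B]$ for \emph{every} $k'$, we pick $k'$ large enough and a witness $w$ of $\Ts$ at rank $k'$; the candidate witness for the output set is $w^{2\ell}$. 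For each chain in the output, we would assemble the matching sequence of words by concatenating $\ell$ copies of the rank-$k'$ witnesses supplied by $\Ts$, inserting in the middle the words that realize the length $(n-1)$ \dchain at rank $k$ (which exist by Fact~\ref{fct:inducomp}), and then appending $\ell$ more copies on the right. The quoted property, combined with the compatibility of \ksieq{2} with concatenation (Lemma~\ref{lem:efconcat}), forces each assembled word to be \ksieq{2}-equivalent in the required order to $w^{2\ell}$. The delicate bookkeeping — calibrating $k'$ and $\ell$ so that the needed inequalities survive all concatenations, in the spirit of Lemma~\ref{lem:siprop} — is where the real work lies.
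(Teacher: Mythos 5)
Your proof takes the same route as the paper: induct on the iteration index $j$, dispatch the $\fT_B$ and $\fM_B$ cases via the induction hypothesis and Fact~\ref{fct:setcomp}, and for $\fO_B$ pick $h = \omega(2^{M^n}) \cdot 2^{2k}$, use a rank-$k$ witness $u$ of $\Ts$ so that $u^{2h}$ witnesses the output set, and verify $u^{h}u^{h} \ksieq{2} w'_2 v_2 w''_2$ via Lemmas~\ref{lem:efconcat} and~\ref{lem:siprop} (the ``calibration'' you flag). One small inaccuracy: you invoke Fact~\ref{fct:inducomp} to realize the length-$(n-1)$ chain, but that is not needed here --- membership in $\Cstwolen{n-1}[\alpha,B]$ directly supplies realizing words at every rank by definition, and the paper indeed reserves the induction on $n$ for the completeness half only.
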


Before proving Proposition~\ref{prop:correc}, we explain how it is
used to prove correctness. By definition, for all $B$,
$\fCtwon[\alpha,B] = \bigcap_{k \in \nat}
\fCgen2{k}{n}[\alpha,B]$. Therefore, it is immediate from the
proposition that $Sat^{*}_n[B](C \mapsto \fI_n[C]) \subseteq
\fCtwon[\alpha,B]$. Moreover, by definition, $\downclos
\fCtwon[\alpha,B] = \fCtwon[\alpha,B]$. We conclude that
$\downclos Sat^{*}_n[B](C \mapsto
\fI_n[C]) \subseteq \fCtwon[\alpha,B]$ which 
terminates the correctness proof. It now remains to prove
Proposition~\ref{prop:correc}.

Let $k \in \nat$, $B \subseteq A$ and $\Rs \in Sat^{*}_n[B](C \mapsto
\fI_n[C])$. We need to prove that $\Rs \in \fCgen2{k}{n}[\alpha,B]$. By
definition, $\Rs \in Sat^{j}_n[B](C \mapsto \fI_n[C])$ for some
$j$. We proceed by induction on $j$. If $j = 0$, this is immediate
since $\Rs \in \fI_n[B] \subseteq \fCgen2{k}{n}[\alpha,B]$.

Assume now that $j > 0$. For all $D \subseteq A$, we set $\fT_D =
Sat^{j-1}_n[D](C \mapsto \fI_n[C])$. By induction hypothesis, for every
$D\subseteq A$, every element of
$\fT_D$ belongs to $\fCgen2{k}{n}[\alpha,D]$.
Since $\Rs \in Sat^{j}_n[B](C \mapsto
\fI_n[C])$, by definition we have $\Rs \in
\fT_B \cup \fM_B \cup \fO_B$ with
\begin{eqnarray*}
  \fM_B & = & \bigcup_{C \cup D = B} (\fT_C \cdot \fT_D) \\
  \fO_B & = & \{\Ts^\omega \cdot (1_M,\Cstwolen{n-1}[\alpha,B]) \cdot
  \Ts^{\omega} \mid \Ts \in \fT_B\}
\end{eqnarray*}
If $\Rs \in \fT_B$, it is immediate by induction that $\Rs \in
\fCgen2{k}{n}[\alpha,B]$ and we are finished. Assume now that $\Rs \in
\fM_B$. This means that there exist $C,D$ such that $C \cup D =B$,
$\Ts_C \in \fT_C$ and $\Ts_D \in \fT_D$ such that $\Rs = \Ts_C \cdot
\Ts_D$. By induction hypothesis, we have $\Ts_C \in
\fCgen2{k}{n}[\alpha,C]$ and $\Ts_D \in \fCgen2{k}{n}[\alpha,D]$. It is
then immediate by Fact~\ref{fct:setcomp} that $\Rs = \Ts_C \cdot \Ts_D 
\in \fCgen2{k}{n}[\alpha,B]$.

It remains to treat the case when $\Rs \in \fO_B$. In that case, we
get $\Ts \in \fT_B$ such that $\Rs = \Ts^\omega \cdot
(1_M,\Cstwolen{n-1}[\alpha,B]) \cdot \Ts^{\omega}$. In the following, we
write $h = \omega \times 2^{2k}$ (with $\omega$ as
$\omega(2^{M^n})$). Note that by definition of the number $\omega$, we have
$\Ts^\omega = \Ts^h$, and in particular, $\Rs = \Ts^h \cdot
(1_M,\Cstwolen{n-1}[\alpha,B]) \cdot \Ts^{h}$. Observe first that by
induction hypothesis, we know that $\Ts \in \fCgen2{k}{n}[\alpha,B]$. In
particular, this means that all \chains in $\Ts$ have the same first
element. We denote by $t_1$ this element. By definition of
$\fCgen2{k}{n}[\alpha,B]$, we get $u \in A^*$ such that $\content{u} =
B$, $\alpha(u) = t_1$ and for all \chains $(t_1,\dots,t_n) \in \Ts$
there exist $u_2,\dots,u_n \in A^*$ satisfying $u \ksieq{2} u_2
\ksieq{2} \cdots \ksieq{2} u_n$ and for all $j$, $t_j=\alpha(u_j)$ and
$\content{u_j}=B$. 

\smallskip We now prove that $\Rs \in \fCgen2{k}{n}[\alpha,B]$. Set $w =
u^{2h}$ and $r_1 = \alpha(w) = t_1^{\omega}$, by definition $\content{w} =
B$. Observe that since $\Rs = \Ts^\omega \cdot (1_M,\Cstwolen{n-1}[\alpha,B])
\cdot \Ts^{\omega}$, every \chain in \Rs has $r_1$ as first element. We now
prove that for any \chain $(r_1,\dots,r_n) \in \Rs$, there exist
$w_2,\dots,w_n \in A^*$ satisfying $w \ksieq{2} w_2 \ksieq{2} \cdots \ksieq{2}
w_n$ and for all $j$, $r_j = \alpha(w_j)$ and $\content{w_j}=B$. By
definition, this will mean that $\Rs \in \fCgen2{k}{n}[\alpha,B]$. Set
$(r_1,\dots,r_n) \in \Rs$. By hypothesis, $(r_1,\dots,r_n) =
(t'_1t''_1,t'_2s_2t''_2,\dots,t'_ns_nt''_n)$ with $(t'_1,\dots,t'_n),$
$(t''_1,\dots,t''_n) \in \Ts^{h}$ and $(s_2,\dots,s_n) \in
\Cstwolen{n-1}[\alpha,B]$. In particular, $t'_1=t''_1 = t_1^{h} =
t_1^\omega$. Since $\Ts\in\fCtwon[\alpha,B]$, we have
$\Ts^h\in\fCtwon[\alpha,B]$, so we get
$w'_2,\dots,w'_n,w''_2,\dots,w''_n \in A^*$ such that for all $j$,
$\content{w'_j} = \content{w''_j} = B$, $\alpha(w'_j) = t'_j$ and
$\alpha(w''_j) = t''_j$ and we have:
\begin{eqnarray*}
  u^{h} \ksieq{2} w'_2 \ksieq{2} \cdots \ksieq{2} w'_n \\
  u^{h} \ksieq{2} w''_2 \ksieq{2} \cdots \ksieq{2} w''_n
\end{eqnarray*}
On the other hand, using the fact that $(s_2,\dots,s_n) \in
\Cstwolen{n-1}[\alpha,B]$, we get words $v_2,\dots,v_n \in A^*$, mapped to
$s_2,\dots,s_n$ by $\alpha$ and all having alphabet $B$, such that $v_2 \ksieq{2}
\cdots \ksieq{2} v_n$. For all $j \geq 2$, set $w_j =
w'_jv^{}_jw''_j$. Observe that for any $j \geq 2$, $\content{w_j} = B$ 
and $\alpha(w_j) = s_j$. Therefore it remains to prove that $w
\ksieq{2} w_2 \ksieq{2} \cdots \ksieq{2} w_n$ to terminate the
proof. That $w_2 \ksieq{2} \cdots \ksieq{2} w_n$ is immediate by
Lemma~\ref{lem:efconcat}. Recall that $w = u^{2h}$, therefore the
last inequality is a consequence of the following lemma.

\begin{lemma} \label{lem:efcorrec}
  $u^{h}u^{h} \ksieq{2} w'_2v^{}_2w''_2$
\end{lemma}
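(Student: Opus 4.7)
My plan is to reduce Lemma~\ref{lem:efcorrec} to the intermediate claim $u^{h}u^{h} \ksieq{2} u^{h}v_2 u^{h}$ and then conclude by transitivity. For the reduction, applying Lemma~\ref{lem:efconcat} to the hypotheses $u^{h} \ksieq{2} w'_2$, $v_2 \ksieq{2} v_2$ (trivially), and $u^{h} \ksieq{2} w''_2$ immediately gives $u^{h}v_2 u^{h} \ksieq{2} w'_2 v_2 w''_2$. Since $\ksieq{2}$ is transitive, it therefore suffices to prove $u^{h}u^{h} \ksieq{2} u^{h}v_2 u^{h}$, which expresses that $v_2$ can be \emph{absorbed} into a sufficiently large power of $u$.

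To prove this absorption property, I exploit the hypothesis $\content{v_2} = B = \content{u}$. It implies $v_2 \ksieq{1} u^{k}$: every subword of $v_2$ of length at most $k$ is a word over $B$, and any such word is a subword of $u^{k}$ (pick the required letters one by one from successive copies of $u$, using that each copy contains every letter of $B$). Once $v_2 \ksieq{1} u^{k}$ is in hand, I apply Lemma~\ref{lem:siprop} at level $i=1$, with base word $u^{k}$ and inserted word $v_2$, taking all four exponents equal to some $m \gmo 2^{k}$ chosen so that $km \lmo h$ (for instance $m = 2^{k}$, noting that $km = k\cdot 2^{k} \lmo 2^{2k} \lmo h$). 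This yields $u^{2km} \ksieq{2} u^{km} v_2 u^{km}$. Since both $h$ and $km$ are at least $2^{k}-1$, Lemma~\ref{lem:aperiodic} gives $u^{h} \ksieq{2} u^{km}$ and $u^{km} \ksieq{2} u^{h}$. Chaining these via Lemma~\ref{lem:efconcat} on either side of the inequality $u^{2km} \ksieq{2} u^{km} v_2 u^{km}$ and transitivity produces $u^{2h} \ksieq{2} u^{h} v_2 u^{h}$, which concludes the proof.

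The main obstacle is recognizing that Lemma~\ref{lem:efconcat} alone cannot insert the factor $v_2$ where no corresponding factor appears on the left, which is what forces the route through the intermediate $u^{h}v_2 u^{h}$. The hypothesis $\content{v_2} = \content{u}$ is essential: it is precisely what enables Lemma~\ref{lem:siprop} to be applied and realises, at the level of words, the characteristic \siwd property invoked in the outline of Section~\ref{sec:comput}, namely that a factor whose alphabet is that of the surrounding power can be freely inserted without changing the $\Sigma_2$ theory.
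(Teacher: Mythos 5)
Your high-level plan matches the paper's proof exactly: reduce to $u^{2h}\ksieq{2}u^hv_2u^h$ via Lemma~\ref{lem:efconcat} and transitivity, then obtain the absorption from Lemma~\ref{lem:siprop} and a $\sieq{k}{1}$-comparison of $v_2$ with a power of $u$. The gap is in the intermediate claim $v_2\ksieq{1}u^k$ and, more fundamentally, in the justification you give for it. You argue that $\ksieq{1}$ is the ``same subwords of length $\leq k$'' preorder, but under the paper's conventions this is false. Recall (Appendix~\ref{app:facto}) that $\ksieq{1}$ is defined using \emph{all} first-order formulas of quantifier rank $\leq k$ whose prenex normal form is existential; by conjoining subformulas, such a formula can place far more than $k$ existential quantifiers in prenex form. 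For instance, $\exists x\,\bigl(P_a(x)\wedge\exists y\,(y<x\wedge P_a(y))\wedge\exists z\,(z>x\wedge P_a(z))\bigr)$ has quantifier rank $2$, is $\siw{1}$, and asserts ``there are at least three $a$'s,'' so $aaa\not\sieq{2}{1}aa$ even though every subword of $aaa$ of length $\leq 2$ occurs in $aa$. Game-theoretically, in the one-sided $k$-round game Spoiler can binary-search inside $v_2$, and Duplicator needs on the order of $2^k$ copies of $u$ to survive; with only $k$ copies a concrete failure is $a^m\not\sieq{k}{1}a^k$ for $k\geq 2$ and $m$ large, which is directly relevant here since $v_2$ arises from a \dchain of $\Cstwolen{n-1}[\alpha,B]$ and has no bound on its length. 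The paper's claim is $v_2\sieq{k}{1}u^{2^k}$, which is what the game argument actually delivers. Replacing $u^k$ by $u^{2^k}$ in your plan would also remove the need for the detour through Lemma~\ref{lem:aperiodic}: taking all exponents in Lemma~\ref{lem:siprop} equal to $\omega\cdot 2^k\geq 2^k$ gives $u^{2h}\ksieq{2}u^hv_2u^h$ with $h=\omega\cdot 2^{2k}$ directly.
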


\begin{proof}
  By Lemma~\ref{lem:efconcat}, we have $u^{h} v_2 u^{h} \ksieq{2} 
  w'_2v^{}_2w''_2$. Therefore, it suffices to prove that $u^{h}u^{h}
  \ksieq{2} u^{h}v_2u^{h}$ to conclude. Recall that by definition
  $\content{v_2} = \content{u} = B$, therefore, it is straightforward to
  see that 

  \begin{equation} \label{eq:inegcorrec}
    v_2 \sieq{k}{1} u^{2^k}
  \end{equation}

  Moreover, we chose $h = \omega  \times   2^{2k}$. Therefore, it is immediate from Lemma~\ref{lem:siprop}
  and~\eqref{eq:inegcorrec} that $u^{h}u^{h} \ksieq{2}
  u^{h}v_2u^{h}$. \qed 
\end{proof}

\subsection{Completeness of the Algorithm}
We need to prove that for all $B \subseteq A$, we have
$\fCgen2{\ell}{n}[\alpha,B] \subseteq \downclos Sat^{*}_n[B](C \mapsto 
\fI_n[C])$ for $\ell \geq \ell(n)$, where $\ell(n)= 3|M| \cdot 2^{|A|}\cdot n \cdot
2^{2^{2|M|^n}}$. We do this by proving a slightly more general
proposition by induction. To state this proposition, we need more
terminology.

\medskip
\noindent
{\bf Generated Compatible Sets.} Set $k \in \nat$, $w \in A^*$ and $B
= \content{w}$. We set $\Gs_n^k(w) \in 2^{M^n}$ as the following set
of \chains of length $n$: $(t_1,\dots,t_{n}) \in \Gs_n^k(w)$ iff
$t_1 = \alpha(w)$ and there exists $w_2,\dots,w_{n} \in A^*$ satisfying
\begin{itemize}
\item for all $j$, $\alpha(w_j)=t_j$.
\item $w \ksieq{2} w_2 \ksieq{2} \cdots \ksieq{2} w_{n}$.
\end{itemize}
\noindent
Observe that the last item implies that all $w_j$ have the same
alphabet $\content{w}$. Therefore, by definition, any $\Gs_n^k(w)$ is a compatible set of
\dchains of length $n$: $\Gs_n^k(w) \in
\fCgen{2}{k}{n}[\alpha,\content{w}]$. Moreover, any compatible set of \dchains of
length $n$, $\Ts \in \fCgen{2}{k}{n}[\alpha,B]$ is a subset of $\Gs_n^k(w)$ for
some $w$ of alphabet $B$. We finish the definition with a decomposition lemma
that will be useful in the proof.

\begin{lemma} \label{lem:gendecomp}
  Let $w_1,\dots,w_{m+1} \in A^*$ and $k \in \nat$ with $k > m$, then: 
  \[
  \Gs_n^k(w_1\cdots w_{m+1}) \subseteq \Gs_n^{k-m}(w_1) \cdots
  \Gs_n^{k-m}(w_{m+1})
  \]
\end{lemma}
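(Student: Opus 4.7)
The plan is to use the game-theoretic characterization of $\sieq{k}{2}$ from Lemma~\ref{lem:efgame} together with a cut-propagation technique that splits every witness word along $m$ consistent positions.

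Given $(t_1,\dots,t_n) \in \Gs_n^k(w_1\cdots w_{m+1})$, fix witnesses $v_2,\dots,v_n \in A^*$ with $\alpha(v_j) = t_j$ and
\[
w_1\cdots w_{m+1} \sieq{k}{2} v_2 \sieq{k}{2} \cdots \sieq{k}{2} v_n.
\]
For each successive pair, Duplicator has a winning strategy in the $k$-round \efgame game for $\siw{2}$ with the left word active. In the game on $(w_1\cdots w_{m+1},v_2)$, I would have Spoiler play, over $m$ rounds, the $m$ boundary positions separating the factors $w_1,\dots,w_{m+1}$. These moves all lie on the active side, so the alternation counter stays at $0$. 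Duplicator's responses define $m$ positions in $v_2$, inducing a decomposition $v_2 = v_2^1\cdots v_2^{m+1}$. The cuts then propagate down the chain: in the game on $(v_{j-1},v_j)$ with $v_{j-1}$ active, Spoiler plays the $m$ already-defined cut positions of $v_{j-1}$, and Duplicator's responses yield a decomposition $v_j = v_j^1\cdots v_j^{m+1}$.

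Each of these $n-1$ phases uses exactly $m$ of the $k$ available rounds, leaving $k-m \geq 1$ rounds with the alternation counter still $0$. By the classical cut-and-restrict argument, Duplicator's remaining strategy in the game on $(v_{j-1},v_j)$ restricts to a winning strategy in the $(k-m)$-round \siw{2}-game on each pair $(v_{j-1}^\ell,v_j^\ell)$ with $v_{j-1}^\ell$ active. Setting by convention $v_1 = w_1\cdots w_{m+1}$ and $v_1^\ell = w_\ell$, and chaining over $j=2,\dots,n$, for every $\ell \in \{1,\dots,m+1\}$ we obtain
\[
w_\ell \sieq{k-m}{2} v_2^\ell \sieq{k-m}{2} \cdots \sieq{k-m}{2} v_n^\ell.
\]
Defining $t_{\ell,1} = \alpha(w_\ell)$ and $t_{\ell,j} = \alpha(v_j^\ell)$ for $j\geq 2$ yields chains $(t_{\ell,1},\dots,t_{\ell,n}) \in \Gs_n^{k-m}(w_\ell)$ whose componentwise product is $(t_1,\dots,t_n)$, since $\alpha(w_1)\cdots\alpha(w_{m+1}) = t_1$ and $\alpha(v_j^1)\cdots\alpha(v_j^{m+1}) = t_j$ for $j\geq 2$. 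This establishes the desired inclusion.

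The main obstacle is the cut-and-restrict step: once $m$ cut pebbles have been placed in both words of a game, one must verify that the residual EF-game factors into $m+1$ independent sub-games. This relies on two properties of the game: order preservation (any further Spoiler move lying in a given sub-interval is answered by Duplicator inside the corresponding sub-interval, by monotonicity of the partial matching) and the fact that Spoiler's single remaining allowed alternation can be used freely within whichever sub-interval the play enters. Both facts are standard consequences of the game rules, but the bookkeeping must be handled explicitly and is the only substantive work in the proof.
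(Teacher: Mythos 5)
Your proof is correct and follows the same route the paper takes. The paper's proof of this lemma defers the key step to ``a simple \efgame argument'' that decomposes each $u_i$ into $u_{i,1}\cdots u_{i,m+1}$ with $u_{1,j}\sieq{k-m}{2}\cdots\sieq{k-m}{2}u_{n,j}$; your cut-propagation plus cut-and-restrict argument is exactly that omitted argument spelled out, with the alternation-counter bookkeeping handled as it must be (all $m$ cut moves stay on the active side, so the single allowed alternation is preserved for the residual sub-games).
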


\begin{proof}
  Let $(s_1,\dots,s_n) \in \Gs_n^k(w_1\cdots w_{m+1})$. By definition,
  there exists $u_1,\dots,u_n$ such that $u_1=w_1 \cdots w_{m+1}$, for
  all $i$, $\alpha(u_i) = s_i$ and $u_1 \sieq{k}{2} \cdots \sieq{k}{2}
  u_n$. Using a simple \efgame argument, we obtain that all words $u_i$
  can be decomposed as $u_i = u_{i,1} \cdots u_{i,m+1}$ with $u_{1,1} =
  w_1,\dots,u_{1,m+1} = w_{m+1}$ and for all $j$: $u_{1,j} \sieq{k-m}{2}
  \cdots \sieq{k-m}{2} u_{n,j}$. For all $i,j$, set $s_{i,j} =
  \alpha(u_{i,j})$. By definition, for all $j$, $(s_{1,j},\dots,s_{n,j})
  \in \Gs_n^{k-m}(w_j)$. Moreover, we have 
  \[
  (s_1,\dots,s_n) = (s_{1,1},\dots,s_{n,1}) \cdots (s_{1,m+1},\dots,s_{n,m+1}).
  \]
  Therefore, we have $(s_1,\dots,s_n) \in \Gs_n^{k-m}(w_1) \cdots
  \Gs_n^{k-m}(w_{m+1})$ which terminates the proof. \qed
\end{proof}

We can now state our inductive proposition and prove that
$\fCgen2{\ell}{n}[\alpha,B] \subseteq \downclos Sat^{*}_n[B](C \mapsto
\fI_n[C])$. Set $\beta : A^* \rightarrow M \times 2^A$ defined as 
$\beta(w) =  (\alpha(w),\content{w})$.

\begin{proposition} \label{prop:comp}
  Let $B \subseteq A$, $j \in \nat$ and $w \in A^*$ that admits a
  $\beta$-factorization forest of height $h$ and such that $\content{w}
  = B$. Set $k \geq  h \cdot 2^{2^{2|M|^n}} + \ell(n-1)$, then $\Gs_n^k(w) \in
  \downclos Sat^{*}_n[B](C \mapsto \fI_n[C])$.
\end{proposition}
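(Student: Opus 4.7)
I would prove this by induction on the height $h$ of the $\beta$-factorization forest of $w$. The base case $h = 0$ is immediate: the root is a leaf, so $w$ is a single letter or the empty word and $\Gs_n^k(w) = \{(\alpha(w), \ldots, \alpha(w))\} \in \fI_n[B]$, which lies in $Sat^{*}_n[B](C \mapsto \fI_n[C])$ by the very choice of initial function.

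For the inductive step I split on the type of the root. If the root is binary with children $w_1, w_2$, Lemma~\ref{lem:gendecomp} gives $\Gs_n^k(w) \subseteq \Gs_n^{k-1}(w_1) \cdot \Gs_n^{k-1}(w_2)$. The decremented budget still satisfies the statement's hypothesis for height $h-1$, so the induction hypothesis applies to each factor with its own subalphabet $\content{w_i}$, and the multiplication rule $\fM$ from~\eqref{eq:mul} combines the two witnesses into one for $B = \content{w_1} \cup \content{w_2}$.

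The delicate case is an idempotent root, with children $w_1, \ldots, w_m$ sharing a common $\beta$-image $(e, B)$. By Lemma~\ref{lem:gendecomp}, $\Gs_n^k(w) \subseteq \Gs_n^{k'}(w_1) \cdots \Gs_n^{k'}(w_m)$ for a reduced $k' \geq (h-1) \cdot 2^{2^{2|M|^n}} + \ell(n-1)$; induction then dominates each factor by some $\Ss_i \in Sat^{*}_n[B](C \mapsto \fI_n[C])$. A pigeonhole argument on the prefix products of the $\Ss_i$, ranging over the doubly exponential universe of subsets of $M^n$, lets me isolate a decomposition of the index set as $[1,p] \sqcup \{q\} \sqcup [r,m]$ such that the flanking products $\Ss_1 \cdots \Ss_p$ and $\Ss_r \cdots \Ss_m$ both equal $\Ts^\omega$ for a single $\Ts \in \downclos Sat^{*}_n[B](C \mapsto \fI_n[C])$. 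The middle factor $\Gs_n^{k'}(w_q)$ has first coordinate $e$ rather than the $1_M$ required by \eqref{eq:oper}, but the outer $\Ts^\omega$ factors absorb this discrepancy via idempotency, since $e \cdot 1_M \cdot e = e = e \cdot e \cdot e$ on the first coordinate. For the remaining $n-1$ coordinates I invoke Fact~\ref{fct:inducomp}: because $k' \geq \ell(n-1)$, the $(n-1)$-suffix of any chain in $\Gs_n^{k'}(w_q)$ lies in $\Cstwolen{n-1}[\alpha, B]$, matching the template $(1_M, \Cstwolen{n-1}[\alpha,B])$ in~\eqref{eq:oper}. This exhibits $\Gs_n^k(w)$ as a subset of an element produced by $\fO_B$, closing the induction.

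\textbf{Main obstacle.} Essentially all the work is in the idempotent case, and there in three linked points. First, the pigeonhole must deliver an honest \emph{idempotent} power $\Ts^\omega$ of a fixed $\Ts$, not merely a long but unstructured product; this is what forces the tower factor $2^{2^{2|M|^n}}$ consumed at each level of the forest, and hence the multiplicative $h$ in the bound on $k$. Second, the first-coordinate mismatch between $\alpha(w_q) = e$ and the $1_M$ prescribed by \eqref{eq:oper} must be reconciled via idempotency before matching the shape of $\fO_B$. Third, the $k$-budget has to be tracked carefully through the $h$ inductive levels so that at least $\ell(n-1)$ remains available when Fact~\ref{fct:inducomp} is applied to the middle factor, which explains the additive $\ell(n-1)$ term in the statement.
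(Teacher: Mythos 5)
Your base case and binary case track the paper's proof exactly. The gap is in the idempotent case, and it is not cosmetic.

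You begin the idempotent case with ``By Lemma~\ref{lem:gendecomp}, $\Gs_n^k(w) \subseteq \Gs_n^{k'}(w_1) \cdots \Gs_n^{k'}(w_m)$ for a reduced $k' \geq (h-1)\cdot 2^{2^{2|M|^n}} + \ell(n-1)$.'' Lemma~\ref{lem:gendecomp} loses one unit of rank per factor, so this containment only holds with $k' = k - (m-1)$, and at an idempotent node $m$ is unbounded. When $m > k - \widetilde{k}$ you cannot recover budget $\widetilde{k}$ on each piece, so the inductive hypothesis on height fails to apply to the factors. The paper flags exactly this trap (``it is possible that $k - (m-1) < \widetilde{k}$'') and routes around it with a separate inner induction: it defines the \emph{index} of an $(e,B)$-decomposition as the number of distinct $k_n$-sequences $\Gs_n^{\widetilde{k}}(w_j),\dots,\Gs_n^{\widetilde{k}}(w_{j+k_n})$ that occur, proves that this index is bounded by $(k_n)^{k_n+1}$, and then does a nested induction on the index (Lemma~\ref{lem:inducase3}). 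In the base of that inner induction, distinctness of all $k_n$-sequences forces $m \leq g + k_n$, so the $m$-way split you want is safe; otherwise a repeated $k_n$-sequence is used to split $w$ into three blocks $v_1, v_2, v_3$ where $v_1, v_3$ have strictly smaller index and the middle block $v_2$ carries the same $k_n$-sequence $\Rs_1,\dots,\Rs_{k_n+1}$ on both flanks. A pigeonhole \emph{inside that bounded-length sequence} then produces a loop $\Ss_2^\omega$, which is the $\Ts^\omega$ you are after.

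Your second claim, that a pigeonhole on prefix products directly ``lets me isolate a decomposition of the index set as $[1,p] \sqcup \{q\} \sqcup [r,m]$ such that the flanking products both equal $\Ts^\omega$ for a single $\Ts$'', also does not hold as stated. A repeated prefix product $P_i = P_j$ gives you one loop $(\Ss_{i+1}\cdots\Ss_j)^\omega$ absorbed on the left, not a symmetric $\Ts^\omega \cdot (\text{middle}) \cdot \Ts^\omega$ pattern matching~\eqref{eq:oper}, and it gives no control on how much budget is left for the middle. The symmetry in the paper comes from the repeated $k_n$-sequence, which is the \emph{same} bounded window $\Rs_1 \cdots \Rs_{k_n+1}$ occurring on both sides of $v$, and the common loop $\Ss_2$ is found inside that window; the leftover context $\Ss_3$ on the right of the left window and $\Ss_1$ on the left of the right window are folded into the middle using Fact~\ref{fct:correc}. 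Your observation that the first coordinate $e$ is absorbed into $\Ts^\omega$ by idempotency is correct and matches the paper, but it is the last step of an argument whose core (the nested induction on the index, and where exactly the pigeonhole is applied) is missing from your sketch.
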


Before proving Proposition~\ref{prop:comp}, we explain how to use it
to terminate our completeness proof. Set $\Ts \in
\fCgen2{\ell}{n}[\alpha,B]$, by definition, this means that there exists
$w \in A^*$ such that $\content{w} = B$ and $\Ts \subseteq
\Gs_n^{\ell}(w)$. By Theorem~\ref{thm:facto}, we know that $w$ admits a
$\beta$-factorization forest of height at most $3|M|2^{|A|}$.
Therefore, by choice of $\ell$, we can apply
Proposition~\ref{prop:comp} and we obtain $\Gs_n^k(w) \in 
\downclos Sat^{*}_n[B](C \mapsto \fI_n[C])$. By definition of
$\downclos$ it is then immediate that $\Ts \in \downclos
Sat^{*}_n[B](C \mapsto \fI_n[C])$ which terminates the proof.

It remains to prove Proposition~\ref{prop:comp}. Note that this is
where we use Fact~\ref{fct:inducomp} (i.e. induction on $n$).
Set $w \in A^*$ that admits a $\beta$-factorization forest of height
$h$, $B = \content{w}$ and $k \geq h \times 2^{2^{3|M|^n}} +
\ell(n-1)$. We need to prove that $\Gs_n^k(w) \in \downclos
Sat^{*}_n[B](C \mapsto \fI_n[C])$, i.e., to construct $\Ts \in
Sat^{*}_n[B](C \mapsto \fI_n[C])$ such that $\Gs_n^k(w) \subseteq
\Ts$. The proof is by induction on the height $h$ of the factorization
forest of $w$. It works by applying the proposition inductively to
the factors given by this factorization forest. In particular, we
will use Lemma~\ref{lem:gendecomp} to decompose $\Gs_n^k(w)$
according to this factorization forest. Then, once the factors have
been treated by induction, we will use the definition of the
procedure $Sat_n$ (i.e. Operations~\eqref{eq:mul} and~\eqref{eq:oper})
to conclude. In particular, we will use the following fact several
times.

\begin{fact} \label{fct:semig}
  $\downclos Sat^{*}_n[B](C \mapsto \fI_n[C])$ is subsemigroup of $2^{M^n}$.
\end{fact}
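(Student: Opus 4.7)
The plan is to prove the statement by a direct argument that combines the downset definition with the multiplication clause~\eqref{eq:mul} in the procedure $Sat_n$ and the fact that $Sat^{*}_n$ is a fixpoint of $Sat_n$. No combinatorics on words is needed; everything takes place at the level of sets of \chains.

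First I would take two elements $\Rs_1, \Rs_2 \in \downclos Sat^{*}_n[B](C \mapsto \fI_n[C])$ and unfold the definition of $\downclos$ to obtain witnesses $\Ss_1, \Ss_2 \in Sat^{*}_n[B](C \mapsto \fI_n[C])$ with $\Rs_1 \subseteq \Ss_1$ and $\Rs_2 \subseteq \Ss_2$. The goal is then to exhibit some $\Ss \in Sat^{*}_n[B](C \mapsto \fI_n[C])$ that contains $\Rs_1 \cdot \Rs_2$; the natural candidate is $\Ss = \Ss_1 \cdot \Ss_2$.

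Next I would justify that $\Ss_1 \cdot \Ss_2$ actually lies in $Sat^{*}_n[B](C \mapsto \fI_n[C])$. By construction, $Sat^{*}_n$ is a fixpoint of $Sat_n$, meaning $Sat^{*}_n[B](f) = Sat_n[B]\bigl(C \mapsto Sat^{*}_n[C](f)\bigr)$, and hence contains the set $\fM_B = \bigcup_{C \cup D = B}(\fT_C \cdot \fT_D)$ from~\eqref{eq:mul}, where $\fT_C = Sat^{*}_n[C](f)$. Instantiating this with $C = D = B$ yields $\Ss_1 \cdot \Ss_2 \in \fT_B \cdot \fT_B \subseteq \fM_B \subseteq Sat^{*}_n[B](f)$, as desired.

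Finally I would observe that the componentwise multiplication on $2^{M^n}$ is monotone with respect to inclusion: from $\Rs_i \subseteq \Ss_i$ it follows directly, by the definition $\Ss \cdot \Ts = \{\bar{s}\bar{t} \mid \bar{s} \in \Ss, \bar{t} \in \Ts\}$, that $\Rs_1 \cdot \Rs_2 \subseteq \Ss_1 \cdot \Ss_2$. Combined with the previous step, this gives $\Rs_1 \cdot \Rs_2 \in \downclos Sat^{*}_n[B](C \mapsto \fI_n[C])$, which is exactly the semigroup closure property. The only mildly delicate point, and hence the one I would take care to state explicitly, is that one must use the fixpoint equality (not merely the fact that $Sat^{*}_n$ is an iterate of $Sat_n$) to conclude that the product $\Ss_1 \cdot \Ss_2$ stays inside $Sat^{*}_n[B]$ without requiring an additional application of $Sat_n$.
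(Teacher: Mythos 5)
Your proposal is correct and follows essentially the same approach as the paper: both arguments rest on the fixpoint equality $Sat^{*}_n[B] = Sat_n[B](C \mapsto Sat^{*}_n[C])$ together with the multiplication clause~\eqref{eq:mul} instantiated at $C = D = B$, and then pass to the downset via monotonicity of the product on $2^{M^n}$. The only difference is cosmetic: the paper first shows $Sat^{*}_n[B]$ itself is a subsemigroup and then dismisses the downset step as immediate, whereas you work directly with elements of the downset and spell out the monotonicity of $\cdot$; both are sound and rely on the same two facts.
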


\begin{proof}
  We prove that $Sat^{*}_n[B](C \mapsto \fI_n[C])$ is subsemigroup of
  $2^{M^n}$, the result is then immediate by definition of $\downclos$.
  Set $\Ss_1,\Ss_2 \in Sat^{*}_n[B](C \mapsto \fI_n[C])$. By definition
  of $Sat_n$ (see Operation~\eqref{eq:mul}), we have $\Ss_1 \cdot \Ss_2
  \in Sat_n[B](B \mapsto Sat^{*}_n[B](C \mapsto \fI_n[C])) =
  Sat^{*}_n[B](C \mapsto \fI_n[C])$.\qed
\end{proof}

We now start the induction. We distinguish three cases depending on
the nature of the topmost node in the $\beta$-factorization forest of
$w$.

\medskip
\noindent
{\bf Case 1: the topmost node is a leaf.} In that case, $h = 1$ and
$w$ is a single letter word $a \in A$. In particular $B = \content{w}
= \{a\}$. Observe that $k \geq 2$, therefore, one can verify that
$\Gs_n^k(a) = \{(\alpha(a),\dots,\alpha(a))\}$. It follows that
$\Gs_n^k(a) \in \fI_n[B]$ which terminates the proof for this  
case.

\medskip
\noindent
{\bf Case 2: the topmost node is a binary node.} We use induction on
$h$ and Operation~\eqref{eq:mul} in the definition of $Sat_n$. By
hypothesis $w = w_1 \cdot w_2$ with $w_1,w_2$ words admitting 
$\beta$-factorization forests of heights $h_1,h_2 \leq
h-1$. Set $B_1 = \content{w_1}$ and $B_2 = \content{w_2}$, by definition,
we have $B = B_1 \cup B_2$. Moreover, observe that
\[
k - 1 \geq (h-1) \cdot 2^{2^{2|M|^n}} + \ell(n-1).
\]
Therefore, we can apply our induction hypothesis to $w_1,w_2$ and we
obtain $\Ts_1 \in Sat^{*}_n[B_1](C \mapsto \fI_n[C])$ and $\Ts_2 \in
Sat^{*}_n[B_2](C \mapsto \fI_n[C])$ such that $\Gs_n^{k-1}(w_1)
\subseteq \Ts_1$ and $\Gs_n^{k-1}(w_2) \subseteq \Ts_2$. By
Operation~\eqref{eq:mul} in the definition of $Sat$, it is immediate
that $\Ts_1 \cdot \Ts_2 \in Sat^{*}_n[B](C \mapsto
\fI_n[C])$. Moreover, by Lemma~\ref{lem:gendecomp}, $\Gs_n^k(w)
\subseteq \Gs_n^{k-1}(w_1) \cdot \Gs_n^{k-1}(w_2) \subseteq \Ts_1
\cdot \Ts_2$. It follows that $\Gs_n^k(w) \in \downclos Sat^{*}_n[B](C
\mapsto \fI_n[C])$ which terminates this case.

\medskip
\noindent
{\bf Case 3: the topmost node is an idempotent node.} This is the 
most difficult case. We use induction on $h$,
Operation~\eqref{eq:oper} in the definition of $Sat_n$ and
Fact~\ref{fct:semig}. Note that this is also where
Fact~\ref{fct:inducomp} (i.e. induction on $n$ in the general proof of
Proposition~\ref{prop:compu}) is used. We begin by summarizing our
hypothesis: $w$ admits what we call an $(e,B)$-decomposition.

\medskip
\noindent
{\bf $(e,B)$-Decompositions.} Set $\widetilde{k} = (h-1) \cdot
2^{2^{2|M|^n}} + \ell(n-1)$, $e \in M$ an idempotent and $u \in A^*$. We
say that $u$ admits an \emph{$(e,B)$-decomposition} $u_1,\dots,u_m$
if
\begin{enumerate}[label=$\alph*)$,ref=\alph*]
\item\label{item:1} $u = u_1 \cdots u_m$,
\item\label{item:2} for all $j$, $\content{u_j} = B$ and
  $\alpha(u_j) = e$ and
\item\label{item:3} for all $j$, $\Gs_n^{\widetilde{k}}(w_j) \in \downclos Sat^{*}_n[B](C \mapsto
  \fI_n[C])$.
\end{enumerate}
Note that $\ref{item:2})$ means that $\beta(u_j)$ is a constant idempotent, where we recall that $\beta : A^* \rightarrow M
\times 2^A$ is the morphism defined by $\beta(w) =  (\alpha(w),\content{w})$.

\begin{fact} \label{fct:ebdecomp}
  $w$ admits an $(e,B)$-decomposition for some idempotent $e \in M$.
\end{fact}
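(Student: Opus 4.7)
The plan is to read the definition of a $\beta$-factorization forest and observe that Fact~\ref{fct:ebdecomp} is essentially a restatement of what the topmost idempotent node gives us, combined with one application of the induction hypothesis to its children.

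First I would unpack the hypothesis of Case 3. By definition of a $\beta$-factorization forest, the topmost idempotent node has children labelled $u_1,\dots,u_m$ such that $w = u_1\cdots u_m$ and $\beta(u_1)=\cdots=\beta(u_m)=\beta_0$ for some idempotent $\beta_0 \in M \times 2^A$. Writing $\beta_0 = (e,B')$, the fact that $\beta_0$ is idempotent in $M \times 2^A$ forces $e \cdot e = e$ (so $e$ is an idempotent of $M$) and $B' \cup B' = B'$ (which is automatic). Since $w = u_1 \cdots u_m$ and each $\content{u_j}=B'$, we get $\content{w}=B'$; but $\content{w}=B$ by the outer hypothesis, so $B'=B$. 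This immediately gives conditions $\ref{item:1})$ and $\ref{item:2})$ of an $(e,B)$-decomposition.

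For condition $\ref{item:3})$, I would appeal to the outer induction on $h$. Each child $u_j$ of the topmost node carries a $\beta$-factorization forest of height at most $h-1$, and $\content{u_j}=B$. The chosen rank $\widetilde{k} = (h-1)\cdot 2^{2^{2|M|^n}} + \ell(n-1)$ satisfies exactly the inductive bound of Proposition~\ref{prop:comp} for words having a forest of height $h-1$. Applying the induction hypothesis to each $u_j$ therefore yields $\Gs_n^{\widetilde{k}}(u_j) \in \downclos Sat^{*}_n[B](C \mapsto \fI_n[C])$, which is condition $\ref{item:3})$.

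There is no serious obstacle here: the fact is really a bookkeeping lemma that isolates what the idempotent node of the forest provides, so that the subsequent arguments in Case 3 (which use Operation~\eqref{eq:oper} and Fact~\ref{fct:semig} to combine the $\Gs_n^{\widetilde{k}}(u_j)$ into a witness for $\Gs_n^k(w)$) can proceed uniformly. The only subtlety worth checking is that the alphabet of each factor coincides with the outer $B$, which follows because $\beta$ tracks the alphabet as well as the $\alpha$-image, ensuring that the idempotent node is homogeneous in alphabet and not merely in $\alpha$-value.
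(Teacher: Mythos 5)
Your proof is correct and takes the same route as the paper: extract the idempotent $\beta_0=(e,B')$ from the topmost idempotent node, observe $e$ is idempotent in $M$ and $B'=B$, and invoke the outer induction on the height $h$ (with the check that $\widetilde{k}$ matches the inductive threshold for height $h-1$) to get condition $\ref{item:3})$. The paper states this more tersely but the substance is identical; your unpacking of why the $\beta$-factorization (rather than a plain $\alpha$-factorization) forces alphabet homogeneity across children is exactly the intended point.
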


\begin{proof}
  By hypothesis of Case~3, there exists a decomposition $w_1,\dots,w_m$
  of $w$ that satisfies points $a)$ and $b)$. Moreover, for all $j$,
  $w_j$ admits a $\beta$-factorization forest of height $h_j \leq
  h-1$. Therefore point $c)$ is obtained by induction hypothesis on the height
  $h$. \qed
\end{proof}

For the remainder of this case, we assume that the idempotent $e \in
M$ and the $(e,B)$-decomposition $w_1,\dots,w_m$ of $w$ are fixed. We
finish the definition, with the following useful fact, which follows
from Fact~\ref{fct:inducomp}.

\begin{fact} \label{fct:correc}
  Assume that $u$ admits an \emph{$(e,B)$-decomposition}
  $u_1,\dots,u_m$ and let $i \leq j \leq m$. Then, $\Gs_n^{\ell(n-1)}(u_{i}
  \cdots u_{j}) \subseteq (e,\Cstwolen{n-1}[B])$.
\end{fact}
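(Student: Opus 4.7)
The plan is to verify, element by element, the two conditions that define membership in $(e, \Cstwolen{n-1}[B])$ for every \chain in the set $\Gs_n^{\ell(n-1)}(u_i \cdots u_j)$: namely, that its first component equals $e$ and that its tail of length $n-1$ lies in $\Cstwolen{n-1}[B]$. This is essentially a bookkeeping argument relying on the definition of $\Gs_n^k$, the idempotency of $e$, the alphabet-detection property of $\siwd$, and the inductive hypothesis in the form of Fact~\ref{fct:inducomp}.

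First I would fix an arbitrary \chain $(t_1, \dots, t_n) \in \Gs_n^{\ell(n-1)}(u_i \cdots u_j)$. For the first component, condition (b) of the $(e,B)$-decomposition gives $\alpha(u_k) = e$ for each $k$, so $\alpha(u_i \cdots u_j) = e^{j-i+1} = e$ by idempotency. By definition of $\Gs_n^{\ell(n-1)}$, this is precisely $t_1$.

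Next I would unfold the definition of $\Gs_n^{\ell(n-1)}$ to obtain words $w_2,\dots,w_n$ with $\alpha(w_p) = t_p$ and
\[
u_i \cdots u_j \sieq{\ell(n-1)}{2} w_2 \sieq{\ell(n-1)}{2} \cdots \sieq{\ell(n-1)}{2} w_n.
\]
Restricting to the tail $w_2,\dots,w_n$ already witnesses $(t_2,\dots,t_n) \in \Csgen{2}{\ell(n-1)}{n-1}[\alpha]$. Since $\siwd$ detects the alphabet (so $\sieq{k}{2}$ preserves $\contentmorphism$) and $\content{u_i \cdots u_j} = B$ by condition (b), every $w_p$ has alphabet exactly $B$, and hence $(t_2,\dots,t_n) \in \Csgen{2}{\ell(n-1)}{n-1}[\alpha, B]$.

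The final step is to invoke Fact~\ref{fct:inducomp}, which is precisely the inductive hypothesis for Proposition~\ref{prop:compu} at length $n-1$: it states that $\Cstwolen{n-1}[\alpha,B] = \Csgen{2}{\ell(n-1)}{n-1}[\alpha,B]$. Combining this with the previous paragraph yields $(t_2,\dots,t_n) \in \Cstwolen{n-1}[B]$, and together with $t_1 = e$ this gives the desired $(t_1,\dots,t_n) \in (e, \Cstwolen{n-1}[B])$. There is no real obstacle in this step; the only substantive point is remembering that Fact~\ref{fct:inducomp} is what converts a single witness at rank $\ell(n-1)$ into membership in the full intersection $\bigcap_k \Csgen{2}{k}{n-1}[\alpha,B]$ defining $\Cstwolen{n-1}[B]$, and this is exactly why the bound $\ell(n-1)$ appears in the statement.
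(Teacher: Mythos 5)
Your proof is correct and follows the same route as the paper's: establishing $t_1=e$ from idempotency of $e$ and condition $(b)$ of the $(e,B)$-decomposition, then deducing $(t_2,\dots,t_n)\in\Cstwolen{n-1}[\alpha,B]$ from Fact~\ref{fct:inducomp}. You have simply spelled out the ``it is immediate'' step the paper leaves implicit --- unfolding the definition of $\Gs_n^{\ell(n-1)}$, observing that the tail chain of witnesses lives in $\Csgen{2}{\ell(n-1)}{n-1}[\alpha,B]$ because $\siwd$ preserves the alphabet, and then applying the inductive equality --- which is exactly what the paper intends.
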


\begin{proof}
  Let $(s_1,\dots,s_n) \in \Gs_n^{\ell(n-1)}(u_i\cdots u_j)$. Since $\alpha(u_{i}
  \cdots u_{j}) = e$, we have $s_1 =e$. Moreover, it is immediate from
  Fact~\ref{fct:inducomp} that $(s_2,\dots,s_n) \in
  \Cs_{n-1}^{2}[\alpha,B]$. We conclude that $(s_1,\dots,s_n) \in
  (e,\Cstwolen{n-1}[B])$. \qed
\end{proof}

Recall that we want to prove that $\Gs_n^k(w) \in \downclos
Sat^{*}_n[B](C \mapsto \fI_n[C])$. In general, the number of factors
$m$ in the $(e,B)$-decomposition of $w$ can be arbitrarily large. In
particular, it is possible that $k - (m-1) < \widetilde{k}$. This
means that we cannot simply use Lemma~\ref{lem:gendecomp} as we did in
the previous case to conclude that $\Gs_n^k(w) \subseteq
\Gs_n^{\widetilde{k}}(w_{1}) \cdots
\Gs_n^{\widetilde{k}}(w_{m})$. However, we will partition $w_1,\dots,w_m$ 
as a bounded number of subdecompositions that we can treat using
Operation~\eqref{eq:oper} in the definition of $Sat_n$. The partition
is given by induction on a parameter of the $(e,B)$-decomposition
$w_1,\dots,w_m$ that we define now.

\medskip
\noindent
{\bf Index of an $(e,B)$-decomposition.} Set $k_n = 2^{|M|^n}$ (the
size of the monoid $2^{M^n}$). Let $u \in A^*$ that admits an
$(e,B)$-decomposition $u_1,\dots,u_{m}$ and let $j \in \nat$ such that $1
\leq j \leq m - k_n$ (i.e. $j$ is the index of one of the first $m -
k_n$ factors in the decomposition). The \emph{$k_n$-sequence} occurring
at $j$ is the sequence $\Gs_n^{\tilde{k}}(w_j),\dots,
\Gs_n^{\tilde{k}}(w_{j+k_n}) \in \downclos Sat^{*}_n[B](C \mapsto
\fI_n[C])$. The \emph{index} of $u_1,\dots,u_{m}$ is the number of
$k_n$-sequences that occur in $u_1,\dots,u_{m}$. Observe that by
definition, there are at most $(k_n)^{k_n+1}$
$k_n$-sequences. Therefore the index of the decomposition is bounded
by $(k_n)^{k_n+1}$. We proceed by induction on the index of the
decomposition and state this induction in the following lemma.

\begin{lemma} \label{lem:inducase3}
  Let $u \in A^*$ admitting an $(e,B)$-decomposition $u_1,\dots,u_m$ of
  index $g$ and set $\widehat{k} \geq 2g + 2(k_n+1) + \widetilde{k} +
  \ell(n-1)$. Then $\Gs_n^{\widehat{k}}(u) \in \downclos Sat^{*}_n[B](C \mapsto \fI_n[C])$.
\end{lemma}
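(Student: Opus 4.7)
The plan is to proceed by induction on the index $g$ of the $(e,B)$-decomposition, with Operation~\eqref{eq:oper} providing the mechanism of the inductive step.

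\textbf{Base case $g = 0$.} No $k_n$-sequence occurs, which forces $m \leq k_n$. The slack $\widehat{k} - (m-1) \geq \widetilde{k}$ allows Lemma~\ref{lem:gendecomp} to yield $\Gs_n^{\widehat{k}}(u) \subseteq \Gs_n^{\widetilde{k}}(u_1) \cdots \Gs_n^{\widetilde{k}}(u_m)$. Each factor $\Gs_n^{\widetilde{k}}(u_j)$ lies in $\downclos Sat^{*}_n[B](C \mapsto \fI_n[C])$ by property~$\ref{item:3}$ of the $(e,B)$-decomposition, and closure under product (Fact~\ref{fct:semig}) completes the case.

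\textbf{Inductive step $g \geq 1$.} The idea is to locate a sub-block $u_p \cdots u_q$ that can be absorbed through Operation~\eqref{eq:oper}, thereby producing a shorter $(e,B)$-decomposition of strictly smaller index. If $m$ is bounded (say $m \leq 2(k_n+1)$), the direct product argument from the base case still applies; otherwise, since only $g$ distinct $k_n$-sequences occur among more than $g + k_n$ positions, pigeonhole locates $p < q$ sharing a $k_n$-sequence. Setting $\Ts := \Gs_n^{\widetilde{k}}(u_p) \cdots \Gs_n^{\widetilde{k}}(u_{p+k_n}) \in \downclos Sat^{*}_n[B](C \mapsto \fI_n[C])$ (by Fact~\ref{fct:semig}), Operation~\eqref{eq:oper} gives $\Ts^\omega \cdot (1_M, \Cstwolen{n-1}[\alpha,B]) \cdot \Ts^\omega \in \downclos Sat^{*}_n[B](C \mapsto \fI_n[C])$. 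The repetition of the $k_n$-sequence then forces $\Gs_n^{\widetilde{k}+\ell(n-1)}(u_p \cdots u_q)$ to be dominated by this $\omega$-sandwich: the two $\Ts^\omega$ factors cover the flanks (as $\Ts^\omega = \Ts^h$ for large $h$ and the flanks repeat the same pattern), while Fact~\ref{fct:correc} absorbs the centre of $u_p\cdots u_q$ into the $(1_M, \Cstwolen{n-1}[\alpha,B])$ middle component. Replacing $u_p \cdots u_q$ by a single compressed factor $v$ yields an $(e,B)$-decomposition still satisfying property~$\ref{item:3}$ (with this new $v$), whose index has strictly decreased since the chosen $k_n$-sequence no longer appears. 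The induction hypothesis applies at rank $\widehat{k} - 2$, and the outer pieces are glued using Lemma~\ref{lem:gendecomp} and Fact~\ref{fct:semig}.

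\textbf{Main obstacle.} The technical heart is calibrating the absorption so that the index strictly drops while only $O(1)$ rank units are consumed per inductive call. The bound $\widehat{k} \geq 2g + 2(k_n+1) + \widetilde{k} + \ell(n-1)$ is tight for this accounting: $2g$ bankrolls the induction (two units per step), $2(k_n+1)$ funds the pigeonhole window locating a repeated $k_n$-sequence, $\widetilde{k}$ is the per-factor rank required by property~$\ref{item:3}$, and $\ell(n-1)$ is the floor needed by Fact~\ref{fct:correc} so that the central portion of $u_p\cdots u_q$ is covered by $(1_M, \Cstwolen{n-1}[\alpha,B])$. Verifying that the compressed factor $v$ and the remaining outer factors form a valid $(e,B)$-decomposition of index $\leq g-1$—and that the full glued product still lives in $\downclos Sat^{*}_n[B](C \mapsto \fI_n[C])$—will constitute the bulk of the careful bookkeeping.
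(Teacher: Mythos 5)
Your high-level plan — induction on the index $g$, with a repeated $k_n$-sequence providing the lever for Operation~\eqref{eq:oper} — matches the paper's strategy, and your base case is essentially correct. However, the inductive step has a genuine gap at its core.

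You set $\Ts := \Gs_n^{\widetilde{k}}(u_p)\cdots\Gs_n^{\widetilde{k}}(u_{p+k_n})$ (the product over the $k_n$-window) and then assert that Operation~\eqref{eq:oper} applied to this $\Ts$ produces $\Ts^\omega \cdot (1_M,\Cstwolen{n-1}[\alpha,B]) \cdot \Ts^\omega$ and that "the two $\Ts^\omega$ factors cover the flanks (as $\Ts^\omega = \Ts^h$ for large $h$ and the flanks repeat the same pattern)." This is not correct: after splitting $u_p\cdots u_q$ as flank/centre/flank and applying Lemma~\ref{lem:gendecomp}, each flank is a \emph{single} copy of the window product $\Rs_1 \cdots \Rs_{k_n+1}$, not an $\omega$-power of it. There is no reason for $\Ts^\omega$ (an idempotent of $2^{M^n}$) to contain, or absorb into, the single occurrence $\Ts$. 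The paper circumvents this precisely because the $k_n$-sequence has length $k_n+1 = 2^{|M|^n}+1$: among the $k_n+1$ partial products $\Rs_1,\ \Rs_1\Rs_2,\ \dots,\ \Rs_1\cdots\Rs_{k_n+1}$ in the monoid $2^{M^n}$ of size $2^{|M|^n}$, two must coincide, say $\Rs_1\cdots\Rs_{j_1} = \Rs_1\cdots\Rs_{j_2}$ with $j_1<j_2$. Writing $\Ss_1 = \Rs_1\cdots\Rs_{j_1}$, $\Ss_2 = \Rs_{j_1+1}\cdots\Rs_{j_2}$, $\Ss_3 = \Rs_{j_2+1}\cdots\Rs_{k_n+1}$, one gets $\Ss_1\Ss_2 = \Ss_1$, hence $\Ts = \Ss_1\Ss_2\Ss_3 = \Ss_1\Ss_2^\omega\Ss_3$. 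It is $\Ss_2^\omega$, not $\Ts^\omega$, that Operation~\eqref{eq:oper} is applied to, and it is this loop inside the window — not repetition of the window across the decomposition — that manufactures the required $\omega$-power. Without this step the proposal's absorption claim does not go through.

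Two secondary remarks. First, you choose an arbitrary pair $p<q$ sharing a $k_n$-sequence; the paper takes $j$ minimal and $j'$ maximal so that the \emph{outer} pieces $u_1\cdots u_{j-1}$ and $u_{j'+k_n+1}\cdots u_m$ are themselves $(e,B)$-decompositions in which the chosen $k_n$-sequence no longer occurs, giving strictly smaller index. Without that extremality, the repeated $k_n$-sequence could survive outside $[p,q]$ and the index need not drop. Second, your plan to "replace $u_p\cdots u_q$ by a compressed factor $v$" and recurse requires $\Gs_n^{\widetilde{k}}(v)\in\downclos Sat^{*}_n[B](C\mapsto\fI_n[C])$ to reinstate property~$c)$ — but that containment is precisely the conclusion you are trying to reach for that sub-block, so it risks circularity. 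The paper avoids this by treating $v_2 = u_j\cdots u_{j'+k_n}$ directly with the loop argument, not by recursion.
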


Before proving this lemma, we use it to conclude Case~3. We know that our
$(e,B)$-decomposition $w_1,\dots,w_m$ has an index $g \leq
(k_n)^{k_n+1}$. Therefore, it suffices to prove that $k \geq
2(k_n)^{k_n+1} + 2(k_n+1) + \widetilde{k} + \ell(n-1)$ to conclude that
$\Gs_n^k(w) \in \downclos Sat^{*}_n[B](C \mapsto \fI_n[C])$ using
Lemma~\ref{lem:inducase3}. One can verify that $2^{2^{2|M|^n}} \geq
2(k_n)^{k_n+1} + 2(k_n+1)$ as soon as $k_n>2$. It is then immediate that
\[
k = h \cdot 2^{2^{2|M|^n}} + \ell(n-1) \geq 2^{2^{2|M|^n}} + (h-1) \cdot
2^{2^{2|M|^n}} + \ell(n-1) = 2^{2^{2|M|^n}} + \widetilde{k} + \ell(n-1)
\]

\begin{proof}[of Lemma~\ref{lem:inducase3}]
  The proof goes by induction on the index $g$. We distinguish two cases
  depending on whether there exists a $k_n$-sequence that occurs at two
  different positions in the $(e,B)$-decomposition.

  Assume first that this is not the case, i.e., all $k_n$-sequences occurring at
  positions $1 \leq j \leq m - k_n$ are different. Since there are exactly $g$
  $k_n$-sequences occurring in the decomposition, a simple pigeon-hole
  principle argument yields that $m \leq g + k_n$. We use our choice of
  $\widehat{k}$ to conclude with a similar argument to the one we used in
  Case~2. By Lemma~\ref{lem:gendecomp}, we have: 
  \[
  \Gs_n^{\widehat{k}}(u) \subseteq
  \Gs_n^{\widehat{k}-(m-1))}(u_1) \cdots
  \Gs_n^{\widehat{k}-(m-1)}(u_m) 
  \]
  Observe that by hypothesis of this case, $\widehat{k} - (m-1) \geq 
  \widetilde{k}$. Therefore, by definition of $(e,B)$-decompositions,
  for all $j$, $\Gs_n^{\widehat{k}-(m-1))}(u_j) \in \downclos
  Sat^{*}_n[B](C \mapsto \fI_n[C])$. It is then immediate from
  Fact~\ref{fct:semig} that $\Gs_n^{\widehat{k}-(m-1))}(u_1) \cdots
  \Gs_n^{\widehat{k}-(m-1)}(u_m) \in \downclos
  Sat^{*}_n[B](C \mapsto \fI_n[C])$. We conclude that
  $\Gs_n^{\widehat{k}}(u) \in \downclos
  Sat^{*}_n[B](C \mapsto \fI_n[C])$ which terminates this case.

  \medskip
  Assume now that there exist $j,j' \in \nat$ such that $1 \leq j<j'
  \leq m - (k_n-1)$, and the
  $k_n$-sequences occurring at $j$ and $j'$ are the same. For the
  remainder of the proof, we set $\Rs_1,\dots,\Rs_{k_n+1}$ as this
  common $k_n$-sequence. Moreover, we assume that $j$ and $j'$ are
  chosen minimal and maximal respectively, i.e. there exists no
  $j'' < j$ or $j'' > j'$ such that $\Rs_1,\dots,\Rs_{k_n+1}$ occur
  at $j''$. By definition of a $k_n$-sequence, recall that we have
  $\Rs_1,\dots,\Rs_{k_n+1} \in \downclos Sat^{*}_n[B](C \mapsto
  \fI_n[C])$. Set
  \begin{align*}
    v_1 &= u_1 \cdots u_{j-1},\\
    v_2 &= u_{j} \cdots u_{j'+k_n}\\
    v_3 &= u_{j'+k_n+1} \cdots u_{m}.
  \end{align*}
  By Lemma~\ref{lem:gendecomp}, we know that
  \[
  \Gs_n^{\widehat{k}}(u) \subseteq
  \Gs_n^{\widehat{k}-2}(v_1) \cdot
  \Gs_n^{\widehat{k}-2}(v_2) \cdot
  \Gs_n^{\widehat{k}-2}(v_3) 
  \]
  We prove that for $i=1,2,3$, $\Gs_n^{\widehat{k}-2}(v_i) \in \downclos
  Sat^{*}_n[B](C \mapsto \fI_n[C])$. By Fact~\ref{fct:semig}, it will
  then be immediate that $\Gs_n^{\widehat{k}}(u) \in \downclos
  Sat^{*}_n[B](C \mapsto \fI_n[C])$ which terminates the proof. Observe
  that by choice of $j,j'$, $u_1,\dots,u_{j-1}$ and $u_{j'+k_n+1},\dots,
  u_{m}$ are $(e,B)$-decompositions of index smaller than $g$ (the
  $k_n$-sequence $\Rs_1,\dots,\Rs_{k_n+1}$ does not occur in these
  decompositions). Therefore, it is immediate by induction hypothesis on
  $g$ that $\Gs_n^{\widehat{k}-2}(v_1), \Gs_n^{\widehat{k}-2}(v_3) \in \downclos
  Sat^{*}_n[B](C \mapsto \fI_n[C])$.

  \medskip
  It remains to prove that $\Gs_n^{\widehat{k}-2}(v_2) \in \downclos
  Sat^{*}_n[B](C \mapsto \fI_n[C])$. If $j' \lmo j + k_n$, then $v_2$
  admits an $(e,B)$-decomposition of length smaller than $2(k_n+1)$ and
  we can conclude using Lemma~\ref{lem:gendecomp} as in the previous
  case. Therefore, assume that $j' > j + k_n$ and set $v = u_{j+k_n+1}
  \cdots u_{j'-1}$ and observe that by definition $v_2 = u_j \cdots
  u_{j+k_n} \cdot v \cdot u_{j'} \cdots u_{j'+k_n}$. Moreover,
  $\widehat{k} -2 - 2(k_n+1) \geq \widetilde{k}$, using
  Lemma~\ref{lem:gendecomp} we get that  
  \[
  \Gs_n^{\widehat{k}-2}(v_2) \subseteq \Gs_n^{\widetilde{k}}(u_j) \cdots
  \Gs_n^{\widetilde{k}}(u_{j+k_n}) \cdot \Gs_n^{\widetilde{k}}(v) \cdot
  \Gs_n^{\widetilde{k}}(u_{j'}) \cdots \Gs_n^{\widetilde{k}}(u_{j'+k_n})
  \]
  By definition $\Rs_1,\dots,\Rs_{k_n+1}$ is the $k_n$-sequence occurring
  at both $j$ and $j'$. Therefore, it follows that 
  \begin{equation}
    \Gs_n^{\widehat{k}-2}(v_2) \subseteq \Rs_1 \cdots \Rs_{k_n+1} \cdot
    \Gs_n^{\widetilde{k}}(v) \cdot \Rs_1 \cdots \Rs_{k_n+1} \label{eq:inc2}
  \end{equation}
  Intuitively, we want to find an idempotent in the sequence $\Rs_1
  \cdots \Rs_{k_n+1}$ in order to apply
  Operation~\eqref{eq:oper}. Observe that since the $\Rs_j$ are elements
  of the monoid $2^{M^n}$ and $k_n = 2^{|M|^n}$, the sequence $\Rs_1 \cdots
  \Rs_{k_n+1}$ must contain a ''loop.'' By this we mean that there
  exists $j_1 < j_2$ such that $\Rs_{1} \cdots \Rs_{j_1} = \Rs_1 \cdots
  \Rs_{j_2}$. Set $\Ss_1 = \Rs_{1} \cdots \Rs_{j_1}$, $\Ss_2 = \Rs_{j_1
    + 1} \cdots \Rs_{j_2}$ and $\Ss_3 = \Rs_{j_2 + 1} \cdots
  \Rs_{k_n+1}$. By definition of $\Ss_1,\Ss_2,\Ss_3$, we have $\Rs_1
  \cdots \Rs_{k_n+1} = \Ss_1 \cdot (\Ss_2)^\omega \cdot
  \Ss_3$. Note that by Fact~\ref{fct:semig}, we have $\Ss_1,\Ss_2,\Ss_3
  \in \downclos Sat^{*}_n[B](C \mapsto \fI_n[C])$. By replacing this
  in~\eqref{eq:inc2}, we get
  \[
  \Gs_n^{\widehat{k}-2}(u_2) \subseteq \Ss_1 \cdot
  (\Ss_2)^\omega \cdot \Ss_3 \cdot \Gs_n^{\tilde{k}}(v) \cdot \Ss_1 \cdot
  (\Ss_2)^\omega \cdot \Ss_3
  \]
  Moreover, observe that $\tilde{k} \geq \ell(n-1)$, therefore, using
  Fact~\ref{fct:correc}, we get that $\Ss_3 \cdot
  \Gs_n^{\tilde{k}}(v) \cdot \Ss_1 \subseteq
  (e,\Cstwolen{n-1}[B])$. Moreover, since all \chains in $\Ss_2$ have $e$
  as first element (see Fact~\ref{fct:correc}), it is immediate that
  $(\Ss_2)^\omega \cdot (e,\Cstwolen{n-1}[B]) \cdot (\Ss_2)^\omega =
  (\Ss_2)^\omega \cdot (1_M,\Cstwolen{n-1}[B]) \cdot (\Ss_2)^\omega$. This
  yields
  \[
  \Gs_n^{\widehat{k}-2}(u_2) \subseteq \Ss_1 \cdot
  (\Ss_2)^\omega \cdot (1_M,\Cstwolen{n-1}[B]) \cdot
  (\Ss_2)^\omega \cdot \Ss_3.
  \]
  Since $\Ss_2 \in \downclos Sat^{*}_n[B](C \mapsto \fI_n[C])$, it is
  immediate by Operation~\eqref{eq:oper} in the definition of $Sat_n$
  that $(\Ss_2)^\omega \cdot (1_M,\Cstwolen{n-1}[B]) \cdot (\Ss_2)^\omega
  \in \downclos Sat^{*}_n[B](C \mapsto \fI_n[C])$. It then follows from
  Fact~\ref{fct:semig} that $\Ss_1 \cdot
  (\Ss_2)^\omega \cdot (1_M,\Cstwolen{n-1}[B]) \cdot
  (\Ss_2)^\omega \cdot \Ss_3 \in  \downclos Sat^{*}_n[B](C \mapsto
  \fI_n[C])$ and therefore that $\Gs_n^{\widehat{k}-2}(u_2) \in
  \downclos Sat^{*}_n[B](C \mapsto \fI_n[C])$ which terminates the proof.\qed
\end{proof}

\section{Proof of Theorem~\ref{thm:caracsig}: Characterization of \siw{i}}
\label{app:sig}
In this appendix, we prove Theorem~\ref{thm:caracsig}, i.e., our
characterization for \siw{i}. For this whole appendix, we assume that
the level $i$ in the quantifier alternation hierarchy is fixed.

\adjustc{thm:caracsig}
\begin{theorem}
  Let $L$ be a regular language and $\alpha: A^* \rightarrow M$ be its
  syntactic morphism. For all $i \geq 1$, $L$ is definable in \siw{i}
  iff $\alpha$ satisfies: 
  \begin{equation}
    s^{\omega} \lmo s^{\omega}ts^{\omega} \quad \text{for all $(t,s) \in \Cslev{i-1}[\alpha]$} \tag{\ref{eq:sig}}
  \end{equation}
\end{theorem}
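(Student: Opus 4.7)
Assume $L$ is defined by a $\siw{i}$ formula of quantifier rank $k$, and let $(t,s) \in \Cslev{i-1}[\alpha]$. My plan is to exhibit two words $u,v$ with $\alpha(u) = s^{\omega}$, $\alpha(v) = s^{\omega} t s^{\omega}$, and $u \ksieq{i} v$. The pre-congruence property of $\ksieq{i}$ (Lemma~\ref{lem:efconcat}) will then propagate this to every context $(x,y)$, giving $xuy \in L \Rightarrow xvy \in L$ (since $L$ has rank $k$), hence $s^\omega \leq s^\omega t s^\omega$ in the syntactic order. To construct $u$ and $v$: by definition of $\Cslev{i-1}[\alpha]$, for arbitrarily large $k'$ there exist $w_t \sieq{k'}{i-1} w_s$ with $\alpha(w_t) = t$ and $\alpha(w_s) = s$; choosing $k'$ sufficiently large and applying Lemma~\ref{lem:siprop} (which promotes a level-$(i-1)$ equivalence to a level-$i$ equivalence between appropriate power-concatenations) yields $w_s^{\ell+r} \ksieq{i} w_s^{\ell'} w_t w_s^{r'}$ for $\ell,r,\ell',r' \geq 2^k$. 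Taking all exponents as multiples of $\omega(M)$ forces the images under $\alpha$ to be exactly $s^\omega$ and $s^\omega t s^\omega$, completing this direction.

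\textbf{Backward direction.} Assume~\eqref{eq:sig} holds. I plan to establish the sharper statement: there exists $k$, depending only on $|M|$, such that $w \ksieq{i} w'$ implies $\alpha(w) \leq \alpha(w')$ in $M$. This will suffice because the accepting set $F$ of the syntactic morphism is upward closed in the syntactic order (directly from the definition of the syntactic monoid), so $w \in L$ and $w \ksieq{i} w'$ give $\alpha(w') \in F$, i.e.\ $w' \in L$; hence $L$ is saturated by $\ksieq{i}$ and thus $\siw{i}$-definable. To prove the sharper statement, I appeal to Simon's Factorization Forest Theorem (Theorem~\ref{thm:facto}): $w$ admits an $\alpha$-factorization forest of height at most $3|M|-1$, and I induct on this height. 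The leaf case is direct from the EF-game. The binary case $w = w_1 w_2$ uses an EF-strategy to exhibit a split $w' = w'_1 w'_2$ with $w_j \ksieq{i} w'_j$ at a slightly smaller quantifier rank, then applies the induction hypothesis on each pair and combines the two inequalities via multiplicative compatibility of $\leq$.

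The main obstacle is the idempotent case: $w = w_1 \cdots w_m$ with each $\alpha(w_j) = e$ a common idempotent, and $m$ may far exceed $|M|$. Since a factor-by-factor split is infeasible, my plan is to decompose $w'$ according to Duplicator's strategy, identifying within $w'$ a pattern of sub-blocks whose image is $e$ separated by \emph{extras} $u$. The crucial claim is that each such extra satisfies $(\alpha(u),e) \in \Cslev{i-1}[\alpha]$: intuitively, a level-$i$ EF-strategy that traverses a long idempotent block in $w$ and its modified copy in $w'$ without detection reduces, by a converse-style reading of Lemma~\ref{lem:siprop}, to a level-$(i-1)$ indistinguishability between $u$ and a suitable concatenation of surrounding $w_j$'s. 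Once this claim is granted, equation~\eqref{eq:sig} supplies $e = e^\omega \leq e^\omega \alpha(u) e^\omega$, absorbing each extra; iterating through the forest then yields $\alpha(w) \leq \alpha(w')$. Specializing to $i=2$, where $\Cslev{1}$-chains coincide with alphabet inclusion, recovers the argument of~\cite{bfacto}; for $i=1$, the equation degenerates to $1_M \leq t$ for all $t \in M$, recovering the known equation for $\siwu$.
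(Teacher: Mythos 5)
Your forward direction is correct and matches the paper's argument: take witnessing words for the chain $(t,s)$, apply Lemma~\ref{lem:siprop} to lift the level-$(i-1)$ relation to a level-$i$ relation between powers, plug into arbitrary contexts via Lemma~\ref{lem:efconcat}, and read off the syntactic inequality.

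Your backward direction has the right global shape (show that for some rank $k$, $u\ksieq{i}v$ forces $\alpha(u)\lmo\alpha(v)$, then conclude via the syntactic order; induct on the height of an $\alpha$-factorization forest of $u$; leaf and binary cases are fine), but the idempotent case is where the real difficulty lies, and your plan there does not go through. You propose to ``decompose $w'$ according to Duplicator's strategy, identifying within $w'$ a pattern of sub-blocks whose image is $e$ separated by extras.'' There is no reason such a decomposition of $w'$ should exist: the EF game only matches factors of $w$ to factors of $w'$, and the matching pieces of $w'$ are not guaranteed to have image $e$ --- proving $e\lmo\alpha(\text{matched piece})$ is precisely what the inductive hypothesis is for, and it only applies to pieces with strictly smaller forest height. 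Moreover, even granting the decomposition, cutting $w$ (and hence $w'$) at all $m-1$ block boundaries would cost $m-1$ quantifier rank, and $m$ is unbounded; you correctly note ``a factor-by-factor split is infeasible'' but the fix you suggest does not escape this cost. The paper's key trick, which you miss, is to cut the idempotent node in exactly \emph{two} places: $u=u_1\cdot u'\cdot u_2$ where $u_1$ and $u_2$ are the first and last children (so each has strictly smaller forest height), and $u'$ is the concatenation of all the middle children (same height, no induction). The EF split $v=v_1v'v_2$ then costs only $2$ ranks per forest level; the outer two are handled by induction ($e\lmo\alpha(v_1)$, $e\lmo\alpha(v_2)$), and the single middle piece $v'$ is handled by the equation via $(\alpha(v'),e)\in\Cslev{i-1}[\alpha]$. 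That chain membership is not a ``converse reading of Lemma~\ref{lem:siprop}'' but a direct game observation: from $u'\ksieq{i}v'$, one Spoiler alternation onto $v'$ demotes the residual game to level $i-1$ with $v'$ active, hence $v'\sieq{k-1}{i-1}u'$ and thus $(\alpha(v'),\alpha(u'))\in\Cslev{i-1}[\alpha]$ once $k$ exceeds the stabilization rank $k_{i-1}$ of $\Cslev{i-1}[\alpha]$. (This last point also means your claim that $k$ depends only on $|M|$ needs care: $k$ must additionally dominate $k_{i-1}$, whose existence is guaranteed by finiteness of $M^2$ but which is not effectively bounded for general $i$.)
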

\restorec
There are two directions. We give each one its own subsection.

\subsection{Equation~\eqref{eq:sig} is necessary}

We prove that the syntactic morphism of any \siw{i}-definable language
satisfies~\eqref{eq:sig}. We state this in the following proposition.

\begin{proposition} \label{prop:signec}
  Let $L$ be a \siw{i}-definable language and let $\alpha : A^* 
  \rightarrow M$ be its syntactic morphism. Then $\alpha$
  satisfies~\eqref{eq:sig}.
\end{proposition}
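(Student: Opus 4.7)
The plan is to reduce the desired inequation in the syntactic ordered monoid to an \efgame inequation between concrete words, then transport that inequation through any \siw{i} sentence defining $L$. Unfolding the definition of the syntactic preorder, what we must show is: for every pair $(t,s)\in\Cslev{i-1}[\alpha]$ and every $p,q\in M$, if $ps^{\omega}q\in F$ then $ps^{\omega}ts^{\omega}q\in F$, where $F=\alpha(L)$.

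Fix a formula $\varphi$ in \siw{i} defining $L$ and let $k_0$ be its quantifier rank. Choose $k$ large enough that $k\geq k_0$ and $2^k\geq 1$, and pick an integer $N$ with $N\geq 2^k$ such that $s^N=s^{\omega}$ (for instance, any multiple of $\omega(M)$ exceeding $2^k$). By definition of $\Cslev{i-1}[\alpha]$, we may select witness words $w_b,w_a\in A^*$ with $w_b\sieq{k}{i-1}w_a$, $\alpha(w_b)=t$ and $\alpha(w_a)=s$ (when $i=1$, this is vacuous since $\Cslev{0}[\alpha]=M^{*}$). Applying Lemma~\ref{lem:siprop} to the pair $w_b\sieq{k}{i-1}w_a$ with $\ell=r=\ell'=r'=N\geq 2^k$ yields
\[
w_a^{N}w_a^{N}\ \sieq{k}{i}\ w_a^{N}w_b w_a^{N}.
\]
Observe that $\alpha(w_a^{2N})=s^{2N}=s^{\omega}$ while $\alpha(w_a^{N}w_b w_a^{N})=s^{N}ts^{N}=s^{\omega}ts^{\omega}$, which is the algebraic shape we need.

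To finish, take arbitrary $u,v\in A^*$ with $\alpha(u)=p$ and $\alpha(v)=q$. Lemma~\ref{lem:efconcat} (the pre-congruence property of $\sieq{k}{i}$) gives
\[
u\,w_a^{2N}\,v\ \sieq{k}{i}\ u\,w_a^{N}w_b w_a^{N}\,v.
\]
If $ps^{\omega}q\in F$, then $u\,w_a^{2N}\,v\in L$, so $\varphi$ holds on this word; since $k\geq k_0$ and $\varphi$ is \siw{i}, the relation $\sieq{k}{i}$ preserves the truth of $\varphi$, hence $\varphi$ also holds on $u\,w_a^{N}w_b w_a^{N}\,v$. This word therefore lies in $L$, giving $ps^{\omega}ts^{\omega}q\in F$, as required.

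The only mildly delicate point is choosing the parameters in the right order: $k$ must be selected \emph{after} the formula $\varphi$ (so that $k\geq k_0$ and $2^k$ is large enough for Lemma~\ref{lem:siprop}) but \emph{before} invoking the definition of \ichain (so that witnesses $w_b\sieq{k}{i-1}w_a$ exist for that particular $k$). No other subtlety arises, because the structural ingredients -- Lemma~\ref{lem:siprop} converting an $(i-1)$-level equivalence into an $i$-level equivalence around an idempotent power, and Lemma~\ref{lem:efconcat} for contextualisation -- are exactly tailored to this pattern. \qed
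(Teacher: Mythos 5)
Your proof is correct and follows essentially the same route as the paper's: fix the defining \siw{i} formula, unfold the syntactic preorder, pick witness words from the \ichain definition for a sufficiently large parameter $k$, apply Lemma~\ref{lem:siprop} to lift the level-$(i-1)$ relation to a level-$i$ relation around powers of the $s$-witness, and close with Lemma~\ref{lem:efconcat}. The only cosmetic differences from the paper are that you phrase the syntactic-order condition at the monoid level (quantifying over $p,q\in M$ rather than over context words $w_1,w_2\in A^*$, which is equivalent since $\alpha$ is surjective) and that you take $N$ to be an arbitrary multiple of $\omega(M)$ exceeding $2^k$ where the paper uses $N=2^k\omega$; both serve the same purpose of ensuring $s^N=s^\omega$ while $N\geq 2^k$.
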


\begin{proof}
  By hypothesis, $L$ is defined by some \siw{i}-formula $\varphi$. Let $k$ be
  its quantifier rank. Set $(t,s) \in \Cs_{i-1}[\alpha]$, we need to prove
  that $s^{\omega} \lmo s^{\omega}ts^{\omega}$. Since $(t,s) \in
  \Cs_{i-1}[\alpha]$, by definition, there exist $v,u$ such that $\alpha(v) =
  t$, $\alpha(u) = s$ and $v \ksieq{i-1} u$.  By Lemma~\ref{lem:siprop}, we
  immediately obtain
  \[
  u^{2^k\omega}\cdot u^{2^k\omega} \ksieq{i}u^{2^k\omega} \cdot v \cdot
  u^{2^k\omega}.
  \]
  It then follows from Lemma~\ref{lem:efconcat} that for any $w_1,w_2 \in A^*$
  we have:
  \begin{equation} \label{eq:efcorrec} w_1 \cdot u^{2^k\omega}\cdot
    u^{2^k\omega} \cdot w_2 ~~\ksieq{i}~~ w_1 \cdot u^{2^k\omega} \cdot v
    \cdot u^{2^k\omega} \cdot w_2.
  \end{equation}
  By definition, this means that $w_1 \cdot u^{2^k\omega} \cdot w_2 \in L$
  implies that $w_1 \cdot u^{2^k\omega} v u^{2^k\omega} \cdot w_2 \in
  L$. Which, by definition of the syntactic preorder, means that $s^{\omega}
  \lmo s^{\omega}ts^{\omega}$.\qed
\end{proof}

\subsection{Equation~\eqref{eq:sig} is sufficient}

It remains to prove that whenever $\alpha$ satisfies~\eqref{eq:sig},
$L$ is definable in \siw{i}. This is a consequence of the following
proposition.

\begin{proposition} \label{prop:signec}
  Let $L$ be a regular language such that its syntactic morphism $\alpha
  : A^* \rightarrow M$ satisfies~\eqref{eq:sig}. Then there exists $k
  \in \nat$ such that for all $u,v \in A^*$:
  \[
  u \ksieq{i} v \Rightarrow \alpha(u) \lmo \alpha(v)
  \]
\end{proposition}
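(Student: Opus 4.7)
The plan is to deduce the proposition from Simon's Factorization Forest Theorem (Theorem~\ref{thm:facto}) by structural induction on an $\alpha$-factorization forest of~$u$, generalizing the argument used for \siwd in~\cite{bfacto}. I would fix a quantifier rank $k$ depending polynomially on~$|M|$ and exponentially on~$i$, large enough that Simon's bound of $3|M|-1$ on the forest height can be absorbed and that each step in the induction consumes only a bounded number of rounds. The statement actually proved by induction is the refinement: \emph{if $u$ admits an $\alpha$-factorization forest of height at most~$h$ and $u \sieq{k'}{i} v$ for a rank $k'$ decreasing linearly in~$h$, then $\alpha(u) \lmo \alpha(v)$.}

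The leaf and binary cases are routine. At a leaf labeled by $\varepsilon$, the rank-one formula $\exists x.\, \top$ forces $v = \varepsilon$. At a leaf labeled by a letter $a$, the rank-two $\siw{i}$-formulas isolating the first, last and unique position (available when $i \gmo 2$) force $v = a$; for $i = 1$, we only get $v = v_1 a v_2$, and equation~\eqref{eq:sig} specialized at $s = 1_M$ (which, since $\Cslev{0}[\alpha] = M^*$, reduces to $1_M \lmo t$ for every $t \in M$) absorbs $v_1$ and $v_2$. At a binary node $u = u_1 u_2$, a single EF-round with Spoiler playing the boundary position in $u$ splits $v = v_1 v_2$ with $u_1 \sieq{k'-1}{i} v_1$ and $u_2 \sieq{k'-1}{i} v_2$; the induction hypothesis together with compatibility of $\lmo$ with the product concludes.

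The idempotent case, where $u = u_1 \cdots u_m$ with $\alpha(u_j) = e$ idempotent, is the main case and the one in which equation~\eqref{eq:sig} is really used. For $m$ bounded by a polynomial in~$|M|$ and~$k'$, one treats each $u_j$ by the induction hypothesis and concludes by closure of $\lmo$ under products. When $m$ is large, I factor $u$ as $u^{\mathrm{pre}} \cdot u^{\mathrm{mid}} \cdot u^{\mathrm{suf}}$ with $\alpha(u^{\mathrm{pre}}) = \alpha(u^{\mathrm{suf}}) = e^\omega$, and make Spoiler play the two boundary positions in $u$ to split $v = v^{\mathrm{pre}} \cdot v^{\mathrm{mid}} \cdot v^{\mathrm{suf}}$ so that $u^{\mathrm{pre}} \sieq{k'-2}{i} v^{\mathrm{pre}}$ and $u^{\mathrm{suf}} \sieq{k'-2}{i} v^{\mathrm{suf}}$. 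The crucial claim is that $(\alpha(v^{\mathrm{mid}}),e) \in \Cslev{i-1}[\alpha]$: after the two boundary rounds, if Spoiler now switches to $v^{\mathrm{mid}}$ the alternation counter jumps by one, so the residual game on $v^{\mathrm{mid}}$ against any interior factor of $u^{\mathrm{mid}}$ (whose $\alpha$-image is $e$ by idempotency) is at level $i-1$; iterating this over all witness ranks yields, for arbitrarily large $k''$, a word $w$ with $\alpha(w) = e$ and $v^{\mathrm{mid}} \sieq{k''}{i-1} w$. Plugging $(t,s) = (\alpha(v^{\mathrm{mid}}),e)$ into equation~\eqref{eq:sig} then gives $e^\omega \lmo e^\omega \alpha(v^{\mathrm{mid}}) e^\omega$, which combined with the prefix and suffix bounds from the induction hypothesis delivers $\alpha(u) = e^\omega \lmo \alpha(v^{\mathrm{pre}}) \alpha(v^{\mathrm{mid}}) \alpha(v^{\mathrm{suf}}) = \alpha(v)$.

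The main obstacle is precisely the game-theoretic extraction of the \qchain{i-1} $(\alpha(v^{\mathrm{mid}}),e)$ at the idempotent node, while keeping careful track of how the rank budget $k'$ shrinks at each invocation of the induction hypothesis. The cleanest presentation would isolate a \emph{boundary-splitting lemma} for $\sieq{k}{i}$ that simultaneously produces the decomposition of~$v$ and exhibits the residual level-$(i-1)$ game on the middle factor; this is essentially the only step where the quantifier alternation structure interacts non-trivially with the factorization forest, and it is also what forces the rank~$k$ to grow with~$i$.
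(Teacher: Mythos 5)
Your overall architecture --- induction on the height of a factorization forest, with the boundary factors at an idempotent node handled by the induction hypothesis and the middle factor handled via equation~\eqref{eq:sig} after observing that a switch by Spoiler drops the residual game to level $i-1$ --- is the same as the paper's, so the skeleton is right. (The paper is slightly cleaner at the idempotent node: it applies the induction hypothesis only to the first and last children, which makes your case split on the number of children unnecessary. Also, for the $\varepsilon$-leaf the implication in $\exists x.\,\top$ runs the wrong way; one needs $\forall x.\,\bot$, which is a \siw{i} sentence for $i\gmo2$, and for $i=1$ one falls back on the equation.)

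The real gap is the step that produces the \qchain{i-1}. You run \emph{one} game at \emph{one} rank $k$, and what the residual game at the idempotent node gives you is $v^{\mathrm{mid}}\sieq{k''}{i-1} u^{\mathrm{mid}}$ for a \emph{single} $k''$ determined by $k$ and the height, i.e.\ one witness $(\alpha(v^{\mathrm{mid}}),e)\in\Csgen{i-1}{k''}{2}[\alpha]$. But equation~\eqref{eq:sig} is parametrized by $\Cslev{i-1}[\alpha]=\bigcap_m\Csgen{i-1}{m}{2}[\alpha]$, which requires witnesses at arbitrarily large ranks. Your phrase ``iterating this over all witness ranks yields, for arbitrarily large $k''$, a word $w$'' has nothing to iterate over: once $k$ is fixed, so is the residual rank at every node, and there is exactly one game. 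What closes this in the paper is the stabilization fact (Fact~\ref{fct:boundk}): the sets $\Csgen{i-1}{m}{2}[\alpha]$ form a decreasing family of subsets of the finite set $M^2$, so there is a threshold $k_{i-1}$ with $\Csgen{i-1}{k_{i-1}}{2}[\alpha]=\Cslev{i-1}[\alpha]\cap M^2$, and the global rank is then chosen as $k=6|M|+k_{i-1}$ precisely so that the residual rank at every idempotent node stays above $k_{i-1}$, making a single witness sufficient. Without that fact your induction does not close. This also invalidates your quantitative claim that $k$ is polynomial in $|M|$ and exponential in $i$: the threshold $k_{i-1}$ is obtained nonconstructively and, already for $i-1=2$, the bound from Proposition~\ref{prop:compu} is doubly exponential in $|M|$; for $i-1\gmo 3$ no bound is known.
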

Assume for now that Proposition~\ref{prop:signec} holds and let
$\alpha$ satisfy~\eqref{eq:sig}. Let then $u,v\in A^*$ with $u\in L$
and $u \ksieq{i} v$. By Proposition~\ref{prop:signec}, we deduce that
$\alpha(u)\lmo\alpha(v)$ which, by definition of the preorder $\lmo$,
implies that $v\in L$. Therefore, $\ksieq i$ saturates $L$, so $L$ is
definable in \siw{i}.

It remains to prove Proposition~\ref{prop:signec}. We begin by
choosing $k$. The choice depends on the following lemma. Recall that
$\Csgen{i}{k}{2}[\alpha]$ is the set of chains of length 2 belonging
to $\Csik[\alpha]$.

\begin{fact} \label{fct:boundk}
  For any morphism $\alpha: A^* \rightarrow M$ into a finite monoid $M$,
  there exists $k_i \in \nat$ such that for all $k \geq k_i$,
  $\Csgen{i}{k}{2}[\alpha] = \Csitwo[\alpha]$.
\end{fact}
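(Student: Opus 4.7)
The plan is to exploit the fact that $M^2$ is finite: we need only observe that the family $\bigl(\Csgen{i}{k}{2}[\alpha]\bigr)_{k\in\nat}$ is a decreasing sequence of subsets of this finite set, hence must stabilize.

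First, I would verify monotonicity in $k$: for $k \leq k'$, one has $\Csgen{i}{k'}{2}[\alpha] \subseteq \Csgen{i}{k}{2}[\alpha]$. This is immediate from the definition of $\ksieq{i}$: any \siw{i} formula of quantifier rank $k$ is a fortiori a formula of quantifier rank $k'$, so $w \sieq{k'}{i} w'$ implies $w \ksieq{i} w'$. Thus if $(s_1,s_2)$ is witnessed by words $w_1 \sieq{k'}{i} w_2$ with $\alpha(w_j)=s_j$, the same pair of words witnesses $(s_1,s_2) \in \Csgen{i}{k}{2}[\alpha]$.

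Second, since $\Csgen{i}{k}{2}[\alpha] \subseteq M^2$ and $M^2$ is finite, any decreasing sequence of subsets of $M^2$ must eventually become constant. Let $k_i$ be an index beyond which $\Csgen{i}{k}{2}[\alpha]$ no longer strictly decreases, i.e., for all $k \geq k_i$, $\Csgen{i}{k}{2}[\alpha] = \Csgen{i}{k_i}{2}[\alpha]$.

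Finally, by the definition $\Csitwo[\alpha] = \bigcap_{k \in \nat} \Csgen{i}{k}{2}[\alpha]$, once the decreasing chain has stabilized at step $k_i$, the common value of all further terms is exactly this intersection: $\Csgen{i}{k}{2}[\alpha] = \Csitwo[\alpha]$ for every $k \geq k_i$. There is really no obstacle here; the only point to check carefully is the direction of monotonicity in the definition of $\ksieq{i}$, since the bookkeeping about quantifier rank can be a source of confusion.
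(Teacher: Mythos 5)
Your proof is correct and follows essentially the same route as the paper's: monotonicity of the chain $\bigl(\Csgen{i}{k}{2}[\alpha]\bigr)_k$, finiteness of $M^2$ to get stabilization, and the defining equality $\Csitwo[\alpha]=\bigcap_k\Csgen{i}{k}{2}[\alpha]$ to identify the stable value with the intersection. The only difference is that you spell out the monotonicity step (rank-$k$ formulas are among rank-$k'$ formulas for $k\le k'$), which the paper takes for granted; that is a harmless and reasonable addition.
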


\begin{proof}
  This is because for all $k < k'$, $\Csgen{i}{k'}{2}[\alpha] \subseteq
  \Csgen{i}{k}{2}[\alpha] \subseteq M^2$. Since $M^2$ is a finite set, there
  exists an index $k_i$ such that for all $k \leq k_i$,
  $\Csgen{i}{k}{2}[\alpha] = \Csgen{i}{k_i}{2}[\alpha]$. It is then
  immediate by definition that $\Csgen{i}{k_i}{2}[\alpha] = \Csitwo[\alpha]$.\qed
\end{proof}

Observe that while proving proving the existence $k_i$ is easy, the
proof is non-constructive and computing $k_i$ from $i,\alpha$ is a
difficult problem. In particular, having $k_i$ allows us to compute all
\ichains of length $2$ via a brute-force algorithm. When $i=2$, we
proved in Proposition~\ref{prop:compu} that it suffices to take $k_2 =
3|M| \cdot 2^{|A|} \cdot 2 \cdot  2^{2^{2|M|^2}}$.

We can now prove Proposition~\ref{prop:signec}. Set $k_{i-1}$ as defined
in Fact~\ref{fct:boundk} for $i-1$. This means that $(s,t)$ is a
\qchain{i-1} for $\alpha$ iff there exists $u,v \in A^*$ such that
$\alpha(u) = s$, $\alpha(v)=t$ and $u \sieq{k_{i-1}}{i-1} v$. We prove
that Proposition~\ref{prop:signec} holds for $k = 6|M|+k_{i-1}$. This
follows from the next lemma.

\begin{lemma} \label{lem:factosig}
  Let $h \in \nat$ and $u,v \in A^*$, such that $u$ admits an
  $\alpha$-factorization forest of height smaller than $h$. Then  
  \[
  u \sieq{2h+k_{i-1}}{i} v \Rightarrow \alpha(u) \lmo \alpha(v)
  \]
\end{lemma}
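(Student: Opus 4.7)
The plan is to prove Lemma~\ref{lem:factosig} by induction on $h$, choosing an $\alpha$-factorization forest for $u$ of height at most $h$ and doing case analysis on the nature of its root: leaf, binary node, or idempotent node. The equation \eqref{eq:sig} will only be needed in the idempotent case (and, at $i=1$, in the base case). The parameters are tuned so that the recursive calls always use strictly fewer rounds than the outer call, with the ``$+k_{i-1}$'' reserved for a level-$(i{-}1)$ subgame that will eventually witness an $(i{-}1)$-chain via Fact~\ref{fct:boundk}.

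In the base case ($h=1$), $u$ is either $\varepsilon$ or a single letter $a$. For $i\gmo 2$, I would appeal to the rank-$1$ formula $\forall x\,\bot$ and the rank-$2$ formula $\forall x\forall y\,x=y$, both in $\piw{1}\subseteq\siw{i}$, to force $v=\varepsilon$ or $v=a$ outright, giving $\alpha(u)=\alpha(v)$. For $i=1$ these $\piw{1}$ formulas are unavailable, so the equation carries the argument: since $\Cs_0[\alpha]=M^*$, equation \eqref{eq:sig} specializes to $1_M\lmo t$ for all $t\in M$, which directly handles $u=\varepsilon$; when $u=a$, the \siwu\ sentence $\exists x\,P_a(x)$ of rank $1$ forces $v$ to contain an $a$, so $v=v_1 a v_2$, and multiplying the two instances $1_M\lmo\alpha(v_1)$, $1_M\lmo\alpha(v_2)$ of the equation yields $\alpha(a)\lmo\alpha(v)$. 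The binary case is standard: for $u=u_1u_2$ with each $u_j$ of forest height smaller than $h$, I would let Spoiler open by playing the cut position between $u_1$ and $u_2$; since this move costs one round and no alternation, Duplicator's response produces $v=v_1v_2$ with $u_j\sieq{2(h-1)+k_{i-1}}{i}v_j$. The induction hypothesis gives $\alpha(u_j)\lmo\alpha(v_j)$, and compatibility of $\lmo$ with the product yields $\alpha(u)\lmo\alpha(v)$.

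The idempotent case $u=u_1\cdots u_m$ with $\alpha(u_j)=e$ for all $j$ (each $u_j$ having forest height smaller than $h$) is the main obstacle. My plan is to design Spoiler's strategy so that, using a bounded number of rounds plus the alternation step from level $i$ down to level $i-1$, he isolates some factor $u_{j_0}$ of $u$ and a corresponding factor $\tilde v$ of $v$ with $u_{j_0}\sieq{k_{i-1}}{i-1}\tilde v$; by Fact~\ref{fct:boundk} this witnesses $(\alpha(\tilde v),e)\in\Cs_{i-1}[\alpha]$, so equation \eqref{eq:sig} gives $e\lmo e\,\alpha(\tilde v)\,e$. The flanking portions of $u$ around $u_{j_0}$ are concatenations of the remaining $u_j$'s, all with image $e$; using Lemma~\ref{lem:siprop} to pump these idempotent blocks and then applying the induction hypothesis to an individual $u_j$ inside each flank (whose forest height is strictly less than $h$), I would show that the flanks of $v$ are $\lmo$-above the corresponding flanks of $u$. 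Combining this with $e\lmo e\,\alpha(\tilde v)\,e$ and absorbing powers of $e$ via idempotence then recomposes $\alpha(v)$ and yields $\alpha(u)=e\lmo\alpha(v)$.

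The hard part will be the round-budget bookkeeping in the idempotent case: the total $2h+k_{i-1}$ rounds at level $i$ must be partitioned into $k_{i-1}$ rounds reserved for the level-$(i{-}1)$ subgame that witnesses the chain, $2(h-1)+k_{i-1}$ rounds available for each inductive subcall on a flanking $u_j$, and the $O(1)$ extra rounds consumed by exposing the factor $u_{j_0}$ and by the single alternation from level $i$ to level $i-1$. Fitting all of this into the additive slack of $2$ between $2h$ and $2(h-1)$—while ensuring Lemma~\ref{lem:siprop} applies to reconcile the pumped $e$-powers with the algebraic consequence of equation \eqref{eq:sig}—is the delicate point, and is precisely what forces the factor of $2$ in the bound $2h+k_{i-1}$.
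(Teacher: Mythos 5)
Your base and binary cases are fine, and you are more careful than the paper on the leaf case: the bare claim ``$u\sieq{2}{i}v$ hence $v=u$'' only holds for $i\gmo2$ (it needs a $\piw{1}$ sentence bounding $|v|$) and also omits $u=\varepsilon$, and your $i=1$ patch via $1_M\lmo t$ correctly fills that gap. However, in the idempotent case you have the preorder pointing the wrong way when you descend from level $i$ to level $i-1$. Every $\siw{i-1}$ formula is also a $\piw{i}$ formula, and $u\ksieq{i}v$ says exactly that every $\piw{i}$ formula of rank $k$ satisfied by $v$ is satisfied by $u$; so $u_{j_0}\sieq{\ell}{i}\tilde v$ yields $\tilde v\sieq{\ell}{i-1}u_{j_0}$, not $u_{j_0}\sieq{k_{i-1}}{i-1}\tilde v$ as you wrote. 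Your stated relation would only witness $(e,\alpha(\tilde v))\in\Cslev{i-1}[\alpha]$, which fed into~\eqref{eq:sig} gives the useless $\alpha(\tilde v)^\omega\lmo\alpha(\tilde v)^\omega\, e\,\alpha(\tilde v)^\omega$; the chain $(\alpha(\tilde v),e)$ required for the conclusion $e\lmo e\,\alpha(\tilde v)\,e$ needs the reversed relation $\tilde v\sieq{k_{i-1}}{i-1}u_{j_0}$.

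More fatally, your decomposition does not let the induction close. By isolating one child $u_{j_0}$, the flanks $u_1\cdots u_{j_0-1}$ and $u_{j_0+1}\cdots u_m$ are concatenations of several children of the idempotent node; as words they only admit factorization forests of height up to $h$, not at most $h-1$, so the induction hypothesis does not apply to a flank. Applying it to a single $u_j$ inside a flank controls only one factor of $v_L$, not $\alpha(v_L)$ itself, and Lemma~\ref{lem:siprop} cannot repair this: it manufactures a new $\ksieq{i+1}$-related pair from a given $\ksieq{i}$ pair and says nothing about the specific $v_L$ handed to you by the game. The correct split reverses which part the equation absorbs: take $u_1$ and $u_m$ to be the \emph{first and last children} of the idempotent node---each of forest height at most $h-1$, so the induction hypothesis directly gives $e\lmo\alpha(v_1)$ and $e\lmo\alpha(v_m)$---and apply~\eqref{eq:sig} to the \emph{entire} middle block $u'=u_2\cdots u_{m-1}$ via the chain $(\alpha(v'),e)$ obtained from $u'\sieq{2h+k_{i-1}-2}{i}v'$ and the direction-flip above. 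Two Spoiler cuts produce $v=v_1v'v_m$, which is exactly where the cost of two rounds per level comes from; the round-budget ``partitioning'' you worry about does not arise, since after the cuts each of the three pieces independently inherits the full remaining $2h+k_{i-1}-2$ rounds.
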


Observe that by Theorem~\ref{thm:facto} all words admit an
$\alpha$-factorization forest of height less than $3|M|$. Therefore,
Proposition~\ref{prop:signec} is an immediate consequence of
Lemma~\ref{lem:factosig}. It remains to prove the lemma.

\begin{proof}[of Lemma~\ref{lem:factosig}]
  We distinguish three cases depending on the nature of the topmost node
  in the $\alpha$-factorization forest of $u$. If the topmost node is a
  leaf then $u$ is a single letter word. Moreover, since $2h + k_{i-1} =
  2 + k_{i-1} \geq 2$, we have $u \sieq{2}{i} v$, therefore, $v = u$ and
  $\alpha(u) = \alpha(v)$.

  If the topmost node is a binary node then $u = u_1\cdot u_2$ with
  $u_1,u_2$ admitting $\alpha$-factorization forests of height $h_1,h_2
  \leq h-1$. Using a simple \efgame argument, we get that $v = v_1 \cdot
  v_2$ with $u_1 \sieq{2h+k_{i-1}-1}{i} v_1$ and $u_2
  \sieq{2h+k_{i-1}-1}{i} v_2$. Since $2h +k_{i-1} -1 \geq 2(h-1) +k_{i-1}$,
  we can use our induction hypothesis which yields that $\alpha(u_1)
  \lmo \alpha(v_1)$ and $\alpha(u_2) \lmo \alpha(v_2)$. By combining the
  two we obtain that $\alpha(u) = \alpha(u_1) \cdot \alpha(u_2) \lmo
  \alpha(v_1) \cdot \alpha(v_2) = \alpha(v)$.

  If the topmost node is an idempotent node for some idempotent $e$,
  then $u=u_1\cdot u' \cdot u_2$ such that $\alpha(u_1) = \alpha(u_2) =
  \alpha(u') = e$ and $u_1,u_2$ admit $\alpha$-factorization forests of
  height $h_1,h_2 \leq h-1$. By using a simple \efgame argument we get
  that $v = v_1 \cdot v' \cdot v_2$ such that $u_1
  \sieq{2h+k_{i-1}-2}{i} v_1$, $u' \sieq{2h+k_{i-1}-2}{i} v'$ and $u_2
  \sieq{2h+k_{i-1}-2}{i} v_2$. Applying the induction hypothesis as in
  the previous case, we get that $e = \alpha(u_1) \lmo \alpha(v_1)$ and
  $e = \alpha(u_2) \lmo \alpha(v_2)$. However, we cannot apply induction
  on $u'$ since the height of its $\alpha$-factorization forest has not
  decreased. We use Equation~\eqref{eq:sig} instead. We know that $u'
  \sieq{2h+k_{i-1}-2}{i} v'$, therefore, by choice of $k_i$, we have
  $(\alpha(v'),\alpha(u')) \in \Cs_{i-1}[\alpha]$. Recall that by
  hypothesis of this case, $\alpha(u') = e$. Therefore, by
  Equation~\eqref{eq:sig}, we get that:
  \[
  \alpha(u) = e \lmo e \cdot \alpha(v') \cdot e \lmo \alpha(v_1) \cdot \alpha(v')
  \cdot \alpha(v_2) = \alpha(v)
  \]
  \noindent
  which terminates the proof.\qed
\end{proof}

\section{Analyzing \dChains: \Chain Trees}
\label{app:ctrees}
In this appendix, we define \chain trees. \Chain trees are our main
tool in the proof of the difficult 'if' direction of
Theorem~\ref{thm:caracbc}. The main goal of the notion is to analyze
how \dchains are constructed. In particular we are interested in a
specific property of the set of \dchains that we define now.

\medskip
\noindent
{\bf Alternation.} Let $M$ be a finite monoid. 
We say that a \chain $(s_1,\dots,s_n) \in
M^*$ has \emph{alternation} $\ell$ if there are exactly $\ell$
indices $i$ such that $s_i \neq s_{i+1}$. We say that a set of
\chains $\Ss$ has \emph{bounded alternation} if there exists a bound
$\ell \in \nat$ such that all \chains in $\Ss$ have alternation at
most $\ell$.

We will see in Appendix~\ref{app:bc} that $\Cstwo[\alpha]$ having
bounded alternation is another characterization of $\bswd$. The
difficult direction of Theorem~\ref{thm:caracbc} will then be reduced
to proving that if $\Cstwo[\alpha]$ has \emph{unbounded alternation}
then one of the two equations in the characterization is contradicted.
Therefore, we will need a way to analyze how \dchains with high
alternation are built. In particular, we will need to extract a
property from the set of \dchains that decides which equation is
contradicted. This is what \chain trees are for. Intuitively, a \chain
tree is associated to a single \dchain and represents a computation of
our algorithm (see Section~\ref{sec:comput}) that yields this \dchain.

As we explained in the main paper, one can find connections between
our proof and that of the characterization of boolean combination of
open sets of trees~\cite{bpopen}. In~\cite{bpopen} as well, the
authors consider a notion of  ``\chains'' which corresponds to open
sets of trees and need to analyze how they are built. This is achieved
with an object called ``Strategy Tree''. Though strategy trees and
\chain trees share the same purpose, i.e., analyzing how \chains are
built, there is no connection between the notions themselves since
they deal with completely different objects.

We organize the appendix in three subsections. We first define the 
general notion of \chain trees. In the second subsection, we define
the main tool we use to analyze \chain trees: context values. In
particular, we prove that we can use context values to generate
$B$-schemas. Finally, in the last subsection, we define a strict
subset of \chain trees: the \emph{locally optimal \chain trees} and
prove that it suffices to consider only such  \chain trees (i.e., that
for any \dchain there exists a locally optimal \chain tree that
``computes'' it). 

\subsection{Definition}

Set $\alpha: A^* \rightarrow M$ a morphism into a finite monoid $M$. We
associate to $\alpha$ a set $\ct[\alpha]$ of \emph{\chain trees}. As we
explained, a \chain tree is associated to a single \dchain for $\alpha$ and
represents a way to compute this \dchain using our algorithm. Note that our
algorithm works with sets of compatible sets of \dchains, while \chain trees
are for single \dchains. This difference will be reflected in the definition.
For all $n \in \nat$ we define $\ell_n = \omega(2^{M^n})$.

\medskip
\noindent
{\bf \Chain Trees.} Set $n \in \nat$. A \emph{\chain tree} $T$ \emph{of
  level} $n$ for $\alpha$ is an ordered unranked tree that may have
two types of (unlabeled) inner nodes: product nodes and operation
nodes, and two types of leaves, labeled with a \dchain of length $n$:
initial leaves and operation leaves. Moreover, to each node $x$ in
the tree, we associate an alphabet $\content{x} \subseteq A$ and a
value $\val{x} \in M^n$ by induction on the structure of the tree.

Intuitively, each type of node corresponds to a part of the algorithm that
computes \dchains. Initial leaves correspond to the initial trivial compatible
sets from which the algorithm starts, product nodes correspond to the product~\eqref{eq:mul},
finally operation nodes and leaves can only be used together and correspond to
the application of~\eqref{eq:oper}. We now give a precise definition of each
type of node.

\medskip
\noindent
{\it Initial Leaves.} An initial leaf $x$ is labeled  with a constant
\dchain $(s,\cdots,s) \in \Cstwon[\alpha,B]$ for some $B \subseteq A$.
We set $\content{x} = B$ and $\val{x} = (s,\cdots,s)$.

\medskip
\noindent
{\it Operation Leaves.} An operation leaf $x$ is labeled with an arbitrary
\dchain $\bar{s} \in \Cstwon[\alpha,B]$ for some $B \subseteq A$.
We set $\content{x} = B$ and $\val{x} = \bar{s}$. Note that we
will set constraints on the parents of operation leaves. In
particular, these parents are always operation nodes. We will see
this in details when defining operation nodes.

\medskip
\noindent
{\it Product Nodes.} A product node $x$ is unlabeled. It can have an
arbitrary number of children $x_1,\dots,x_m$ which are all initial
leaves, product nodes or operation nodes. In particular, we set
$\content{x} = \content{x_1} \cup \cdots \cup \content{x_m}$ and
$\val{x} = \val{x_1} \cdots \val{x_m}$.

\medskip
\noindent
{\it Operation Nodes.} An operation node $x$ has exactly $2 \ell_n + 
1$ children sharing the same alphabet $B$. The $(\ell_n + 1)$-th
child, called the \emph{central child} of $x$, has to be an operation 
leaf. The other children, called the \emph{context children} of $x$,
are either operation nodes, product nodes or initial leaves and the
set of their values must be \emph{compatible for $\alpha,B$} (i.e. it
must belong to $\fCtwon[\alpha,B]$). Finally, we set a restriction on
the value of the central child. Since the values of the context
children of $x$ form a compatible set of \dchains, they all share the
same first component, that we call $t$. We require the first component
of the value of the central child to be $t^{\ell_n}$. This means that
the central child is an operation node labeled with
$(t^{\ell_n},s_1,\dots,s_{n-1}) \in \Cstwon[\alpha,B]$. Finally, we
set $\content{x} = B$ and $\val{x} = \val{x_1} \cdots
\val{x_{2\ell_n+1}}$.



\medskip
This terminates the definition of \chain trees. The alphabet and value
of a \chain tree $T$, \content{T} and \val{T}, are the alphabet and
value of its root. We give an example of a \chain tree in
Figure~\ref{fig:ctree}. Moreover, the following fact is immediate by
definition.



\begin{fact} \label{fct:value}
  Let $T$ be a \chain tree and let $x_1,\dots,x_m$ be its leaves listed
  from left to right. Then $\val{T} = \val{x_1} \cdots \val{x_m}$.
\end{fact}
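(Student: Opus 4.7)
The plan is to prove Fact~\ref{fct:value} by structural induction on the \chain tree $T$, exploiting the fact that the value of every inner node is defined as the product of the values of its children, listed left to right.

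For the base case, $T$ consists of a single leaf $x$ (either an initial leaf or an operation leaf). In both sub-cases the list of leaves from left to right is just $x$, and by definition $\val{T}=\val{x}$, so the claim is immediate.

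For the inductive step, suppose $T$ has a root $r$ that is an inner node with children $y_1,\dots,y_p$ (where $p$ is arbitrary if $r$ is a product node and $p=2\ell_n+1$ if $r$ is an operation node). Let $T_j$ denote the subtree rooted at $y_j$; each $T_j$ is itself a \chain tree of level $n$ for $\alpha$ (this uses only that the local node constraints in $T$ restrict to legitimate constraints in $T_j$). By the induction hypothesis applied to each $T_j$, if $x_{j,1},\dots,x_{j,m_j}$ are the leaves of $T_j$ listed from left to right, then $\val{T_j}=\val{x_{j,1}}\cdots\val{x_{j,m_j}}$. By the definitions of product nodes and operation nodes, $\val{T}=\val{r}=\val{y_1}\cdots\val{y_p}=\val{T_1}\cdots\val{T_p}$. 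Concatenating the per-subtree factorizations yields $\val{T}=\val{x_{1,1}}\cdots\val{x_{1,m_1}}\cdots\val{x_{p,1}}\cdots\val{x_{p,m_p}}$, and by the definition of the left-to-right order on the leaves of $T$ this concatenation is exactly $\val{x_1}\cdots\val{x_m}$.

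There is no real obstacle: the only subtlety is to unfold the two inner-node definitions once and observe that the left-to-right ordering of leaves in $T$ is inherited from the ordering in the subtrees, which is built into the fact that $T$ is ordered. The proof is essentially bookkeeping, which is why the paper states it as immediate.
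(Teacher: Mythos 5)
Your proof is correct and spells out the structural induction that the paper leaves implicit (the paper states the fact without proof, as immediate from the node-value definitions). One small caution: your parenthetical claim that each subtree $T_j$ is itself a valid \chain tree is not literally accurate in the case where $y_j$ is an operation leaf, since the paper requires operation leaves to have operation-node parents and a lone operation leaf would violate that constraint; but this is harmless, because the induction really proceeds on nodes of $T$, and for a leaf node $y_j$ the claimed identity $\val{y_j}=\val{y_j}$ is trivial without needing $T_j$ to qualify as a \chain tree. Rephrasing the inductive statement as ``for every node $x$ of $T$, $\val{x}$ is the left-to-right product of the values of the leaves in the subtree rooted at $x$'' sidesteps the issue entirely.
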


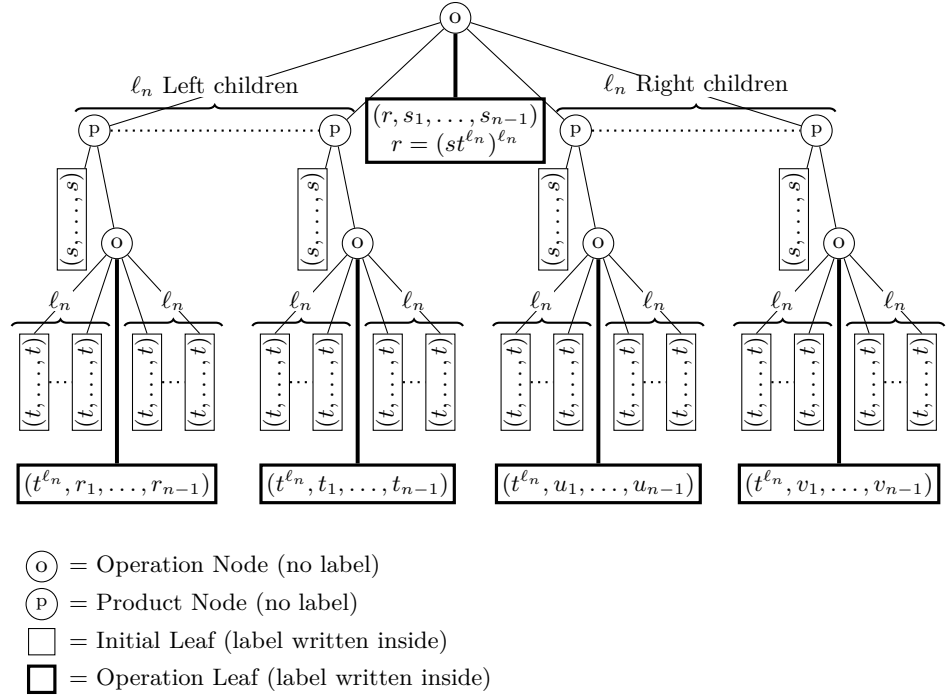
\begin{figure}[h]
  \begin{center}
    \begin{tikzpicture}
      \node[nod] (x) at (0.0,0.0) {o};

      \node[nod] (x1) at (-4.8,-1.5) {\scriptsize p};
      \node[nod] (x2) at (-1.6,-1.5) {\scriptsize p};

      \node[nod] (x3) at (1.6,-1.5) {\scriptsize p};
      \node[nod] (x4) at (4.8,-1.5) {\scriptsize p};

      \draw (x) to (x1);
      \draw (x) to (x2);
      \draw (x) to (x3);
      \draw (x) to (x4);
      \draw[thick,dotted] (x1) to (x2);
      \draw[thick,dotted] (x3) to (x4);

      \node[nor,align=center,very thick,fill=white] (la) at (0.0,-1.5)
      {\small $(r,s_1,\dots,s_{n-1})$\\$r = (st^{\ell_n})^{\ell_n}$};
      \draw[siar,-] (la) to (x);

      \draw[thick,decoration={brace,mirror},decorate] ($(x2.north
      east)+(0.1,0.1)$) to coordinate (l1)
      ($(x1.north west)+(-0.1,0.1)$);

      \node[anchor=south,inner sep=0pt,fill=white] at ($(l1)+(0.0,0.2)$) {$\ell_n$ Left children};

      \draw[thick,decoration={brace,mirror},decorate] ($(x4.north
      east)+(0.1,0.1)$) to coordinate (l2)
      ($(x3.north west)+(-0.1,0.1)$);

      \node[anchor=south,inner sep=0pt,fill=white] at ($(l2)+(0.0,0.2)$) {$\ell_n$ Right children};

      \begin{scope}[xshift=-4.8cm]

        \node[nol,anchor=east] (y1) at (-0.3,-2.0) {\small $(s,\dots,s)$};
        \node[nod] (y2) at (+0.3,-3.0) {o};

        \node[nol,anchor=east] (z1) at (-0.8,-4.2) {\small $(t,\dots,t)$};
        \node[nol,anchor=east] (z2) at (-0.1,-4.2) {\small $(t,\dots,t)$};
        \node[nol,anchor=east] (z3) at (+0.7,-4.2) {\small $(t,\dots,t)$};
        \node[nol,anchor=east] (z4) at (+1.4,-4.2) {\small $(t,\dots,t)$};

        \draw (x1) to (y1);
        \draw (x1) to (y2);

        \draw (y2) to (z1.east);
        \draw (y2) to (z2.east);
        \draw (y2) to (z3.east);
        \draw (y2) to (z4.east);

        \draw[thick,dotted] (z1) to (z2);
        \draw[thick,dotted] (z3) to (z4);

        \draw[thick,decoration={brace,mirror},decorate] ($(z2.south
        east)+(0.1,0.1)$) to coordinate (l1) ($(z1.north east)+(-0.1,0.1)$);

        \node[anchor=south,inner sep=0pt,fill=white] at ($(l1)+(0.0,0.2)$) {$\ell_n$};

        \draw[thick,decoration={brace,mirror},decorate] ($(z4.south
        east)+(0.1,0.1)$) to coordinate (l2)
        ($(z3.north east)+(-0.1,0.1)$);

        \node[anchor=south,inner sep=0pt,fill=white] at ($(l2)+(0.0,0.2)$) {$\ell_n$};

        \node[nor,align=center,very thick] (lb) at (0.3,-6.2) {\small $(t^{\ell_n},r_1,\dots,r_{n-1})$};

        \draw[siar,-] (lb) to (y2);

      \end{scope}

      \begin{scope}[xshift=-1.6cm]

        \node[nol,anchor=east] (y1) at (-0.3,-2.0) {\small $(s,\dots,s)$};
        \node[nod] (y2) at (+0.3,-3.0) {o};

        \node[nol,anchor=east] (z1) at (-0.8,-4.2) {\small $(t,\dots,t)$};
        \node[nol,anchor=east] (z2) at (-0.1,-4.2) {\small $(t,\dots,t)$};
        \node[nol,anchor=east] (z3) at (+0.7,-4.2) {\small $(t,\dots,t)$};
        \node[nol,anchor=east] (z4) at (+1.4,-4.2) {\small $(t,\dots,t)$};

        \draw (x2) to (y1);
        \draw (x2) to (y2);

        \draw (y2) to (z1.east);
        \draw (y2) to (z2.east);
        \draw (y2) to (z3.east);
        \draw (y2) to (z4.east);

        \draw[thick,dotted] (z1) to (z2);
        \draw[thick,dotted] (z3) to (z4);

        \draw[thick,decoration={brace,mirror},decorate] ($(z2.south
        east)+(0.1,0.1)$) to coordinate (l1) ($(z1.north east)+(-0.1,0.1)$);

        \node[anchor=south,inner sep=0pt,fill=white] at ($(l1)+(0.0,0.2)$) {$\ell_n$};

        \draw[thick,decoration={brace,mirror},decorate] ($(z4.south
        east)+(0.1,0.1)$) to coordinate (l2)
        ($(z3.north east)+(-0.1,0.1)$);

        \node[anchor=south,inner sep=0pt,fill=white] at ($(l2)+(0.0,0.2)$)
        {$\ell_n$};

        \node[nor,align=center,very thick] (lb) at (0.3,-6.2) {\small $(t^{\ell_n},t_1,\dots,t_{n-1})$};

        \draw[siar,-] (lb) to (y2);

      \end{scope}

      \begin{scope}[xshift=1.6cm]

        \node[nol,anchor=east] (y1) at (-0.3,-2.0) {\small $(s,\dots,s)$};
        \node[nod] (y2) at (+0.3,-3.0) {o};

        \node[nol,anchor=east] (z1) at (-0.8,-4.2) {\small $(t,\dots,t)$};
        \node[nol,anchor=east] (z2) at (-0.1,-4.2) {\small $(t,\dots,t)$};
        \node[nol,anchor=east] (z3) at (+0.7,-4.2) {\small $(t,\dots,t)$};
        \node[nol,anchor=east] (z4) at (+1.4,-4.2) {\small $(t,\dots,t)$};

        \draw (x3) to (y1);
        \draw (x3) to (y2);

        \draw (y2) to (z1.east);
        \draw (y2) to (z2.east);
        \draw (y2) to (z3.east);
        \draw (y2) to (z4.east);

        \draw[thick,dotted] (z1) to (z2);
        \draw[thick,dotted] (z3) to (z4);

        \draw[thick,decoration={brace,mirror},decorate] ($(z2.south
        east)+(0.1,0.1)$) to coordinate (l1) ($(z1.north east)+(-0.1,0.1)$);

        \node[anchor=south,inner sep=0pt,fill=white] at ($(l1)+(0.0,0.2)$) {$\ell_n$};

        \draw[thick,decoration={brace,mirror},decorate] ($(z4.south
        east)+(0.1,0.1)$) to coordinate (l2)
        ($(z3.north east)+(-0.1,0.1)$);

        \node[anchor=south,inner sep=0pt,fill=white] at ($(l2)+(0.0,0.2)$) {$\ell_n$};

        \node[nor,align=center,very thick] (lb) at (0.3,-6.2) {\small $(t^{\ell_n},u_1,\dots,u_{n-1})$};

        \draw[siar,-] (lb) to (y2);

      \end{scope}

      \begin{scope}[xshift=4.8cm]

        \node[nol,anchor=east] (y1) at (-0.3,-2.0) {\small $(s,\dots,s)$};
        \node[nod] (y2) at (+0.3,-3.0) {o};

        \node[nol,anchor=east] (z1) at (-0.8,-4.2) {\small $(t,\dots,t)$};
        \node[nol,anchor=east] (z2) at (-0.1,-4.2) {\small $(t,\dots,t)$};
        \node[nol,anchor=east] (z3) at (+0.7,-4.2) {\small $(t,\dots,t)$};
        \node[nol,anchor=east] (z4) at (+1.4,-4.2) {\small $(t,\dots,t)$};

        \draw (x4) to (y1);
        \draw (x4) to (y2);

        \draw (y2) to (z1.east);
        \draw (y2) to (z2.east);
        \draw (y2) to (z3.east);
        \draw (y2) to (z4.east);

        \draw[thick,dotted] (z1) to (z2);
        \draw[thick,dotted] (z3) to (z4);

        \draw[thick,decoration={brace,mirror},decorate] ($(z2.south
        east)+(0.1,0.1)$) to coordinate (l1) ($(z1.north east)+(-0.1,0.1)$);

        \node[anchor=south,inner sep=0pt,fill=white] at ($(l1)+(0.0,0.2)$) {$\ell_n$};

        \draw[thick,decoration={brace,mirror},decorate] ($(z4.south
        east)+(0.1,0.1)$) to coordinate (l2)
        ($(z3.north east)+(-0.1,0.1)$);

        \node[anchor=south,inner sep=0pt,fill=white] at ($(l2)+(0.0,0.2)$)
        {$\ell_n$};

        \node[nor,align=center,very thick] (lb) at (0.3,-6.2) {\small $(t^{\ell_n},v_1,\dots,v_{n-1})$};

        \draw[siar,-] (lb) to (y2);

      \end{scope}

      \node[anchor=west] at (-5.25,-7.3) {$=$ Operation Node (no label)};
      \node[anchor=west] at (-5.25,-7.8) {$=$ Product Node (no label)};
      \node[anchor=west] at (-5.25,-8.3) {$=$ Initial Leaf (label written inside)};
      \node[anchor=west] at (-5.25,-8.8) {$=$ Operation Leaf (label written inside)};

      \node[nod] at (-5.5,-7.3) {o};
      \node[nod] at (-5.5,-7.8) {\scriptsize p};
      \node[nol] at (-5.5,-8.3) {};
      \node[nol,very thick] at (-5.5,-8.8) {};
    \end{tikzpicture}
  \end{center}
  \caption{An example of \chain tree of level $n$}
  \label{fig:ctree}
\end{figure}

We denote by $\ct_n[\alpha,B]$ the set of all \chain trees of level $n$ and
alphabet $B$ associated to $\alpha$ and by $\ct[\alpha]$ the set of all \chain
trees associated to $\alpha$. If $\cs$ is a set of \chain trees, we define
$\val{\cs} = \{\val{T} \mid T \in \cs\}$. We now state ``correctness'' and
``completeness'' of \chain trees, i.e., a \chain is a \dchain iff it is the
value of some \chain tree. We prove this as a consequence of the validity of
our algorithm for computing \dchains, stated in
Proposition~\ref{prop:compu}. 


\begin{proposition} \label{prop:ctree}
  $\Cstwon[\alpha,B] = \val{\ct_n[\alpha,B]}$.
\end{proposition}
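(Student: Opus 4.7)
The plan is to deduce the two inclusions from Proposition~\ref{prop:compu}, which characterizes $\fCtwon[\alpha,B]$ (and hence $\Cstwon[\alpha,B]$ via singletons) as $\downclos Sat^{*}_n[B](C\mapsto\fI_n[C])$. The three constructs in the saturation procedure $Sat_n$ correspond precisely to the three kinds of internal constructs of a \chain tree: the initial seed $\fI_n$ corresponds to initial leaves, the product~\eqref{eq:mul} corresponds to product nodes, and the $\omega$-iteration~\eqref{eq:oper} corresponds to operation nodes with their central operation leaf. Both inclusions can therefore be carried out by structural/iteration-level inductions along this correspondence.

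For the soundness inclusion $\val{\ct_n[\alpha,B]}\subseteq\Cstwon[\alpha,B]$, I would induct on the structure of $T$, showing $\val{T}\in\Cstwon[\alpha,\content{T}]$. Initial and operation leaves are \dchains by definition, and product nodes are handled directly by Fact~\ref{fct:chaincomp}. For an operation node $x$, the context children values form by definition a compatible set $\Ts\in\fCtwon[\alpha,B]$ whose \chains share a common first component $t$; since $\ell_n=\omega(2^{M^n})$, the $\ell_n$-fold product on each side lies in $\Ts^{\ell_n}=\Ts^\omega$, and in particular $t^{\ell_n}$ is idempotent in $M$ (so multiplying with the central child's first component $t^{\ell_n}$ does not shift the first coordinate). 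The central child is itself a \dchain, so the argument that $\val{x}\in\Cstwon[\alpha,B]$ is then literally the soundness argument for Operation~\eqref{eq:oper} given in Appendix~\ref{app:algo}: compatibility provides, for each rank $k$, a single word $w$ with $\alpha(w)=t$ and $\content{w}=B$ simultaneously witnessing every context child, and the ``sandwich'' inequality of Lemma~\ref{lem:efcorrec} glues this word with witnesses of the central child.

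For the completeness inclusion $\Cstwon[\alpha,B]\subseteq\val{\ct_n[\alpha,B]}$, given $\bar s\in\Cstwon[\alpha,B]$, Proposition~\ref{prop:compu} delivers some $\Ts\in Sat^{*}_n[B](C\mapsto\fI_n[C])$ with $\bar s\in\Ts$, and I would induct on the smallest iteration index $j$ for which $\Ts\in Sat^j_n[B](C\mapsto\fI_n[C])$. The base case $\Ts\in\fI_n[B]$ gives $\bar s=(\alpha(w),\dots,\alpha(w))$ for some $w$ of alphabet $B$, hence a single initial leaf. If $\Ts\in\fM_B$, write $\Ts=\Ts_1\cdot\Ts_2$ with $\Ts_i\in Sat^{j-1}_n[B_i](\cdot)$ and $B_1\cup B_2=B$, decompose $\bar s=\bar s_1\cdot\bar s_2$ accordingly, and combine the two trees provided by induction under a product node. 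If $\Ts\in\fO_B$, then $\Ts=\Ts'^\omega\cdot(1_M,\Cstwolen{n-1}[\alpha,B])\cdot\Ts'^\omega$ for some $\Ts'\in Sat^{j-1}_n[B](\cdot)$, and $\bar s$ decomposes as $\bar r_1\cdots\bar r_{\ell_n}\cdot\bar v\cdot\bar r'_1\cdots\bar r'_{\ell_n}$ with $\bar r_i,\bar r'_i\in\Ts'$ and $\bar v=(1_M,c_1,\dots,c_{n-1})$. The $2\ell_n$ context factors are values of \chain trees by induction; their values sit inside the compatible set $\Ts'$, hence themselves form a compatible set; and $(t^{\ell_n},c_1,\dots,c_{n-1})$ is a \dchain by the same sandwich argument as above (using the idempotency of $t^{\ell_n}$), so it labels an operation leaf. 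A single operation node then assembles them into a \chain tree for $\bar s$.

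The main obstacle I expect lies in the operation node case for both directions, namely reconciling the label $(t^{\ell_n},s_1,\dots,s_{n-1})$ required on the central child of a \chain tree with the shape $(1_M,\Cstwolen{n-1}[\alpha,B])$ appearing in Operation~\eqref{eq:oper}. The bridge is the idempotency of $\Ts'^{\omega}$ in $2^{M^n}$, which forces $t^{\ell_n}\cdot t^{\ell_n}=t^{\ell_n}$ in $M$, so $(t^{\ell_n},s_1,\dots,s_{n-1})$ is a \dchain whenever $(s_1,\dots,s_{n-1})\in\Cstwolen{n-1}[\alpha,B]$ and the witness word $w^{\ell_n}$ can be sandwiched around the witnesses of the central components as in Lemma~\ref{lem:efcorrec}. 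Once this identification is made, both inductions proceed directly from the definitions and from Proposition~\ref{prop:compu}.
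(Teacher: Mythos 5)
The soundness inclusion $\val{\ct_n[\alpha,B]}\subseteq\Cstwon[\alpha,B]$ is correct in your write-up, but you make it harder than it is: by Fact~\ref{fct:value} the value of a \chain tree is the product of its leaf values, every leaf (initial or operation) is by definition labeled with a \dchain, and Fact~\ref{fct:chaincomp} then closes the argument in one line. The compatibility constraint on context children, the idempotency of $t^{\ell_n}$, and the sandwich Lemma~\ref{lem:efcorrec} are structural constraints needed for the \emph{other} direction, not for soundness, and the paper indeed dismisses this direction in a single sentence.

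The completeness direction has a genuine gap in the $\fO_B$ case. You decompose $\bar s=\bar r_1\cdots\bar r_{\ell_n}\cdot(1_M,c_1,\dots,c_{n-1})\cdot\bar r'_1\cdots\bar r'_{\ell_n}$ and propose to label the central operation leaf with $(t^{\ell_n},c_1,\dots,c_{n-1})$, claiming this is a \dchain for $\alpha,B$ ``by the same sandwich argument.'' It is not, in general. The sandwich inequality $w^{2h}\ksieq{2}w^{h}u_1w^{h}$ (Lemmas~\ref{lem:siprop} and~\ref{lem:efcorrec}) establishes that $(t^{2h},t^h c_1 t^h,\dots,t^h c_{n-1} t^h)$ is a \dchain, \emph{not} $(t^{h},c_1,\dots,c_{n-1})$; there is no reason a word of image $t^{\ell_n}$ should be $\ksieq{2}$-dominated by a word of image $c_1$ without padding by copies of $w$, and \siwd can certainly separate $w^{\ell_n}$ from an arbitrary word $u_1$ sharing its alphabet. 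Idempotency of $t^{\ell_n}$ does not rescue this, since the idempotent only cancels in the first coordinate, not in the coordinates $c_j$ which would need to absorb the flanking $t^{\ell_n}$ factors. The paper sidesteps the issue by never using $(t^{\ell_n},c_1,\dots,c_{n-1})$ as a leaf label: it uses the idempotency identity $\Ts=\Ss^{\ell_n}\cdot\Ts\cdot\Ss^{\ell_n}$ to re-decompose $\bar s$ so that the central factor is itself an element of $\Ts$, hence a genuine \dchain with first component $s^{\ell_n}$, and then proves a stronger, set-level statement (Lemma~\ref{lem:ctrees}) in terms of $\val{id(T)}$ — the values of \emph{all} \chain trees compatible with $T$, i.e., differing only in the operation leaf labels — so that the downset closure in Proposition~\ref{prop:compu} finishes the argument. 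To repair your proof, you should replace the central label by $\bar s'$ coming from the re-decomposition $\bar s=\bar q_1\cdots\bar q_{\ell_n}\cdot\bar s'\cdot\bar q'_1\cdots\bar q'_{\ell_n}$ with $\bar q_i,\bar q'_i\in\Ts'$ and $\bar s'\in\Ts$, rather than try to force $(t^{\ell_n},c_1,\dots,c_{n-1})$ into the leaf.
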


\begin{proof}
  That $\val{\ct_n[\alpha,B]} \subseteq \Cstwon[\alpha,B]$ is immediate
  by definition and Fact~\ref{fct:chaincomp}. We concentrate on the other
  inclusion. Since Proposition~\ref{prop:compu} deals with sets of
  compatible \dchains rather than just \dchains, we prove a slightly
  stronger result. Two \chain trees are said \emph{compatible} if they
  have the same structure, the same alphabet and differ only by the
  labels of their operation leaves. For all $T \in \ct[\alpha]$, we set
  $id(T) \subseteq \ct[\alpha]$  as the set of all \chain trees that are
  compatible with $T$.

  \begin{lemma} \label{lem:ctrees}
    Let $B \subseteq A$. Then $$Sat^*_n[B](C
    \mapsto \fI_n[C]) ~~\subseteq~~ \downclos \{\val{id(T)} \mid \content{T} = B\}$$
  \end{lemma}

  By Proposition~\ref{prop:compu}, if $\bar{s}$ is a \dchain of length
  $n$ for $\alpha,B$, there exists $\Ss \in Sat^*_n[B](C \mapsto
  \fI_n[C])$ such that $\bar{s} \in \Ss$. Therefore the inclusion
  $\Cstwon[\alpha,B] \subseteq \val{\ct_n[\alpha,B]}$ is an immediate
  consequence of Lemma~\ref{lem:ctrees}. It remains to prove
  Lemma~\ref{lem:ctrees}.

  Let $\Ts \in Sat^*_n[B](C \mapsto \fI_n[C])$. We
  need to construct $T \in \ct[\alpha]$ such that $\Ts \subseteq
  \val{id(T)}$. By definition, $\Ts \in Sat^j_n[B](C \mapsto \fI_n[C])$
  for some $j \in \nat$. We proceed by induction on $j$. Assume first
  that $j=0$. Then $\Ts = \{(s,\dots,s)\} \in \fI_n[B]$. By definition
  this means that $\Ts = \{\val{T}\}$ where $T$ is the \chain tree
  composed of a single initial leaf with label $(s,\dots,s)$ and
  alphabet $B$. Assume now that $j \geq 1$. For all $D \subseteq A$, we
  set $\fT_D = Sat^{j-1}_n[D](C \mapsto \fI_n[C])$. By definition, we
  have $\Ts \in \fT_B \cup \fM_B \cup \fO_B$ with
  \begin{eqnarray*}
    \fM_B & = & \cup_{C \cup D = B} (\fT_C \cdot \fT_D) \\
    \fO_B & = & \{\Ss^\omega \cdot (1_M,\Cstwolen{n-1}[\alpha,B]) \cdot
    \Ss^{\omega} \mid \Ss \in \fT_B\}
  \end{eqnarray*}
  If $\Ts \in \fT_B$, the result is immediate by induction
  hypothesis. Assume now that $\Ts \in \fM_B$. By definition, this means
  that there exist $C,D$ such that $C \cup D = B$ and $\Ts_C,\Ts_D$ in
  $\fT_C,\fT_D$ such that $\Ts = \Ts_C \cdot \Ts_B$. Using our induction
  hypothesis, we get $T_C,T_D$ such that $\Ts_C \subseteq \val{id(T_C)}$
  and $\Ts_D \subseteq  \val{id(T_D)}$. Consider $T$ the \chain tree
  whose topmost node is a product with $T_C,T_D$ as children. It is
  immediate by definition that $\Ts \subseteq \val{id(T_C)} \cdot
  \val{id(T_D)} = \val{id(T)}$.

  It remains to treat the case when $\Ts \in \fO_B$. By definition, we
  get $\Ss \in \fT_B$ such that $\Ts = \Ss^\omega \cdot (1_M,
  \Cstwolen{n-1}[\alpha,B]) \cdot \Ss^{\omega}$. Note that since $\Ss,\Ts
  \in Sat^*_n[B](C \mapsto \fI_n[C])$, by Proposition~\ref{prop:compu},
  $\Ss,\Ts \in \fCtwolen{n}[\alpha,B]$. We denote by $s$ the first element
  common to all \chains in $\Ss$. Note that since $\ell_n =
  \omega(2^{M^n})$, the first element common to all \chains in \Ts is
  $s^{\ell_n}$. Set $\Rs_{s^{\ell_n}}$ as the set of all \dchains of
  length $n$ for $\alpha,B$ that have $s^{\ell_n}$ as first element. By
  definition $\Ts \subseteq \Rs_{s^{\ell_n}}$. Moreover,
  \[
  \Ts = \Ss^{\ell_n} \cdot \Ts \cdot \Ss^{\ell_n} \subseteq \Ss^{\ell_n} \cdot
  \Rs_{s^{\ell_n}} \cdot \Ss^{\ell_n}
  \]
  By induction hypothesis there exists a \chain tree $T_{\Ss}$ of
  alphabet $B$ such that $\Ss \subseteq \val{id(T_{\Ss})}$. Let $T$ be
  the \chain tree whose topmost node is an operation node whose context
  children are all copies of $T_{\Ss}$ and whose central child is the
  operation leaf labeled with some arbitrary chosen \dchain in
  $\Rs_{s^{\ell_n}}$. Observe that by definition, 
  $\val{id(T)} = (\val{id(T_{\Ss})})^{\ell_n} \cdot \Rs_{s^{\ell_n}} \cdot
  (\val{id(T_{\Ss})})^{\ell_n}$. Therefore, since $\Ss \subseteq
  \val{id(T_{\Ss})}$, we have $\Ts \subseteq \val{id(T)}$ which
  terminates the proof. Note that the tree we obtained is particular: all
  subtrees rooted at context children of an operation node are identical. \qed
\end{proof}

\medskip
\noindent
{\bf Alternation and Recursive Alternation of a \Chain Tree.} The
\emph{alternation} of a \chain tree is the alternation of its
value. We say that $\ct[\alpha]$ has \emph{unbounded alternation} if  
the set $\val{\ct[\alpha]}$ has unbounded alternation. Note that by
Proposition~\ref{prop:ctree}, $\Cstwo[\alpha]$ has unbounded
alternation iff $\ct[\alpha]$ has unbounded alternation.

In the proof we will be interested in another property of \chain 
trees: \emph{recursive alternation}. Recursive alternation corresponds
to the maximal alternation of labels of operation leaves in the tree.
More precisely, if $T$ is a \chain tree, its \emph{recursive
  alternation} is the largest natural $j$ such that there exists an
\emph{operation leaf} in $T$ whose label has alternation $j$. An
important idea in the proof will be to separate the case when we can
find a set of \chain trees with unbounded alternation but bounded
recursive alternation from the converse one. However, in order to make
this work, we will need to add one more condition to our
trees. Intuitively, we need to know that if a tree has high recursive
alternation, this is necessary, i.e, the tree cannot be modified into 
a tree that has low alternation while keeping the same value. This is
what we do with local optimality.

\subsection{Locally Optimal \Chain Trees}

For all $n,B$, we define a strict subset of $\ct_n[\alpha,B]$ called 
the set of \emph{Locally Optimal \Chain Trees}. We then prove that
we can assume without loss of generality that all \chain trees we
consider are locally optimal.

We first define local optimality as a property of a single node $x$ 
in a \chain tree $T$. We will generalize the notion to a whole tree
by saying that it is locally optimal if and only if all its
\emph{operation leaves} are locally optimal. Given a node $x$, local
optimality of $x$ depends on two parameters of $x$: its value
$\val{x}$ and a new parameter called its \emph{context value},
$\cval{x}$, that we define now. 

\medskip
\noindent
{\bf Context Value of a Node.} Let $T$ be a \chain tree of level $n$
and let $x_1,\dots,x_m$ be the leaves of $T$ sorted in prefix
order. Recall that by Fact~\ref{fct:value}, $\val{T} = \val{x_1}
\cdots \val{x_m}$. To every node~$x$ of $T$, we associate a
pair $\cval{x} \in (\Cstwon[\alpha])^2$ called the \emph{context 
  value} of $x$. Set $x_{i},\dots,x_j$ as the leaves of the subtree
rooted at $x$ (in prefix order). We set $\cval{x} = (\val{x_1}
\cdots \val{x_{i-1}},\val{x_{j+1}} \cdots \val{x_m})$. Note that since
for all $i$, $\val{x_i} \in \Cstwon[\alpha]$, $\cval{x}$ is indeed a
pair in $(\Cstwon[\alpha])^2$. By definition and Fact~\ref{fct:value}
one can verify the two following facts:


\begin{fact} \label{fct:value2}
  Let $x$ be a node in a \chain tree $T$ and set $\cval{x}
  =(\bar{s},\bar{s}')$. Then $\val{T} = \bar{s} \cdot \val{x}
  \cdot \bar{s}'$.
\end{fact}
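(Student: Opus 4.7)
The plan is to reduce Fact~\ref{fct:value2} to Fact~\ref{fct:value} by observing that the latter applies not only to the whole tree $T$ but to any subtree of $T$. Indeed, it is immediate from the inductive definition of the node labels that a subtree rooted at a node of $T$ is itself a \chain tree, since the constraints on product nodes, on operation nodes and on operation/initial leaves are local (they involve only a node and its children).

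First I would spell out this observation as a preliminary step: for any node $y$ of $T$ whose subtree has leaves $y_1, \dots, y_r$ in prefix order, one has $\val{y} = \val{y_1} \cdots \val{y_r}$. This is proved by a straightforward induction on the height of the subtree, using the definition of $\val{\cdot}$ at leaves (trivial base case) and the fact that for both product nodes and operation nodes the value of the node is the product of the values of its children, to which the induction hypothesis is applied.

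Now fix the node $x$ of the statement, let $x_1, \dots, x_m$ be the leaves of $T$ in prefix order, and let $x_i, \dots, x_j$ be those lying in the subtree rooted at $x$. By Fact~\ref{fct:value} applied to $T$, we have $\val{T} = \val{x_1} \cdots \val{x_m}$, which splits as
\[
\val{T} = (\val{x_1}\cdots\val{x_{i-1}}) \cdot (\val{x_i} \cdots \val{x_j}) \cdot (\val{x_{j+1}}\cdots\val{x_m}).
\]
By the preliminary step applied to $x$, the middle factor equals $\val{x}$, and by the definition of $\cval{x}$ the outer factors are precisely $\bar{s}$ and $\bar{s}'$. This yields the claimed equality. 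No step is an obstacle here; the fact is essentially an unfolding of definitions and the only mild point is to notice that Fact~\ref{fct:value} transfers to subtrees.
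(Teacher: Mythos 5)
Your proof is correct and takes the same route the paper intends: the paper states Fact~\ref{fct:value2} as an immediate consequence "by definition and Fact~\ref{fct:value}," and your write-up is exactly the careful unfolding of that — noting that $\val{x}$ is the product of the leaf values in the subtree rooted at $x$ (by the same induction as Fact~\ref{fct:value}), then splitting the full leaf product at positions $i$ and $j$ and matching the outer blocks with the definition of $\cval{x}$.
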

\begin{fact} \label{fct:value3}
  Let $x$ be an inner node in a \chain tree $T$ and set $\cval{x} =
  (\bar{s},\bar{s}')$.  Set $z_1,\dots,z_k$ as the children of $x$ with
  context values $\cval{z_i} = (\bar{q_i},\bar{q_i}')$. Then, for all
  $i$, $\bar{q_i} =  \bar{s} \cdot \val{z_1} \cdots \val{z_{i-1}}$ and
  $\bar{q_i}' = \val{z_{i+1}} \cdots \val{x_{k}} \cdot \bar{s}'$.
\end{fact}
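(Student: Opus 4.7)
The plan is to unwind the definition of context value together with the fact, stated in Fact~\ref{fct:value}, that the value of any subtree equals the concatenation of the values of its leaves read in prefix order. First I would label the leaves of the whole tree $T$ in prefix order as $x_1,\dots,x_m$ and identify the contiguous block of leaves lying under the subtree rooted at $x$, say $x_i,\dots,x_j$. Since prefix order respects the left-to-right order of children, the subtree rooted at each child $z_\ell$ occupies a contiguous sub-block $x_{i_\ell},\dots,x_{j_\ell}$ of $x_i,\dots,x_j$, and these sub-blocks satisfy $i_1 = i$, $j_k = j$, and $j_\ell + 1 = i_{\ell+1}$ for $\ell < k$. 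Applying Fact~\ref{fct:value} to each subtree rooted at $z_\ell$ gives $\val{z_\ell} = \val{x_{i_\ell}} \cdots \val{x_{j_\ell}}$.

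Once this indexing is in place the fact reduces to plugging into the definition of $\cval{z_\ell}$. By definition, $\bar{q_\ell} = \val{x_1}\cdots \val{x_{i_\ell - 1}}$, and I would split this product at position $i-1$ into $\bigl(\val{x_1}\cdots \val{x_{i-1}}\bigr)\cdot\bigl(\val{x_i}\cdots \val{x_{i_\ell - 1}}\bigr)$. The left factor is exactly $\bar{s}$ by the definition of $\cval{x}$, and the right factor decomposes along the boundaries $i_1 < i_2 < \cdots < i_\ell$ into the $\ell-1$ contiguous blocks associated to $z_1,\dots,z_{\ell-1}$. Each such block collapses to $\val{z_r}$ by the observation above, yielding $\bar{q_\ell} = \bar{s}\cdot \val{z_1}\cdots \val{z_{\ell-1}}$. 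The formula for $\bar{q_\ell}'$ is obtained symmetrically, splitting $\val{x_{j_\ell+1}}\cdots \val{x_m}$ at position $j+1$ to extract $\bar{s}' = \val{x_{j+1}}\cdots \val{x_m}$ on the right.

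There is no genuine obstacle here: the statement is a bookkeeping consequence of the definition of prefix order and of the additive behavior of $\val{\cdot}$ on subtrees. The only care required is with the boundary cases $\ell = 1$ and $\ell = k$, where one of the two middle products is empty and the identities degenerate to $\bar{q_1} = \bar{s}$ and $\bar{q_k}' = \bar{s}'$ respectively. I would therefore expect the proof to be a short paragraph that sets up the index notation, invokes Fact~\ref{fct:value} on each $z_\ell$, and performs the two symmetric splittings.
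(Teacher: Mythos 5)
Your proof is correct and is precisely the verification that the paper leaves implicit: the paper simply remarks that Facts~\ref{fct:value2} and~\ref{fct:value3} follow ``by definition and Fact~\ref{fct:value}'' and gives no further argument. Your indexing of the contiguous leaf sub-blocks under each child, the application of Fact~\ref{fct:value} to collapse each block to $\val{z_r}$, and the two symmetric splittings of the leaf product are exactly the bookkeeping that remark is hiding.
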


In many cases, we will work with context values that are constant,
i.e. $\cval{x} = ((s,\dots,s),(s',\dots,s'))$. In these cases, if
$(t_1,\dots,t_n)$ is a \chain, it will be convenient to simply write
$s \cdot (t_1,\dots,t_n) \cdot s'$ for $(s,\dots,s) \cdot (t_1,
\dots,t_n) \cdot (s',\dots,s')$.

\medskip
\noindent
{\bf Local Optimality.} Set $(s,s') \in M^2$ and $T$ a \chain
tree. Let $x$ be any node in $T$, $(t_1,\cdots,t_{n}) = \val{x}$
and $((s_1,\cdots,s_{n}),(s'_1,\cdots,s'_{n})) = \cval{x}$. We say
that $x$ is \emph{locally optimal for $(s,s')$} if for all $i < n$
such that $t_i \neq t_{i+1}$ the following condition holds:
\[
s \cdot s_{i+1} \cdot t_{i} \cdot s'_{i+1} \cdot s' \neq s \cdot
s_{i+1} \cdot t_{i+1} \cdot s'_{i+1} \cdot s'
\]
Intuitively this means that for all $i$, changing $t_i$ to $t_{i+1}$
in the value of $x$ is necessary to get alternation at position $i$
in the value of the tree (see Fact~\ref{fct:value2}). We say that a
\chain tree $T$ is \emph{locally optimal for} $(s,s')$ if all its
{\bf operation leaves} are locally optimal for $(s,s')$. We say that
$T$ is locally optimal iff it is locally optimal for $(1_M,1_M)$.
This means that locally optimality of a \chain tree only depends on
the context values and labels of operation leaves in the tree. The
following fact is immediate from the definitions:

\begin{fact} \label{fct:optimal}
  Let $(s,s')\in M^2$. 
  Assume that $T$ is locally optimal for
  $(s,s')$. Then $T$ is locally optimal (i.e. locally optimal for
  $(1_M,1_M)$). 
\end{fact}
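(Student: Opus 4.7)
The plan is to prove this by contrapositive: I will show that if $T$ fails to be locally optimal (for $(1_M,1_M)$), then it also fails to be locally optimal for $(s,s')$. Since the failure of local optimality is a property witnessed at a single operation leaf and a single index $i$, it suffices to argue at that level.

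More concretely, suppose some operation leaf $x$ of $T$ is not locally optimal for $(1_M,1_M)$. Writing $\val{x}=(t_1,\dots,t_n)$ and $\cval{x}=((s_1,\dots,s_n),(s'_1,\dots,s'_n))$, this means there is some $i<n$ with $t_i\neq t_{i+1}$ but
\[
s_{i+1}\cdot t_i\cdot s'_{i+1} \;=\; s_{i+1}\cdot t_{i+1}\cdot s'_{i+1}.
\]
Since equal elements of $M$ remain equal after left- and right-multiplication by fixed elements, I pre-multiply by $s$ and post-multiply by $s'$ to obtain
\[
s\cdot s_{i+1}\cdot t_i\cdot s'_{i+1}\cdot s' \;=\; s\cdot s_{i+1}\cdot t_{i+1}\cdot s'_{i+1}\cdot s',
\]
which is exactly the failure of local optimality of $x$ for $(s,s')$. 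Since $x$ is the same operation leaf and $t_i\neq t_{i+1}$ is the same disequality, this contradicts the hypothesis that $T$ is locally optimal for $(s,s')$.

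In short, local optimality is a collection of disequalities in $M$, and the disequalities required for $(1_M,1_M)$ are logically implied by the corresponding disequalities for $(s,s')$ via the obvious monotonicity of equality under multiplication. There is no real obstacle here; the statement is essentially a sanity check that the definition of a locally optimal \chain tree is the weakest possible instance of the parametrized notion, and no induction on the tree structure or appeal to the algorithmic machinery of Section~\ref{sec:comput} is needed.
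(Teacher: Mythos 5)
Your proof is correct and is precisely the one-line argument the paper has in mind when it says the fact is ``immediate from the definitions'': the contrapositive reduces to the observation that an equality in $M$ is preserved by left-multiplication by $s$ and right-multiplication by $s'$. Nothing further is needed.
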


We finish with our main proposition, which states that for any \chain tree,
there exists a locally optimal one with the same value. In particular,
this means that we will always be able to assume that our \chain trees
are locally optimal.

\begin{proposition} \label{prop:optimal}
  Let $T \in \ct_n[\alpha,B]$ and $(s,s') \in M^2$. There exists $T' \in
  \ct_n[\alpha,B]$ which is locally optimal for $(s,s')$ and such that
  $s \cdot \val{T} \cdot s' = s \cdot \val{T'} \cdot s'$.
\end{proposition}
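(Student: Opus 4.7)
The plan is to proceed by induction on the measure $\mu(T) = \sum_{y} \mathrm{alt}(\val{y})$, where the sum ranges over the operation leaves $y$ of $T$. In the base case $\mu(T) = 0$, every operation leaf carries a constant \dchain, so no position is alternating and \emph{a fortiori} none is bad; $T$ is then vacuously locally optimal for $(s, s')$.

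For the inductive step, assume $T$ is not locally optimal for $(s, s')$. Fix an operation leaf $y$ with $\val{y} = (t_1, \dots, t_n)$, $\cval{y} = ((s_1, \dots, s_n), (s'_1, \dots, s'_n))$, and a bad position $i$: $t_i \neq t_{i+1}$ and $s \cdot s_{i+1} t_i s'_{i+1} \cdot s' = s \cdot s_{i+1} t_{i+1} s'_{i+1} \cdot s'$. The fundamental move is to replace the label of $y$ by the tuple $(t_1, \dots, t_i, t_i, t_{i+2}, \dots, t_n)$. This is still a \dchain: if $w_1 \ksieq{2} \cdots \ksieq{2} w_n$ are witnesses for $\val{y}$, then $(w_1, \dots, w_i, w_i, w_{i+2}, \dots, w_n)$ witness the new tuple by transitivity of $\ksieq{2}$. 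The first component $t_1$ is unchanged, so every first-component constraint at ancestral operation nodes is preserved, yielding a valid \chain tree $T'$. Coordinatewise, $s \cdot \val{T'} \cdot s' = s \cdot \val{T} \cdot s'$ because the only coordinate that changes is the $(i+1)$-th, and the bad-position hypothesis precisely equates the two values there after $(s, s')$-multiplication. Since only the label of $y$ is modified, the change in $\mu$ equals the change in $\mathrm{alt}(\val{y})$.

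A single such substitution strictly lowers $\mathrm{alt}(\val{y})$ by $1$ when $t_{i+1} \neq t_{i+2}$. The delicate case is the \emph{plateau} scenario $t_{i+1} = t_{i+2}$, in which the alternation at index $i$ is merely shifted to index $i+1$ and $\mu$ does not decrease. To force a strict decrease one chooses $i$ to be the \emph{rightmost} bad position in $y$ and iterates the substitution rightwards: at each step one shifts at the new alternating index $i+k$, provided it remains bad in the current tuple. Each successive iteration requires the plateau to extend one step further ($t_{i+k} = t_{i+k+1}$); hence after at most $n - i$ iterations one either reaches an index where $t_{i+k} \neq t_{i+k+1}$ (strict decrease of $\mathrm{alt}(\val{y})$), or reaches index $n-1$, where the shift is automatically in the decreasing case as there is no position $n+1$. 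In either case $\mu$ strictly decreases and the induction hypothesis furnishes the desired $T'$.

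The principal obstacle is the remaining scenario in which the cascade halts because an intermediate shifted position is no longer bad: the substitution has altered $\val{y}$ without reducing $\mathrm{alt}(\val{y})$, and since $\cval{z}$ depends on $\val{y}$ for $z \neq y$, other operation leaves may have acquired new bad positions. One resolves this either by refining the inductive measure lexicographically (for instance, $(\mu(T), -\sum_y \mathrm{rightmostBad}(y))$, verified to strictly decrease under every modification step) or, more cleanly, by selecting $T'$ as a minimum-$\mu$ labeling of the fixed shape of $T$ satisfying $s \cdot \val{T'} \cdot s' = s \cdot \val{T} \cdot s'$. Such a minimum exists because the set of valid labelings is finite (each label lies in the finite set $\Cstwon[\alpha, B]$), and any residual bad position at the minimum would, through the iterated substitution above, produce a labeling with strictly smaller $\mu$ --- a contradiction. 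The combinatorial core of the argument lies in carefully exploiting the equivalence structure $\equiv_{(s_j,s'_j),(s,s')}$ on $M$ induced by the context together with the \dchain witness structure to control these cascades; this is the technically delicate part of the proof.
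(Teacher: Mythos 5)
Your substitution move is exactly the paper's: at a bad position $i$ in an operation leaf $y$, replace the label $(t_1,\dots,t_i,t_{i+1},\dots,t_n)$ by $(t_1,\dots,t_i,t_i,t_{i+2},\dots,t_n)$, which preserves the \dchain property, preserves the value after $(s,s')$-multiplication by the defining property of a bad position, and leaves $\cval{y}$ and all first-component constraints intact. Where your argument breaks down is the termination measure: you use the total alternation $\mu(T)=\sum_y \mathrm{alt}(\val{y})$, correctly observe that in the plateau case ($t_{i+1}=t_{i+2}$) the substitution merely shifts the alternation one position to the right without decreasing $\mu$, and then try to patch this with a rightward cascade or a minimum-$\mu$ argument. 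Neither patch actually closes the gap: the cascade can halt at a position that is alternating but no longer \emph{bad} (nothing in the bad-position identity at $i$ implies the corresponding identity at $i+1$ with the contexts $s_{i+2},s'_{i+2}$), and the minimum-$\mu$ argument presupposes precisely the claim under dispute, namely that a bad position always yields a $\mu$-decreasing move. You also flag a second worry---that changing $\val{y}$ perturbs $\cval{z}$ for other leaves $z$ and may create new bad positions there---but this one is harmless for the right measure, since the measure below does not mention badness at all.

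The paper sidesteps the whole difficulty with a finer measure. For each position $i<n$, the \emph{$i$-alternation} of $T$ is the number of operation leaves whose label has $t_i\neq t_{i+1}$, and the \emph{index} of $T$ is the tuple $(a_1,\dots,a_{n-1})$ of these counts, ordered lexicographically. A single substitution at leaf $y$ and bad position $i$ leaves $a_j$ unchanged for every $j<i$ (positions $1,\dots,i$ of $\val{y}$ are untouched and no other leaf label changes), and decreases $a_i$ by exactly one (leaf $y$ stops contributing there). Whatever happens at positions $j>i$---including the plateau shift and any newly created bad positions at other leaves---is irrelevant to the lexicographic comparison. Since the index lives in the well-ordered set $\nat^{n-1}$ under lex order, iterating the substitution terminates, and at the fixed point the tree is locally optimal for $(s,s')$. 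So the fix is not to iterate the rewrite step cleverly, but to choose a measure that already strictly decreases after one step; replacing your scalar $\mu$ by this positional lexicographic index is the missing idea.
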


\begin{proof}
  Set $T \in \ct_n[\alpha,B]$, we explain how to construct $T'$. For all
  $i < n$, we define the $i$-alternation of $T$ as the number of
  operation leaves $x$ in $T$ such that $\val{x} = (t_1,\cdots,t_{n})$
  with $t_i \neq t_{i+1}$. Finally, we define the \emph{index of $T$} as
  the sequence of its $i$-alternations ordered with increasing $i$.

  We can now describe the construction. Assume that $T$ is not locally
  optimal for $(s,s')$. We explain how to construct a second \chain tree
  $T'$ such that
  \begin{enumerate}
  \item $s \cdot \val{T} \cdot s' = s \cdot \val{T'} \cdot s'$.
  \item $T'$ has strictly smaller index than $T$.
  \end{enumerate}
  It then suffices to apply this operation recursively to $T$ until we
  get the desired tree. We now explain the construction. Since $T$ is
  not locally optimal for $(s,s')$, there exists an operation leaf $x$
  of $T$ that is not locally optimal for $(s,s')$. Let
  $(t_1,\dots,t_{n}) = \val{x}$ and
  $((s_1,\dots,s_{n}),(s'_1,\dots,s'_{n})) = \cval{x}$. By choice of
  $x$, there exists $i < n$ such that $t_i \neq t_{i+1}$ and
  $ss_{i+1}t_{i}s'_{i+1}s' = ss_{i+1}t_{i+1}s'_{i+1}s'$. We set $T'$ as 
  the \chain tree obtained from $T$ by replacing the label of $x$ with
  $(t_1,\dots,t_i,t_i,t_{i+2},\dots,t_n)$. By choice of $i$ and
  Fact~\ref{fct:value2}, it is immediate that $s \cdot \val{T} \cdot s' =
  s \cdot \val{T'} \cdot s'$. Moreover, for any $j < i$, $T,T'$ have the
  same $j$-alternation and $T'$ has by definition strictly smaller
  $i$-alternation than $T$. It follows that $T'$ has strictly smaller
  index than $T$ which terminates the proof. \qed
\end{proof} 

\section{Proof of Theorem~\ref{thm:caracbc}: Characterization of \bswd}
\label{app:bc}
This appendix is devoted to the proof of Theorem~\ref{thm:caracbc},
i.e., the decidable characterization of \bswd. We actually prove a
more general theorem that includes a second characterization in terms
of alternation of $\Cstwo[\alpha]$, which will be needed as an
intermediary step when proving the difficult 'if' direction of
Theorem~\ref{thm:caracbc}.

\begin{theorem} \label{thm:caracbc2}
  Let $L$ be a regular language and let $\alpha: A^* \rightarrow M$ be its
  syntactic morphism. The three following properties are equivalent:

  \begin{enumerate}
  \item $L$ is definable in \bswd.
  \item $\Cstwo[\alpha]$ has bounded alternation.
  \item $M$ satisfies the following equations:

    \[
    \begin{array}{rcl}
      s_1^{\omega}s_3^{\omega} & = & s_1^{\omega}s_2s_3^{\omega} \\
      s_3^{\omega}s_1^{\omega} & = & s_3^{\omega}s_2s_1^{\omega}
    \end{array} \quad \text{for $(s_1,s_2,s_3) \in \Cstwo[\alpha]$} \tag{\ref{eq:bcs1}}
    \]

    \[
    \begin{array}{c}
      (s_2t_2)^{\omega}s_1(t'_2s'_2)^{\omega} = (s_2t_2)^{\omega}s_2t_1s'_2(t'_2s'_2)^{\omega} \\
      \text{for $(s_1,s_2,s'_2)$ and $(t_1,t_2,t'_2)$ $B$-schemas for some
        $B \subseteq A$}
    \end{array}\tag{\ref{eq:bcs2}}
    \]
  \end{enumerate}
\end{theorem}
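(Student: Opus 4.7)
The plan is to establish the three-way equivalence via the cycle $(1)\Rightarrow(3)\Rightarrow(2)\Rightarrow(1)$, the hard step being $(3)\Rightarrow(2)$.

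For $(1)\Rightarrow(3)$, I would argue exactly as in the proof of Proposition~\ref{prop:signec} above. Suppose $L$ is defined by a \bswd-formula of quantifier rank $k$, so that $\kbceq{2}$ saturates $L$. Given a \dchain $(s_1,s_2,s_3)$, I unfold its definition to produce witnesses $w_1 \sieq{k'}{2} w_2 \sieq{k'}{2} w_3$ with $\alpha(w_j)=s_j$, for $k'$ much larger than $k$, and then apply Lemma~\ref{lem:siprop} together with Lemma~\ref{lem:efconcat} to show that flanking by the idempotents $s_1^\omega$ and $s_3^\omega$ collapses $s_1^\omega s_2 s_3^\omega$ and $s_1^\omega s_3^\omega$ into the same $\kbceq{2}$-class, forcing equality in $M_L$. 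Equation~\eqref{eq:bcs2} is handled analogously, using the alternation property of $B$-schemas stated right before Theorem~\ref{thm:caracbc}: in the appropriate $(s_2 t_2)^\omega{\cdot}(t'_2 s'_2)^\omega$-context, any factor $u\in B^*$ is absorbed, so the two sides of the equation again land in a common $\kbceq{2}$-class.

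For $(2)\Rightarrow(1)$, I would invoke Theorem~\ref{thm:sep}. Bounded alternation $\ell$ of $\Cstwo[\alpha]$ stratifies the syntactic monoid: whenever a pair $(s,t)\in M^2$ is not a \dchain, the languages $\alpha^{-1}(s)$ and $\alpha^{-1}(t)$ admit a \siwd-separator $L_{s,t}$. A finite boolean combination of such separators should capture $L=\alpha^{-1}(F)$ exactly, because bounded alternation restricts the \bswd-equivalence $\kbceq{2}$ at sufficiently large $k$ to the separation-based partition; without bounded alternation one could not close the combination into a finite formula. The technical work is to pick the correct combination, which amounts to iterating the separation procedure along the at most $\ell$ layers of the \dchain preorder.

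For $(3)\Rightarrow(2)$, the main obstacle, I would proceed by contrapositive. Assuming $\Cstwo[\alpha]$ has unbounded alternation, Proposition~\ref{prop:ctree} yields an infinite family of \chain trees whose values have arbitrarily large alternation, and Proposition~\ref{prop:optimal} lets me assume each such tree is locally optimal. The crux is a dichotomy on the \emph{recursive} alternation of the family. In the \textbf{width case}, where the recursive alternation stays bounded but the total alternation diverges, the extra alternation necessarily arises from context children of operation nodes; local optimality forces these children to encode compatible sets giving rise to $B$-schemas, and a pigeonhole argument extracts two $B$-schemas witnessing a failure of~\eqref{eq:bcs2}. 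In the \textbf{depth case}, where some operation-leaf label already has high alternation, the argument recurses into that label and ultimately produces a \dchain of length $3$ that contradicts~\eqref{eq:bcs1}. These two cases are the propositions proved in Appendices~\ref{app:depth} and~\ref{app:width}. The principal technical difficulty throughout is maintaining control of the context value $\cval{x}$ of the distinguished operation leaf as one descends into a subtree, which is precisely where local optimality of \chain trees becomes indispensable.
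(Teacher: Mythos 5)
Your cycle $(1)\Rightarrow(3)\Rightarrow(2)\Rightarrow(1)$ matches the paper's, and your sketch of $(1)\Rightarrow(3)$ is essentially correct. However, there are two genuine problems.

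For $(2)\Rightarrow(1)$, your separation-based plan is not what the paper does, and as stated it has a gap. The paper's argument (Lemma~\ref{lem:bcinter}) is a short contradiction: if $L$ were not saturated by any $\kbceq2$, then for every $k$ there would exist $w_k\kbceq2 w'_k$ with $\alpha(w_k)\neq\alpha(w'_k)$; since $M^2$ is finite, some fixed pair $(s,s')$ with $s\neq s'$ recurs for arbitrarily large $k$, and the symmetry of $\kbceq2$ then produces $w_\ell\ksieq2 w'_\ell\ksieq2 w_\ell\ksieq2\cdots$, whence $(s,s')^j\in\Cstwo[\alpha]$ for every $j$, contradicting bounded alternation. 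Your proposal instead tries to build $L$ as a boolean combination of $\siwd$-separators for non-\dchain pairs, but you do not explain why such a combination is even well-defined on $L$ or how bounded alternation of $\Cstwo[\alpha]$ controls the combination; Theorem~\ref{thm:sep} gives separability for individual pairs, which does not straightforwardly assemble into a defining formula. The finiteness-of-$M^2$ pigeonhole that drives the paper's argument is the missing idea.

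For $(3)\Rightarrow(2)$, you have the dichotomy inverted. In the paper, the \emph{width} case (unbounded alternation but \emph{bounded} recursive alternation, Proposition~\ref{prop:width}) leads to \chain matrices, a Ramsey-style extraction of a monotonous/contradiction matrix, and ultimately a length-3 \dchain contradicting Equation~\eqref{eq:bcs1}. The \emph{depth} case (\emph{unbounded} recursive alternation, Proposition~\ref{prop:depth}) is the one that descends to an operation leaf with high alternation, analyzes its context value, produces $B$-schemas, builds a recursive \chain graph, and contradicts Equation~\eqref{eq:bcs2}. You attribute the $B$-schema machinery and Equation~\eqref{eq:bcs2} to the width case and the length-3 chain and Equation~\eqref{eq:bcs1} to the depth case, which is backwards; the intuitive justifications you offer for each case are in fact the correct intuitions for the \emph{other} case. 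This is more than a naming slip, since it misidentifies which structural feature of the tree is responsible for violating which equation.
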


Observe that Theorem~\ref{thm:caracbc} is exactly the equivalence
between Items~1 and~3 in Theorem~\ref{thm:caracbc2}. Therefore it
suffices to prove Theorem~\ref{thm:caracbc2}. Intuitively, Item~2
seems harder to decide than Item~3, since it requires computing a
description of the whole set $\Cstwo[\alpha]$ rather than just the
\dchains and sets of compatible \dchains of length $2$ and
$3$. However, it will serve as a convenient intermediary for proving
Item~3.

We now turn to the proof of Theorem~\ref{thm:caracbc2}. We prove that
1~$\Rightarrow$~3~$\Rightarrow$~2~$\Rightarrow$~1. In this appendix,
we give full proofs for the two ''easy'' directions: 1~$\Rightarrow$~3 
and 2~$\Rightarrow$~1. For the direction 3~$\Rightarrow$~2, we use
\chain trees to reduce the proof to two propositions. We then give
each proposition its own Appendix: Appendix~\ref{app:depth} and
Appendix~\ref{app:width}.

\subsection{1~$\Rightarrow$~3}
\label{sec:1-rightarrow-3-1}

We prove the direction 1~$\Rightarrow$~3 in Theorem~\ref{thm:caracbc2}
which is stated in the following lemma.

\begin{lemma} \label{lem:bcnec}
  Let $L$ be a regular language and $\alpha$ be its syntactic
  morphism. Assume that $L$ is definable in \bswd, then $\alpha$
  satisfies~\eqref{eq:bcs1} and~\eqref{eq:bcs2} .
\end{lemma}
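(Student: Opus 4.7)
Since $L$ is $\bswd$-definable, $L$ is a boolean combination of $\sicd$-formulas of quantifier rank at most some $k_0 \in \nat$, and is therefore saturated by the equivalence $\bceq{k_0}{2}$ defined by $w \bceq{k_0}{2} w'$ iff $w \sieq{k_0}{2} w'$ and $w' \sieq{k_0}{2} w$. Two applications of Lemma~\ref{lem:efconcat} yield that $\bceq{k_0}{2}$ is a congruence on $A^*$ refining the syntactic congruence of $L$. Hence $v \bceq{k_0}{2} v'$ implies $\alpha(v) = \alpha(v')$, and each equation reduces to exhibiting words $v \bceq{k_0}{2} v'$ whose $\alpha$-images are the two required sides.

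\textbf{Equation~\eqref{eq:bcs1}.} Given $(s_1,s_2,s_3) \in \Cstwo[\alpha]$, the definition of $\Cstwo$ yields, for any $k$, words $w_1 \sieq{k}{2} w_2 \sieq{k}{2} w_3$ with $\alpha(w_j)=s_j$. Choose $k$ large enough to exceed $k_0$ and the thresholds of Lemmas~\ref{lem:aperiodic} and~\ref{lem:siprop}, and let $N$ be a large multiple of $\omega(M)$. Set $V = w_1^N w_3^N$ and $V' = w_1^N w_2 w_3^N$; then $\alpha(V) = s_1^\omega s_3^\omega$ and $\alpha(V') = s_1^\omega s_2 s_3^\omega$. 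For the insertion direction $V \sieq{k_0}{2} V'$, observe that $w_2 \sieq{k}{2} w_3$ implies $w_2 \sieq{k}{1} w_3$ (since $\sicu$-formulas are $\sicd$-formulas); Lemma~\ref{lem:siprop} then gives $w_3^N \sieq{k_0}{2} w_3^{N/2} w_2 w_3^{N/2}$, and prepending $w_1^N$ via Lemma~\ref{lem:efconcat} followed by shifting the inserted $w_2$ to the seam with the help of Lemma~\ref{lem:aperiodic} and one more application of Lemma~\ref{lem:siprop} yields the desired relation. For the reverse direction $V' \sieq{k_0}{2} V$, transitivity of $\sieq{k}{2}$ gives $w_1 \sieq{k}{2} w_3$, so that $w_1^N \sieq{k}{2} w_3^N$ by Lemma~\ref{lem:efconcat}, and both $V$ and $V'$ can be compared through a common reference $w_3^{2N+1}$ obtained from Lemmas~\ref{lem:efconcat} and~\ref{lem:aperiodic}; the inserted $w_2$ is absorbed by the surrounding large powers of $w_3$. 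The symmetric equation $s_3^\omega s_1^\omega = s_3^\omega s_2 s_1^\omega$ is handled by the mirror argument with left and right exchanged.

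\textbf{Equation~\eqref{eq:bcs2}.} Unpack the two $B$-schemas via the key property stated just after their definition in Section~\ref{sec:caracbc}: for arbitrarily large $k$ there exist $w_1, w_2, w'_2 \in A^*$ with $\alpha(w_1)=s_1, \alpha(w_2)=s_2, \alpha(w'_2)=s'_2$ such that $w_1 \sieq{k}{2} w_2 u w'_2$ for every $u \in B^*$, and symmetrically $v_1, v_2, v'_2$ for $(t_1,t_2,t'_2)$. Fix $k \geq k_0$ and $N$ a large multiple of $\omega(M)$, and form
\[
V = (w_2 v_2)^N w_1 (v'_2 w'_2)^N, \qquad V' = (w_2 v_2)^N w_2 v_1 w'_2 (v'_2 w'_2)^N,
\]
so that $\alpha(V) = (s_2 t_2)^\omega s_1 (t'_2 s'_2)^\omega$ and $\alpha(V') = (s_2 t_2)^\omega s_2 t_1 s'_2 (t'_2 s'_2)^\omega$. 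The membership $v_1 \in B^*$ can be arranged because the decomposition $t_1 = q_1 q'_1$ in the second $B$-schema realizes $t_1$ through elements of $\Cstwo[\alpha,B] \cdot \Ts^\omega$ and $\Ts^\omega \cdot \Cstwo[\alpha,B]$, both of which produce $B$-alphabet witness words. Applying the key schema property to $u = v_1$ gives $w_1 \sieq{k}{2} w_2 v_1 w'_2$, and concatenating via Lemma~\ref{lem:efconcat} yields $V \sieq{k_0}{2} V'$. The reverse direction follows analogously: the compatible sets $\Ts$ built into both schemas synchronize the game plays, and together with the idempotent contexts $(w_2 v_2)^N$ and $(v'_2 w'_2)^N$ this supports the reverse EF-game argument through Lemmas~\ref{lem:siprop},~\ref{lem:aperiodic} and~\ref{lem:efconcat}.

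\textbf{Main obstacle.} The central technical challenge is the \emph{reverse} direction of each $\bceq{k_0}{2}$-equivalence, because Lemma~\ref{lem:siprop} and the defining property of $\Cstwo[\alpha]$ (respectively of $B$-schemas) are one-directional: they naturally support inserting a $\sicu$-smaller factor, but not removing it. Closing the loop in the other direction requires exploiting the idempotent structure of the chosen large powers so that the inserted factor is absorbed, in combination with the algebraic structure packaged into the definitions of $\Cstwo$ and of $B$-schemas. Equation~\eqref{eq:bcs2} is the more delicate of the two, since the witness words alternate between the two schemas and the synchronizing role of $\Ts$ has to be carefully preserved throughout the game analysis.
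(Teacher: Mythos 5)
There is a genuine gap in your treatment of Equation~\eqref{eq:bcs1}, and the route is in any case far more roundabout than needed. For the direction $V\ksieq{2}V'$ you invoke Lemma~\ref{lem:siprop} on $w_2\ksieq{1}w_3$ to insert $w_2$ in the \emph{middle} of the block $w_3^N$, arriving at $w_1^N w_3^{N/2}w_2w_3^{N/2}$; you then need this word to be $\ksieq{2}$-below $V'=w_1^Nw_2w_3^N$, but ``shifting the inserted $w_2$ to the seam'' is not an operation any of the quoted lemmas supplies, and the two words do not even have the same $\alpha$-image in general ($s_1^\omega s_3^\omega s_2 s_3^\omega$ versus $s_1^\omega s_2 s_3^\omega$). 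For $V'\ksieq{2}V$, showing that $V$ and $V'$ are each $\ksieq{2}$-below some common word $w_3^{2N}$ does not yield $V'\ksieq{2}V$, since $\ksieq{2}$ is only a preorder. The paper's argument avoids Lemma~\ref{lem:siprop} entirely and uses only Lemmas~\ref{lem:efconcat} and~\ref{lem:aperiodic}: for $V\ksieq{2}V'$, combine $w_1^{N-1}\ksieq{2}w_1^N$ with $w_1\ksieq{2}w_2$ to get $w_1^N=w_1^{N-1}w_1\ksieq{2}w_1^Nw_2$, then append $w_3^N$; for $V'\ksieq{2}V$, combine $w_2\ksieq{2}w_3$ with $w_3^N\ksieq{2}w_3^{N-1}$ to get $w_2w_3^N\ksieq{2}w_3w_3^{N-1}=w_3^N$, then prepend $w_1^N$. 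The inserted factor is simply \emph{absorbed} into one fewer power via aperiodicity --- no removal or shifting is required. Your ``main obstacle'' paragraph flags the removal direction as hard precisely because this absorption trick was missed; once seen, both directions are symmetric one-liners, and neither uses Lemma~\ref{lem:siprop}.

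For Equation~\eqref{eq:bcs2} your forward direction coincides with the paper's (deploy the schema property, which the paper isolates as Lemma~\ref{lem:schemprop}, to get $w_1\ksieq{2}w_2v_1w'_2$, then concatenate), but the reverse direction is only sketched, and the appeals to Lemma~\ref{lem:siprop} and to ``$\Ts$-synchronization'' are misplaced: those devices live \emph{inside} the proof of Lemma~\ref{lem:schemprop} and are not invoked again at this level. The reverse direction is the same absorption pattern: from $(w_2v_2)^N\ksieq{2}(w_2v_2)^{N-1}$, $(v'_2w'_2)^N\ksieq{2}(v'_2w'_2)^{N-1}$, the \emph{other} schema relation $v_1\ksieq{2}v_2w_1v'_2$, and Lemma~\ref{lem:efconcat}, one gets $V'=(w_2v_2)^Nw_2v_1w'_2(v'_2w'_2)^N\ksieq{2}(w_2v_2)^{N-1}w_2(v_2w_1v'_2)w'_2(v'_2w'_2)^{N-1}=V$.
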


The remainder of this subsection is devoted to proving
Lemma~\ref{lem:bcnec}. The proof is an \efgame argument. We begin by
defining the equivalence associated to \bswd. For any $k \in \nat$, we
write $w \kbceq{2} w'$ iff $w$ and $w'$ satisfy the same \bswd
formulas of quantifier rank~$k$. Therefore, a language if definable by
a $\bswd$ formula of rank~$k$ iff it is saturated by $\kbceq{2}$. One
can verify that $\kbceq{2}$ is an equivalence and that $w \kbceq{2}
w'$ iff $w \ksieq{2} w'$ and $w' \ksieq{2} w$.

We can now prove the lemma. By hypothesis there exists some $\bswd$
formula $\varphi$ that defines $L$, we set $k$ as the quantifier rank  
of this formula. 

\medskip
\noindent
{\bf Proving Equation~\eqref{eq:bcs1}.} Set $(s_1,s_2,s_3) \in
\Cstwo[\alpha]$, we prove that $s_1^{\omega}s_3^{\omega} =
s_1^{\omega}s_2s_3^{\omega}$ (the dual case is proved in the same
way). We prove that there exist $w_1,w_2,w_3 \in A^*$ such that
$\alpha(w_1) = s_1$, $\alpha(w_2)=s_2$, $\alpha(w_3)=s_3$ and for all
pair of words $u,v \in A^*$: 
\begin{equation}
  uw_1^{2^k\omega}w_3^{2^k\omega}v ~~~\kbceq{2}~~~
  uw_1^{2^k\omega}w_2w_3^{2^k\omega}v \label{eq:bcnec}
\end{equation}
Set $N=2^k\omega$. By definition of \kbceq{2}, \eqref{eq:bcnec} means that 
$u (w_1^{N}w_3^{N}) v$ and
$u (w_1^{N}w_2w_3^{N}) v$ cannot be distinguished by a \bswd
formula of quantifier rank $k$. Hence, by definition of $k$, we get
\[
u(w_1^{N}w_3^{N})v \in L \text{~~~iff~~~}
u(w_1^{N}w_2w_3^{N})v \in L
\]
Therefore, by definition of $w_1,w_2,w_3$, of $N$, and of the syntactic monoid
this will prove that $s_1^{\omega}s_3^{\omega} = s_1^{\omega}s_2s_3^{\omega}$.

Since $(s_1,s_2,s_3) \in \Cstwo[\alpha]$ by assumption, there
exist $w_1,w_2,w_3$ such that $w_1 \ksieq{2} w_2 \ksieq{2} w_3$ and
$\alpha(w_1) = s_1$, $\alpha(w_2)=s_2$, $\alpha(w_3)=s_3$. Set $u,v
\in A^*$. We need to prove that
\begin{align}
  \label{eq:1}
  u(w_1^{N}w_3^{N})v & ~~~\ksieq{2}~~~ u(w_1^{N}w_2w_3^{N})v\\
  \label{eq:2}
  u(w_1^{N}w_2w_3^{N})v & ~~~\ksieq{2}~~~ u(w_1^{N}w_3^{N})v
\end{align}
By definition of $w_1,w_2$, we have $w_1
\ksieq{2} w_2$. By Lemma~\ref{lem:aperiodic}, we obtain
$w_1^{N-1} \ksieq{2} w_1^{N}$. Therefore, using 
Lemma~\ref{lem:efconcat} we first get $w_1^N\ksieq{2}w_1^{N}w_2$, and
then that~\eqref{eq:1} holds.

The proof of $\eqref{eq:2}$ is similar: by definition, we have $w_2 \ksieq{2} w_3$, and by
Lemma~\ref{lem:aperiodic} we get $w_3^{N} \ksieq{2} w_3^{N-1}$. Using
Lemma~\ref{lem:efconcat} again, we conclude that $w_2w_3^{N}\ksieq{2}w_3^{N}$, and then
that~$\eqref{eq:2}$ holds.

\medskip
\noindent
{\bf Proving Equation~\eqref{eq:bcs2}.} It remains to prove that
$\alpha$ satisfies Equation~\eqref{eq:bcs2}. We begin with a lemma
on $B$-schemas.

\begin{lemma} \label{lem:schemprop}
  Assume that $(s_1,s_2,s'_2)$ is a $B$-schema. Then for all $k \in
  \nat$ there exist $w_1,w_2,w'_2 \in A^*$ such that:
  \begin{itemize}
  \item $\content{w_1} = \content{w_2} = \content{w'_2} = B$.
  \item $\alpha(w_1) = s_1, \alpha(w_2)=s_2$ and $\alpha(w'_2) = s'_2$.
  \item for all $u \in B^*$, $w_1 \ksieq{2} w_2uw'_2$.
  \end{itemize}
\end{lemma}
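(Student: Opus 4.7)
\medskip
\noindent
\textbf{Plan.} The idea is to unpack the definition of a $B$-schema, realise each of its pieces by a concrete word over $B$, and then absorb the arbitrary $u \in B^*$ using the pumping inequality Lemma~\ref{lem:siprop}.

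\emph{Unpacking.} By the definition of a $B$-schema, there exist $\Ts \in \fCtwo[\alpha, B]$ and $r_1, r'_1 \in M$ with $s_1 = r_1 r'_1$, together with decompositions $(r_1, s_2) = (a_1, a_2) \cdot (b_1, b_2)$ and $(r'_1, s'_2) = (c_1, c_2) \cdot (d_1, d_2)$, where $(a_1, a_2), (d_1, d_2) \in \Cstwo[\alpha, B]$ and $(b_1, b_2), (c_1, c_2) \in \Ts^\omega$. Writing $t_1$ for the common first component of chains in $\Ts$, we have $b_1 = c_1 = t_1^\omega$.

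\emph{Extracting words.} Fix $k \in \nat$ and apply the definitions of $\Cstwo[\alpha, B]$ and $\fCtwo[\alpha, B]$ at rank $k$. This yields: (i) $v_1, v_2 \in B^*$ with $v_1 \ksieq{2} v_2$ and $\alpha(v_j) = a_j$; (ii) $z_1, z_2 \in B^*$ with $z_1 \ksieq{2} z_2$ and $\alpha(z_j) = d_j$; (iii) a base word $w \in B^*$ with $\alpha(w) = t_1$ and, for each chain $(t_1, y) \in \Ts$, a witness $w_y \in B^*$ with $\alpha(w_y) = y$ and $w \ksieq{2} w_y$. Pick an integer $N$ divisible by $k$ and at least $k \cdot 2^k$. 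Since $\Ts^\omega$ is idempotent in $2^{M^2}$, we have $\Ts^{\omega N} = \Ts^\omega$, so $(b_1, b_2)$ can be rewritten as $(t_1^\omega, y_1 \cdots y_{\omega N})$ with each $(t_1, y_i) \in \Ts$; concatenating the corresponding witnesses yields $\tilde w_b \in B^*$ with $\alpha(\tilde w_b) = b_2$, and iterating Lemma~\ref{lem:efconcat} gives $w^{\omega N} \ksieq{2} \tilde w_b$. Construct $\tilde w_c$ analogously so that $w^{\omega N} \ksieq{2} \tilde w_c$.

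\emph{Construction and verification.} Set
\[
w_1 = v_1 \cdot w^{\omega N} \cdot w^{\omega N} \cdot z_1, \qquad w_2 = v_2 \cdot \tilde w_b, \qquad w'_2 = \tilde w_c \cdot z_2.
\]
All three words lie in $B^*$, and idempotency of $t_1^\omega$ gives $\alpha(w_1) = a_1 t_1^\omega d_1 = r_1 r'_1 = s_1$, $\alpha(w_2) = a_2 b_2 = s_2$, and $\alpha(w'_2) = c_2 d_2 = s'_2$. For arbitrary $u \in B^*$, the target $w_1 \ksieq{2} w_2 \cdot u \cdot w'_2$ reduces, via repeated use of Lemma~\ref{lem:efconcat} combined with the four extracted equivalences, to the single middle inequality
\[
w^{\omega N} \cdot w^{\omega N} \ksieq{2} w^{\omega N} \cdot u \cdot w^{\omega N}.
\]

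\emph{Main obstacle.} This middle inequality is the only delicate point. I apply Lemma~\ref{lem:siprop} with $i = 1$ and $v = w^{\omega k}$, writing $w^{\omega N} = v^{N/k}$ with $N/k \ge 2^k$ so that the exponent bound of the lemma is satisfied. The key hypothesis $u \sieq{k}{1} v$ holds because $\content{w} = B$ forces $v$ to consist of $\omega k \ge k$ consecutive copies of $w$, so $v$ admits every word over $B$ of length at most $k$ as a subword; in particular every subword of $u$ of length at most $k$ is a subword of $v$. Once this pumping step is in place, together with the careful bookkeeping of exponents that keeps the monoid images intact (using idempotency of both $t_1^\omega$ in $M$ and $\Ts^\omega$ in $2^{M^2}$), the remaining verification is routine.
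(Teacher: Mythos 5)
Your proof is correct and follows essentially the same strategy as the paper's: unpack the $B$-schema into a $\Cstwo$-chain, an idempotent power of a compatible set, and another $\Cstwo$-chain; realise each piece by words of alphabet $B$; put a long power $w^{\omega N}w^{\omega N}$ of the compatibility base word in the middle of $w_1$; and absorb $u$ into that power via Lemma~\ref{lem:siprop} after verifying $u\sieq{k}{1}w^{\omega k}$. The only differences from the paper are cosmetic choices of exponents ($\omega N$ with $N\ge k\cdot 2^k$ and $k\mid N$, versus the paper's $2^{2k}\omega$) and a slightly more explicit construction of $w_2,w'_2$ by concatenating compatibility witnesses; the paper compresses this into a direct assertion of the existence of $w_2,w'_2$ satisfying item~$c)$.
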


\begin{proof}
  This is proved using Lemma~\ref{lem:siprop}. Fix a $B$-schema
  $(s_1,s_2,s'_2)$ and $k \in \nat$. By definition, there exist $\Ts \in
  \fCtwo[\alpha,B]$ and $r_1,r'_1 \in M$ satisfying $s_1 = r_1r'_1$,
  $(r_1,s_2) = (t_1,t_2) \cdot (q,q_2)$ and $(r_1,s'_2) = (q,q'_2) \cdot
  (t'_1,t'_2)$ with $(t_1,t_2),(t'_1,t'_2) \in \Cstwo[\alpha,B]$ and
  $(q,q_2),(q,q'_2) \in \Ts^\omega = \Ts^{2^{2k}\omega}$. By definition
  of \dchains, we obtain words $v_1,v,v'_1,w_2,w'_2 \in A^*$ satisfying
  the following properties:
  \begin{enumerate}[label=$\alph*)$]
  \item $\content{v_1} = \content{v} = \content{v'_1} =\content{w_2}
    =\content{w'_2} = B$
  \item $\alpha(v_1) = t_1$,
    $\alpha(v'_1) = t'_1$, $\alpha(w_2) = t_2q_2$, $\alpha(w'_2) =
    q'_2t'_2$ and $\alpha(v^{2^{2k}\omega}) = q$.
  \item $v_1v^{2^{2k}\omega}
    \ksieq{2} w_2$ and $v^{2^{2k}\omega}v'_1 \ksieq{2} w'_2$.
  \end{enumerate}

  Set $w_1 = v_1v^{2^{2k}\omega}v^{2^{2k}\omega}v'_1$ and observe that
  by item $a)$, $\content{w_1} = \content{w_2} = \content{w_2} =
  B$. Moreover, by item $b)$, $\alpha(w_1) = t_1qqt'_1 = r_1r'_1= s_1$,
  $\alpha(w_2) = t_2q_2 = s_2$ and $\alpha(w'_2) = q'_2t'_2 =
  s'_2$. Finally, it is immediate using \efgame games that for any
  word $u \in B^*$, $u \ksieq{1} v^{2^k\omega}$. Therefore it follows
  from Lemma~\ref{lem:siprop} that $w_1 \ksieq{2}
  v_1v^{2^{2k}\omega}uv^{2^{2k}\omega}v'_1$. Using item $c)$, we then
  conclude that $w_1 \ksieq{2} w_2uw'_2$. \qed  
\end{proof}

We can now use Lemma~\ref{lem:schemprop} to prove that $\alpha$
satisfies Equation~\eqref{eq:bcs2}. Let $(s_1,s_2,s'_2)$ and
$(t_1,t_2,t'_2)$ be $B$-schemas. Let $w_1,w_2,w'_2 \in A^*$ of images
$s_1,s_2,s'_2$ and $v_1,v_2,v'_2 \in A^*$ of images $t_1,t_2,t'_2$
satisfying the conditions of Lemma~\ref{lem:schemprop}. We prove that
for any $u,v \in A^*$:
\begin{equation}
  u[(v_2w_2)^{N}v_1(w'_2v'_2)^{N}]v ~~~\kbceq{2}~~~
  u[(v_2w_2)^{N}v_2w_1v'_2(w'_2v'_2)^{N}]v  \label{eq:bcnec2}
\end{equation}
where again $N=2^k\omega$. By definition of the syntactic monoid and since $L$ is defined by a 
\bswd formula of rank $k$, Equation~\eqref{eq:bcs2} will
follow. Observe that the words $v_1,v_2,v'_2$ and $w_1,w_2,w'_2$ given by
Lemma~\ref{lem:schemprop} satisfy
\begin{align}
  \label{eq:3}
  v_1 &\ksieq{2} v_2w_1v'_2,\\
  \label{eq:4}
  w_1 &\ksieq{2} w_2v_1w'_2.
\end{align}
Using Lemma~\ref{lem:efconcat}, we may multiply \eqref{eq:3} by
$u(v_2w_2)^{N}$ on the left and by $(w'_2v'_2)^{N}v$ on the right:
\[
u(v_2w_2)^{N}v_1(w'_2v'_2)^{N}v ~~~\ksieq{2}~~~
u(v_2w_2)^{N}v_2w_1v'_2(w'_2v'_2)^{N}v.
\]
For the converse direction, from Lemma~\ref{lem:aperiodic}, we have 
$(v_2w_2)^{N} \ksieq{2} (v_2w_2)^{N-1}$
and $(w'_2v'_2)^{N} \ksieq{2}
(w'_2v'_2)^{N-1}$. Using   \eqref{eq:4} and Lemma~\ref{lem:efconcat} again, we
conclude that:
\[u(v_2w_2)^{N}v_2w_1v'_2(w'_2v'_2)^{N}v ~~~\ksieq{2}~~~
u(v_2w_2)^{N-1}v_2(w_2v_1w'_2)v'_2(w'_2v'_2)^{N-1}v\]
i.e.,
\[u(v_2w_2)^{N}v_2w_1v'_2(w'_2v'_2)^{N}v~~~\ksieq{2}~~~u(v_2w_2)^{N}v_1(w'_2v'_2)^{N}v.\]

\subsection{2~$\Rightarrow$~1}

We prove the direction 2~$\Rightarrow$~1 in Theorem~\ref{thm:caracbc2}
which is stated in the following lemma. 

\begin{lemma} \label{lem:bcinter}
  Let $L$ be a regular language and $\alpha$ its syntactic
  morphism. Assume that $\Cstwo[\alpha]$ has bounded alternation, then
  $L$ is definable in \bswd.
\end{lemma}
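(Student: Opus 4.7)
The plan is to prove that bounded alternation of $\Cstwo[\alpha]$ forces, for some $k \in \nat$, the equivalence $\kbceq{2}$ to refine the kernel of $\alpha$: $w \kbceq{2} w' \Rightarrow \alpha(w) = \alpha(w')$. From such a refinement, $L = \alpha^{-1}(F)$ is a union of $\kbceq{2}$-classes, and since each $\kbceq{2}$-class is defined by the conjunction of the (finitely many, up to equivalence) \bswd formulas of rank $k$ it satisfies, together with the negations of the remaining ones, $L$ is \bswd-definable. The remaining task is therefore entirely at the level of the finite monoid $M$.

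To package this, for each $k \in \nat$ I set
\[
\Ds_k = \{(s,t) \in M^2 \mid \exists\, w,w' \in A^*,\ \alpha(w)=s,\ \alpha(w')=t,\ w \kbceq{2} w'\}.
\]
Because $\bceq{k+1}{2}$ refines $\kbceq{2}$, the family $(\Ds_k)_k$ is decreasing in the finite set $M^2$ and thus stabilizes: there exists $k^\star$ with $\Ds_k = \Ds_{k^\star}$ for every $k \geq k^\star$. The refinement of $\alpha$ by $\bceq{k^\star}{2}$ is then exactly equivalent to the diagonal inclusion $\Ds_{k^\star} \subseteq \{(s,s) \mid s \in M\}$, and it is this inclusion I will establish.

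Suppose, towards a contradiction, that $(s,t) \in \Ds_{k^\star}$ with $s \neq t$. Then for every $k \geq k^\star$ there exist $w_k, w'_k \in A^*$ with $\alpha(w_k) = s$, $\alpha(w'_k) = t$, and satisfying both $w_k \ksieq{2} w'_k$ and $w'_k \ksieq{2} w_k$. Fix an arbitrary $m \in \nat$ and consider the interleaved sequence
\[
w_k,\ w'_k,\ w_k,\ w'_k,\ \ldots,\ w_k,\ w'_k
\]
of length $2m$: each consecutive pair is of the form $(w_k, w'_k)$ or $(w'_k, w_k)$, hence satisfies $\ksieq{2}$ in the required direction. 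Its image under $\alpha$ is $(s,t,s,t,\ldots,s,t) \in M^{2m}$, so this tuple belongs to $\Csgen{2}{k}{2m}[\alpha]$ for every $k \geq k^\star$, and to $\Csgen{2}{k}{2m}[\alpha]$ for smaller $k$ as well since $\ksieq{2}$ becomes coarser when $k$ decreases. Therefore $(s,t,s,t,\ldots,s,t) \in \Cstwolen{2m}[\alpha] \subseteq \Cstwo[\alpha]$, and since $s \neq t$ its alternation equals $2m-1$. As $m$ was arbitrary, $\Cstwo[\alpha]$ has unbounded alternation, contradicting the hypothesis.

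The whole argument hinges on the interleaving construction: it is legal only because a single pair $(w_k, w'_k)$ witnesses \emph{simultaneously} both directions of $\ksieq{2}$, which is exactly the bi-directional information that $\kbceq{2}$ (equivalently $\Ds_k$) encodes and that two separate witnesses for $(s,t) \in \Cstwolen{2}[\alpha]$ and $(t,s) \in \Cstwolen{2}[\alpha]$ would not automatically provide. Everything else -- the stabilization of $\Ds_k$ inside the finite set $M^2$, and the passage from $\kbceq{2}$-refinement of $\alpha$ to \bswd-definability of $L$ -- is routine.
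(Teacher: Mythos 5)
Your proof is correct and follows essentially the same route as the paper's: a finiteness argument on $M^2$ to pin down a single pair $(s,t)$ with $s \neq t$ witnessed at all ranks, followed by the interleaving $w_k, w'_k, w_k, w'_k, \ldots$ of a two-sided witness to produce $(s,t)^m \in \Cstwo[\alpha]$ for all $m$, contradicting bounded alternation. The only cosmetic difference is that you package the finiteness step as stabilization of the decreasing family $\Ds_k \subseteq M^2$, whereas the paper applies a pigeonhole argument directly to the hypothetical infinite sequence of counterexamples; your explicit remark that the interleaving needs a single pair simultaneously witnessing both directions of $\ksieq{2}$ (rather than two independent witnesses) correctly identifies the point where $\kbceq{2}$, and not mere membership of $(s,t)$ and $(t,s)$ in $\Cstwolen2[\alpha]$, is used.
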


\begin{proof}
  Assume that $\Cstwo[\alpha]$ has bounded alternation. We prove that
  there exists $k \in \nat$ such that for all $w,w' \in A^*$, $w
  \kbceq{2} w' \Rightarrow \alpha(w) = \alpha(w')$. This proves that $L$
  is saturated with $\kbceq{2}$ and hence definable by a \bsw{2} formula
  of quantifier rank $k$.

  We proceed by contradiction. Assume that for all $k \in \nat$ there
  exists $w_k,w_k' \in A^*$ such that $w_k \kbceq{2} w_k'$ and
  $\alpha(w_k) \neq \alpha(w_k')$. Notice that since there are only
  finitely many pairs in $M^2$, there must exist a pair $(s,s') \in M^2$
  such that $s \neq s'$ and there exists arbitrarily large naturals $k$
  such that $\alpha(w_k)=s$ and $\alpha(w'_k) = s'$. We prove that
  $(s,s')^* \subseteq \Cstwo[\alpha]$ which contradicts that
  $\Cstwo[\alpha]$ has unbounded alternation (recall that $s \neq
  s'$). By definition for all $k \in \nat$ there exists $\ell \geq k$
  such that $\alpha(w_{\ell})=s$ and $\alpha(w'_{\ell}) = s'$, since $\ell
  \geq k$ and by definition of $\kbceq{2}$ this means that :
  \[
  w_{\ell} \ksieq{2} w'_{\ell} \ksieq{2} w_{\ell} \ksieq{2} w'_{\ell}
  \ksieq{2} w_{\ell} \ksieq{2} w'_{\ell} \ksieq{2} \cdots
  \]
  Hence for all $k,j$, $(s,s')^j \in \Cslevk2k[\alpha]$ and therefore,
  for all $j$  $(s,s')^j \in \Cstwo[\alpha]$ which terminates the proof.
  \qed
\end{proof}

\subsection{3~$\Rightarrow$~2}

This is the most difficult direction of Theorem~\ref{thm:caracbc2}. We
state it in the following proposition.

\begin{proposition} \label{prop:bcsuff}
  Let $L$ be a regular language, $\alpha : A^* \rightarrow M$ be its
  syntactic morphism. Assume that $\alpha$ satisfies~\eqref{eq:bcs1}
  and~\eqref{eq:bcs2}, then $\Cstwo[\alpha]$ has bounded alternation.
\end{proposition}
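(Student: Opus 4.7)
The plan is to establish the contrapositive: assuming $\Cstwo[\alpha]$ has unbounded alternation, I will show that at least one of the equations~\eqref{eq:bcs1},~\eqref{eq:bcs2} must be violated. By Proposition~\ref{prop:ctree}, this hypothesis yields an infinite sequence $(T_k)_k$ of \chain trees in $\ct[\alpha]$ whose values have alternations tending to infinity. By Proposition~\ref{prop:optimal}, each $T_k$ can be replaced by a locally optimal \chain tree of the same value, so I assume from now on that every $T_k$ is locally optimal (for $(1_M,1_M)$). The decisive dichotomy is between the \emph{recursive alternation} of $T_k$ (the largest alternation of the label of an operation leaf inside $T_k$) and the ``structural'' alternation produced by the product and operation nodes above the leaves; controlling each of these quantities will invoke exactly one of the two equations.

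First I would treat the ``width'' case, in which the recursive alternation of $T_k$ stays bounded while $k \to \infty$. Then the large alternation of $\val{T_k}$ must occur at boundaries between neighbouring leaves in the prefix listing of the leaves of $T_k$. Near such a boundary, the last component of one leaf value and the first component of the next are distinct; combined with the preceding and following components they yield a \dchain of length three for $\alpha$. Local optimality of the enclosing operation leaves, together with the context values at the boundary, lets me rewrite these length-three \dchains using~\eqref{eq:bcs1} to cancel the middle element, which bounds the number of alternations the tree structure alone can produce. This yields a contradiction with the unboundedness of $\val{T_k}$, and is the content of Appendix~\ref{app:width}.

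The harder ``depth'' case, deferred to Appendix~\ref{app:depth}, is when the recursive alternation of $T_k$ grows with $k$. Then there is an operation leaf $x$ deep inside $T_k$ whose label is a highly alternating \dchain, and by local optimality this alternation genuinely contributes to $\val{T_k}$ rather than being cancellable in context. The key observation is that an operation node together with its $2\ell_n + 1$ context children produces precisely the data of a $B$-schema, as in the definition given before Theorem~\ref{thm:caracbc}: the context children on one side supply the $\Ts^\omega$ factor, and a sub-\dchain obtained by recursing into the operation leaf supplies the $\Cstwo[\alpha,B]$ factor. Applying this construction to a left context and to a right context around $x$ yields two $B$-schemas; the equation~\eqref{eq:bcs2}, instantiated with these two schemas, would force the alternation witness at $x$ to be cancellable in $\val{T_k}$, contradicting local optimality of $x$.

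The main obstacle is the depth case: it requires turning the internal, combinatorial data of a \chain tree (context values, which are themselves pairs of \dchains) into external algebraic data (elements of $M$ forming a $B$-schema), and aligning this data with the rigid idempotent pattern appearing in~\eqref{eq:bcs2}. A careful induction on the structure of $T_k$ is needed, and this is precisely the step that is specific to the level $i=2$: both $B$-schemas and the algorithm producing $T_k$ rely on the subalphabet parameter $B$, which has no direct analogue at higher levels of the quantifier alternation hierarchy. I expect the width case to be comparatively straightforward once the right length-three pattern is identified, while the depth case will occupy most of the technical work.
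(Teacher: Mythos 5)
Your top-level reduction is exactly the one in the paper: pass to the contrapositive, use Proposition~\ref{prop:ctree} and Proposition~\ref{prop:optimal} to obtain locally optimal \chain trees with unbounded alternation, and split on whether the recursive alternation can be kept bounded, deriving a failure of~\eqref{eq:bcs1} in the width case and of~\eqref{eq:bcs2} in the depth case. That matches Propositions~\ref{prop:width} and~\ref{prop:depth} and the paper's concluding paragraph of Appendix~\ref{app:bc}.

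The sketches you give of the two sub-arguments, however, diverge from what the paper actually does, and as stated they would not go through. In the width case, a ``boundary between neighbouring leaves'' does not directly produce a length-$3$ \dchain on which~\eqref{eq:bcs1} applies: the equation needs idempotents $s_1^\omega, s_3^\omega$, and obtaining them from a bounded-recursive-alternation tree is precisely the hard part. The paper does this by packaging the leaf values into a \emph{\chain matrix} with bounded local alternation, then massaging it through \emph{tame} and \emph{monotonous} matrices (Lemmas~\ref{lem:tame},~\ref{lem:mono}) and finally two applications of Ramsey's theorem (Lemma~\ref{lem:matlemma}) to manufacture the idempotent pattern of a contradiction matrix. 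A purely local boundary argument with context values cannot produce those idempotents. In the depth case, a single operation node yields only \emph{one} $B$-schema, not two; equation~\eqref{eq:bcs2} requires a pair of $B$-schemas arranged in the very specific $(s_2 t_2)^\omega \cdots (t'_2 s'_2)^\omega$ pattern. The paper obtains this pair by introducing the \emph{\chain graph} $G[\alpha]$ on $M^2 \times M$ with $B$-labeled edges given by $B$-schemas, proving (Lemma~\ref{lem:alt2}) that from a $B$-alternating node one can always move to a \emph{different} $B$-alternating node along a $B$-edge — crucially using the minimality of the alphabet $B$ and a pigeonhole/Lemma~\ref{lem:chooseK} argument to stay alternating — and then closing a cycle in the finite graph. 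It is the cycle, not two contexts around a single leaf, that yields the two $B$-schemas and the contradiction in Proposition~\ref{prop:graphcont1}.
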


For the remaining of the section, we assume that $L,M$ and $\alpha$
are fixed as in the statement of the proposition. We prove the
contrapositive of Proposition~\ref{prop:bcsuff}: if $\Cstwo[\alpha]$
has unbounded alternation, then either Equation~\eqref{eq:bcs1} or
Equation~\eqref{eq:bcs2} must be contradicted. We use \chain trees to
separate this property into two properties that we will prove in
Appendix~\ref{app:depth} and Appendix~\ref{app:width}. Consider the
two following propositions 

\begin{proposition} \label{prop:width}
  Assume that there exists a set of locally optimal \chain trees $\cs
  \subseteq \ct[\alpha]$ with unbounded alternation but bounded
  recursive alternation. Then $\alpha$ does not satisfy
  Equation~\eqref{eq:bcs1}.
\end{proposition}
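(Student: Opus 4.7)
The plan is to exhibit, under the hypothesis, a length-$3$ \dchain violating Equation~\eqref{eq:bcs1}. Fix $r$ bounding the recursive alternation of trees in $\cs$, and pick a locally optimal $T \in \cs$ of some level $n$ whose value chain $\val{T} = (V_1, \dots, V_n)$ has alternation far exceeding a threshold depending on $|M|$, $|A|$, $\ell_n$ and $r$. Since initial leaves carry constant-valued chains and values of inner nodes are computed componentwise, each alternation position in $\val{T}$ must arise from an operation leaf with a transition at that position; as each operation leaf has at most $r$ such positions, $T$ contains at least $(\text{alt}(\val{T}))/r$ operation leaves, and hence as many operation nodes.

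A Ramsey/pigeonhole argument on the finite set of possible ``types'' of operation nodes (consisting of the alphabet $B$, the common first component $t$ of the values of the context children, and the local data of the central leaf) then yields an operation node $x$ with central leaf $y$ whose value $\val{y} = (q_1, \dots, q_n)$ has a transition at some position $i^{*}$, i.e.\ $q_{i^{*}} \neq q_{i^{*}+1}$. By the definition of operation nodes one has $q_1 = t^{\ell_n}$, and taking $\ell_n = \omega(2^{M^n})$ to be a multiple of $\omega(M)$ makes $e := t^{\ell_n}$ an idempotent of $M$. Normalising $T$ (as in the proof of Proposition~\ref{prop:ctree}, then re-applying Proposition~\ref{prop:optimal} to restore local optimality) so that all context children of $x$ are identical copies of a single subtree $Z$, the $i^{*}$-th and $(i^{*}+1)$-th components of $\val{x}$ take the form $f \cdot q_{i^{*}} \cdot f$ and $f \cdot q_{i^{*}+1} \cdot f$, where $f$ is the idempotent obtained by raising the corresponding column of $\val{Z}$ to the power $\ell_n$.

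The heart of the proof is then to turn the transition inside $y$ into an algebraic inequality in $M$. Local optimality of $y$ for $(1_M, 1_M)$, combined with Fact~\ref{fct:value2}, gives
$\bar{s}_{i^{*}+1} \cdot q_{i^{*}} \cdot \bar{s}'_{i^{*}+1} \neq \bar{s}_{i^{*}+1} \cdot q_{i^{*}+1} \cdot \bar{s}'_{i^{*}+1}$
where $\cval{y} = (\bar{s}, \bar{s}')$ is the context value of $y$ in $T$. Using the $\Ts^\omega$-structure of the context children surrounding $y$, together with a further pigeonhole producing an additional occurrence of the idempotent $e$ somewhere to the right of $y$ in $\val{T}$, one absorbs the surrounding factors into $e$ on each side and rewrites this as $e \cdot q_{i^{*}+1} \cdot e \neq e$. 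Since sub-chains of \dchains are \dchains (by transitivity of $\ksieq{2}$), the triple obtained from $\val{T}$ at the relevant column positions is a length-$3$ \dchain $(e,\, e\, q_{i^{*}+1}\, e,\, e)$ in $\Cstwo[\alpha]$. Instantiating Equation~\eqref{eq:bcs1} on it would give $e = e \cdot q_{i^{*}+1} \cdot e$, contradicting the inequality just derived; hence $\alpha$ does not satisfy Equation~\eqref{eq:bcs1}.

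The main obstacle is this idempotent absorption step: aligning a column-wise inequality coming from local optimality of a single operation leaf, buried inside a large tree, with a single length-$3$ \dchain in $M$ of the clean form $(e, s, e)$ with $e s e \neq e$. Making this precise requires a careful joint use of the chain-tree normalisation (so that column-wise products of context children yield genuine idempotent powers in $M$, not merely in $2^{M^n}$), of Proposition~\ref{prop:optimal} (to keep local optimality after normalisation), and of an additional Ramsey-style refinement locating a right-hand occurrence of $e$ in $\val{T}$. Once these are in place, the contradiction with Equation~\eqref{eq:bcs1} follows immediately.
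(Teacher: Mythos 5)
Your approach diverges fundamentally from the paper's, and it contains a genuine gap that the paper's construction is specifically engineered to avoid.

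The paper's proof of Proposition~\ref{prop:width} discards the tree structure almost immediately: it flattens the locally optimal tree into a \emph{\chain matrix} whose rows are simply the values of the leaves (Proposition~\ref{prop:matinit}). Bounded recursive alternation becomes bounded \emph{local} alternation of the matrix, and the rest of the argument (Proposition~\ref{prop:contradend}) is pure matrix combinatorics: extract a \emph{tame} matrix so that each adjacent column pair has exactly one alternating row, apply Erd\H{o}s--Szekeres to make the alternating row index monotone, and then apply Ramsey's theorem \emph{twice} -- once to the cells above the diagonal and once to those below -- to force \emph{two} idempotents $f$ and $e$, \emph{a priori distinct}, into the upper and lower triangles. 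This yields the contradiction matrix, and Lemma~\ref{lem:contradmat} finishes by plugging the length-$3$ \dchain $(e,u_2,f)$ into Equation~\eqref{eq:bcs1}. Your proposal, by contrast, tries to keep working with the tree and with the internal structure of a single operation node; that is actually the toolkit the paper deploys for Proposition~\ref{prop:depth}, where it targets Equation~\eqref{eq:bcs2} via $B$-schemas and the \chain graph, not here.

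The concrete gap is the ``idempotent absorption step'', which you yourself flag as the main obstacle and then assert without justification. Local optimality of the central leaf $y$ gives you the inequality $\bar{s}_{i^{*}+1}\, q_{i^{*}}\, \bar{s}'_{i^{*}+1} \neq \bar{s}_{i^{*}+1}\, q_{i^{*}+1}\, \bar{s}'_{i^{*}+1}$, where $(\bar{s},\bar{s}')=\cval{y}$. These context values are products over \emph{all} leaves to the left and right of $y$ in the whole tree -- including siblings of every ancestor of $x$ -- and are arbitrary elements of $M$; they are not powers of the single element $t$ coming from the operation node $x$. In a monoid, an inequality does not survive replacing a prefix or suffix by an unrelated one: from $p a p' \neq p b p'$ you can conclude $a p' \neq b p'$ (cancel the prefix) but you cannot conclude $e a e \neq e b e$ for a different $e$. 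The surrounding factors cannot simply be ``absorbed'' into $e=t^{\ell_n}$, and the normalisation of context children into copies of a single subtree $Z$ only controls the immediate siblings within $x$, not the rest of the tree. Relatedly, your claim that a further pigeonhole produces an occurrence of $e$ itself among the entries of $\val{T}$ (so that $(e,\, e q_{i^{*}+1} e,\, e)$ is a sub-\dchain of $\val{T}$) is unjustified: pigeonhole yields a \emph{repeated} value, not specifically $e$. The paper's construction avoids both problems precisely by letting the left and right idempotents be different and by manufacturing them from scratch via Ramsey, rather than hoping they already appear in $\val{T}$ or coincide with the operation-node idempotent.
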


\begin{proposition} \label{prop:depth}
  Assume that there exists a set of locally optimal \chain
  trees $\cs \subseteq \ct[\alpha]$ with unbounded alternation and that
  all such sets have unbounded recursive alternation. Then $\alpha$ does
  not satisfy Equation~\eqref{eq:bcs2}.
\end{proposition}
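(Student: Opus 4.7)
The plan is to extract from the hypothesis two $B$-schemas over a common alphabet $B \subseteq A$ that witness the failure of Equation~\eqref{eq:bcs2} in $M$. The intuition is that high recursive alternation forces the existence of operation leaves whose labels themselves have very high alternation; two distinct alternation positions in such a label, together with the surrounding operation-node structure, supply the two $B$-schemas appearing in Equation~\eqref{eq:bcs2}, and local optimality of the leaf prevents the two resulting products in $M$ from coinciding.

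First I would extract combinatorial data. By assumption, for every $K \in \nat$ there is a locally optimal chain tree $T_K \in \ct[\alpha]$ with alternation at least $K$, and the recursive alternation of any such tree is unbounded; hence some operation leaf $x_K$ in $T_K$ has a label $\val{x_K}$ whose alternation grows with $K$. Since $M$ and $A$ are finite, a Ramsey-style pigeonhole across $K$ fixes, for an infinite subfamily: the level $n$ of $T_K$, the alphabet $B$ of $x_K$, the common first element $t \in M$ of the values of the $2\ell_n$ context children of the parent operation node $y_K$ of $x_K$, an associated compatible set $\Ts \in \fCtwon[\alpha,B]$ containing those values, and summary data on the context value $\cval{x_K}$.

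Inside the label $\val{x_K} = (u_1,u_2,\dots,u_n)$ with $u_1 = t^{\ell_n}$, a second pigeonhole over the finitely many possible local triples $(u_{i-1}, u_i, u_{i+1})$ together with the $M$-values of the partial products of $\val{x_K}$ to the left and to the right of position $i$ yields two alternation indices $i_1 < i_2$ carrying matching local data. At each $i_k$, the $\ell_n$ context children of $y_K$ flanking $x_K$ provide $\omega$-idempotent envelopes in $\Ts^\omega$, which together with a length-2 sub-\dchain read off from $\val{x_K}$ near $i_k$ produces, via Fact~\ref{fct:chaincomp} and Fact~\ref{fct:setcomp}, a triple in $M^3$ satisfying the $B$-schema definition. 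Call the two resulting schemas $(s_1,s_2,s'_2)$ and $(t_1,t_2,t'_2)$; by the matching data, they share the alphabet $B$.

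Finally, I would evaluate both sides of Equation~\eqref{eq:bcs2} for these two schemas and show they differ. Expanding $(s_2 t_2)^\omega$ and $(t'_2 s'_2)^\omega$ via the iterated compatible-set products coming from $\Ts^\omega$, the LHS captures the contribution to $\val{T_K}$ of keeping the $i_1$-th component of $\val{x_K}$ equal to $u_{i_1}$, while the RHS captures the same contribution after transporting the alternation at $i_2$ to replace $u_{i_1}$ by $u_{i_1+1}$. Local optimality of $x_K$ for $(1_M,1_M)$, applied at the alternation index $i_1$, yields $p \cdot u_{i_1} \cdot q \neq p \cdot u_{i_1+1} \cdot q$ where $p,q$ are the $(i_1{+}1)$-th components of $\cval{x_K}$, which is exactly the desired inequality between the two sides of Equation~\eqref{eq:bcs2} in $M$. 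The main obstacle is this final alignment: one must arrange the two $B$-schemas so that their composition precisely mirrors the product structure in $\val{T_K}$ at positions $i_1$ and $i_2$, while sharing a common $B$ and aligning the $\omega$-idempotent flanks supplied by the operation node; this requires careful coordination between the two pigeonholes and repeated use of Fact~\ref{fct:setcomp} to combine compatible sets along the tree.
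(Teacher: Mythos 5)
Your overall plan — high recursive alternation forces a high-alternation operation leaf, and local optimality of that leaf should supply the inequality that breaks Equation~\eqref{eq:bcs2} — is pointing in the right direction, and the first pigeonhole step (fixing a level, alphabet, context, and compatible set $\Ts$ across the family) is sensible. The construction of \emph{one} $B$-schema from a leaf and its parent operation node is also essentially what the paper does inside the proof of its Lemma~\ref{lem:alt2}.

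However, there is a genuine gap at the final alignment step, which is exactly where you flag the ``main obstacle.'' You want the local-optimality inequality $p\cdot u_{i_1}\cdot q\neq p\cdot u_{i_1+1}\cdot q$ to be literally the inequality between the two sides of Equation~\eqref{eq:bcs2}. But the two sides of that equation have the form $(s_2t_2)^\omega s_1 (t'_2s'_2)^\omega$ and $(s_2t_2)^\omega s_2t_1s'_2 (t'_2s'_2)^\omega$, and to make these evaluate to $p\cdot u_{i_1}\cdot q$ and $p\cdot u_{i_1+1}\cdot q$ (or at least to a pair that differs) you need $\omega$-power fixed-point identities on the surrounding context: something like $s = s(s_2t_2\alpha(w))$ and $s' = (\alpha(w')t'_2s'_2)s'$. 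These do \emph{not} come from a single tree with two alternation positions $i_1<i_2$ in one leaf; those two positions both read off the same context value $\cval{x_K}$ and yield two schemas sharing a first coordinate, without any loop around the context. The paper's route through the \emph{chain graph} $G[\alpha]$ is not a cosmetic packaging but the source of precisely those fixed points: it first proves (its Lemma~\ref{lem:alt2}, by essentially your single-tree analysis, together with the minimality of $B$ and a periodicity constant from its Lemma~\ref{lem:chooseK}) that every $B$-alternating node has an outgoing $B$-labeled edge to another $B$-alternating node with a different $u$-component, then \emph{iterates} in the finite graph to close a cycle, and it is the cycle that produces the identities $s=s(\ldots)$ and $s'=(\ldots)s'$ used in Proposition~\ref{prop:graphcont1} to turn $u\neq v$ into a violation of Equation~\eqref{eq:bcs2}. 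Without this iteration, or some substitute producing the idempotent fixed points, your concluding step does not go through.
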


Proposition~\ref{prop:width} and Proposition~\ref{prop:depth} are
proven in Appendix~\ref{app:width} and Appendix~\ref{app:depth}. We
finish this appendix by using them to conclude the proof of
Proposition~\ref{prop:bcsuff}.

If $\Cs[\alpha]$ has unbounded alternation. By
Proposition~\ref{prop:optimal}, we know that there exists a set of
locally optimal \chain trees $\cs \subseteq \ct[\alpha]$ with
unbounded alternation. If $\cs$ can be chosen with bounded recursive
alternation, there is a contradiction to Equation~\eqref{eq:bcs1} by
Proposition~\ref{prop:width}. Otherwise there is a contradiction to
Equation~\eqref{eq:bcs2} by Proposition~\ref{prop:depth} which
terminates the proof of Proposition~\ref{prop:bcsuff}.

\section{Proof of Proposition~\ref{prop:depth}}
\label{app:depth}
Recall that we fixed a morphism $\alpha: A^* \rightarrow M$ into a
finite monoid $M$. We prove Proposition~\ref{prop:depth}.

\adjustc{prop:depth}
\begin{proposition}
  Assume that there exists a set of locally optimal \chain
  trees $\cs \subseteq \ct[\alpha]$ with unbounded alternation and that
  all such sets have unbounded recursive alternation. Then $\alpha$ does
  not satisfy Equation~\eqref{eq:bcs2}.
\end{proposition}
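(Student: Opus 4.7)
I will argue by contraposition: assuming \eqref{eq:bcs2} holds, I will exhibit a family of locally optimal \chain trees in $\ct[\alpha]$ with unbounded alternation but uniformly \emph{bounded} recursive alternation, directly contradicting the second clause of the hypothesis. The strategic observation is that \eqref{eq:bcs2} is an ``exchange rule'': once two $B$-schemas $(s_1,s_2,s'_2)$ and $(t_1,t_2,t'_2)$ are available over the same alphabet~$B$, the value $s_1$ can be replaced by $s_2 t_1 s'_2$ inside the doubly idempotent context $(s_2t_2)^\omega\cdot{-}\cdot(t'_2 s'_2)^\omega$ without changing the resulting element of~$M$. My goal is to translate this monoid identity into an operation on \chain trees that strictly decreases the alternation of some operation leaf while preserving the root value.

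The main steps will go as follows. First, I fix a tree $T\in\cs$ and an operation leaf $y\subseteq T$ whose label $(r_1,\dots,r_n)$ has many alternation positions (which exists by hypothesis). Second, I identify the operation node $x$ enclosing $y$ and, if necessary, a further operation node $x'$ above $x$; from the compatible sets $\Ts_x,\Ts_{x'}\in\fCtwon[\alpha,B]$ of their context-children values I read off, via length-2 projections and Fact~\ref{fct:value3}, $B$-schemas built according to the clauses $\bigl(r_1,s_2\bigr)\in\Cstwo[\alpha,B]\cdot\Ts^\omega$ and $\bigl(r'_1,s'_2\bigr)\in\Ts^\omega\cdot\Cstwo[\alpha,B]$ of the $B$-schema definition in Section~\ref{sec:caracbc}. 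Third, since there are only finitely many possible triples in $M^3$ and finitely many alphabets $B\subseteq A$, a pigeonhole over the alternation positions of $y$'s label produces two positions $i<j$ whose associated $B$-schemas exactly match the two $B$-schemas required by~\eqref{eq:bcs2}. Fourth, I apply \eqref{eq:bcs2} to collapse one $r_i\neq r_{i+1}$ alternation: since the tree value at coordinate $i+1$ is (using Fact~\ref{fct:value2}) precisely of the form $(s_2t_2)^\omega s_1(t'_2 s'_2)^\omega$ on one side and $(s_2t_2)^\omega s_2 t_1 s'_2(t'_2 s'_2)^\omega$ on the other, the equality supplied by \eqref{eq:bcs2} means that replacing $r_{i+1}$ by $r_i$ in the label of~$y$ leaves $\val{T}$ unchanged. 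Finally, Proposition~\ref{prop:optimal} lets me re-optimize the modified tree into a locally optimal one with the same value, strictly smaller recursive alternation, and unchanged overall alternation; iterating this reduction over all such leaves and positions yields the desired bound on recursive alternation.

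\textbf{Main obstacle.} The principal difficulty is the precise matching between the operation-node structure of \chain trees and the syntax of $B$-schemas required by \eqref{eq:bcs2}: one must verify that the compatible sets $\Ts_x,\Ts_{x'}$ and the length-2 projections of $y$'s label really do assemble into $B$-schemas in the exact form $(s_1,s_2,s'_2),(t_1,t_2,t'_2)$, and that the $(s_2t_2)^\omega\cdot{-}\cdot(t'_2 s'_2)^\omega$ sandwich on the monoid side corresponds to the product of values obtained by concatenating the context children of $x$ and $x'$ raised to their idempotent powers. A second, more subtle difficulty is quantitative: to obtain the two $B$-schemas with identical parameters one needs to apply pigeonhole simultaneously over alternation positions of $y$, over compatible sets in $\fCtwon[\alpha,B]$, and over elements of $M$, in such a way that the resulting uniform bound on recursive alternation depends only on $|M|$ and $|A|$. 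Properly arranging the double induction --- on the depth of nested operation nodes inside $T$ and on the number of unflattened alternation positions --- while maintaining local optimality at every step (invoking Proposition~\ref{prop:optimal} as needed) will, I expect, constitute the technical heart of the proof.
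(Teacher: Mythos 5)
Your overall strategy---use \eqref{eq:bcs2} as an exchange rule to argue, via local optimality, that recursive alternation must stay bounded---captures the spirit of the argument, but the plan as written has a genuine gap that prevents step~4 from working. The identity supplied by Equation~\eqref{eq:bcs2} has the very specific ``interleaved'' form $(s_2t_2)^\omega s_1(t'_2 s'_2)^\omega = (s_2t_2)^\omega s_2t_1s'_2(t'_2 s'_2)^\omega$: the two idempotent contexts on the left and right are built by alternating factors from \emph{two different} $B$-schemas. When you read a single operation leaf $y$ inside its immediate enclosing operation node $x$ (and possibly one more operation node $x'$ above it), the idempotent factors you obtain are of the shape $\Ts_x^{\ell_n}$, $\Ts_{x'}^{\ell_{n'}}$---\emph{nested}, not interleaved into products $(s_2 t_2)^\omega$. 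Fact~\ref{fct:value2} tells you coordinate $i+1$ of $\val{T}$ is $s_{i+1}\, t_{i+1}\, s'_{i+1}$, but there is no reason this expression should be expressible as a doubly idempotent sandwich of the form required on either side of \eqref{eq:bcs2}. Your pigeonhole in step~3 finds two positions with the \emph{same} associated schema, which is also not what is needed: you need two (possibly distinct) schemas arranged so that their $s_2, t_2$ (resp.\ $t'_2, s'_2$) components \emph{multiply to an idempotent}.

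The missing ingredient is the paper's \emph{\chain graph} $G[\alpha]$: its nodes are triples in $M^2\times M$, its edges labeled by alphabets encode $B$-schemas, and the crucial observation (Lemma~\ref{lem:alt2} plus finiteness of $G[\alpha]$) is that the hypothesis forces a \emph{cycle} all of whose edges share the same label $B$ and which visits two nodes with distinct third coordinates $u\neq v$. It is the cycle---not a single operation leaf, nor a pair of nested operation nodes---that produces the interleaved idempotents $(s_2 t_2\alpha(w))^\omega$ and $(\alpha(w') t'_2 s'_2)^\omega$ needed to plug the two schemas into \eqref{eq:bcs2} and contradict $u\neq v$. Without this cycle structure, your ``collapse'' operation in step~4 cannot be justified algebraically; moreover, because the tree you start from is already locally optimal, any collapse that preserves $\val{T}$ would contradict local optimality at that coordinate, which is precisely what should raise a red flag: your proposed rewrite could never occur on a locally optimal tree. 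The correct route is to take the direct (rather than contrapositive) direction, extract $B$-alternating nodes and a step Lemma from the unbounded-recursive-alternation hypothesis (this is where the pigeonhole over $q_1\neq q_2$ and the constant $K$ of Lemma~\ref{lem:chooseK} enter), close a cycle by finiteness, and only then apply \eqref{eq:bcs2} to the cycle.
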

\restorec

We define a new object that is specific to this case: the \emph{\Chain
  Graph}. The \chain graph describes a construction process for a subset
of the set of \dchains for $\alpha$. While this subset is potentially
strict, we will prove that under the hypothesis of
Proposition~\ref{prop:depth}, it is sufficient to derive a
contradiction to Equation~\eqref{eq:bcs2}.

\medskip
\noindent
{\bf \Chain Graph}. We define a graph $G[\alpha]=(V,E)$ whose edges
are labeled by subsets of the alphabet $A$. We call $G[\alpha]$ the
\emph{\chain graph} of $\alpha$. The set $V$ of nodes of $G[\alpha]$
is the set $V = M^2 \times M$. Let $((s,s'),u)$ and $((t,t'),v)$ be
nodes of $G[\alpha]$ and $B \subseteq A$, then $E$ contains an edge
labeled by $B$ from $((s,s'),u)$ to $((t,t'),v)$ iff there exists a
$B$-schema $(s_1,s_2,s'_2) \in M^3$ such that:

\begin{itemize}
\item $s \cdot s_1 \cdot s' = u$.
\item $s \cdot s_2= t$ and $s'_2 \cdot s' = t'$.
\end{itemize} 

Observe that the definition does not depend on $v$. We say that
$G[\alpha]$ is \emph{recursive} if it contains a cycle such that
\begin{itemize}
\item [$a)$] all edges in the cycle are labeled by the same alphabet $B
  \subseteq A$,
\item [$b)$] the cycle contains two nodes $((s,s'),u)$, $((t,t'),v)$
  such that $u \neq v$.
\end{itemize}
We now prove
Proposition~\ref{prop:depth} as a consequence of the two following 
propositions.

\begin{proposition} \label{prop:graphcont1}
  Assume that $G[\alpha]$ is recursive. Then $\alpha$ does not
  satisfy~\eqref{eq:bcs2}.
\end{proposition}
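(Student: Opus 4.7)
The plan is to use a cycle witnessing recursiveness to exhibit two $B$-schemas with identical second and third coordinates whose first coordinates $\hat a_p,\hat a_q$ satisfy $s_0\hat a_ps_0'=u_p$ and $s_0\hat a_qs_0'=u_q$; then Equation~\eqref{eq:bcs2} will collapse the difference and force $u_p=u_q$, contradicting the choice of the cycle. Concretely, fix a cycle $v_0\to v_1\to\cdots\to v_n=v_0$ whose edges all carry the alphabet $B$ and which contains two nodes with different third components $u_p\neq u_q$. Writing each edge $v_i\to v_{i+1}$ as a $B$-schema $\sigma_i=(a_i,b_i,b_i')$, the edge definition forces $s_{i+1}=s_ib_i$, $s_{i+1}'=b_i's_i'$ and $u_i=s_ia_is_i'$. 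Set $L_i=b_0\cdots b_{i-1}$, $R_i=b_{i-1}'\cdots b_0'$, $L_n=b_0\cdots b_{n-1}$ and $R_n=b_{n-1}'\cdots b_0'$; the cycle closing up yields $s_0L_n=s_0$ and $R_ns_0'=s_0'$, hence also $s_0L_n^{\omega}=s_0$, $R_n^{\omega}s_0'=s_0'$, and $u_i=s_0L_ia_iR_is_0'$ for every $i$.

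The next step will be, for each index $p$, to produce a $B$-schema $\Sigma_p=(L_pa_pR_p,L_n,R_n)$. Once these are in hand, applying Equation~\eqref{eq:bcs2} to the pair $(\Sigma_p,\Sigma_q)$ and using the idempotent identities $(L_nL_n)^{\omega}=L_n^{\omega}$, $L_n^{\omega}L_n=L_n^{\omega}$ and the mirror versions on the $R_n$ side reduces the equation to $L_n^{\omega}L_pa_pR_pR_n^{\omega}=L_n^{\omega}L_qa_qR_qR_n^{\omega}$. Left-multiplying by $s_0$ and right-multiplying by $s_0'$, and invoking $s_0L_n^{\omega}=s_0$ and $R_n^{\omega}s_0'=s_0'$, both sides collapse to $u_p$ and $u_q$, contradicting $u_p\neq u_q$.

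To build $\Sigma_p$, I would rely on the word-level formulation of $B$-schemas from Lemma~\ref{lem:schemprop}. For each rank $k$, fix words $w_{a_i},w_{b_i},w_{b_i'}$ of alphabet $B$ and images $a_i,b_i,b_i'$ witnessing $\sigma_i$ at rank $k$, and form $W=w_{b_0}\cdots w_{b_{n-1}}$, $W'=w_{b_{n-1}'}\cdots w_{b_0'}$ and $V_p=w_{b_0}\cdots w_{b_{p-1}}\,w_{a_p}\,w_{b_{p-1}'}\cdots w_{b_0'}$, whose alphabets are $B$ and whose images are $L_n$, $R_n$ and $L_pa_pR_p$. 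For any $u\in B^*$, applying Lemma~\ref{lem:schemprop} to $\sigma_p$ with the inserted word $w_{b_{p+1}}\cdots w_{b_{n-1}}\,u\,w_{b_{n-1}'}\cdots w_{b_{p+1}'}\in B^*$ yields $w_{a_p}\ksieq{2}w_{b_p}\cdots w_{b_{n-1}}\,u\,w_{b_{n-1}'}\cdots w_{b_p'}$, and Lemma~\ref{lem:efconcat} then gives $V_p\ksieq{2}W\,u\,W'$. The algebraic $B$-schema witness for $\Sigma_p$ is then assembled from the product compatible set $\tilde\Ts=\Ts_0\cdots\Ts_{n-1}\in\fCtwo[\alpha,B]$ (Fact~\ref{fct:setcomp}) together with the factorization $\tilde r_1=L_pr_{1,p}$, $\tilde r_1'=r_{1,p}'R_p$ inherited from the decomposition $a_p=r_{1,p}r_{1,p}'$ of $\sigma_p$'s $B$-schema data.

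The hardest part will be this last verification, namely showing that $(\tilde r_1,L_n)\in\Cstwo[\alpha,B]\cdot\tilde\Ts^{\omega}$ and symmetrically $(\tilde r_1',R_n)\in\tilde\Ts^{\omega}\cdot\Cstwo[\alpha,B]$ for a single compatible set $\tilde\Ts$ that absorbs the entire cycle on both the left- and right-absorbing sides simultaneously. The subtlety is that the cyclic closure must be used in tandem with the closure properties of $\Cstwo[\alpha,B]$ and $\fCtwo[\alpha,B]$ (Facts~\ref{fct:chaincomp} and~\ref{fct:setcomp}) to transport the single-schema witness $\Ts_p^{\omega}$ around the cycle, so that the resulting compatible set handles both conditions uniformly; the word-level identities derived above are precisely the data needed for this verification.
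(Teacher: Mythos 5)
Your plan diverges from the paper's in a way that creates a genuine gap. The paper does \emph{not} build a new $B$-schema from the whole cycle. It picks the two consecutive nodes $((s,s'),u)$, $((t,t'),v)$ with $u\neq v$, takes the two $B$-schemas $(s_1,s_2,s'_2)$ and $(t_1,t_2,t'_2)$ carried by the two adjacent edges, and absorbs the rest of the cycle only as elements $\alpha(w),\alpha(w')\in\alpha(B^*)$ which are then multiplied \emph{into} the first schema, producing $(r_1,r_2,r'_2)=(\alpha(w)s_1\alpha(w'),\,\alpha(w)s_2,\,s'_2\alpha(w'))$. Verifying that this remains a $B$-schema is easy: keep the same compatible set $\Ts$, prepend $\alpha(w)$ and append $\alpha(w')$ to the witnessing factors $r_1,r'_1$, and use that $(\alpha(w),\alpha(w)),(\alpha(w'),\alpha(w'))$ are constant \dchains over a subalphabet of $B$, so Fact~\ref{fct:chaincomp} closes the required memberships in $\Cstwo[\alpha,B]\cdot\Ts^\omega$ and $\Ts^\omega\cdot\Cstwo[\alpha,B]$. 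Equation~\eqref{eq:bcs2} applied to $(r_1,r_2,r'_2)$ and $(t_1,t_2,t'_2)$, multiplied on the outside by $ss_2t_2$ and $t'_2s'_2s'$, then yields $u=v$.

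By contrast, you want to manufacture, for each $p$, a \emph{new} $B$-schema $\Sigma_p=(L_pa_pR_p,\,L_n,\,R_n)$. This is the step you flag as the ``hardest part,'' and it is in fact where the proposal breaks. The definition of $B$-schema forces the first coordinate to factor as $e_1\,q^{\omega}\,e_1'$ where $q$ is the common first component of the chosen compatible set $\tilde\Ts$, and forces $(q^{\omega},f_2)\in\tilde\Ts^{\omega}$ with $L_n=e_2f_2$. If you take $\tilde\Ts=\Ts_0\cdots\Ts_{n-1}$, then $q^{\omega}=(q_0\cdots q_{n-1})^{\omega}$, but $L_pa_pR_p$ only factors through the idempotent $q_p^{\omega}$ coming from the single schema $\sigma_p$; there is no reason it should factor through $(q_0\cdots q_{n-1})^{\omega}$. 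If instead you take $\tilde\Ts=\Ts_p$, then $(\tilde\rho,L_n)\in\Cstwo[\alpha,B]\cdot\Ts_p^{\omega}$ requires a chain $(q_p^{\omega},d_p'\,b_{p+1}\cdots b_{n-1})$ to lie in $\Ts_p^{\omega}$, but $\Ts_p^{\omega}$ is a fixed set that is not closed under right-multiplication of its second components by arbitrary images of words over $B$. Neither choice works, and no argument is given. Lemma~\ref{lem:schemprop} cannot help here: it is a one-directional consequence of being a $B$-schema (it produces words from a schema), not a criterion for recognizing one. As a minor point, the identity $L_n^{\omega}L_n=L_n^{\omega}$ you invoke is false in a general finite monoid ($L_n^{\omega}L_n=L_n^{\omega+1}$, which need not equal $L_n^{\omega}$); your final computation happens to survive this because the absorption $s_0L_n^k=s_0$ for all $k$ and $R_n^ks_0'=s_0'$ does the real work, but you should not rely on the stated identity.

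In short, your algebraic endgame is fine, but the claim that $(L_pa_pR_p,L_n,R_n)$ is a $B$-schema is unsubstantiated and appears to be false in general. The repair is to mirror the paper's route: use the two adjacent schemas and push the rest of the cycle, as images of words over $B$, into the outer closure properties of $B$-schemas rather than into a new schema.
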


\begin{proposition} \label{prop:graphcont2}
  Assume that there exists a set of locally optimal \chain
  trees $\cs \subseteq \ct[\alpha]$ with unbounded alternation and that
  all such sets have unbounded recursive alternation. Then $G[\alpha]$
  is recursive.
\end{proposition}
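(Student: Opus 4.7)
The plan is to use the hypothesis to build arbitrarily deep towers of nested operation leaves, read off an edge of $G[\alpha]$ from each step of such a tower, and then apply pigeonhole on the finite state space $V \times 2^A$ to extract a cycle of constant label $B$; local optimality together with the alternation carried along the tower will then force that cycle to be recursive.

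I would first rephrase the hypothesis in its contrapositive form: for every $k$ there exists an integer $N(k)$ such that any locally optimal \chain tree whose value has alternation at least $N(k)$ must have recursive alternation at least $k$. This makes the hypothesis self-amplifying. Fix a chain level $n$ large enough to carry all the alternations I will need, and start from $T_0 \in \cs$ of level $n$ with alternation chosen by iterated application of $N$. Extract an operation leaf $x_0$ in $T_0$ whose label has high alternation; Proposition~\ref{prop:ctree} together with Proposition~\ref{prop:optimal} (applied with $(s,s') = (1_M, 1_M)$) then produces a locally optimal \chain tree $T_1$ of level $n$ with $\val{T_1} = \val{x_0}$. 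Since $T_1$ inherits the alternation of that label, the reformulated hypothesis gives $T_1$ large recursive alternation and I iterate. For any prescribed depth $m$, this yields a sequence $T_0, T_1, \dots, T_m$ of locally optimal \chain trees of level $n$, linked by distinguished operation leaves $x_0, \dots, x_{m-1}$.

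Next I would translate each parent operation node $y_i$ of $x_i$ into an edge of $G[\alpha]$. An operation node with alphabet $B_i$ directly packages a $B_i$-schema $(s_{1,i}, s_{2,i}, s'_{2,i}) \in M^3$: the compatible set of context-children values forms the set $\Ts \in \fCtwo[\alpha, B_i]$ witnessing the schema (the choice $\ell_n \ge \omega(2^{M^n})$ absorbs the required $\omega$-power), a chosen context-child on each side supplies the extra $\Cstwo[\alpha, B_i]$ factor in the schema's defining decomposition, and the central operation leaf together with its neighbors supplies $s_{1,i}$. Tracking, down the tower, the cumulative left/right ambient products surrounding $y_i$ and projecting them onto a chosen coordinate of the length-$n$ chain gives a state $((\sigma_i, \sigma'_i), u_i) \in V$ with $u_i = \sigma_i \, s_{1,i} \, \sigma'_i$, and by the very definition of the edge relation of $G[\alpha]$ the passage from $y_i$ to $y_{i+1}$ is a $B_i$-labelled edge between the corresponding states.

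The main obstacle is forcing the cycle returned by pigeonhole to be recursive, i.e.\ to contain two nodes whose third coordinate differs. Here I plan to exploit local optimality: each $\val{x_i}$ has many positions $j$ with $t_j \neq t_{j+1}$, and local optimality applied to $x_i$ forces $s_{j+1} t_j s'_{j+1}$ and $s_{j+1} t_{j+1} s'_{j+1}$ to differ even after multiplication by the external context. By switching the projection coordinate from $j$ to $j+1$ at such positions along the tower, and additionally recording the current coordinate in the pigeonhole state (an extra finite factor, since $n$ is fixed), I can arrange that any sufficiently long stretch of tower visits two identical enriched states separated by at least one coordinate switch. That switch yields two distinct middle values $u \neq v$ in the resulting cycle of constant label $B$, which is exactly the recursiveness of $G[\alpha]$.
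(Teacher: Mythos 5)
Your overall shape—descend through operation leaves, read off an edge of $G[\alpha]$ at each step, and apply pigeonhole to extract a cycle—is the same as the paper's, but the execution has genuine gaps. The most serious one is the parameter you pass to Proposition~\ref{prop:optimal}. You apply it with $(s,s')=(1_M,1_M)$ at each step, so each $T_{i+1}$ is locally optimal only for the trivial outer context. But the conclusion you need—that a position $j$ with $t_j\neq t_{j+1}$ in the label of an operation leaf produces two \emph{distinct} third components after composition with the accumulated ambient products $(\sigma_i,\sigma_i')$—requires local optimality for the accumulated context $(\sigma_i,\sigma_i')$, not for $(1_M,1_M)$. Local optimality for $(1_M,1_M)$ is strictly weaker (the paper's Fact~\ref{fct:optimal} records the implication in the other direction), and without the stronger form the inequality $\sigma_i s_{j+1} t_j s_{j+1}' \sigma_i'\neq \sigma_i s_{j+1} t_{j+1} s_{j+1}' \sigma_i'$ simply cannot be concluded. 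The paper invokes Proposition~\ref{prop:optimal} at each step with the current accumulated context $(s,s')$, which is exactly what repairs this.

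Two further issues. First, you do not pin down the alphabet. Your tower produces alphabets $\content{T_i}$ that can only shrink, so they stabilize, but you never say so, and without it the pigeonhole cycle need not carry a constant edge label $B$. The paper instead fixes $B$ \emph{minimal} among alphabets admitting an alternating node, and the minimality is used at a specific moment (``since $C\subseteq B$ is also alternating, by minimality $C=B$''); this guarantees every extracted edge stays on alphabet $B$. Second, your proposal never uses anything like Lemma~\ref{lem:chooseK}. That lemma upgrades a finite stretch $(q_1,q_2)^K\in\Cslev 2[\alpha,C]$ found in a single operation leaf into $(q_1,q_2)^*\subseteq\Cslev 2[\alpha,C]$, which is what lets the paper form a $B$-alternating node—one supporting arbitrary alternation at a \emph{fixed} alphabet and \emph{fixed} context—so the iteration can restart cleanly. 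In your version, you instead try to keep reusing a single fixed level $n$ and ``switch the projection coordinate from $j$ to $j+1$'' along the tower. This coordinate can move only monotonically from $1$ to $n-1$, so it admits at most $n-1$ switches; to guarantee a switch within the pigeonhole cycle you would need $n$ to dominate $|V\times 2^A|\cdot n$, which is not available. The paper avoids this by abandoning coordinate tracking entirely: each step of Lemma~\ref{lem:alt2} rebuilds fresh chain trees of the appropriate length from the $B$-alternating data, with no global coordinate to exhaust. You should look at the paper's notion of $B$-alternating node and at Lemmas~\ref{lem:alt1}, \ref{lem:chooseK} and \ref{lem:alt2}, which package precisely the three missing ingredients.
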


Observe that Proposition~\ref{prop:depth} is an immediate consequence of
Propositions~\ref{prop:graphcont1} and~\ref{prop:graphcont2}. Before proving
them, note that the notion of \chain graph is inspired from the notion of
strategy graph in~\cite{bpopen}. This is because both notions are designed to
derive contradiction to similar equations. However, our proof remains fairly
different from the one of~\cite{bpopen}. The reason for this is that the main
difficulty here is proving Proposition~\ref{prop:graphcont2}, i.e., going from
\chain trees (which are unique to our setting) to a recursive \chain graph. On
the contrary, the much simpler proof of Proposition~\ref{prop:graphcont1} is
similar to the corresponding one in~\cite{bpopen}.

\subsection{Proof of Proposition~\ref{prop:graphcont1}}

\adjustc{prop:graphcont1}
\begin{proposition}
  Assume that $G[\alpha]$ is recursive then $\alpha$ does not satisfy~\eqref{eq:bcs2}.
\end{proposition}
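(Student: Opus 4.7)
The plan is to use the recursive cycle in $G[\alpha]$ to construct two $B$-schemas that, when plugged into equation~\eqref{eq:bcs2}, force the forbidden equality $u_0 = u_1$ between two nodes with distinct $u$-components.

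First I would unpack the cycle. Let $x_0 \to x_1 \to \cdots \to x_{k-1} \to x_0$ be the cycle witnessing recursiveness, with $x_i = ((s_i, s'_i), u_i)$ and every edge labeled by the same alphabet $B \subseteq A$. If all consecutive pairs $(u_i, u_{i+1 \bmod k})$ agreed, then all $u_i$ would coincide by transitivity, contradicting the definition of recursiveness; so after cyclically rotating we may assume $u_0 \neq u_1$. Each edge $x_i \to x_{i+1 \bmod k}$ is witnessed by a $B$-schema $\sigma_i = (a_i, b_i, c_i)$ satisfying $s_i a_i s'_i = u_i$, $s_i b_i = s_{i+1 \bmod k}$, and $c_i s'_i = s'_{i+1 \bmod k}$. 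Setting $P = b_0 b_1 \cdots b_{k-1}$ and $P' = c_{k-1} c_{k-2} \cdots c_0$, going around the cycle yields $s_0 P = s_0$ and $P' s'_0 = s'_0$, and therefore $s_0 P^{\omega} = s_0$ and $(P')^{\omega} s'_0 = s'_0$.

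The heart of the proof is to produce from $\sigma_1, \dots, \sigma_{k-1}$ a single $B$-schema
$\tau = (a_1,\; b_1 b_2 \cdots b_{k-1},\; c_{k-1} \cdots c_2 c_1)$. For $k = 2$ we may simply take $\tau = \sigma_1$; for $k \geq 3$ I would iterate the following closure lemma: given two $B$-schemas $(a,b,c)$ and $(a',b',c')$ over the same $B$, the triple $(a, b b', c' c)$ is again a $B$-schema over $B$. This closure is established directly from the definition: $\fCtwo[\alpha,B]$ is a submonoid of $2^{M^2}$ by Fact~\ref{fct:setcomp} and $\Cstwo[\alpha,B]$ is a subsemigroup of $M^2$ by Fact~\ref{fct:chaincomp}, so the witnessing compatible set $\widetilde{\Ts}$ for the combined schema can be built from the two given witnessing compatible sets by extending the \dchains of length two by appropriate tails on the right (for the middle component $b b'$) and on the left (for the component $c' c$), using the idempotent behaviour of $\Ts^{\omega}$. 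I expect this composition verification to be the main technical obstacle in a fully rigorous write-up.

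Finally, I would apply equation~\eqref{eq:bcs2} with the $B$-schemas $\sigma_0$ and $\tau$. Instantiating with $(s_1,s_2,s'_2) = (a_0, b_0, c_0)$ and $(t_1,t_2,t'_2) = (a_1, b_1 \cdots b_{k-1}, c_{k-1} \cdots c_1)$ yields
\[
P^{\omega}\, a_0\, (P')^{\omega} \;=\; P^{\omega}\, b_0 a_1 c_0\, (P')^{\omega}.
\]
Multiplying on the left by $s_0$ and on the right by $s'_0$, the identities $s_0 P^{\omega} = s_0$ and $(P')^{\omega} s'_0 = s'_0$ collapse the left-hand side to $s_0 a_0 s'_0 = u_0$ and the right-hand side to $s_0 b_0 a_1 c_0 s'_0 = s_1 a_1 s'_1 = u_1$. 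Hence $u_0 = u_1$, contradicting our reduction above, which shows that $\alpha$ cannot satisfy equation~\eqref{eq:bcs2}.
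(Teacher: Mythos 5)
Your overall plan mirrors the paper's: extract $B$-schemas from the cycle, apply equation~\eqref{eq:bcs2}, and use the identities $s_0 P^\omega = s_0$ and $(P')^\omega s'_0 = s'_0$ to collapse to $u_0 = u_1$. Step~4 is correct granted step~3. But step~3 contains a genuine gap: the ``closure lemma'' that $(a, bb', c'c)$ is a $B$-schema whenever $(a,b,c)$ and $(a',b',c')$ are does not follow ``directly from the definition'' as you claim, and your sketch does not establish it. Keeping the original witnesses $\Ts, r_1, r'_1$ for $(a,b,c)$ would require $(r_1, bb') \in \Cstwo[\alpha,B] \cdot \Ts^\omega$; the obvious factoring $(r_1, bb') = (r_1, b) \cdot (1_M, b')$ does not help, because $(1_M, b')$ is never a \dchain unless $b' = 1_M$ (as $\ksieq{2}$ is alphabet-preserving). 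The composition one \emph{can} verify from the definition produces $(r_1 a' r'_1,\, bb',\, c'c)$ (taking as witnesses $r_1 q_1$, $q'_1 r'_1$, and the compatible set supplied by $(a',b',c')$), but its first component $r_1 a' r'_1$ is not $a$, which breaks your final cancellation to $u_0$. The underlying obstruction is a directionality issue: $b'$ multiplies $b$ on the $\Ts^\omega$ side of the factorization, and the definition of a $B$-schema gives no way to push an extra factor into $\Ts^\omega$.

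The paper sidesteps this. Rather than composing $\sigma_1,\dots,\sigma_{k-1}$, it keeps $\sigma_1 = (t_1,t_2,t'_2)$ intact, collects the remaining products $b_2\cdots b_{k-1}$ and $c_{k-1}\cdots c_2$ as $\alpha(w),\alpha(w')$ for words $w,w'\in B^*$, and modifies the \emph{other} schema $\sigma_0 = (s_1,s_2,s'_2)$ into $(\alpha(w)\, s_1\, \alpha(w'),\; \alpha(w)\, s_2,\; s'_2\, \alpha(w'))$. This multiplies $s_2$ on the \emph{left} and $s'_2$ on the \emph{right}, i.e., on the $\Cstwo[\alpha,B]$ side, where the constant \dchain $(\alpha(w),\alpha(w))$ absorbs via Fact~\ref{fct:chaincomp} while $\Ts^\omega$ is left untouched. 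Applying~\eqref{eq:bcs2} to the modified $\sigma_0$ and the raw $\sigma_1$, followed by the $\omega$-power cancellations, again yields $u_0 = u_1$; the fix for your write-up is to replace the composed schema $\tau$ by this pair.
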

\restorec

Assume that $G[\alpha]$ is recursive. By definition, we get $B
\subseteq A$, a cycle whose edges are all labeled with $B$ and two
consecutive nodes $((s,s'),u)$ and $((t,t'),v)$ in this cycle such
that $u \neq v$. Since there exists an edge $((s,s'),u)
\xrightarrow{B} ((t,t'),v)$, we obtain a $B$-schema $(s_1,s_2,s'_2)$
such that
\begin{align*}
  u&=s \cdot s_1 \cdot s',\\
  t&=s \cdot s_2, \\
  t'&=s'_2 \cdot s'.
\end{align*}
Moreover, one can verify that since $((s,s'),u)$ and
$((t,t'),v)$ are in the same cycle with all edges labeled by $B$, there
exists another $B$-schema $(t_1,t_2,t'_2)$ and $w,w' \in B^*$ such
that
\begin{align*}
  v&=  t \cdot t_1 \cdot t',\\
  s&=t \cdot t_2 \cdot \alpha(w),\\
  s'&=\alpha(w') \cdot t'_2 \cdot t'.
\end{align*}
By combining all these
definitions we get: 
\begin{eqnarray*}
  u & = & s(s_2t_2\alpha(w))^{\omega+1} s_1 (\alpha(w')t'_2s'_2)^{\omega+1} s' \\
  v & = & s(s_2t_2\alpha(w))^{\omega+1} s_2t_1 s'_2(\alpha(w')t'_2s'_2)^{\omega+1} s'
\end{eqnarray*}
Set $r_1 = \alpha(w) s_1 \alpha(w')$, $r_2 = \alpha(w)s_2$ and $r'_2 =
s'_2\alpha(w')$. One can verify that since $w,w' \in B^*$ and
$(s_1,s_2,s'_2)$ is a $B$-schema, $(r_1,r_2,r'_2)$ is a $B$-schema as
well. Moreover, by reformulating the equalities above we get:
\begin{eqnarray*}
  u & = & ss_2t_2(r_2t_2)^\omega r_1 (t'_2r'_2)^\omega t'_2s'_2 s' \\
  v & = & ss_2t_2(r_2t_2)^\omega r_2t_1 r'_2(t'_2r'_2)^\omega t'_2s'_2s'
\end{eqnarray*}
Therefore, Equation~\eqref{eq:bcs2} would require $u = v$. Since $u
\neq v$ by hypothesis, $\alpha$ does not satisfy~\eqref{eq:bcs2} and we 
are finished.

\subsection{Proof of Proposition~\ref{prop:graphcont2}}

\adjustc{prop:graphcont2}
\begin{proposition}
  Assume that there exists a set of locally optimal \chain
  trees $\cs \subseteq \ct[\alpha]$ with unbounded alternation and that
  all such sets have unbounded recursive alternation. Then $G[\alpha]$
  is recursive.
\end{proposition}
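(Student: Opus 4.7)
The plan is a recursive descent through nested operation leaves of high alternation: at each step I would read off an edge in $G[\alpha]$ coming from a $B$-schema extracted from the surrounding operation node, and then close a cycle by finiteness of $M^2\times M\times 2^A$. Local optimality of the operation leaves is what will force the cycle to carry two nodes with distinct $u$-coordinates, so that $G[\alpha]$ is recursive.

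To set up the descent, I would fix a family $(T^{(N)})_{N\in\nat}$ in $\cs$ with $\val{T^{(N)}}$ of alternation $\gmo N$. Since $\cs$ has unbounded recursive alternation, after passing to a subsequence each $T^{(N)}$ contains an operation leaf $x^{(N)}_0$ whose label $\bar s^{(N)}_0$ has alternation $\gmo f_1(N)$ with $f_1(N)\to\infty$. By Proposition~\ref{prop:ctree}, $\bar s^{(N)}_0$ is the value of some chain tree $T_1^{(N)}$, which by Proposition~\ref{prop:optimal} combined with Fact~\ref{fct:optimal} I may take to be locally optimal. The family $(T_1^{(N)})_N$ again has unbounded alternation, so the hypothesis reapplies and forces each $T_1^{(N)}$ to host an operation leaf of alternation $\gmo f_2(N)\to\infty$. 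Iterating, for any prescribed $K$ I obtain a chain $T_0,T_1,\ldots,T_K$ of nested locally optimal chain trees, each realising the label of a distinguished operation leaf of the previous one.

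At each step $k$ I would extract graph data from the operation leaf $x_k$ in $T_k$. Let $B_k$ and $t_k$ be the alphabet and common first element of the enclosing operation node, $\Ts_k\in\fCtwolen{n}[\alpha,B_k]$ the compatible set of its context children's values, and $(q_1^k,\ldots,q_n^k)$, with $q_1^k=t_k^{\ell_n}$, the label of $x_k$. Pick a position $i_k$ with $q_{i_k}^k\neq q_{i_k+1}^k$. Projecting $\Ts_k$ and the chain $(q_1^k,\ldots,q_n^k)$ onto the coordinates $(1,i_k+1)$ yields respectively a compatible length-2 set $\Ts_k'\in\fCtwo[\alpha,B_k]$ and a length-2 chain in $\Cstwo[\alpha,B_k]$ (the word witnessing compatibility survives the projection), and together with the mirrored right-hand projection they combine into a $B_k$-schema $(p_k,\rho_k,\rho_k')$. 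Associating to step $k$ the graph node $((\sigma_k,\sigma_k'),u_k)\in M^2\times M$, where $(\sigma_k,\sigma_k')\in M^2$ is the globally accumulated ambient context across all previous nested expansions and $u_k=\sigma_k\cdot p_k\cdot\sigma_k'$, the schema $(p_k,\rho_k,\rho_k')$ furnishes exactly an edge labeled $B_k$ from $((\sigma_k,\sigma_k'),u_k)$ to $((\sigma_k\rho_k,\rho_k'\sigma_k'),u_{k+1})$ in $G[\alpha]$.

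Finally I would close the cycle. Taking $K$ larger than $|M^2\times M\times 2^A|$, some pair (graph node, alphabet) must repeat along the descent, so the portion of the descent between two such repetitions is a cycle in $G[\alpha]$ all of whose edges carry the same alphabet $B$. For the recursive condition I must exhibit two nodes in the cycle with distinct $u$-coordinates, which is where local optimality is essential: since each $x_k$ was kept locally optimal with respect to its globally accumulated context, the defining inequality of local optimality gives that $q_{i_k}^k$ and $q_{i_k+1}^k$ evaluate to different monoid elements when placed inside $(\sigma_k,\sigma_k')$, and this difference propagates to $u_k\neq u_{k+1}$ at some step along the cycle. The main obstacle I anticipate is the intertwined bookkeeping: one must make rigorous the notion of globally accumulated context across the nested sequence of chain trees, verify that the length-2 projections really combine into a $B_k$-schema matching the accumulated data, and argue that iterated applications of Proposition~\ref{prop:optimal} maintain enough local optimality to enforce $u_k\neq u_{k+1}$ at the global level, rather than only locally within each individual $T_k$.
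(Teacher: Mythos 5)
Your plan follows the same high-level route as the paper: descend through nested operation leaves of high alternation, extract a $B$-schema (hence an edge of $G[\alpha]$) at each step, close a cycle by finiteness of the node set, and use local optimality to force two $u$-coordinates in the cycle to differ. The paper packages this descent into a cleaner inductive invariant, namely \emph{$B$-alternating nodes}: it fixes a \emph{minimal} alphabet $B$ admitting a $B$-alternating node and proves (Lemma~\ref{lem:alt2}) that every $B$-alternating node has an outgoing $B$-labeled edge to another $B$-alternating node with a different $u$-coordinate, after which the cycle is automatic. The pair $(s,s')$ inside a $B$-alternating node is precisely the ``globally accumulated context'' you flag as the hard bookkeeping; putting it into the invariant is what dissolves that difficulty, since Proposition~\ref{prop:optimal} is then applied once per step, for the current $(s,s')$, rather than needing to be threaded across a nested sequence of trees.

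There is, however, a concrete gap beyond the bookkeeping. At each step you ``pick a position $i_k$ with $q^k_{i_k}\neq q^k_{i_k+1}$'' and extract a single $B_k$-schema. But a single alternating coordinate in the label of $x_k$ does not allow you to re-launch the next step: to reapply the hypothesis that locally optimal families of unbounded alternation have unbounded recursive alternation, you need the target of the new edge to itself be an alternating node, i.e.\ you need arbitrarily long alternating chains starting at $q_1$ inside the context $(t,t')$. The paper achieves this with Lemma~\ref{lem:chooseK} together with a pigeonhole: it takes an operation leaf of alternation $m=|M|^2\cdot K$ and finds a pair $q_1\neq q_2$ appearing at $\geq K$ positions, so that $(q_1,q_2)^K\in\Cstwo[\alpha,C]$ and hence $(q_1,q_2)^*\subseteq\Cstwo[\alpha,C]$. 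That step is what makes the target node $B$-alternating (and, by minimality of $B$, forces $C=B$, which is what really keeps all edge labels equal, rather than the weaker observation that alphabets are non-increasing along the descent). Without this quantitative pigeonhole your descent may stall after a bounded number of steps, because the tree built from the label of $x_k$ need not have an operation leaf of high alternation once re-optimized. Add Lemma~\ref{lem:chooseK} and choose the operation leaf of alternation $|M|^2\cdot K$, and the rest of your argument lines up with the paper's.
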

\restorec

In the remainder of the section, we assume that $\alpha$ satisfies the
hypothesis of Proposition~\ref{prop:graphcont2}. Set $B \subseteq A$
and let $((s,s'),u)$ be a node of $G[\alpha]$, we say that
$((s,s'),u)$ is \emph{$B$-alternating} if for all $n$, there exists 
$(s_1,\dots,s_n) \in \Cstwon[\alpha,B]$ such that the \chain
$(ss_1s',\dots,ss_ns')$ has alternation $n-1$ and $ss_1s' = u$. 

\begin{lemma} \label{lem:alt1}
  $G[\alpha]$ contains at least one $B$-alternating node for some $B$.
\end{lemma}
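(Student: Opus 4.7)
My plan is to produce a $B$-alternating node of $G[\alpha]$ directly, using the trivial context $(s, s') = (1_M, 1_M)$; the required data will come from the operation leaves guaranteed by the unbounded recursive alternation of $\cs$, refined by a pigeonhole argument over $M \times 2^A$.

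First, for each $n \gmo 1$ I would pick a tree $T_n \in \cs$ containing an operation leaf $x_n$ whose label $\val{x_n}$ is a \dchain of alternation at least $n-1$; such $x_n$ exist since $\cs$ has unbounded recursive alternation. By the constraints on operation nodes in the definition of \chain trees, the first entry of this label is of the form $r^{\ell_m} \in M$ (with $m$ the level of $T_n$), and the whole label lies in $\Cstwo[\alpha, B_n]$ for some $B_n \subseteq A$. Since $M \times 2^A$ is finite, a pigeonhole argument then produces a fixed pair $(s^*, B) \in M \times 2^A$ and infinitely many indices $n$ for which the first entry of $\val{x_n}$ equals $s^*$ and $B_n = B$; after reindexing, I may assume this holds for every $n \gmo 1$.

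Next, from each such $\val{x_n}$, I would extract a subword $(s_1, \dots, s_n)$ of length $n$ with $s_1 = s^*$ and alternation exactly $n-1$. Concretely, viewing $\val{x_n}$ as a concatenation of maximal runs of equal consecutive entries (at least $n$ such runs exist, since the alternation is at least $n-1$), I pick the leftmost entry of each of the first $n$ runs; by construction, consecutive picks have distinct values. Since $\Cstwo[\alpha, B]$ is closed under subwords (the observation preceding Fact~\ref{fct:high}), the subword $(s_1, \dots, s_n)$ belongs to $\Cstwon[\alpha, B]$.

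Finally, I would verify that $((1_M, 1_M), s^*)$ is $B$-alternating: with $(s, s') = (1_M, 1_M)$ and $u = s^*$, the chain $(1_M \cdot s_1 \cdot 1_M, \dots, 1_M \cdot s_n \cdot 1_M) = (s_1, \dots, s_n)$ has alternation $n-1$ and first entry equal to $s^* = u$. Since this works for every $n$, the lemma follows. The only minor obstacle is ensuring that the subword extraction preserves the first entry of the original chain; this is handled by picking the first run's leftmost element as the initial subword entry. Local optimality of $\cs$ plays no role in the present lemma — it will only matter in the subsequent steps that assemble walks in $G[\alpha]$ from the chosen $B$-alternating node. \qed
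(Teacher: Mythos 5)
Your proof is correct, but it takes a slightly different route from the paper's. The paper works directly with the \emph{values} $\val{T}$ of the trees $T\in\cs$: since $\cs$ has unbounded alternation, a pigeonhole over $M$ (and, implicitly, over $2^A$ for the alphabet) yields an element $u$ and an alphabet $B$ such that chains in $\val{\cs}$ with first entry $u$ and alphabet $B$ have arbitrarily large alternation; after extracting subwords with full alternation (using closure of \dchains under subwords), $((1_M,1_M),u)$ is $B$-alternating. You instead work with \emph{operation-leaf labels}, which means you rely on the unbounded \emph{recursive} alternation of $\cs$ rather than its unbounded alternation. Both halves of the hypothesis of Proposition~\ref{prop:graphcont2} are available, so this is legitimate, but it is a detour: this lemma is precisely the one place in the section where the direct (non-recursive) alternation suffices, and the paper exploits that. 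Your observation that the first entry of an operation-leaf label has the form $r^{\ell_m}$ is accurate but plays no role in the argument; you could drop it. The subword extraction, the preservation of the first entry by picking the leftmost element of the first run, and the closure of $\Cstwo[\alpha,B]$ under subwords are all handled correctly, and the final verification that $((1_M,1_M),s^*)$ is $B$-alternating is exactly what is needed.
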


\begin{proof}
  This is because $\cs$ has unbounded alternation. It follows that there
  exists a least one $u \in M$ such that there are \dchains with
  arbitrary high alternation and $u$ as first element. By definition,
  the node $((1_M,1_M),u)$ is then $B$-alternating for some $B$. \qed
\end{proof}

For the remainder of the proof we define $B$ as a minimal alphabet
such that there exists a $B$-alternating node in $G[\alpha]$. By this
we mean that for any $C \subsetneq B$, there exists no
$C$-alternating node in $G[\alpha]$.

\begin{lemma} \label{lem:alt2}
  Let $((s,s'),u)$ be any $B$-alternating node of $G[\alpha]$. Then
  there exists a node $((t,t'),v)$ such that

  \begin{enumerate}
  \item $((t,t'),v)$ is $B$-alternating.
  \item $((s,s'),u) \xrightarrow{B} ((t,t'),v)$.
  \item $u \neq v$.
  \end{enumerate}
\end{lemma}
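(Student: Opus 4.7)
My plan is to trace back from a sufficiently deep operation leaf in a locally optimal \chain tree witnessing $B$-alternation of $((s,s'),u)$, and to read off from the surrounding structure both a $B$-schema (which will supply the required edge) and witnesses for $B$-alternation of the target node.

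By the definition of $B$-alternation, for each $N$ there is a \dchain $(s_1,\ldots,s_n)\in\Cstwon[\alpha,B]$ with $ss_1s'=u$ and wrap of alternation $n-1\geq N$. Propositions~\ref{prop:ctree} and~\ref{prop:optimal} let us realize this as $\val{T}$ for a \chain tree $T$ chosen locally optimal for $(s,s')$; by Fact~\ref{fct:optimal}, $T$ is then locally optimal, and since $s_i\neq s_{i+1}$ whenever $ss_is'\neq ss_{i+1}s'$, the alternation of $T$ itself is at least $N$. These trees therefore form a locally optimal family of unbounded alternation, and the hypothesis of Proposition~\ref{prop:depth} forces unbounded recursive alternation. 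For arbitrarily large $N'$ I can thus select such a $T$ containing an operation leaf $x$ whose label $(q_1,\ldots,q_n)$ has alternation $\geq N'$.

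I then analyse the parent operation node $y$ of $x$, with alphabet $B_y\subseteq B$, compatible context-children values in $\fCtwon[\alpha,B_y]$ of common first element $t_0$, so that $q_1=t_0^{\ell_n}$. Splitting $\val{T}_j$ as the product of outer-left, context-left, central, context-right, and outer-right contributions at coordinate $j$, and picking the smallest index $j$ with $ss_1s'\neq s\val{T}_js'$, the natural choices of $s_1=\val{T}_1$, $s_2=$ (outer-left times context-left at $j$), and $s'_2=$ (context-right times outer-right at $j$) assemble into a $B$-schema: the length-$2$ projection of the compatible set at $y$ to coordinates $(1,j)$ supplies $\Ts$, and Facts~\ref{fct:chaincomp} and~\ref{fct:setcomp} glue it with outer \dchains to yield the required membership relations. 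This delivers the edge $((s,s'),u)\xrightarrow{B}((t,t'),v)$ with $(t,t')=(ss_2,s'_2s')$ and $v=s\val{T}_js'\neq u$ by the choice of $j$.

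To obtain $B$-alternation of $((t,t'),v)$, I iterate the extraction over arbitrarily large $N'$. Finiteness of $M$ and pigeonhole let me fix a single $(s_1,s_2,s'_2)$, hence a single $(t,t',v)$, arising for infinitely many $N'$. For each target length $m$ I then take the corresponding tree $T_m$ and operation leaf $x_m$, and extract a length-$m$ subchain of $\val{x_m}$ whose $(t,t')$-wrap has alternation $m-1$, landing in $\Cstwolen{m}[\alpha,B]$; distinctness of consecutive wrapped values comes from local optimality of $T_m$ for $(s,s')$ together with the fact that $(t,t')$ matches the context values of $x_m$ at the selected coordinate. The hardest point will be ensuring that the schema is genuinely a $B$-schema and not merely a $B_y$-schema with $B_y\subsetneq B$; this is handled by the minimality of $B$, since if every choice of $T$ and $x$ forced $B_y\subsetneq B$, the same extraction would yield a $C$-alternating node for some $C\subsetneq B$, contradicting the minimal choice of $B$.
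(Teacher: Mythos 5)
Your construction of the edge follows the paper's skeleton, but your argument for why $((t,t'),v)$ is $B$-alternating has a genuine gap. You propose to fix $(s_1,s_2,s'_2)$ (hence $(t,t')$ and $v$) by pigeonhole, then for each $m$ extract a length-$m$ subchain of $\val{x_m}$ whose $(t,t')$-wrap alternates $m-1$ times, citing local optimality. But local optimality of $T_m$ for $(s,s')$ only gives distinctness of $s\,s_{i+1}\,q_i\,s'_{i+1}\,s'$ and $s\,s_{i+1}\,q_{i+1}\,s'_{i+1}\,s'$ where $(s_{i+1},s'_{i+1})$ is the \emph{position-dependent} context value of $x_m$ at coordinate $i{+}1$, while you need distinctness under the single \emph{fixed} wrap $(t,t')=(ss_2,s'_2 s')$. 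Since $\cval{x_m}$ is itself a (generally non-constant) chain, the coordinates where its value equals $(s_2,s'_2)$ need not be numerous or contiguous, so there is no way to extract a long $(t,t')$-alternating subchain from $\val{x_m}$ this way. Your minimality argument for the alphabet has the same flavour of gap: you would need the extracted chains to land in a common $\Cstwolen{m}[\alpha,C]$ for a single $C$ before minimality of $B$ can be invoked, and that is precisely what is not yet established.

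The paper closes this with an extra ingredient you are missing: a bound $K$ (Lemma~\ref{lem:chooseK}) such that $(q_1,q_2)^K\in\Cslev{2}[\alpha,C]$ already implies $(q_1,q_2)^*\subseteq\Cslev{2}[\alpha,C]$. One then picks a tree whose operation leaf label has alternation $|M|^2 K$, pigeonholes to find a single pair $q_1\neq q_2$ occurring at $K$ consecutive alternation positions, and obtains $(q_1,q_2)^*\subseteq\Cstwo[\alpha,C]$. This does two jobs at once: it makes $((1_M,1_M),q_1)$ $C$-alternating (so $C=B$ by minimality of $B$), and it means one only needs local optimality at \emph{one} index $i$ to get $tq_1t'\neq tq_2t'$, after which the fixed alternating chain $(q_1,q_2,q_1,\dots)$ wrapped with $(t,t')$ already witnesses $B$-alternation of $((t,t'),v)$. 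Without this ``pumping to a fixed pair'' step your argument does not go through.
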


By definition $G[\alpha]$ has finitely many nodes. Therefore, since by
definition, there exists at least one $B$-alternating node, it is
immediate from Lemma~\ref{lem:alt2} that $G[\alpha]$ must contain a
cycle whose edges are all labeled by $B$. Moreover, by Item~3 in
Lemma~\ref{lem:alt2}, this cycle contains two nodes $((s,s'),u)$ and
$((t,t'),v)$ such that $u \neq v$. We conclude that $G[\alpha]$ is
recursive which terminates the proof of
Proposition~\ref{prop:graphcont2}. It remains to prove Lemma~\ref{lem:alt2}.  

\begin{proof} We proceed in three steps. We first use our hypothesis
  to construct a special set of \chain trees \crr of alphabet $B$.
  Then, we choose a \chain tree $T$ in \crr with large enough recursive
  alternation. Finally, we use $T$ to construct the desired node
  $((t,t'),v)$. We begin with the construction of \crr.

  \medskip
  \noindent
  {\bf Construction of \crr.} We construct a set \crr of \chain trees
  that satisfies the following properties: 
  \begin{enumerate}
  \item For all $ T \in \crr$, $\content{T} = B$.
  \item All \chains in $s \cdot val(\crr) \cdot s'$ have $u$ as first
    element.
  \item All trees in $\crr$ are locally optimal for $(s,s')$.
  \item $\crr$ has unbounded recursive alternation.
  \end{enumerate}
  We use the fact that $((s,s'),u)$ is $B$-alternating and the
  hypothesis in Proposition~\ref{prop:graphcont2}. Since $((s,s'),u)$ is
  $B$-alternating, we know that for any $n \in \nat$, there exists
  $(s_1,\dots,s_n) \in \Cstwo[\alpha,B]$ such that the \chain
  $(ss_1s',\dots,ss_ns')$ has alternation $n-1$ and $ss_1s' = u$. We 
  denote by $\Rs \subseteq \Cstwo[\alpha,B]$ the set of all these
  \dchains. Observe that by definition, $\Rs$ has unbounded
  alternation. It follows from Proposition~\ref{prop:ctree} that one 
  can construct a set of \chain trees $\crr'$ whose set of values
  is exactly $\Rs$. By definition, $\crr'$ satisfies Items~1 and~2 and
  $s \cdot val(\crr') \cdot s'$ has unbounded alternation.

  We now use Proposition~\ref{prop:optimal} to construct \crr from
  $\crr'$ which is locally optimal for $(s,s')$ and satisfies $s \cdot
  val(\crr') \cdot s' = s \cdot val(\crr) \cdot s'$. We now know that
  \crr satisfies properties~1 to~3. Observe that by definition \crr has
  unbounded alternation. By hypothesis of
  Proposition~\ref{prop:graphcont2}, it follows that \crr has also 
  unbounded recursive alternation and all items are satisfied.

  \medskip
  \noindent
  {\bf Choosing a \chain tree $T \in \crr$.} We now select a special
  \chain tree $T$ in \crr. We want $T$ to have large enough recursive
  alternation in order to use it to construct the node 
  $((t,t'),v)$. We define the needed recursive alternation in the
  following lemma.

  \begin{lemma} \label{lem:chooseK}
    There exists $K \in \nat$ such that for all $t_1,t_2 \in M$ and all $C
    \subseteq A$, $(t_1,t_2)^K \in \Cslev 2[\alpha,C] \Rightarrow (t_1,t_2)^*
    \subseteq \Cslev 2[\alpha,C]$.
  \end{lemma}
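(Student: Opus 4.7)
The plan is to establish this by a pigeonhole-style finiteness argument, relying entirely on the subword-closure property of $\Cslev{2}[\alpha,C]$ that is observed right before Fact~\ref{fct:high}. The key is that for a fixed triple $(t_1,t_2,C) \in M \times M \times 2^A$, the set
\[
Y_{t_1,t_2,C} = \{n \in \nat \mid (t_1,t_2)^n \in \Cslev{2}[\alpha,C]\}
\]
is downward-closed in $\nat$: indeed, $(t_1,t_2)^m$ is a subword of $(t_1,t_2)^n$ whenever $m \leq n$, so subword-closure of $\Cslev{2}[\alpha,C]$ yields $m \in Y_{t_1,t_2,C}$. Consequently, $Y_{t_1,t_2,C}$ is either all of $\nat$ or an initial segment $\{0,1,\dots,N_{t_1,t_2,C}\}$ for some $N_{t_1,t_2,C} \in \nat$.

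From here, I would set $K = 1 + \max N_{t_1,t_2,C}$, where the maximum ranges over all triples $(t_1,t_2,C)$ for which $Y_{t_1,t_2,C}$ is finite (with the empty maximum interpreted as $0$). Since $M$ and $2^A$ are finite, there are only finitely many such triples and $K$ is well-defined. If $(t_1,t_2)^K$ belongs to $\Cslev{2}[\alpha,C]$, then $K \in Y_{t_1,t_2,C}$; but $K$ was chosen strictly larger than any finite $N_{t_1,t_2,C}$, so $Y_{t_1,t_2,C}$ cannot be finite and must equal $\nat$. This gives $(t_1,t_2)^n \in \Cslev{2}[\alpha,C]$ for every $n$, that is, $(t_1,t_2)^* \subseteq \Cslev{2}[\alpha,C]$, as required.

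The argument is entirely structural, and I do not foresee any real obstacle: everything reduces to finiteness of $M \times M \times 2^A$ and the already-established closure of d-chains under subwords. The only point that needs to be checked is that $(t_1,t_2)^m$ is indeed a subword of $(t_1,t_2)^n$ for $m \leq n$, which is immediate by picking, for instance, the first $2m$ entries of $(t_1,t_2)^n$. It is worth noting that this proof is non-constructive in flavour (much like Fact~\ref{fct:boundk} used earlier for \ichains of length $2$), but since we are only asked for the existence of $K$ this is perfectly fine.
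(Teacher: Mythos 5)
Your argument is correct and is essentially the same as the paper's: both rest on closure of $\Cslev 2[\alpha,C]$ under subwords (so that for a fixed $(t_1,t_2,C)$ the set of powers that are \dchains is an initial segment of $\nat$) together with finiteness of $M\times M\times 2^A$. You simply spell out the pigeonhole step more explicitly than the paper's one-line proof, but the underlying reasoning and the choice of $K$ coincide.
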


  \begin{proof}
    It suffices to take $K$ as the largest $k$ such that there exists
    $t_1,t_2 \in M$ and $C \subseteq A$ with $(t_1,t_2)^{k-1} \in
    \Cslev 2[\alpha,C]$ but $(t_1,t_2)^{k} \not\in \Cslev 2[\alpha,C]$. \qed
  \end{proof}

  Set $m = |M|^2 \cdot K$ with $K$ as defined in
  Lemma~\ref{lem:chooseK}. By hypothesis on \crr (see property~4) there
  exists a tree $T \in \crr$  with recursive alternation $m$. We set $n$
  as the level of $T$.

  \medskip
  \noindent
  {\bf Construction of the node $((t,t'),v)$.} Set $r$ as the first
  element in \val{T}. Recall that by choice of $T$ in \crr, $srs' = u$. 
  By definition of recursive alternation, $T$ must contain an operation
  leaf $x$ whose label $\val{x} = (t_1,\dots,t_{n})$ has alternation
  $m$. Set $((s_1,\dots,s_n),(s'_1,\dots,s'_n)) = \cval{x}$ and $C =
  \content{x}$. Note that since $\content{T} = B$, $C \subseteq
  B$. Recall that by Fact~\ref{fct:value2}, we have 
  \[
  s \cdot \val{T} \cdot s' = s \cdot (s_1,\dots,s_n) \cdot
  (t_1,\dots,t_{n}) \cdot (s'_1,\dots,s'_n) \cdot s'
  \]
  Note that $(t_1,\dots,t_{n}) \in \Cstwo[\alpha,C]$, $(s_1,\dots,s_n) \in
  \Cstwo[\alpha]$ and $(s'_1,\dots,s'_n) \in \Cstwo[\alpha]$. We know that
  $(t_1,\dots,t_{n})$ has alternation $m = |M|^2 \cdot K$. It follows
  from a pigeon-hole principle argument that there exists $q_1 \neq q_2
  \in M$ and a set $I \subseteq  \{1,\dots,n-1\}$ of size at least $K$
  such that for all $i \in I$, $t_i = q_1$ and $t_{i+1} = q_2$. Observe
  that by definition, the \chain $(q_1,q_2)^{K}$ is a subword of
  $(t_1,\dots,t_{n})$ and therefore a \dchain for $\alpha,C$. By choice
  of $K$ it follows that $(q_1,q_2)^{*} \subseteq \Cstwo[\alpha,C]$. Note
  that this means that the node $((1_M,1_M),q_1)$ is
  $C$-alternating. Therefore, by minimality of $B$, we have $C = B$.
  Choose some arbitrary $i \in I$, say the first element in $I$. Recall
  that $T \in \crr$ and therefore locally optimal for
  $(s,s')$. The following fact is immediate by definition of local
  optimality: 

  \begin{fact} \label{fct:ccont}
    $ss_{i+1}q_1s_{i+1}'s' \neq ss_{i+1}q_2s_{i+1}'s'$.
  \end{fact}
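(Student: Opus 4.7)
The plan is to simply unpack the definition of local optimality at the operation leaf $x$ and at the index $i \in I$ chosen just above. Recall that $T$ was chosen inside $\crr$, and by construction every tree in $\crr$ is locally optimal for $(s,s')$. By definition, this means that every operation leaf of $T$ is locally optimal for $(s,s')$; in particular $x$ itself is locally optimal for~$(s,s')$.

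Now write $\val{x} = (t_1,\dots,t_n)$ and $\cval{x} = ((s_1,\dots,s_n),(s'_1,\dots,s'_n))$, as in the text. Local optimality of $x$ for $(s,s')$ is exactly the statement that for every $j<n$ with $t_j \neq t_{j+1}$ one has
\[
s \cdot s_{j+1} \cdot t_{j} \cdot s'_{j+1} \cdot s' \;\neq\; s \cdot s_{j+1} \cdot t_{j+1} \cdot s'_{j+1} \cdot s'.
\]
Instantiate this at $j = i$. By the choice of $i \in I$ we have $t_i = q_1$ and $t_{i+1} = q_2$, and the pigeon-hole step produced the two distinct elements $q_1 \neq q_2$; therefore the hypothesis $t_i \neq t_{i+1}$ is satisfied. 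Substituting $q_1$ for $t_i$ and $q_2$ for $t_{i+1}$ in the displayed inequality yields exactly
\[
s s_{i+1} q_1 s'_{i+1} s' \;\neq\; s s_{i+1} q_2 s'_{i+1} s',
\]
which is the claim. There is no real obstacle here: the only two things to check are that local optimality of $T$ for $(s,s')$ carries down to $x$ (true by definition, since $T$ being locally optimal for $(s,s')$ means each operation leaf is) and that $q_1 \neq q_2$ (established by the pigeon-hole argument preceding the statement), after which the fact is a direct substitution into the optimality inequality.
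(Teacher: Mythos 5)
Your proof is correct and is exactly the paper's (one-line) justification: the paper simply remarks that the fact ``is immediate by definition of local optimality,'' which is precisely the unpacking you carry out, using that $T\in\crr$ is locally optimal for $(s,s')$, that $x$ is an operation leaf of $T$, and that $t_i=q_1\neq q_2=t_{i+1}$ at the chosen index $i\in I$.
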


  We now define the node $((t,t'),v)$. It is immediate from
  Fact~\ref{fct:ccont} that either $ss_{i+1}q_1s_{i+1}'s' \neq u$ or
  $ss_{i+1}q_2s_{i+1}'s' \neq u$, we set $v \neq u$ as this element.
  Finally, we set $t = ss_{i+1}$ and $t' = s'_{i+1}s'$. Observe that by
  Fact~\ref{fct:ccont} $tq_1t' \neq tq_2t'$, therefore since
  $(q_1,q_2)^{*} \subseteq \Cstwo[\alpha,B]$ and by choice of $v$, we 
  know that $((t,t'),v)$ is $B$-alternating. It remains to prove that
  $((s,s'),u) \xrightarrow{B} ((t,t'),v)$. We already know that $u =
  srs'$, $t = ss_{i+1}$ and $t' = s'_{i+1}s'$. We need to prove that
  $(r,s_{i+1},s'_{i+1})$ is a $B$-schema.

  \medskip

  Using the definition of operation nodes, we prove that $r=s_1s'_1$
  and define $\Ts \in \fCtwotwo[\alpha,B]$ such that $(s_1,s_{i+1}) \in 
  \Cstwotwo[\alpha,B] \cdot \Ts^{\omega}$ and $(s'_1,s'_{i+1}) \in
  \Ts^{\omega} \cdot \Cstwotwo[\alpha,B]$ which terminates the
  proof. Set $y$ as the parent of $x$. By definition, $y$ is an
  operation node, set $x_1,\dots,x_{2\ell_n + 1}$ as the children of $y$
  ($x = x_{\ell_n+1}$). By definition, 
  \[
  \Rs = \{\val{x_1},\dots, \val{x_{\ell_n}},\val{x_{\ell_n+ 2}},\dots,
  \val{x_{2\ell_n+1}}\} \in \fCtwolen{n}[\alpha,B]
  \]
  Set $t$ has the common first value of all \chains in $\Rs$ and
  $(\bar{q},\bar{q}') = \cval{y}$. By Fact~\ref{fct:value3}, we have   
  \begin{equation}
    \bar{s} = \bar{q} \cdot \val{x_1} \cdots \val{x_{\ell_n}} \text{ and }
    \val{x_{\ell_n+2}} \cdots \val{x_{2\ell_n+1}} \cdot \bar{q}' =
    \bar{s}'
    \label{eq:fin}
  \end{equation}
  By Fact~\ref{fct:value2}, and definition of operation nodes, $r = s_1
  t^{\ell_n} s'_1$. It follows that $r=s_1s'_1$.

  Since $T$ has alphabet $B$, we have $\bar{q} \in \Cs_{2,n}[\alpha,C]$
  for some $C \subseteq B$. Using~\eqref{eq:fin} and the definition of
  $\ell_n$ as $\omega(2^{M^n})$, we get that $\bar{s} \in
  \Cs_{2,n}[\alpha,C] \cdot \Rs^\omega$. Moreover, since $Rs^{\omega}
  \subseteq \Cs_{2,n}[\alpha,B]$, $\bar{s} \in \Cs_{2,n}[\alpha,B]$.
  Using a symetrical argument, we get that $\bar{s}' \in \Rs^\omega
  \cdot \Cs_{2,n}[\alpha,B]$.

  Finally, set \Ts as the set of \chains of length $2$ obtained from
  \chains in \Rs by keeping only the values at component $1$ and
  $i+1$. Since \dchains are closed under subwords, it is immediate from
  $\Rs \in \fCtwolen{n}[\alpha,B]$ that $\Ts \in \fCtwotwo[\alpha,B]$.
  Moreover, by definition, we have $(s_1,s_{i+1}) \in
  \Cstwotwo[\alpha,B] \cdot \Ts^{\omega}$ and $(s'_1,s'_{i+1}) \in 
  \Ts^{\omega} \cdot \Cstwotwo[\alpha,B]$. We conclude that
  $(r,s_{i+1},s'_{i+1})$ is a $B$-schema  which terminates the proof. \qed
\end{proof}

\section{Proof of Proposition~\ref{prop:width}}
\label{app:width}
\newcommand\alt[2]{\ensuremath{\textsf{alt}(#1,#2)}\xspace}
Recall that we fixed the morphism $\alpha: A^* \rightarrow M$. We prove Proposition~\ref{prop:width}.

\adjustc{prop:width}
\begin{proposition}
  Assume that there exists a set of locally optimal \chain trees $\cs
  \subseteq \ct[\alpha]$ with unbounded alternation but bounded
  recursive alternation. Then $\alpha$ does not satisfy
  Equation~\eqref{eq:bcs1}.
\end{proposition}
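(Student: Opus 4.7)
The plan is to construct, from a locally optimal \chain tree $T \in \cs$ of sufficiently large alternation, a concrete $3$-\dchain $(s_1, s_2, s_3) \in \Cstwo[\alpha]$ violating either $s_1^\omega s_2 s_3^\omega = s_1^\omega s_3^\omega$ or its dual, which directly contradicts Eq.~\eqref{eq:bcs1}. Fix the recursive-alternation bound $K$ of $\cs$ and choose $T \in \cs$ whose alternation exceeds a threshold that will be fixed by the argument.

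First, since each operation leaf of $T$ has value of alternation at most $K$ and each initial leaf is constant, $T$ must contain very many operation leaves, hence many operation nodes. I would then locate an operation node $y$ whose subtree witnesses alternation strictly exceeding what any of $y$'s descendants achieves individually: this is possible because the alternation at the root far exceeds $K$, while descending into an operation leaf caps the alternation at $K$. Intuitively, $y$ is a node where the operation-node combinator itself ``creates'' extra alternation that cannot be attributed to the central operation leaf alone.

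Second, I would analyze the structure of $y$. Let $t$ be the common first component of the context children of $y$, let $\bar r = (t^\omega, r_2, \ldots, r_n)$ be the value of the central operation leaf, and write $\val{y}_i = P_L(i) \cdot r_i \cdot P_R(i)$ where $P_L(i), P_R(i)$ are the products of the $\ell_n$ left/right context-child $i$-th components. By Fact~\ref{fct:chaincomp} and compatibility of context children, $(t^\omega, P_L(i))$, $(t^\omega, r_i)$, and $(t^\omega, P_R(i))$ are all $2$-\dchains. Since the extra alternation at $y$ cannot originate from the central leaf alone, there must be a position $i$ at which inserting $r_i$ between $P_L(i)$ and $P_R(i)$ produces an alternation absent from $P_L(i)\cdot P_R(i)$. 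Using a pigeonhole over the $|M|$ possible values taken by partial products of context children, I would extract three indices $j_1 < i < j_2$ such that the sub-\dchain $(\val{y}_{j_1}, \val{y}_i, \val{y}_{j_2})$ has outer components whose $\omega$-powers reveal genuine idempotent blocks $s_1^\omega, s_3^\omega$, and middle component $\val{y}_i$ satisfying $s_1^\omega \val{y}_i s_3^\omega \neq s_1^\omega s_3^\omega$. Local optimality of $T$, via Fact~\ref{fct:optimal} and Proposition~\ref{prop:optimal}, ensures that this inequality is not collapsed by the outer context of $T$, so the extracted triple really contradicts Eq.~\eqref{eq:bcs1} (or its dual, according to which side the alternation appears).

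The main obstacle will be the pigeonhole-to-$\omega$-power step. The products $P_L(i), P_R(i)$ are products of $\ell_n$ compatible but non-identical context-child components, so they are not literal $\omega$-powers in $M$: the idempotency guaranteed by $\ell_n = \omega(2^{M^n})$ lives only at the set level in $2^{M^n}$. Converting this set-level idempotency into an element-level idempotent $s^\omega$ suitable for Eq.~\eqref{eq:bcs1} requires identifying, via repetitions in the sequence of partial products in $M$, a genuine $\omega(M)$-run, and then combining this with the compatibility relations in $\fCtwolen{n}$ to lift the inequality witnessed in $M$ to a bona fide $3$-\dchain. This translation between the set-level structure used by the algorithm of Section~\ref{sec:comput} and the element-level equation~\eqref{eq:bcs1} is where most of the technical work of the proof lies, and it is exactly the reason why bounded recursive alternation (rather than merely bounded leaf depth) is the right hypothesis for Proposition~\ref{prop:width}.
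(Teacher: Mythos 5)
Your approach diverges from the paper's and, as written, has a genuine gap that I do not think can be closed along the lines you propose.

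The paper's proof does not single out an operation node $y$ and analyze its local structure. Instead it \emph{flattens} the entire tree: by Fact~\ref{fct:value}, $\val{T}$ is the product of the values of \emph{all} leaves $x_1,\dots,x_m$, and bounded recursive alternation exactly says that each leaf value has alternation at most~$K$ (operation leaves by hypothesis, initial leaves because they are constant). This is the content of Proposition~\ref{prop:matinit}: the leaf values form the rows of a \emph{\chain matrix} with bounded local alternation, and all tree structure is then discarded. The combinatorial heart of the argument happens entirely at the matrix level: one extracts a ``tame'' matrix (exactly one alternating row per alternating column-pair), then a ``monotonous'' one via Erd\H{o}s--Szekeres, and finally a length-$6$ contradiction matrix with idempotent blocks $e,f$ by applying Ramsey's theorem \emph{for $3$-uniform hypergraphs} twice.

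Your proposal tries to keep the tree structure and find a single operation node $y$ where ``the operation-node combinator itself creates extra alternation.'' This premise is not sound: alternation of $\val T$ is a global property of the product $\val{x_1}\cdots\val{x_m}$ over all leaves, and there is no reason it should be witnessed at any single node. The bounded-recursive-alternation hypothesis does not isolate a culprit node; it only bounds each individual factor's alternation, which is precisely what is needed for the matrix decomposition, not for localization. Furthermore, your ``pigeonhole over the $|M|$ possible values'' of partial products is far too weak to produce the genuine idempotents $s_1^\omega, s_3^\omega$ required by Equation~\eqref{eq:bcs1}: to get idempotent blocks on \emph{both} sides of a distinguished entry, one must control all partial products above and below the diagonal simultaneously, which is why the paper needs Ramsey's theorem on $3$-hypergraphs (and needs to run it twice) rather than simple pigeonhole. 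You correctly sense in your final paragraph that the conversion from set-level idempotency in $2^{M^n}$ to element-level idempotents in $M$ is the crux; but the mechanism for it is the monotonous-matrix $+$ Ramsey construction, not a local pigeonhole at one operation node. As it stands, your proof cannot reach the required contradiction.
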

\restorec

As for the previous section, we will use a new object that is specific
to this case: \emph{\chain matrices}.

\medskip
\noindent
{\bf \Chain Matrices.} Let $n \in \nat$. A \emph{\chain matrix of
  length $n$} is a rectangular matrix with $n$ columns and such that
rows belong to $\Cstwolen{n}[\alpha]$. If \mat is a \chain matrix, we
will denote by $\mat_{i,j}$ the entry at row $i$ (starting from the
top) and column $j$ (starting from the left) in \mat. If \mat is a
\chain matrix of length $n$ and with $m$ rows, we call the \chain
$\bigl((\mat_{1,1} \cdots \mat_{m,1}), \dots, (\mat_{1,n} \cdots
\mat_{m,n})\bigr)$, the \emph{value} of \mat. By Fact~\ref{fct:chaincomp},
the value of a \chain matrix is a \dchain. We give an example with
$3$ rows in Figure~\ref{fig:valmat}.

\begin{figure}[h]
  \begin{center}
    \begin{tikzpicture}
      \node[anchor=mid] (s1) at (0.0,0) {$s_1$};
      \node[anchor=mid] (s2) at (1.0,0) {$s_2$};
      \node[anchor=mid] (s3) at (2.0,0) {$s_3$};
      \node[anchor=mid] (s4) at (3.0,0) {$s_4$};
      \node[anchor=mid] (s5) at (4.0,0) {$\cdots$};
      \node[anchor=mid] (sn) at (5.0,0) {$s_{n}$};

      \node[anchor=mid] (t1) at (0.0,-0.5) {$t_1$};
      \node[anchor=mid] (t2) at (1.0,-0.5) {$t_2$};
      \node[anchor=mid] (t3) at (2.0,-0.5) {$t_3$};
      \node[anchor=mid] (t4) at (3.0,-0.5) {$t_4$};
      \node[anchor=mid] (t5) at (4.0,-0.5) {$\cdots$};
      \node[anchor=mid] (tn) at (5.0,-0.5) {$t_{n}$};

      \node[anchor=mid] (r1) at (0.0,-1.0) {$r_1$};
      \node[anchor=mid] (r2) at (1.0,-1.0) {$r_2$};
      \node[anchor=mid] (r3) at (2.0,-1.0) {$r_3$};
      \node[anchor=mid] (r4) at (3.0,-1.0) {$r_4$};
      \node[anchor=mid] (r5) at (4.0,-1.0) {$\cdots$};
      \node[anchor=mid] (rn) at (5.0,-1.0) {$r_{n}$};

      \draw (-0.5,0.25) to (-0.5,-1.25);
      \draw (0.5,0.25) to (0.5,-1.25);
      \draw (1.5,0.25) to (1.5,-1.25);
      \draw (2.5,0.25) to (2.5,-1.25);
      \draw (3.5,0.25) to (3.5,-1.25);
      \draw (4.5,0.25) to (4.5,-1.25);
      \draw (5.5,0.25) to (5.5,-1.25);

      \draw (-0.5,0.25) to (5.5,0.25);
      \draw (-0.5,-0.25) to (5.5,-0.25);
      \draw (-0.5,-0.75) to (5.5,-0.75);
      \draw (-0.5,-1.25) to (5.5,-1.25);


      \node[anchor=mid] (p1) at (-0.55,-2.2) {$($};
      \node[anchor=mid] (v1) at (0.0,-2.2) {$s_1t_1r_1,$};
      \node[anchor=mid] (v2) at (1.0,-2.2) {$s_2t_2r_2,$};
      \node[anchor=mid] (v3) at (2.0,-2.2) {$s_3t_3r_3,$};
      \node[anchor=mid] (v4) at (3.0,-2.2) {$s_4t_4r_4,$};
      \node[anchor=mid] (v5) at (4.0,-2.2) {$\dots$};
      \node[anchor=mid] (vn) at (5.0,-2.2) {$,s_{n}t_{n}r_{n}$};
      \node[anchor=mid] (p2) at (5.8,-2.2) {$)$};
      \node[anchor=mid] (text) at (-2.0,-2.2) {Value};
      \draw[ar] (text) to (p1);

      \draw[->] ($(r1)-(0.0,0.4)$) to (v1);
      \draw[->] ($(r2)-(0.0,0.4)$) to (v2);
      \draw[->] ($(r3)-(0.0,0.4)$) to (v3);
      \draw[->] ($(r4)-(0.0,0.4)$) to (v4);
      \draw[->] ($(rn)-(0.0,0.4)$) to (vn);
    \end{tikzpicture}
  \end{center}
  \caption{Value of \chain matrix with $3$ rows}
  \label{fig:valmat}
\end{figure}
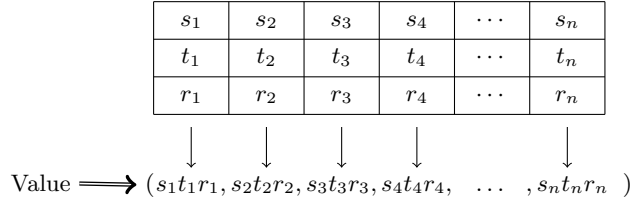

Given a \chain matrix, \mat, the \emph{alternation} of \mat is the
alternation of its value. Finally, the \emph{local alternation} of a
\chain matrix, \mat, is the largest natural $m$ such that \mat has a
row with alternation $m$. We now prove the two following propositions.

\begin{proposition} \label{prop:matinit}
  Assume that there exists a set of locally optimal \chain trees $\cs
  \subseteq \ct[\alpha]$ with unbounded alternation and recursive
  alternation bounded by $K \in \nat$. Then there exist \chain
  matrices with arbitrarily large alternation and local alternation
  bounded by $K$.   
\end{proposition}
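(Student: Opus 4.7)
The plan is to convert chain trees into chain matrices in the most direct way possible: reading the leaves of a tree from left to right and making each leaf's label a row of a matrix. Since the value of a chain tree equals the componentwise product of its leaves by Fact~\ref{fct:value}, this translation preserves the overall alternation, while the leaves' individual labels control the local alternation.

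More precisely, I would proceed as follows. Fix an arbitrary $N \in \nat$. Using the hypothesis that $\cs$ has unbounded alternation, I pick some $T \in \cs$ whose value has alternation at least $N$; let $n$ be the level of $T$. I then list the leaves of $T$ in left-to-right (prefix) order as $x_1,\dots,x_m$ and define the chain matrix $\mat$ of length $n$ with $m$ rows by setting row $i$ to be $\val{x_i}$. Every $\val{x_i}$ belongs to $\Cstwolen{n}[\alpha]$ (constant chains for initial leaves, arbitrary \dchains for operation leaves), so $\mat$ is a well-defined chain matrix. By Fact~\ref{fct:value}, $\val{\mat} = \val{x_1}\cdots\val{x_m} = \val{T}$, hence the alternation of $\mat$ is at least $N$.

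For the local-alternation bound, I would argue by case analysis on the four possible kinds of leaves. Initial leaves are labelled by constant chains $(s,\dots,s)$, which have alternation $0$. Operation leaves are labelled by \dchains whose alternation is at most $K$ by the assumed bound on the recursive alternation of trees in $\cs$. (Product nodes and operation nodes are inner nodes and do not contribute rows.) Thus every row of $\mat$ has alternation at most $K$, and therefore the local alternation of $\mat$ is at most $K$. Since $N$ was arbitrary, this yields chain matrices of arbitrarily large alternation with local alternation uniformly bounded by $K$.

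I do not expect any serious obstacle here: the statement is essentially a repackaging lemma that transfers information from the tree formalism (where local optimality and recursive alternation are convenient to reason about) to the matrix formalism (which will be used in the remainder of Appendix~\ref{app:width} to contradict Equation~\eqref{eq:bcs1}). The only thing one has to notice is that the alternation of a product chain is controlled by the alternations of its factors, and this is already guaranteed componentwise by taking leaf labels as rows.
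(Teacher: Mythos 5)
Your proof is correct and takes essentially the same route as the paper: pick a tree of high alternation, read its leaves left to right, make each leaf label a row, and use Fact~\ref{fct:value} plus the recursive-alternation bound to control total and local alternation. Your added case analysis on leaf types (initial leaves constant, operation leaves bounded by~$K$) just spells out the step the paper summarizes as ``by definition, for all $i$, $\val{x_i}$ has alternation bounded by $K$.''
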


\begin{proposition} \label{prop:contradend}
  Assume that there exist \chain matrices with arbitrarily large alternation
  and local alternation bounded by $K\in \nat$. Then $\alpha$ does not
  satisfy~\eqref{eq:bcs2}. 
\end{proposition}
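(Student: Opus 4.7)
The plan is to turn the combinatorial assumption on \chain matrices into an algebraic witness that \eqref{eq:bcs2} fails on elements exhibited by the matrix. The quantitative core is a simple counting fact: if the value of a matrix has alternation $L$ while each row has alternation at most $K$, then at each of the $L$ alternating columns of the value some row must itself alternate, so the number of rows $m$ is at least $L/K$. Hence, as $L\to\infty$, the matrices in the family grow unboundedly in height, which is what will eventually let us absorb $\omega$-powers.

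First I would normalize the family via a pigeonhole/Ramsey reduction. Each row, being a \dchain of length $n$ with at most $K+1$ distinct block-values in the finite monoid $M$, admits one of finitely many abstract shapes; by restricting to matrices with many rows of a single common shape and aligned column-alternation positions, I can assume the block structure is regular. A further Simon-style stabilization on the products of consecutive rows ensures that the value reached by the rows in the monoid $2^{M^n}$ stabilizes to its $\omega$-power, and that all rows share a common sub-alphabet $B$.

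Next I would locate two consecutive columns $j, j+1$ at which the value alternates; restricted to this pair of columns, every row contributes a \dchain of length $2$. Using Fact~\ref{fct:setcomp} together with the simultaneous realizability of rows witnessing the matrix value, I would extract from the stabilized row-pattern a compatible set $\Ts \in \fCtwo[\alpha,B]$. This compatible set, combined with a chosen split of the row ordering into "top rows", "middle row", and "bottom rows", supplies the ingredients to form two $B$-schemas $(s_1,s_2,s'_2)$ and $(t_1,t_2,t'_2)$ at the columns $(j,j+1)$: the $\omega$-contexts $(s_2t_2)^\omega$ and $(t'_2s'_2)^\omega$ in \eqref{eq:bcs2} arise from stacking many copies of the normalized top and bottom row-blocks, while $s_1$ encodes a value read off at column $j$ and $s_2t_1s'_2$ a value read off at column $j+1$.

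The main obstacle is precisely this last alignment: engineering the two $B$-schemas so that substituting them into \eqref{eq:bcs2} makes the left- and right-hand sides of the equation evaluate, respectively, to the $j$-th and $(j+1)$-th components $v_j$ and $v_{j+1}$ of the matrix value. This requires choosing the schema split point carefully and repeating the stabilized row-pattern enough times to saturate the $\omega$-powers in the schema contexts, which is exactly where the unboundedness of $m$ is used. Once this engineering is completed, the inequality $v_j \ne v_{j+1}$, guaranteed by the assumed alternation of the value at column $j$, directly contradicts the identity \eqref{eq:bcs2}, and the proposition follows by contraposition.
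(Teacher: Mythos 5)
There is a genuine gap here, and it begins with the target of the proof. Although the statement as printed names \eqref{eq:bcs2}, this is a typo: the proposition is the second half of the proof of Proposition~\ref{prop:width}, whose conclusion is that $\alpha$ violates \eqref{eq:bcs1}, and the paper's proof of the present proposition accordingly builds a ``contradiction matrix'' whose middle row yields $(e,u_2,f)\in\Cstwo[\alpha]$ with $e,f$ idempotent and $fu_2e\neq fe$ (or the symmetric case), directly refuting \eqref{eq:bcs1}. Your plan instead aims at \eqref{eq:bcs2} by manufacturing $B$-schemas out of the matrix, and this cannot work as described: a \chain matrix records nothing but a tuple of rows in $\Cstwolen{n}[\alpha]$, with no compatibility data attached. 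A $B$-schema requires a \emph{compatible} set $\Ts\in\fCtwo[\alpha,B]$ (i.e.\ a common witness word for the shared first component) together with decompositions $(r_1,s_2)\in\Cstwo[\alpha,B]\cdot\Ts^{\omega}$ and $(r'_1,s'_2)\in\Ts^{\omega}\cdot\Cstwo[\alpha,B]$; none of this structure is recoverable from the bare rows of a matrix. In the paper that structure is extracted from the operation nodes of \chain trees, which is exactly the machinery of the \emph{other} branch of the case split (Proposition~\ref{prop:depth}, unbounded recursive alternation), not of this one. Since the hypothesis here only forces one of the two equations to fail, and the paper shows it is \eqref{eq:bcs1}, the literal claim about \eqref{eq:bcs2} is not what you should be proving.

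Even setting the target aside, the combinatorial core is missing. Your opening count ($m\geq L/K$, so the matrices grow in height) is correct but is only the point of departure. The actual argument normalizes the matrix in three steps: first make it \emph{tame} (each alternation of the value is caused by exactly one row, distinct alternations by distinct rows), then \emph{monotonous} via the Erd\"os-Szekeres theorem applied to the sequence of alternating-row indices, and finally apply Ramsey's theorem for $3$-uniform hypergraphs twice, coloring hyperedges by products of blocks of cells, to force all cells above the diagonal to equal a single idempotent $f$ and all cells below to equal a single idempotent $e$. The idempotents arise from the additivity of the Ramsey coloring (the color of $\{i_1,i_3,i_4\}$ is the product of the colors of $\{i_1,i_2,i_4\}$ and $\{i_2,i_3,i_4\}$), not from ``repeating the stabilized row-pattern to saturate $\omega$-powers'' as you suggest; your appeal to a ``Simon-style stabilization'' does not substitute for these two extractions, which are where all the work lies.
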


Proposition~\ref{prop:width} is an immediate consequence of
Proposition~\ref{prop:matinit} and~\ref{prop:contradend}. Note that
\chain matrices are reused from~\cite{bpopen} (where they are called
''strategy matrices''). Moreover, in this case going from \chain trees
to \chains matrices (i.e. proving Proposition~\ref{prop:matinit}) is
simple and the main difficulty is proving
Proposition~\ref{prop:contradend}. This means that while our presentation
is slightly different from that of~\cite{bpopen}, the arguments
themselves are essentially the same. We give a full proof for
the sake of completeness. We begin by proving Proposition~\ref{prop:matinit}.

\begin{proof}[of Proposition~\ref{prop:matinit}]
  We prove that for all $n \in \nat$, there exists a \chain matrix \mat
  of alternation $n$ and local alternation bounded by $K$. By definition
  of $\cs$ there exists a tree $T \in \cs$ whose value has alternation
  $n$ and has recursive alternation bounded by $K$. Set $x_1,\dots,x_m$
  as leaves of $T$ listed from left to right. 
  By Fact~\ref{fct:value}, $\val{T} = \val{x_1} \cdots
  \val{x_m}$. Observe that by definition, for all $i$, $\val{x_i}$ has
  alternation bounded by $K$. Therefore it suffices to set \mat as the
  $m$ rows matrix where row $i$ is filled with $\val{x_i}$.\qed
\end{proof}

It now remains to prove Proposition~\ref{prop:contradend}. We proceed
as follows. Assuming there exists a \chain matrix \mat with local
alternation bounded by $K$ and very large alternation, we refine \mat
in several steps to ultimately obtain what we call a
\emph{contradiction matrix}. There are two types of contradiction
matrices, \emph{increasing} and \emph{decreasing}, both are
\chain matrices of length $6$ and with the following entries:

\begin{center}
  \begin{tikzpicture}

    \node (m1) at (0,0) {$\begin{array}{|c|c|c|c|c|c|}
        \hline
        u_1 & v_1 & f & f & f & f\\
        \hline
        e & e & u_2 & v_2 & f & f\\
        \hline
        e & e & e & e & u_3 & v_3\\
        \hline
      \end{array}$};
    \node (l1) at(0,-1.2) {Increasing Contradiction Matrix};

    \node (m2) at (6,0) {$\begin{array}{|c|c|c|c|c|c|}
        \hline
        f & f & f & f & u_3 & v_3\\
        \hline
        f & f & u_2 & v_2 & e & e\\
        \hline
        u_1 & v_1 & e & e & e & e\\
        \hline
      \end{array}$};
    \node (l2) at(6,-1.2) {Decreasing Contradiction Matrix};
  \end{tikzpicture}
\end{center}

\noindent
such that $e,f$ are idempotents and $fu_2e \neq fv_2e$. As the
name suggests, the existence of a contradiction matrix contradicts
Equation~\eqref{eq:bcs1}. This is what we state in the following
lemma.

\begin{lemma} \label{lem:contradmat}
  If there exists a contradiction matrix, $\alpha$ does not
  satisfy~\eqref{eq:bcs1}.
\end{lemma}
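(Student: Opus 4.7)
The plan is to observe that a contradiction matrix already contains, in its second row alone, enough information to violate one of the two identities in~\eqref{eq:bcs1}; the other rows of the matrix and its full matrix structure will play no role here (they will only matter to produce such a matrix from the hypotheses of Proposition~\ref{prop:contradend}).

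First, I would recall two facts established earlier. By the definition of \chain matrix in this appendix, every row is a \dchain of length equal to the number of columns, so in particular the second row of any contradiction matrix lies in $\Cstwolen{6}[\alpha]$. Moreover, $\Cstwo[\alpha]$ is closed under subwords, as observed right after Fact~\ref{fct:high}. Therefore, any three-letter subword of the second row is itself an element of $\Cstwo[\alpha]$, hence eligible to instantiate Equation~\eqref{eq:bcs1}.

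For an increasing contradiction matrix, the second row is $(e,e,u_2,v_2,f,f)$, so both $(e,u_2,f)$ and $(e,v_2,f)$ are \dchains. I would apply the second identity of~\eqref{eq:bcs1}, namely $s_3^\omega s_1^\omega = s_3^\omega s_2 s_1^\omega$, to each of these \dchains. Using the idempotences $e^\omega = e$ and $f^\omega = f$, the two instances collapse to $fe = fu_2 e$ and $fe = fv_2 e$, which together give $fu_2 e = fv_2 e$, in direct contradiction with the assumption $fu_2 e \neq fv_2 e$. The decreasing case is entirely symmetric: the second row is $(f,f,u_2,v_2,e,e)$, so $(f,u_2,e)$ and $(f,v_2,e)$ are \dchains, and applying the first identity $s_1^\omega s_3^\omega = s_1^\omega s_2 s_3^\omega$ twice, together with the same idempotence argument, yields $fu_2 e = fe = fv_2 e$, again contradicting the assumption.

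There is no real obstacle here. The only mild subtlety is to select the correct form of~\eqref{eq:bcs1} in each of the two cases — the first identity when $f$ lies to the left of $e$ in the extracted subword, and the second identity in the opposite configuration — so that the idempotents on the outside of the resulting equation match the pattern ``$f$ on the left, $e$ on the right'' appearing in $fu_2 e$ and $fv_2 e$.
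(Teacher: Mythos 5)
Your proof is correct and follows essentially the same route as the paper's: extract a length-3 subword from the second row of the contradiction matrix using closure of \dchains under subwords, and apply the appropriate half of Equation~\eqref{eq:bcs1} to contradict $fu_2e \neq fv_2e$. The only (cosmetic) difference is that you apply the identity to both $(e,u_2,f)$ and $(e,v_2,f)$ and combine, whereas the paper first observes that one of $fu_2e, fv_2e$ must differ from $fe$ and applies the identity to that single chain; the two presentations are equivalent.
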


\begin{proof}
  Assume that we have an increasing contradiction matrix (the other case
  is treated in a symmetrical way). Since $fu_2e \neq fv_2e$, either
  $fu_2e \neq fe$ or $fv_2e \neq fe$. By symmetry assume it is the
  former. Since $e,f$ are idempotents, this means that $f^\omega u_2
  e^\omega \neq f^\omega e^\omega$. However by definition of \chain
  matrices $(e,u_2,v_2,f) \in \Cstwo[\alpha]$ and therefore  $(e,u_2,f)
  \in \Cstwo[\alpha]$ which contradicts Equation~\eqref{eq:bcs1}. Note 
  that we only used one half of Equation~\eqref{eq:bcs1}, the other half 
  is used in the decreasing case.\qed
\end{proof}

By Lemma~\ref{lem:contradmat}, it suffices to prove the existence of a
contradiction matrix to conclude the proof of
Proposition~\ref{prop:contradend}. This is what we do in the remainder
of this Appendix. By hypothesis, we know that there exist \chain
matrices with arbitrarily large alternation and local alternation
bounded by $K \in \nat$. For the remainder of the section, we assume
that this hypothesis holds. We use several steps to prove that we can
choose our \chain matrices with increasingly strong properties until
we get a contradiction matrix. We use two intermediaries that we 
call \emph{Tame \Chain Matrices} and \emph{Monotonous \Chain
  Matrices}. We divide the proof in three subsections, one for each
step.

\subsection{Tame \Chain Matrices}

Let \mat be a \chain matrix of \emph{even length} $2\ell$ and let $j \leq
\ell$. The \emph{set of alternating rows for $j$}, denoted by
\alt{\mat}{j}, is the set $\{i \mid \mat_{i,2j-1} \neq
\mat_{i,2j}\}$. Let 
$(s_1,\dots,s_{2\ell})$ be the value of \mat. We say that \mat is \emph{tame} if
\begin{itemize}
\item [$a)$] for all $j \leq \ell$, $s_{2j-1} \neq s_{2j}$,
\item [$b)$] for all $j \leq \ell$,
  \alt{\mat}{j} is a singleton and
\item [$c)$] if $j \neq j'$ then $\alt{\mat}{j}
  \neq \alt{\mat}{j'}$.
\end{itemize}
We represent a tame \chain matrix
of length $6$ in Figure~\ref{fig:tamemat}. Observe that the definition
only considers the relationship between odd columns and the next even
column. Moreover, observe that a tame \chain matrix of length $2\ell$
has by definition alternation at least $\ell$.

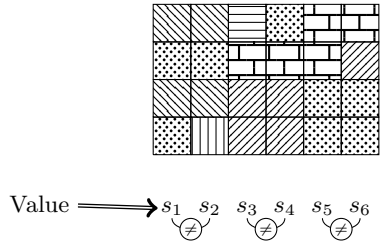
\begin{figure}[h]
  \begin{center}
    \begin{tikzpicture}
      \draw[pattern=north west lines,draw] (0.0,1.5) to (0.5,1.5) to
      (0.5,2.0) to (0.0,2.0) to (0.0,1.5); 
      \draw[pattern=crosshatch dots,draw] (0.0,1.0) to (0.5,1.0) to
      (0.5,1.5) to (0.0,1.5) to (0.0,1.0); 
      \draw[pattern=north west lines,draw] (0.0,0.5) to (0.5,0.5) to
      (0.5,1.0) to (0.0,1.0) to (0.0,0.5); 
      \draw[pattern=crosshatch dots,draw] (0.0,0.0) to (0.5,0.0) to
      (0.5,0.5) to (0.0,0.5) to (0.0,0.0); 

      \draw[pattern=north west lines,draw] (0.5,1.5) to (1.0,1.5) to
      (1.0,2.0) to (0.5,2.0) to (0.5,1.5); 
      \draw[pattern=crosshatch dots,draw] (0.5,1.0) to (1.0,1.0) to
      (1.0,1.5) to (0.5,1.5) to (0.5,1.0); 
      \draw[pattern=north west lines,draw] (0.5,0.5) to (1.0,0.5) to
      (1.0,1.0) to (0.5,1.0) to (0.5,0.5); 
      \draw[pattern=vertical lines,draw] (0.5,0.0) to (1.0,0.0) to (1.0,0.5)
      to (0.5,0.5) to (0.5,0.0);

      \draw[pattern=horizontal lines,draw] (1.0,1.5) to (1.5,1.5) to
      (1.5,2.0) to (1.0,2.0) to (1.0,1.5); 
      \draw[pattern=bricks,draw] (1.0,1.0) to (1.5,1.0) to
      (1.5,1.5) to (1.0,1.5) to (1.0,1.0); 
      \draw[pattern=north east lines,draw] (1.0,0.5) to (1.5,0.5) to
      (1.5,1.0) to (1.0,1.0) to (1.0,0.5); 
      \draw[pattern=north east lines,draw] (1.0,0.0) to (1.5,0.0) to
      (1.5,0.5) to (1.0,0.5) to (1.0,0.0); 

      \draw[pattern=crosshatch dots,draw] (1.5,1.5) to (2.0,1.5) to
      (2.0,2.0) to (1.5,2.0) to (1.5,1.5); 
      \draw[pattern=bricks,draw] (1.5,1.0) to (2.0,1.0) to
      (2.0,1.5) to (1.5,1.5) to (1.5,1.0); 
      \draw[pattern=north east lines,draw] (1.5,0.5) to (2.0,0.5) to
      (2.0,1.0) to (1.5,1.0) to (1.5,0.5); 
      \draw[pattern=north east lines,draw] (1.5,0.0) to (2.0,0.0) to
      (2.0,0.5) to (1.5,0.5) to (1.5,0.0);

      \draw[pattern=bricks,draw] (2.0,1.5) to (2.5,1.5) to
      (2.5,2.0) to (2.0,2.0) to (2.0,1.5); 
      \draw[pattern=bricks,draw] (2.0,1.0) to (2.5,1.0) to
      (2.5,1.5) to (2.0,1.5) to (2.0,1.0); 
      \draw[pattern=crosshatch dots,draw] (2.0,0.5) to (2.5,0.5) to
      (2.5,1.0) to (2.0,1.0) to (2.0,0.5); 
      \draw[pattern=crosshatch dots,draw] (2.0,0.0) to (2.5,0.0) to
      (2.5,0.5) to (2.0,0.5) to (2.0,0.0); 

      \draw[pattern=bricks,draw] (2.5,1.5) to (3.0,1.5) to
      (3.0,2.0) to (2.5,2.0) to (2.5,1.5); 
      \draw[pattern=north east lines,draw] (2.5,1.0) to (3.0,1.0) to
      (3.0,1.5) to (2.5,1.5) to (2.5,1.0); 
      \draw[pattern=crosshatch dots,draw] (2.5,0.5) to (3.0,0.5) to
      (3.0,1.0) to (2.5,1.0) to (2.5,0.5); 
      \draw[pattern=crosshatch dots,draw] (2.5,0.0) to (3.0,0.0) to
      (3.0,0.5) to (2.5,0.5) to (2.5,0.0); 

      \node[anchor=mid,inner sep=1pt] (s1) at (0.25,-0.7) {$s_1$};
      \node[anchor=mid,inner sep=1pt] (s2) at (0.75,-0.7) {$s_2$};
      \node[anchor=mid,inner sep=1pt] (s3) at (1.25,-0.7) {$s_3$};
      \node[anchor=mid,inner sep=1pt] (s4) at (1.75,-0.7) {$s_4$};
      \node[anchor=mid,inner sep=1pt] (s5) at (2.25,-0.7) {$s_5$};
      \node[anchor=mid,inner sep=1pt] (s6) at (2.75,-0.7) {$s_6$};

      \node[anchor=mid east] (text) at (-1.0,-0.7) {Value};
      \draw[ar] (text) to (s1);

      \draw (s1.south) to [in=-90,out=-90]
      node[sloped,draw,fill=white,circle,inner sep=0.5pt] {\tiny $\neq$}
      (s2.south);
      \draw (s3.south) to [in=-90,out=-90]
      node[sloped,draw,fill=white,circle,inner sep=0.5pt] {\tiny $\neq$}
      (s4.south);
      \draw (s5.south) to [in=-90,out=-90]
      node[sloped,draw,fill=white,circle,inner sep=0.5pt] {\tiny $\neq$}
      (s6.south);

    \end{tikzpicture}
  \end{center}
  \caption{A tame \chain matrix of length $6$}
  \label{fig:tamemat}
\end{figure}

\begin{lemma} \label{lem:tame}
  There exists tame \chain matrices of arbitrarily large length.
\end{lemma}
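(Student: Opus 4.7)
The plan is to extract a tame sub-matrix from a chain matrix with very large alternation via two basic operations that preserve the chain structure: (i) column selection, since chains and sets of chains are closed under subwords, and (ii) merging of contiguous blocks of rows into a single row via componentwise product, which yields again a chain by Fact~\ref{fct:chaincomp}. The construction will hinge on a Ramsey-style extraction leveraging the fact that large alternation forces a large row count.

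First I fix $\ell \in \nat$ and, using the hypothesis, select a chain matrix $\mat$ with alternation exceeding a large threshold $N(\ell)$ and with local alternation bounded by $K$. Since each of the $m$ rows of $\mat$ contributes at most $K$ alternations to the value, one has $m \geq N(\ell)/K$; by choosing $N(\ell)$ large, $m$ may be made as large as needed. Let $P = \{j : s_j \neq s_{j+1}\}$ be the value-alternation positions and $I_j = \{i : \mat_{i,j} \neq \mat_{i,j+1}\}$ the non-empty set of rows alternating at the pair $(j,j+1)$; we then have $|P| \geq N(\ell)$ and $\sum_{j \in P} |I_j| \leq mK$.

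Next I perform a greedy selection of pairs $(j_k, i_k)$ with $i_k \in I_{j_k}$ and the $i_k$ pairwise distinct; a double-counting argument (each row appears in at most $K$ sets $I_j$, as its row-alternation count is at most $K$) shows this produces at least $|P|/K$ such pairs. An Erd\H{o}s--Szekeres argument then extracts a monotone sub-sequence of length $\sqrt{|P|/K}$, which I reindex as $j_1 < \dots < j_\ell$ with $i_1 < \dots < i_\ell$. I then define contiguous row-blocks $B_k = \{i_{k-1}+1, \dots, i_k\}$ (with $i_0 = 0$, absorbing the tail into $B_\ell$) and merge each block by Fact~\ref{fct:chaincomp} into a single row. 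This yields a chain matrix with $\ell$ rows, the same value as $\mat$, and such that $i_k$ is the unique member of the selected alternations lying inside block $B_k$. Finally, restricting to the $2\ell$ columns $\{j_k, j_k+1\}_{k=1}^\ell$ gives the candidate tame matrix: condition~(a) holds since each $j_k \in P$, and conditions~(b), (c) are forced by the monotone block structure, provided that only block $B_k$ actually alternates at $(j_k, j_k+1)$ in the merged matrix.

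The main obstacle is precisely this provision: even with the monotone extraction, a block $B_{k'}$ with $k' \neq k$ may contain other rows of $I_{j_k}$ and thus alternate in the merged matrix, violating~(b); conversely, $B_k$ may fail to alternate because of cancellations in the product of its rows at columns $j_k$ and $j_k+1$. I plan to handle both issues by interleaving the monotone extraction with an optimality pre-processing (similar in spirit to Proposition~\ref{prop:optimal}) and, if necessary, iterating the argument on the merged matrix: because the local alternation of $\mat$ is bounded by $K$, only a bounded proportion of selected pairs can be lost to cancellation or cross-block contamination, so a further Ramsey-style sub-selection still yields a tame sub-matrix of length $2\ell'$ with $\ell' \to \infty$ as $\ell \to \infty$, completing the proof.
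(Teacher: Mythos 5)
Your proposal contains a genuine gap, which you yourself flag in the final paragraph but never close. The obstruction you identify is real: after selecting a monotone sequence $(j_k, i_k)$ and merging contiguous row-blocks, there is no reason that block $B_k$ is the \emph{unique} merged row that alternates at the column pair $(j_k, j_k+1)$. Other blocks $B_{k'}$ may intersect $I_{j_k}$ and so produce a changing product; and $B_k$ itself may fail to alternate because its internal product at columns $j_k$ and $j_k+1$ cancels. Your suggested remedy --- ``interleaving the monotone extraction with an optimality pre-processing'' and arguing that ``only a bounded proportion of selected pairs can be lost'' --- is not an argument: no bound on lost pairs is established, and a worst-case interleaving of the sets $I_j$ across blocks could in principle destroy every selected pair. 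Moreover, Erd\H{o}s--Szekeres is the wrong tool here; in the paper it is used only later (Lemma~\ref{lem:mono}) to arrange the singleton sets $\alt{\mat}{j}$ monotonically, \emph{after} tameness has already been secured.

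What your route is missing is a local entry-modification step, which is the crux of the paper's proof and is not captured by row merging or column selection. Given a pair $j$ with $|\alt{\mnat}{j}| \ge 2$, pick a row $i \in \alt{\mnat}{j}$ and let $s_1, s_2$ be the products of the other rows' column-$(2j-1)$ entries above and below $i$. If $s_1\mnat_{i,2j-1}s_2 \neq s_1\mnat_{i,2j}s_2$, overwrite column $2j$ of every other row $i'$ by its column-$(2j-1)$ entry, collapsing $\alt{\mnat}{j}$ to the singleton $\{i\}$ while preserving condition~(a). Otherwise overwrite $\mnat_{i,2j-1}$ by $\mnat_{i,2j}$, which leaves the value unchanged and shrinks $\alt{\mnat}{j}$ by one. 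Each such overwrite keeps every row a valid chain (this is the subword-closure / transitivity of $\ksieq{2}$ applied to a single position, not to a contiguous block of rows). Iterating enforces~(b); only \emph{then} does the bound $K$ on local alternation enter, to discard enough column pairs to enforce~(c). Without some substitute for this entry-modification step, your Ramsey-style plan does not establish~(b), and the proposal remains incomplete.
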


\begin{proof}
  Set $n \in \nat$, we explain how to construct a tame \chain matrix of
  length $2n$. By hypothesis, there exists a \chain matrix \mat with
  local alternation at most~$K$ and alternation greater than $2nK$. Set
  $m$ as the number of rows of \mat. We explain how to modify \mat to
  obtain a matrix satisfying $a)$, $b)$ and $c)$. Recall that \dchains
  are closed under subwords, therefore removing columns from \mat yields
  a \chain matrix. Since \mat has alternation $2nK$, it is simple to see
  that by removing columns one can obtain a \chain matrix of length
  $2nK$ that satisfies $a)$. We denote by \mnat this matrix. We now
  proceed in two steps: first, we modify the entries in \mnat to get a
  matrix \pat of length $2nK$ satisfying both $a)$ and $b)$. Then we use
  our bound on local alternation to remove columns and enforce $c)$ in
  the resulting matrix.

  \medskip
  \noindent
  {\bf Construction of \pat.} Let $j \leq nK$ such that \alt{\mnat}{j} 
  is of size at least $2$. We modify the matrix to reduce
  the size of \alt{\mnat}{j} while preserving $a)$. One can then repeat
  the operation to get the desired matrix. Let $i \in \alt{\mnat}{j}$.
  Set $s_{1} = \mnat_{1,2j-1} \cdots \mnat_{i-1,2j-1}$ and $s_2 =
  \mnat_{i+1,2j-1} \cdots \mnat_{m,2j-1}$. We distinguish two~cases.

  First, if $s_1\mnat_{i,2j-1}s_2 \neq s_1\mnat_{i,2j}s_2$, then for all $i' \neq i$,
  we replace entry $\mnat_{i',2j}$ with entry $\mnat_{i',2j-1}$. One can verify
  that this yields a \chain matrix of length $2nK$, local alternation bounded by
  $K$. Moreover, it still satisfies $a)$, since $s_1\mnat_{i,2j-1}s_2 \neq
  s_1\mnat_{i,2j}s_2$. Finally, \alt{\mnat}{j} is now a singleton, namely $\{i\}$.

  In the second case, we have $s_1\mnat_{i,2j-1}s_2 = s_1\mnat_{i,2j}s_2$. In that
  case, we replace $\mnat_{i,2j-1}$ with $\mnat_{i,2j}$. One can
  verify that this yields a \chain matrix of length $2nK$, local
  alternation bounded by $K$. Moreover, it still satisfies $a)$ since we did not
  change the value on the whole. Finally,
  the size of \alt{\mnat}{j} has decreased by~$1$.

  \medskip
  \noindent
  {\bf Construction of the tame matrix.} We now have a \chain matrix
  \pat of length $2nK$, with local alternation bounded by $K$ and
  satisfying both $a)$ and $b)$. Since $a)$ and $b)$ are satisfied, for
  all $j \leq nK$ there exists exactly one row $i$ such that
  $\mnat_{i,2j-1} \neq \mnat_{i,2j}$. Moreover, since each row has
  alternation at most $K$, a single row $i$ has this property for at
  most $K$ indices $j$. Therefore, it suffices to remove at most
  $n(K-1)$ pairs of odd-even columns to get a matrix that satisfies
  $c)$. Since the original matrix had length $2nK$, this leaves a matrix
  of length at least $2n$ and we are finished. \qed 
\end{proof}

\subsection{Monotonous \Chain Matrices}

Let \mat be a tame \chain matrix of length $2n$ and let $x_1,\dots,x_n$
be naturals such that for all $j$, $\alt{\mat}{j} = \{x_j\}$. We say
that \mat is a \emph{monotonous \chain matrix} if it has exactly $n$
rows and $1 = x_{1} < x_2 < \cdots < x_{n} = n$ (in which case the
matrix is said \emph{increasing}) or $n = x_{1} > x_2 > \cdots > x_{n} = 1$
(in which case we say the matrix is \emph{decreasing}). We give a 
representation of the increasing case in Figure~\ref{fig:inctame}.

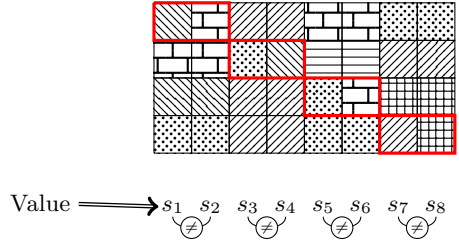
\begin{figure}[h]
  \begin{center}
    \begin{tikzpicture}
      \draw[pattern=north west lines,draw] (0.0,1.5) to (0.5,1.5) to
      (0.5,2.0) to (0.0,2.0) to (0.0,1.5); 
      \draw[pattern=bricks,draw] (0.0,1.0) to (0.5,1.0) to
      (0.5,1.5) to (0.0,1.5) to (0.0,1.0); 
      \draw[pattern=north west lines,draw] (0.0,0.5) to (0.5,0.5) to
      (0.5,1.0) to (0.0,1.0) to (0.0,0.5); 
      \draw[pattern=crosshatch dots,draw] (0.0,0.0) to (0.5,0.0) to
      (0.5,0.5) to (0.0,0.5) to (0.0,0.0); 

      \draw[pattern=bricks,draw] (0.5,1.5) to (1.0,1.5) to
      (1.0,2.0) to (0.5,2.0) to (0.5,1.5); 
      \draw[pattern=bricks,draw] (0.5,1.0) to (1.0,1.0) to
      (1.0,1.5) to (0.5,1.5) to (0.5,1.0); 
      \draw[pattern=north west lines,draw] (0.5,0.5) to (1.0,0.5) to
      (1.0,1.0) to (0.5,1.0) to (0.5,0.5); 
      \draw[pattern=crosshatch dots,draw] (0.5,0.0) to (1.0,0.0) to (1.0,0.5)
      to (0.5,0.5) to (0.5,0.0);

      \draw[pattern=north east lines,draw] (1.0,1.5) to (1.5,1.5) to
      (1.5,2.0) to (1.0,2.0) to (1.0,1.5); 
      \draw[pattern=crosshatch dots,draw] (1.0,1.0) to (1.5,1.0) to
      (1.5,1.5) to (1.0,1.5) to (1.0,1.0); 
      \draw[pattern=north east lines,draw] (1.0,0.5) to (1.5,0.5) to
      (1.5,1.0) to (1.0,1.0) to (1.0,0.5); 
      \draw[pattern=north east lines,draw] (1.0,0.0) to (1.5,0.0) to
      (1.5,0.5) to (1.0,0.5) to (1.0,0.0); 

      \draw[pattern=north east lines,draw] (1.5,1.5) to (2.0,1.5) to
      (2.0,2.0) to (1.5,2.0) to (1.5,1.5); 
      \draw[pattern=north west lines,draw] (1.5,1.0) to (2.0,1.0) to
      (2.0,1.5) to (1.5,1.5) to (1.5,1.0); 
      \draw[pattern=north east lines,draw] (1.5,0.5) to (2.0,0.5) to
      (2.0,1.0) to (1.5,1.0) to (1.5,0.5); 
      \draw[pattern=north east lines,draw] (1.5,0.0) to (2.0,0.0) to
      (2.0,0.5) to (1.5,0.5) to (1.5,0.0);

      \draw[pattern=bricks,draw] (2.0,1.5) to (2.5,1.5) to
      (2.5,2.0) to (2.0,2.0) to (2.0,1.5); 
      \draw[pattern=horizontal lines,draw] (2.0,1.0) to (2.5,1.0) to
      (2.5,1.5) to (2.0,1.5) to (2.0,1.0); 
      \draw[pattern=crosshatch dots,draw] (2.0,0.5) to (2.5,0.5) to
      (2.5,1.0) to (2.0,1.0) to (2.0,0.5); 
      \draw[pattern=crosshatch dots,draw] (2.0,0.0) to (2.5,0.0) to
      (2.5,0.5) to (2.0,0.5) to (2.0,0.0); 

      \draw[pattern=bricks,draw] (2.5,1.5) to (3.0,1.5) to
      (3.0,2.0) to (2.5,2.0) to (2.5,1.5); 
      \draw[pattern=horizontal lines,draw] (2.5,1.0) to (3.0,1.0) to
      (3.0,1.5) to (2.5,1.5) to (2.5,1.0); 
      \draw[pattern=bricks,draw] (2.5,0.5) to (3.0,0.5) to
      (3.0,1.0) to (2.5,1.0) to (2.5,0.5); 
      \draw[pattern=crosshatch dots,draw] (2.5,0.0) to (3.0,0.0) to
      (3.0,0.5) to (2.5,0.5) to (2.5,0.0); 

      \draw[pattern=crosshatch dots,draw] (3.0,1.5) to (3.5,1.5) to
      (3.5,2.0) to (3.0,2.0) to (3.0,1.5); 
      \draw[pattern=north east lines,draw] (3.0,1.0) to (3.5,1.0) to
      (3.5,1.5) to (3.0,1.5) to (3.0,1.0); 
      \draw[pattern=grid,draw] (3.0,0.5) to (3.5,0.5) to
      (3.5,1.0) to (3.0,1.0) to (3.0,0.5); 
      \draw[pattern=north east lines,draw] (3.0,0.0) to (3.5,0.0) to
      (3.5,0.5) to (3.0,0.5) to (3.0,0.0); 

      \draw[pattern=crosshatch dots,draw] (3.5,1.5) to (4.0,1.5) to
      (4.0,2.0) to (3.5,2.0) to (3.5,1.5); 
      \draw[pattern=north east lines,draw] (3.5,1.0) to (4.0,1.0) to
      (4.0,1.5) to (3.5,1.5) to (3.5,1.0); 
      \draw[pattern=grid,draw] (3.5,0.5) to (4.0,0.5) to
      (4.0,1.0) to (3.5,1.0) to (3.5,0.5); 
      \draw[pattern=grid,draw] (3.5,0.0) to (4.0,0.0) to
      (4.0,0.5) to (3.5,0.5) to (3.5,0.0); 

      \draw[very thick,red] (0.0,2.0) to (1.0,2.0) to
      (1.0,1.5) to (0.0,1.5) to (0.0,2.0);

      \draw[very thick,red] (1.0,1.5) to (2.0,1.5) to
      (2.0,1.0) to (1.0,1.0) to (1.0,1.5);

      \draw[very thick,red] (2.0,1.0) to (3.0,1.0) to
      (3.0,0.5) to (2.0,0.5) to (2.0,1.0);

      \draw[very thick,red] (3.0,0.5) to (4.0,0.5) to
      (4.0,0.0) to (3.0,0.0) to (3.0,0.5);

      \node[anchor=mid,inner sep=1pt] (s1) at (0.25,-0.7) {$s_1$};
      \node[anchor=mid,inner sep=1pt] (s2) at (0.75,-0.7) {$s_2$};
      \node[anchor=mid,inner sep=1pt] (s3) at (1.25,-0.7) {$s_3$};
      \node[anchor=mid,inner sep=1pt] (s4) at (1.75,-0.7) {$s_4$};
      \node[anchor=mid,inner sep=1pt] (s5) at (2.25,-0.7) {$s_5$};
      \node[anchor=mid,inner sep=1pt] (s6) at (2.75,-0.7) {$s_6$};
      \node[anchor=mid,inner sep=1pt] (s7) at (3.25,-0.7) {$s_7$};
      \node[anchor=mid,inner sep=1pt] (s8) at (3.75,-0.7) {$s_8$};

      \node[anchor=mid east] (text) at (-1.0,-0.7) {Value};
      \draw[ar] (text) to (s1);

      \draw (s1.south) to [in=-90,out=-90]
      node[sloped,draw,fill=white,circle,inner sep=0.5pt] {\tiny $\neq$}
      (s2.south);
      \draw (s3.south) to [in=-90,out=-90]
      node[sloped,draw,fill=white,circle,inner sep=0.5pt] {\tiny $\neq$}
      (s4.south);
      \draw (s5.south) to [in=-90,out=-90]
      node[sloped,draw,fill=white,circle,inner sep=0.5pt] {\tiny $\neq$}
      (s6.south);
      \draw (s7.south) to [in=-90,out=-90]
      node[sloped,draw,fill=white,circle,inner sep=0.5pt] {\tiny $\neq$}
      (s8.south);
    \end{tikzpicture}
  \end{center}
  \caption{A monotonous \chain matrix (increasing)}
  \label{fig:inctame}
\end{figure}

\begin{lemma} \label{lem:mono}
  There exists monotonous \chain matrices of arbitrarily large length.
\end{lemma}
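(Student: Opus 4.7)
The plan is to derive monotonous \chain matrices from the very long tame ones provided by Lemma~\ref{lem:tame}, via an Erd\H{o}s--Szekeres extraction followed by a row-merging compression. First I would fix $n \gmo 1$ and apply Lemma~\ref{lem:tame} to obtain a tame \chain matrix \mat of length $2N$ with $N > (n-1)^2$; let $m$ be its number of rows. Tameness gives singletons $\alt{\mat}{j} = \{x_j\}$ for each $j \in \{1,\dots,N\}$, with $x_1,\dots,x_N$ pairwise distinct elements of $\{1,\dots,m\}$. By the Erd\H{o}s--Szekeres theorem, this sequence contains a strictly monotone subsequence $x_{j_1},\dots,x_{j_n}$ with $j_1 < \cdots < j_n$; I will treat the increasing case $x_{j_1}<\cdots<x_{j_n}$, the decreasing one being symmetric.

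Next I would extract the submatrix \pat from \mat by keeping only the columns $2j_1-1, 2j_1, \dots, 2j_n-1, 2j_n$. Since \dchains are closed under subwords, each row of \pat is still a \dchain of length $2n$, so \pat is a \chain matrix; and \pat directly inherits tameness from \mat, with $\alt{\pat}{k} = \{x_{j_k}\}$. I then compress \pat into a new \chain matrix \mnat with exactly $n$ rows, by componentwise multiplication of consecutive blocks of rows: setting $x_{j_0}=0$, row $k$ of \mnat is the componentwise product of rows $x_{j_{k-1}}+1,\dots,x_{j_k}$ of \pat for $1 \leq k < n$, while row $n$ is the product of rows $x_{j_{n-1}}+1,\dots,m$. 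By Fact~\ref{fct:chaincomp}, these merged rows are still \dchains of length $2n$.

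Finally I would verify that \mnat is monotonous. The global value of \mnat coincides with that of \pat (merging amounts to multiplying everything together), so condition $a)$ persists. Fix $k \in \{1,\dots,n\}$; for every $k' \neq k$, no row of \pat in block $k'$ is the alternating row $x_{j_k}$ of column pair $k$, hence by the singleton structure of \pat all rows of block $k'$ agree on columns $2k-1$ and $2k$, and so does row $k'$ of \mnat. Combined with condition $a)$, this forces row $k$ of \mnat itself to differ on columns $2k-1$ and $2k$; thus $\alt{\mnat}{k} = \{k\}$, which yields both conditions $b)$ and $c)$. Writing $x''_k$ for the unique element of $\alt{\mnat}{k}$, we obtain $1 = x''_1 < \cdots < x''_n = n$, so \mnat is an increasing monotonous \chain matrix of length $2n$. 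The main content is the Erd\H{o}s--Szekeres extraction; the only slightly delicate verification is that row merging preserves the correct alternation pattern, and this reduces, as sketched, to condition $a)$ of tameness together with the singleton structure inherited from \pat.
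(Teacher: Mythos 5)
Your argument is correct and follows the paper's proof step for step: apply Lemma~\ref{lem:tame} to obtain a long tame matrix, use Erd\H{o}s--Szekeres on the sequence of alternating-row indices to extract a monotone run, keep only the corresponding column pairs, and then merge the remaining rows down to exactly $n$ by componentwise multiplication (justified by Fact~\ref{fct:chaincomp}). Your block-merging together with the explicit appeal to condition~$a)$ to see that the unique alternation per column pair survives the merge is in fact slightly cleaner than the paper's write-up, which asserts that the non-alternating rows of the intermediate matrix are constant chains (whereas they are only forced to agree \emph{within} each odd/even column pair, not across consecutive pairs) and then merges them away without commenting on why the singleton structure is preserved.
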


\begin{proof}
  Set $n \in \nat$, we explain how to construct a tame \chain matrix of
  length $2n$. By Lemma~\ref{lem:tame}, there exists a tame \chain matrix
  \mat of length $2n^2$. Set $x_1,\dots,x_{n^2}$ the indices such that
  for all $j$, $\alt{\mat}{j} = \{x_j\}$. Note that by tameness, $x_j
  \neq x_{j'}$ for $j \neq j'$. Since the sequence $x_1,\dots,x_{n^2}$
  is of length $n^2$, we can extract, using Erd\"os-Szekeres theorem, a
  monotonous sequence of length $n$, $x_{j_1} < \cdots < x_{j_n}$ or
  $x_{j_1} > \cdots > x_{j_n}$ with $j_1 < \cdots < j_n$.  By symmetry
  we assume it is the former and construct an increasing \chain matrix
  of length $n$.

  Let \pat be the matrix of length $2n$ obtained from \mat, by keeping
  only the pairs of columns $2j-1,2j$ for $j \in \{j_1,\dots,j_n\}$. Set
  $x'_1,\dots,x'_{n}$ the indices such that for all $j$, $\alt{\pat}{j}
  = \{x'_j\}$. By definition, $x'_{1} < \cdots < x'_{n}$. We now want
  $\pat$ to have exactly $n$ rows. Note that the rows  that do not belong to
  $x'_{1} < \cdots < x'_{n}$ are constant chains. We simply merge these rows
  with others. For example, if row $i$ is labeled with the 
  constant \chain $(s,\dots,s)$, let $(s_1,\dots,s_{2n})$ be the label of
  row $i+1$. We remove row $i$ and replace row $i+1$ by the \dchain
  $(ss_1,\dots,ss_{2n})$. Repeating the operation yields the desired increasing
  monotonous \chain~matrix. \qed
\end{proof}

\subsection{Construction of the Contradiction Matrix}  

We can now use Lemma~\ref{lem:mono} to construct a contradiction
matrix and end the proof of Proposition~\ref{prop:width}. We state
this in the following proposition.

\begin{proposition} \label{prop:contmat}
  There exists a contradiction matrix.
\end{proposition}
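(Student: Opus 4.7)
The plan is to construct an increasing contradiction matrix from a sufficiently long monotonous chain matrix via a Ramsey extraction followed by an idempotent-power row-aggregation; the decreasing case is entirely symmetric.

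First, by Lemma~\ref{lem:mono}, take an increasing monotonous chain matrix \mat of length $2N$ with $N$ rows, with $N$ chosen large in terms of $|M|$ and $\omega=\omega(M)$. Denote its diagonal pairs by $(u_j,v_j)$ (with $u_j\neq v_j$), its above-diagonal entries by $f_{i,j}$ (for $i<j$), and its below-diagonal entries by $e_{i,j}$ (for $i>j$). Apply Ramsey's theorem: color each $2$-subset $\{i<j\}\subseteq[1,N]$ by $(e_{j,i},f_{i,j})\in M^2$ and each singleton $j$ by $(u_j,v_j)\in M^2$. For $N$ large enough this produces a sub-sequence $S\subseteq[1,N]$ of size $|S|\geq 3\omega$ on which all colorings are constant: $e_{i,j}=e^*$, $f_{i,j}=f^*$ for $i,j\in S$ on the appropriate side, and $(u_j,v_j)=(u^*,v^*)$ throughout $S$, with $u^*\neq v^*$.

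Next, partition $S$ into three consecutive blocks $S_1<S_2<S_3$ of size exactly $\omega$ each. Let $R_l:=\prod_{i\in S_l}\mathrm{row}_i(\mat)$ be the componentwise product over rows in $S_l$; each $R_l$ is a \dchain of length $2N$ by Fact~\ref{fct:chaincomp}. Choose column pairs $j_l\in S_l$ and restrict $R_1,R_2,R_3$ to the six columns $\{2j_k-1,2j_k:k=1,2,3\}$, obtaining a $3\times 6$ chain matrix. A direct computation, using that every row of $S_l$ lies entirely above (if $l<k$) or below (if $l>k$) the diagonal at $j_k$, shows that the row-$l$ entries at col pair $j_k$ are constant within the pair and equal $(f^*)^\omega$ or $(e^*)^\omega$; setting $f=(f^*)^\omega$ and $e=(e^*)^\omega$, which are idempotents by definition of $\omega$, the resulting matrix has exactly the skeleton of an increasing contradiction matrix, with diagonal entries at col pair $j_l$ of the shape $(u_l,v_l)=\bigl((f^*)^{a_l}u^*(e^*)^{\omega-1-a_l},(f^*)^{a_l}v^*(e^*)^{\omega-1-a_l}\bigr)$, where $a_l$ is the rank of $j_l$ within $S_l$.

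The crucial remaining step is to verify $fu_2e\neq fv_2e$. A direct expansion shows that these two products coincide with the ordered column products of the sub-matrix of \mat on rows $S$ at columns $2j_2-1$ and $2j_2$ respectively, so the condition reduces to tameness of that sub-matrix at col pair $j_2$. This is the main obstacle, since the tameness of \mat was stated with an ordered column product over the full row-set $[1,N]$ and does not automatically pass to an arbitrary sub-sequence $S$. I would handle this by strengthening the Ramsey extraction with a second pigeonholing layer: color each candidate $j_2\in S_2$ by the column-product pair $\bigl((f^*)^{a_{j_2}}u^*(e^*)^{b_{j_2}},(f^*)^{a_{j_2}}v^*(e^*)^{b_{j_2}}\bigr)\in M^2$, where $a_{j_2}$ and $b_{j_2}$ count elements of $S$ strictly below and above $j_2$. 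If some color has distinct coordinates, the matching $j_2$ yields the required inequality directly; otherwise the equalities $(f^*)^a u^*(e^*)^b=(f^*)^a v^*(e^*)^b$ hold for every $(a,b)$ arising in $S_2$, and combining them with the aperiodicity Lemma~\ref{lem:aperiodic} applied to suitable sub-sequences together with the original tameness of \mat yields a different witness pair in $M$ that does fit the construction. The decreasing case is treated by running the same argument on a decreasing monotonous matrix.
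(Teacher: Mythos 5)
Your approach breaks down at exactly the step you flag as ``the main obstacle,'' and the proposed patch does not resolve it. The issue is that you form the new rows $R_l$ as products over a \emph{sparse} row-set $S_l\subseteq[1,N]$, discarding all rows of $\mat$ outside $S$. After this thinning, $fu_2e=(f^*)^{\omega}u^*(e^*)^{\omega}$ and $fv_2e=(f^*)^{\omega}v^*(e^*)^{\omega}$, whereas the only inequality tameness of \mat gives you at column pair $j_2$ is $p\,u^*\,q\neq p\,v^*\,q$ for the full-column products $p=\mat_{1,2j_2}\cdots\mat_{j_2-1,2j_2}$ and $q=\mat_{j_2+1,2j_2}\cdots\mat_{N,2j_2}$, which are not $(f^*)^\omega$ and $(e^*)^\omega$: they involve all the rows of $\mat$ you dropped and that Ramsey tells you nothing about. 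There is no implication from $puq\neq pvq$ to $(f^*)^\omega u^*(e^*)^\omega\neq(f^*)^\omega v^*(e^*)^\omega$ in an arbitrary finite monoid. Your fallback, pigeonholing on $(a,b)$ and then hoping that ``aperiodicity plus the original tameness yields a different witness pair,'' cannot close this gap: if $(f^*)^au^*(e^*)^b=(f^*)^av^*(e^*)^b$ holds for the $(a,b)$ you see, then already $fu_2e=fv_2e$ for the matrix you built, and you would need to go back and choose a genuinely different sub-matrix, which the argument as written does not do.

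The paper avoids this in two coupled ways that your proposal misses. First, it applies Ramsey to the complete $3$-hypergraph, coloring a hyperedge $\{i_1<i_2<i_3\}$ by the \emph{interval product} $\mat_{i_1+1,2i_3-1}\cdots\mat_{i_2,2i_3-1}$; the fact that the color of $\{i_1,i_3,i_4\}$ equals the product of the colors of $\{i_1,i_2,i_4\}$ and $\{i_2,i_3,i_4\}$ forces the monochromatic color to already be an idempotent, with no need to raise to the $\omega$ power. Second, and crucially, it does not \emph{drop} rows between selected indices but \emph{merges} each consecutive block $[k_i+1,k_{i+1}]$ into one row (only the top and bottom blocks, before $k_1$ and after $k_{\varphi(4)+1}$, are discarded). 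Merging leaves the column products untouched except for a prefix/suffix truncation, and if a truncated product pair became equal the full pair would be equal too, so the inequality $s_{2j-1}\neq s_{2j}$ survives. Your construction, by contrast, deletes rows in the middle of the column product, which is exactly the operation under which this inequality is not preserved. To fix your argument you would need to both (i) merge rather than drop so that the relevant products are truncations of the originals, and (ii) color by block products so that the off-diagonal entries of the merged matrix are literally the idempotent color, not an $\omega$-power of a cell value, which is what the paper's $3$-hypergraph Ramsey achieves.
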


The remainder of this appendix is devoted to the proof of
Proposition~\ref{prop:contmat}. The result follows from a Ramsey
argument. We use Lemma~\ref{lem:mono} to choose a monotonous matrix of
sufficiently large length. Then, we use Ramsey's Theorem (for
hypergraphs with edges of size $3$) to extract the desired
contradiction matrix.

We first define the length of the monotonous \chain matrix that we need to
pick. By Ramsey's Theorem, for every $m \in \nat$ there exists a number
$\varphi(m)$ such that for any complete 3-hypergraph with hyperedges colored
over the monoid~$M$, there exists a complete sub-hypergraph of size $m$ in
which all edges share the same color. We choose $n = \varphi(\varphi(4)+1)$. By Lemma~\ref{lem:mono}, there exists a monotonous
\chain matrix \mat of length $2n$. Since it is monotonous, \mat has $n$ rows.

By symmetry, we assume that \mat is increasing and use it to construct an
increasing contradiction matrix.  We use our choice of $n$ to extract a
contradiction matrix from \mat. We proceed in two steps using Ramsey's Theorem
each time. In the first step we treat all entries above the diagonal in \mat
and in the second step all entries below the diagonal. We state the first step
in the next lemma.

\begin{lemma} \label{lem:matlemma}
  There exists an increasing monotonous matrix \mnat of length $2 \cdot
  \varphi(4)$ such that all cells above the diagonal contain the same
  idempotent $f \in M$.
\end{lemma}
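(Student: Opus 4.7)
The plan is to apply Ramsey's theorem for $3$-hypergraphs to the index set of \mat with a coloring tailored so that one monochromatic extraction directly yields \mnat. The starting observation is that, by monotonicity, the unique alternation in column pair $j$ of \mat sits at row $j$; consequently $\mat_{i,2j-1}=\mat_{i,2j}$ whenever $i\neq j$, so above-diagonal entries in paired columns already agree in \mat itself---only their common value need still be made uniform and idempotent.

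I would color each $3$-subset $i_1<i_2<i_3$ of $\{0,1,\dots,n\}$ by
\[
c(i_1,i_2,i_3)\ =\ \prod_{r=i_1+1}^{i_2}\mat_{r,2i_3-1}\ \in\ M.
\]
Since $M$ is finite and $n=\varphi(\varphi(4)+1)$, Ramsey produces a monochromatic $(\varphi(4)+1)$-subset $j_0<j_1<\cdots<j_{\varphi(4)}$ of common color $f$. I would then form \mnat by restricting to the column pairs indexed by $j_1,\dots,j_{\varphi(4)}$ and grouping rows: the $a$-th row of \mnat is the componentwise product, on those columns, of rows $j_{a-1}+1,\dots,j_a$ of \mat. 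Fact~\ref{fct:chaincomp} guarantees each merged row is a \dchain, so \mnat is a chain matrix of length $2\varphi(4)$.

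Three verifications then complete the lemma. For monotonicity, the alternation row $j_b$ of column pair $j_b$ in \mat lies in the $a$-th group exactly when $j_{a-1}<j_b\leq j_a$, i.e.\ exactly when $a=b$, which gives the required $x'_b=b$ pattern and in particular $x'_1=1$, $x'_{\varphi(4)}=\varphi(4)$. For above-diagonal uniformity, fix $a<b$; the entry of \mnat at row $a$, column $2b-1$ equals $c(j_{a-1},j_a,j_b)=f$, and the entry at column $2b$ equals the same product because no row in $[j_{a-1}+1,j_a]$ is the alternation row $j_b$ of column pair $j_b$ in \mat. For idempotence, telescoping inside a single column gives
\[
f\cdot f\ =\ c(j_{a-1},j_a,j_b)\cdot c(j_a,j_{a+1},j_b)\ =\ c(j_{a-1},j_{a+1},j_b)\ =\ f,
\]
which is valid as soon as $\varphi(4)\geq 3$, well within range. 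The main technical point is designing the coloring so that a single $3$-subset simultaneously encodes the two row-boundaries of a merged group ($i_1,i_2$) and the column under inspection ($i_3$); this is precisely what forces Ramsey over $3$-hypergraphs rather than edges, and is what lets a single monochromatic extraction deliver uniform above-diagonal entries, preserved monotonicity, and idempotence of the shared value at once.
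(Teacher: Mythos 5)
Your proof follows the paper's argument essentially verbatim: you build the same complete $3$-hypergraph on $\{0,\dots,n\}$, use the same telescoping coloring $c(i_1,i_2,i_3)=\prod_{r=i_1+1}^{i_2}\mat_{r,2i_3-1}$, apply Ramsey to get a monochromatic $(\varphi(4)+1)$-subset, derive idempotence of the common color from telescoping, and extract \mnat by merging rows along consecutive gaps of the monochromatic set and keeping the corresponding column pairs. The details you supply (the explicit check that column $2b-1$ and $2b$ give the same product above the diagonal because the unique alternation row $j_b$ falls outside the merged block, and the check that the merged-row indexing restores the increasing pattern $x'_b=b$) are exactly the verifications the paper leaves with ``one can verify''; your argument is correct and your requirement $\varphi(4)\geq 3$ is satisfied since trivially $\varphi(4)\geq 4$.
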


\begin{proof}
  This is proved by applying Ramsey's Theorem to \mat. Consider the
  complete 3-hypergraph whose nodes are $\{0,\ldots,n\}$. We label the
  hyperedge $\{i_1,i_2,i_3\}$ where $i_1 < i_2 < i_3$  by the value obtained by
  multiplying in the monoid $M$, the cells that appear in rows
  $i_1+1,\ldots,i_2$ in column $2i_3-1$. Observe that since $i_1 < i_2 < i_3$, by monotonicity, these entries are the
  same as in column $2i_3$. More formally, the label of the hyperedge $\{i_1,i_2,i_3\}$ is therefore
  \[
  \mat_{i_1+1,2i_3-1} \cdots \mat_{i_2,2i_3-1} = \mat_{i_1+1,2i_3} \cdots \mat_{i_2,2i_3}.
  \]
  By choice of $n$, we can apply Ramsey's Theorem to this coloring. We get a
  subset of $\varphi(4)+1$ vertices, say $K = \{k_1,\ldots,k_{\varphi(4)+1}\} \subseteq
  \{0,\ldots,n\}$, such that all hyperedges connecting nodes in $K$ have the same
  color, say $f \in M$. For $i_1<i_2<i_3<i_4$ in~$K$, note that the color of the hyperedge
  $\{i_1,i_3,i_4\}$ is by definition the product of the colors of the hyperedges 
  $\{i_1,i_2,i_4\}$ and $\{i_2,i_3,i_4\}$. Therefore, the common color $f$ needs to be an idempotent
  (i.e. $f = ff$). We now extract the desired matrix \mnat from \mat according
  to the subset $K$. The main idea is that the new row $i$ in \mnat will be the
  merging of rows $k_{i}+1$ to $k_{i+1}$ in \mat and the new pair of columns
  $2j-1,2j$ will correspond to the pair $2k_{j+1}-1,2k_{j+1}$ in \mat.

  We first merge rows. For all $i \geq 1$, we ''merge'' all rows from
  $k_{i}+1$ to $k_{i+1}$ into a single row. More precisely, this means
  that we replace the rows $k_{i}+1$ to $k_{i+1}$ by a single row
  containing the \dchain 
  \[
  (\mat_{k_{i}+1,1} \cdots \mat_{k_{i+1},1},\ldots,\mat_{k_{i}+1,2n} \cdots \mat_{k_{i+1},2n})
  \]

  Moreover, we remove the top and bottom rows, i.e. row $1$ to $k_1$ and
  rows $k_{\varphi(4)+1}$ to $\varphi(4)+1$. Then we remove all columns from $1$ to
  $2k_2 -2$, all columns from $2k_{\varphi(4)+1} + 1$ to $2n$, and for all $i
  \geq 2$, all columns from $2k_i+1$ to $2k_{i+1}-2$. One can verify
  that these two operations applied together preserve
  monotonicity. Observe that the resulting matrix \mnat has exactly
  $2\cdot \varphi(4)$ columns. Moreover, the cell $i,2j$ in the new
  matrix contains entry $\mat_{k_{i}+1,2k_{j+1}} \cdots
  \mat_{k_{i+1},2k_{j+1}}$. In particular if $j > i$, by definition of
  the set $K$, this entry is $f$, which means \mnat satisfies the
  conditions of the lemma.\qed
\end{proof}

It remains to apply Ramsey's Theorem a second time to the matrix \mnat
obtained from Lemma~\ref{lem:matlemma} to treat the cells below the
diagonal and get the contradiction matrix. We state this in the
following last lemma.

\begin{lemma}
  There exists an increasing monotonous matrix \pat of length $6$ such
  that all cells above the diagonal contain the same idempotent $f \in
  M$ and all cells below the diagonal contain the same idempotent $e \in
  M$ (i.e. \pat is an increasing contradiction matrix). 
\end{lemma}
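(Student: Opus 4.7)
The plan is to apply Ramsey's theorem to \mnat once more, in direct analogy with Lemma~\ref{lem:matlemma} but aimed at the below-diagonal entries. I would consider the complete 3-hypergraph on the row indices $\{0,1,\dots,m\}$ of \mnat (where $m$ is its number of rows) and color each hyperedge $\{i_1,i_2,i_3\}$ with $i_1<i_2<i_3$ by the product $\mnat_{i_2+1,2i_1}\cdots\mnat_{i_3,2i_1}\in M$. Since $i_2+1>i_1$, all rows entering this product lie strictly below the alternating row at column $2i_1$, so the coloring reads only below-diagonal cells of \mnat. By Ramsey I extract a monochromatic subset $K'=\{k'_1<k'_2<\dots<k'_r\}$ with common color $e\in M$. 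The now-familiar splitting identity across an intermediate vertex, $e=e\cdot e$, shows that $e$ is an idempotent of $M$.

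Next I would carve \pat out of $K'$ by an \emph{alternating} selection: take the three rows of \pat to be the merged row blocks $R_i=\{k'_{2i-1}+1,\dots,k'_{2i}\}$ for $i=1,2,3$, and the three column pairs of \pat to be those of \mnat at positions $k'_{2j}$ for $j=1,2,3$. This skip-one scheme ensures that every below-diagonal cell $(i,j)$ of \pat (with $i>j$) corresponds to a hyperedge $\{k'_{2j},k'_{2i-1},k'_{2i}\}$ of three \emph{distinct} vertices in $K'$ (indeed $2j\le 2i-2<2i-1$ whenever $i>j$), so its value is precisely the Ramsey color~$e$. Cells above the diagonal of \pat are products of above-diagonal cells of \mnat, which are all equal to $f$ by Lemma~\ref{lem:matlemma}, and hence equal $f$ by idempotency. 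Within each block $R_i$ the only row at which columns $2k'_{2i}-1$ and $2k'_{2i}$ disagree is the alternating row $k'_{2i}$ of \mnat itself, placing entries $fu_i$ and $fv_i$ (with $u_i\neq v_i$ in \mnat) on the diagonal of \pat, which the contradiction-matrix definition labels $u'_i,v'_i$.

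The main obstacle is the asymmetry between the above- and below-diagonal cases: in the naive four-element analogue of Lemma~\ref{lem:matlemma}, the ``just below diagonal'' cells of \pat would collapse to pair-products $\mnat_{a+1,2a}\cdots\mnat_{b,2a}$ in which the column index coincides with the start of the row block, and no 3-uniform hyperedge coloring can control such pair quantities alongside the genuine triples. The alternating skip-one selection above sidesteps this by promoting every such pair quantity back to a true triple, at the cost of requiring $|K'|\ge 6$; this in turn forces the starting bound $n=\varphi(\varphi(4)+1)$ in Lemma~\ref{lem:mono} to be enlarged (say, to $\varphi(\varphi(6))$) so that \mnat has enough rows for the second Ramsey. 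The most delicate bookkeeping step is then verifying the inequality $fu_2e\neq fv_2e$ required by the contradiction-matrix definition: the monotonicity of \mnat at column pair $k'_4$ yields an alternation $fu_2G\neq fv_2G$ involving the full below-diagonal product $G$, and one must enrich the Ramsey coloring (for instance, also tracking the adjacent-pair products over the skipped blocks $R_2,R_4$) so that replacing $G$ by the partial product $e$ preserves this alternation.
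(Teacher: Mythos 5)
The basic idea of your proof (a second Ramsey argument on a $3$-hypergraph, coloring hyperedges by below-diagonal row products) is the right one and is what the paper intends by ``the argument is identical to the one of Lemma~\ref{lem:matlemma}.'' However, the ``alternating skip-one'' selection you introduce is both unnecessary and, as you yourself flag, breaks the proof.

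The asymmetry you worry about is an artifact of the column indexing you chose. If you color the hyperedge $\{i_1,i_2,i_3\}$ (with $i_1<i_2<i_3$) by the product over rows $i_2+1,\dots,i_3$ at \emph{column pair} $i_1+1$ (not $i_1$), then the splitting identity $\mathrm{col}\{i_1,i_2,i_4\}=\mathrm{col}\{i_1,i_2,i_3\}\cdot\mathrm{col}\{i_1,i_3,i_4\}$ holds, $e$ is idempotent, and after Ramsey gives $K'=\{k'_1<k'_2<k'_3<k'_4\}$ you simply take \pat with consecutive merged row blocks $\{k'_i+1,\dots,k'_{i+1}\}$ ($i=1,2,3$) and column pairs $k'_j+1$ ($j=1,2,3$). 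Every below-diagonal cell $(i,j)$ of \pat is then governed by the hyperedge $\{k'_j,k'_i,k'_{i+1}\}$, which has three distinct vertices whenever $i>j$ (including the just-below-diagonal case $i=j+1$), so no skipping is needed, $|K'|=4$ suffices, and the original bound $n=\varphi(\varphi(4)+1)$ is adequate. Above-diagonal cells of \pat are products of above-diagonal $f$-cells of \mnat and stay equal to $f$ by idempotency, exactly as in Lemma~\ref{lem:matlemma}.

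More seriously, the skip-one scheme destroys the one property that actually gives you $fu_2e\neq fv_2e$, and this is a genuine gap in your argument. In the paper's construction, the rows of \pat partition all the rows of \mnat between $k'_1+1$ and $k'_4$ (only a top prefix and a bottom suffix are dropped). Hence for any column pair, the full \mnat column products are $Q_1\cdot s_c\cdot Q_2$ for fixed $Q_1,Q_2$ and $s_c$ the \pat column product; since the two full \mnat products differ (tameness of \mnat), the $s_c$ also differ, and at $j=2$ this is precisely $fu_2e\neq fv_2e$. Your scheme discards the intermediate blocks between $R_1,R_2,R_3$, so the \pat column product is no longer a factor of the \mnat one: the identity becomes $Q_0R_1S_1R_2S_2R_3Q_4$ with the skipped products $S_1,S_2$ wedged \emph{inside}, and $\neq$ there no longer implies $R_1R_2R_3\neq R_1'R_2R_3$. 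Your suggested remedy (``enrich the Ramsey coloring $\dots$ so that replacing $G$ by the partial product $e$ preserves this alternation'') does not resolve this: even if $S_1,S_2$ are forced to equal some fixed element, you would need that element to be left- and right-cancellable in the appropriate places, which you have no control over. The clean route is simply to choose the column indexing so that no interior rows are discarded.
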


\begin{proof}
  The argument is identical to the one of Lemma~\ref{lem:matlemma}. This
  time we apply it to the matrix \mnat of length $2 \cdot \varphi(4)$ for the
  cells below the diagonal.\qed
\end{proof}

\end{document}